
\documentclass{lmcs}
\pdfoutput=1
\usepackage[utf8]{inputenc}

\usepackage{lastpage}
\lmcsdoi{18}{3}{37}
\lmcsheading{}{\pageref{LastPage}}{}{}%
{Apr.~01,~2021}{Sep.~21,~2022}{}

\listfiles
\usepackage[toc,page]{appendix}
\usepackage{stackrel}
\usepackage{xspace,mathpartir,xparse}
\usepackage{cmll,thmtools,thm-restate}
\usepackage{ebmath,ebutf8}

\usepackage{hyperref}
\usepackage{fullshort}
\setboolean{fullpaper}{true}
\newlist{enumerati}{enumerate}{10}
\setlist[enumerati]{label=\emph{(\roman*)}, ref=\emph{(\roman*)}}
\newlist{enumeratiwide}{enumerate}{10}
\setlist[enumeratiwide]{label=\emph{(\roman*)}, ref=\emph{(\roman*)}, wide, labelindent=0pt}
\newlist{enumeratawide}{enumerate}{10}
\setlist[enumeratawide]{label=\emph{(\alph*)}, ref=\emph{(\alph*)}, wide, labelindent=0pt}
\newlist{enumeratewide}{enumerate}{10}
\setlist[enumeratewide]{label=\emph{\arabic*.}, ref=\emph{(\arabic*)}, wide, labelindent=0pt, before = \leavevmode}

\usepackage{environ}

\newcommand{\transitionmonoidof}[1]{transition {$#1$}-monoid\xspace}

\newcommand{\transitionmonoidsof}[1]{transition {$#1$}-monoids\xspace}

\newcommand{\transitionmonoid}{transition monoid\xspace}

\newcommand{\transitionmonoids}{transition monoids\xspace}

\newcommand{\transitionmonoidalgebras}{transition monoid algebras\xspace}

\newtheorem{terminology}[thm]{Terminology}

\usepackage{tikz-cats}
\usetikzlibrary{matrix,cd}
\tikzcdset{arrow style=tikz}
\tikzcdset{every arrow/.append style = -{Computer Modern Rightarrow[]}}
\tikzset{every arrow/.append style = -{Computer Modern Rightarrow[]}}

\newcommand{\psh}[1][ℂ]{\widehat{#1}}
\newcommand{\xinto}[1]{\xarrow[into]{#1}}
\newcommand{\xto}[1]{\xrightarrow{#1}}
\newcommand{\xot}[1]{\xleftarrow{#1}}

\newcommand{\restr}[2]{{#1}_{{|}#2}}
\newcommand{\ens}[1]{\{ #1 \}}
\newcommand{\wbotright}[1]{{#1}^⋔}
\newcommand{\wbotleft}[1]{{{}^⋔{#1}}}
\newcommand{\wbotrightleft}[1]{\wbotleft{(\wbotright{#1})}}
\newcommand{\source}{𝐬}
\newcommand{\but}{𝐭}
\newcommand{\op}[1]{{#1}^{\mathit{op}}}

\DeclareMathOperator{\alg}{-\mathbf{alg}}
\DeclareMathOperator{\algv}{-\mathbf{alg}_{\mathrm{v}}}
\DeclareMathOperator{\dom}{dom}
\DeclareMathOperator{\Mod}{-\mathbf{Mod}}

\DeclareMathOperator{\Trans}{-\mathbf{Trans}}

\DeclareMathOperator{\Mon}{-\mathbf{Mon}}
\DeclareMathOperator{\mon}{-\mathbf{Mon}}
\DeclareMathOperator{\id}{id}
\DeclareMathOperator{\el}{el}
\DeclareMathOperator{\colim}{colim}

\DeclareMathOperator{\Lan}{Lan}
\newcommand{\Span}{\mathit{Span}}
\newcommand{\Bispan}{\mathit{2Sp}}

\newcommand{\std}{\mathit{std}}
\newcommand{\lax}{\mathit{lax}}
\newcommand{\pack}{\mathit{pack}}

\hyphenation{mo-noi-dal}


\begin{document}

\author[T.\ Hirschowitz]{Tom Hirschowitz\lmcsorcid{0000-0002-7220-4067}}[a] %
\address{Univ. Grenoble Alpes, Univ. Savoie Mont Blanc, CNRS, LAMA, 73000, Chambéry, France}

\author[A.\ Lafont]{Ambroise Lafont\lmcsorcid{0000-0002-9299-641X}}[b]
\address{UNSW, Sydney, Australia}

\title[A categorical framework for congruence of applicative
  bisimilarity]{A categorical framework for congruence of applicative
  bisimilarity in higher-order languages}

\keywords{Programming languages; categorical semantics; operational semantics;
  Howe's method}
\begin{abstract}
  Applicative bisimilarity is a coinductive characterisation of
  observational equivalence in call-by-name lambda-calculus,
  introduced by Abramsky (1990). Howe (1996) gave a direct proof that
  it is a congruence, and generalised the result to all languages
  complying with a suitable format.  We propose a categorical
  framework for specifying operational semantics, in which we prove
  that (an abstract analogue of) applicative bisimilarity is
  automatically a congruence. Example instances include standard
  applicative bisimilarity in call-by-name, call-by-value, and
  call-by-name non-deterministic $λ$-calculus, and more generally all
  languages complying with a variant of Howe's format.
\keywords{operational semantics \and category theory \and bisimilarity
  \and congruence \and Howe's method.}
\end{abstract}

\maketitle              

\section{Introduction}\label{s:intro}
\subsection{Motivation}
This paper is a contribution to the search for efficient and
high-level mathematical tools to specify and reason about programming
languages.  This search arguably goes back at least to Turi and
Plotkin~\cite{plotkin:turi:bialgebraic}, who coined the name
“Mathematical Operational Semantics”, and proved a general congruence
theorem for bisimilarity.  This approach has been deeply investigated,
notably for quantitative languages~\cite{Bartels}.  However, as of
today, attempts to apply it to higher-order (e.g., functional)
languages have failed.

In previous
work~\cite{hirschowitz:hal-01815328,hirschowitz:hal-02273790}, the
first author has proposed an alternative approach to the problem,
dropping the coalgebraic notion of bisimulation used by Turi and
Plotkin in favour of a notion based on factorisation systems, similar
to Joyal et al.'s~\cite{DBLP:conf/lics/JoyalNW93}.  Furthermore,
congruence of bisimilarity is notably obtained by assuming that syntax
induces a \emph{familial} monad~\cite{Diers1978Spectres,DBLP:journals/mscs/CarboniJ95,Weber:famfun}.

However, the new approach has only been applied to simple, first-order
languages like the
$π$-calculus~\cite{Milner:pi,DBLP:books/daglib/0004377}, and Positive
GSOS specifications~\cite{GSOS}.  In this paper, we extend it to
functional languages, notably covering the paradigmatic case of
\emph{applicative} bisimilarity~\cite{Applicative} in call-by-name and
call-by-value $λ$-calculus, as well as in a simple, non-deterministic
$λ$-calculus~\cite[§7]{DavideLazy,DBLP:journals/iandc/Howe96}.  We
even show that our framework subsumes the general, syntactic format
proposed by Howe~\cite[Lemma 6.1]{DBLP:journals/iandc/Howe96}.  We thus obtain
for the first time a generic, categorical congruence result for
applicative bisimilarity in functional languages.

\subsection{Overview}
A bit more precisely:
\begin{itemize}
\item We propose a simple notion of signature for programming
  languages.
\item Each signature has a category of models, including an initial
  one, intuitively its operational semantics.
\item An abstract analogue of applicative bisimilarity, called
  \emph{substitution-closed} bisimilarity, may be defined in any
  model, and in particular in the initial one.
\item Under suitable hypotheses, we show that substitution-closed
  bisimilarity is a congruence in the initial model.
\end{itemize}

Categorically, this unfolds as follows.
\begin{enumeratiwide}
\item \label{item:lts} We define an abstract notion of (labelled)
  transition systems, as objects of a category $𝐂$, in such a way that
  \begin{itemize}
  \item there is a forgetful functor $𝐂 → 𝐂₀$, intuitively returning
    the (potentially structured) set of states of a transition system;
  \item bisimulation and bisimilarity may be defined for any
    transition system.
\end{itemize}
\item Adopting Fiore, Plotkin, and Turi's seminal
  framework~\cite{fiore:presheaf,DBLP:conf/lics/Fiore08}, we then
  assume that $𝐂₀$ is monoidal, and define models of the syntax to be
  \emph{monoid algebras} for a given \emph{pointed strong} endofunctor
  $Σ₀$ on $𝐂₀$.  Monoid algebras, a.k.a.\ $Σ₀$-monoids, are
  $Σ₀$-algebras equipped with compatible monoid structure, which
  models capture-avoiding substitution.  The category $Σ₀\mon$ of
  $Σ₀$-monoids has an initial object $𝐙₀$, whose carrier is the free
  $Σ₀$-algebra on the monoidal unit $I$, as we prove in
  Coq~\cite{BHLcode}. In all examples, $𝐙₀$ is precisely the syntax.
\item This category $Σ₀\mon$ induces by pullback a category $Σ₀\Trans$
  of transition systems whose states are equipped with $Σ₀$-monoid
  structure. We call these \emph{\transitionmonoidalgebras}, or
  \emph{\transitionmonoidsof{Σ₀}}, or even simply
  \emph{\transitionmonoids} when $Σ₀$ is the identity. The relevant
  notions of bisimulation and bisimilarity for such objects are
  defined as in~\ref{item:lts}, but for \emph{substitution-closed}
  relations.
\item We then define models of the dynamics to be certain algebras,
  called \emph{vertical}, for an endofunctor on $Σ₀\Trans$. There is an
  initial vertical algebra $𝐙$, which in examples is the syntactic
  transition system. (Furthermore, standard applicative bisimilarity coincides
  with substitution-closed bisimilarity.)
\item Finally, following an abstract analogue of Howe's method, we
  show that, under suitable hypotheses, subs\-ti\-tu\-tion-closed
  bisimilarity on $𝐙$ is a congruence. One crucial hypothesis is
  \emph{cellularity}, in a sense closely related
  to~\cite{garner:hal-01246365}.
\end{enumeratiwide}

\subsection{Related work}
Plotkin and Turi's \emph{bialgebraic
  semantics}~\cite{plotkin:turi:bialgebraic} and its few
variants~\cite{DBLP:journals/tcs/CorradiniHM02,DBLP:conf/lics/Staton08}
prove abstract congruence theorems for bisimilarity.  However, they do
not cover higher-order languages like the $λ$-calculus, let alone
applicative bisimilarity. This was one of the main motivations for our
work.  Among more recent work, quite some inspiration was drawn from
Ahrens et al.~\cite{AHLM19,HHL}, notably in the use of vertical
algebras. However, a difference is that we do not insist that
transitions be stable under substitution.  In a different direction,
Dal Lago et al.~\cite{DBLP:conf/lics/LagoGL17} prove a general
congruence theorem for applicative bisimilarity, for a $λ$-calculus
with algebraic effects. As briefly discussed in the conclusion, our
framework does not yet account for such results. However, it places
the generality in a different direction: namely, it is not tied to any
particular language (like the $λ$-calculus
in~\cite{DBLP:conf/lics/LagoGL17}).  It would of course be useful to
find a common generalisation.

Links with other relevant work by, e.g., Bodin et
al.~\cite{DBLP:journals/pacmpl/BodinGJS19}, though desirable, remain
unclear, perhaps because of the very different methods used.

Furthermore, the cellularity used here is close to but different from
the $𝐓ₛ^∨$-familiality of~\cite{hirschowitz:hal-01815328}. It would be
instructive to better understand potential links between the two.
Finally, let us mention recent work which, just like ours, strives to
establish abstract versions of standard constructions and theorems in
programming language theory like type
soundness~\cite{DBLP:conf/lics/ArkorF20} or
gluing~\cite{Fiore02,DBLP:conf/lics/FioreS20,DBLP:conf/fossacs/FioreS20}.

\subsection{Relation to conference version}
This paper is a bit more than a journal version of our previous
work~\cite{BHL}. Here is a brief summary of changes.

\begin{enumeratawide}
\item In~\cite{BHL}, we work with a non-trivial generalisation of
  monoid algebras to \emph{skew monoidal} categories~\cite{Szlachanyi}
  and \emph{structurally strong} functors.  Here, by giving a better
  type to $Σ₁$, the endofunctor for specifying the dynamics, we manage
  to work with standard monoid algebras. This has the additional
  advantages of
  \begin{itemize}
  \item avoiding a slightly \emph{ad hoc} compositionality assumption
    of~\cite{BHL}, and
  \item relaxing the requirement that the tensor product should be
    familial.
\end{itemize}
\item In~\cite{BHL}, because Howe's closure operates only at
  the level of states, we work mostly with
  \emph{prebisimulations}, in the sense
  of~\cite[§5.1]{hirschowitz:hal-01815328}.  This notion is designed
  to detect when the state part of a relation underlies a
  bisimulation, regardless of what it does on transitions.  However,
  it feels more \emph{ad hoc} than the standard definition of
  bisimulation by lifting~\cite{DBLP:conf/lics/JoyalNW93}.  In this
  paper, we extend Howe's closure to transitions, thus avoiding
  prebisimulations entirely.
\item In~\cite{BHL}, we rely on directed unions of relations, which
  leads to quite a few, rather painful proofs by induction.  Here, we
  use higher-level methods to construct Howe's closure, essentially
  through categorification and algebraicisation.  Namely:
  \begin{enumerate}
  \item We define bisimilarity as the final object not in some
    partially-ordered set of relations as usual, but in some category
    of spans (see also~\cite{DBLP:conf/calco/BasoldPR17}).
  \item Furthermore, we define Howe's closure directly as a free
    monoid algebra for a suitable pointed strong endofunctor on spans.  
  \item More generally, we systematically rely on universal
    properties, which simplifies a significant number of proofs.
  \end{enumerate}
\item We put less emphasis on cellularity, viewing it only as a
  sufficient condition for a perhaps more natural hypothesis, which
  already appeared in a slightly different form
  in~\cite{DBLP:conf/lics/Staton08}, namely the fact that $Σ₁$
  preserves functional bisimulations.
\item We obtain a congruence theorem of similar scope
  (Theorem~\ref{thm:main}), and cover three new, detailed applications
  (§\ref{s:applications}): call-by-value, big-step $λ$-calculus (which
  was covered but too naively in~\cite{BHL}, as we explain), a
  call-by-name $λ$-calculus with unary, erratic choice
  from~\cite[§7]{DavideLazy}, and a general format proposed by
  Howe~\cite[Lemma 6.1]{DBLP:journals/iandc/Howe96}.
\item We fill a gap in the proof of~\cite[Lemma~5.13]{BHL}, by
  requiring the endofunctor $Σ₀$ for specifying the dynamics to
  preserve sifted colimits (see Remark~\ref{rk:sifted}).
\end{enumeratawide}

\subsection{Plan}
In~§\ref{s:prereq}, we start by briefly recalling call-by-name
$λ$-calculus and applicative bisimilarity. We then explain how to view
the latter as substitution-closed bisimilarity, and sketch Howe's
method.  In~§\ref{s:overview}, we then give a brief overview of the
new framework by example, including a recap on monoid algebras and a
statement of the main theorem (in the considered case).  We then dive
into the technical core of the paper by presenting our framework for
transition systems and bisimilarity (§\ref{s:trans}), operational
semantics (§\ref{s:howecontexts}), and then substitution-closed
bisimilarity and the main result (Theorem~\ref{thm:main}), together
with a high-level proof sketch (§\ref{s:substclosedbisim}).
In~§\ref{s:cellularity}, we reformulate the main hypothesis of
Theorem~\ref{thm:main} using cellularity, which allows us to use
well-known results from weak factorisation systems as sufficient
conditions.  We then apply our results to examples
in~§\ref{s:applications}.  \iffull{ The full proof of
  Theorem~\ref{thm:main} is given in~§\ref{s:congruence}.} Finally, we
conclude and give some perspectives on future work in~§\ref{s:conclu}.

\subsection{Notation and preliminaries}\label{ss:notation}
In this subsection, we fix some basic notation, and review some
preliminaries.
\subsubsection{Basic notation}
We often conflate natural numbers $n ∈ ℕ$ with the corresponding sets
$\ens{1,…,n}$.  For all sets $X$ and objects $C$ of a given category,
we denote by $X · C$ the $X$-fold coproduct of $C$ with itself, i.e.,
$∑_{x ∈ X} C$.  Let $𝐆𝐩𝐡$ denote the category of (directed, multi)
graphs, $𝐂𝐚𝐭$ the category of small categories, and $𝐂𝐀𝐓$ the category
of locally small categories.

\subsubsection{Comma categories and lax limits}
Given functors $F∶ 𝐀 → 𝐂$ and $G∶ 𝐁 → 𝐂$,
the \emph{comma} category $F/G$ has
\begin{itemize}
\item as objects all triples $(A,B,ϕ)$, where $A ∈ 𝐀$, $B ∈ 𝐁$,
  and $ϕ∶ F(A) → G(B)$, and 
\item as morphisms $(A,B,ϕ) → (A',B',ϕ')$ all pairs of morphisms
  $u∶ A → A'$ and $v∶ B → B'$ making the following square commute.
  \begin{center}
    \diag{%
      F(A) \& F(A') \\
      G(B) \& G(B') %
    }{%
      (m-1-1) edge[labela={F(u)}] (m-1-2) %
      edge[labell={ϕ}] (m-2-1) %
      (m-2-1) edge[labelb={G(v)}] (m-2-2) %
      (m-1-2) edge[labelr={ϕ'}] (m-2-2) %
    }
  \end{center}
\end{itemize}
We have the following well-known fact:
\begin{prop}
  \label{prop:projcreates}
  If $𝐀$ and $𝐁$ have, and $F$ preserves colimits of any given shape,
  then the projection functor $F/G → 𝐀 × 𝐁$ creates them.

  Symmetrically, if $𝐀$ and $𝐁$ have, and $G$ preserves limits of any
  given shape, then the projection functor $F/G → 𝐀 × 𝐁$ creates them.
\end{prop}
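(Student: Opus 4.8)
The plan is to prove the statement about colimits in full, the one about limits then being formally dual: indeed $\op{(F/G)} \cong \op{G}/\op{F}$, under this identification the projection to $𝐀 × 𝐁$ corresponds to the projection $\op{G}/\op{F} → \op{𝐁} × \op{𝐀}$, and $G$ preserves limits exactly when $\op{G}$ preserves colimits. So fix a shape $𝐉$ such that $𝐀$ and $𝐁$ have, and $F$ preserves, colimits of shape $𝐉$, and fix a diagram $D ∶ 𝐉 → F/G$, written $D(j) = (A_j, B_j, ϕ_j)$ on objects and $D(u) = (α_u, β_u)$ on a morphism $u ∶ j → j'$. Composing $D$ with the projection $P ∶ F/G → 𝐀 × 𝐁$ and recalling that colimits in $𝐀 × 𝐁$ are computed componentwise, we get diagrams $D_𝐀 ∶ j \mapsto A_j$ and $D_𝐁 ∶ j \mapsto B_j$ with chosen colimit cocones $(a_j ∶ A_j → A)_j$ and $(b_j ∶ B_j → B)_j$.

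First I would lift these to a cocone under $D$. Because the comma squares defining $D$ on morphisms commute, the family $\bigl(G(b_j) \circ ϕ_j ∶ F(A_j) → G(B)\bigr)_j$ is a cocone under $F \circ D_𝐀$: for $u ∶ j → j'$ one computes $G(b_{j'}) \circ ϕ_{j'} \circ F(α_u) = G(b_{j'}) \circ G(β_u) \circ ϕ_j = G(b_j) \circ ϕ_j$, using the comma-square equation for $D(u)$ and the cocone equation $b_{j'} \circ β_u = b_j$. Since $F$ preserves the colimit, $\bigl(F(a_j)\bigr)_j$ exhibits $F(A)$ as $\colim(F \circ D_𝐀)$, so there is a unique $ϕ ∶ F(A) → G(B)$ with $ϕ \circ F(a_j) = G(b_j) \circ ϕ_j$ for every $j$. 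This is exactly the statement that each $(a_j, b_j)$ is a morphism $D(j) → (A, B, ϕ)$ in $F/G$; compatibility with the $D(u)$ is inherited from that of the $a_j$ and the $b_j$. Hence $\bigl((a_j, b_j)\bigr)_j$ is a cocone under $D$ with apex $(A, B, ϕ)$, and $P$ sends it to the chosen colimit cocone of $PD$.

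Next I would verify that this cocone is a colimit and that the lift is unique, which together are precisely what it means for $P$ to \emph{create} colimits of shape $𝐉$. Given an arbitrary cocone $\bigl((c_j, d_j) ∶ D(j) → (C, C', ψ)\bigr)_j$ under $D$, the $(c_j)_j$ and $(d_j)_j$ are cocones under $D_𝐀$ and $D_𝐁$, so they factor uniquely as $c_j = c \circ a_j$ and $d_j = d \circ b_j$ through unique $c ∶ A → C$ and $d ∶ B → C'$. The pair $(c, d)$ is a morphism $(A, B, ϕ) → (C, C', ψ)$ in $F/G$, i.e.\ $ψ \circ F(c) = G(d) \circ ϕ$: precomposing each side with $F(a_j)$ yields $ψ \circ F(c_j)$ on the left, which equals $G(d_j) \circ ϕ_j$ since $(c_j, d_j)$ is a morphism in $F/G$, and $G(d \circ b_j) \circ ϕ_j = G(d_j) \circ ϕ_j$ on the right; as $F(A)$ is a colimit, agreement against all the $F(a_j)$ forces the two morphisms into $G(B)$ to coincide. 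Uniqueness of $(c,d)$ as a morphism of cocones is immediate from that of $c$ and $d$. Finally, any apex $(A, B, ϕ')$ of a cocone under $D$ whose $P$-image is the chosen colimit cocone must satisfy $ϕ' \circ F(a_j) = G(b_j) \circ ϕ_j$ for all $j$, hence $ϕ' = ϕ$ by the same universal property, so the lift is unique on the nose.

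I do not expect any genuine obstacle here; the argument is pure bookkeeping with universal properties. The single point that truly uses a hypothesis — and hence must not be skipped — is the preservation of colimits by $F$: it is what makes $F(A)$ itself, and not some a priori unrelated object, into $\colim(F \circ D_𝐀)$, and so it is what both produces the comparison map $ϕ$ and legitimises the step of testing the relevant equations against the legs $F(a_j)$ in order to identify morphisms into $G(B)$.
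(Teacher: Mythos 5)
Your proof is correct and complete: the paper states Proposition~\ref{prop:projcreates} as a well-known fact and gives no proof, and your argument is exactly the standard one that is being implicitly invoked. Both halves of the creation property (existence of the lifted colimiting cocone via the comparison map $ϕ$, and strict uniqueness of the lift) are handled properly, the single use of the hypothesis that $F$ preserves the colimit is correctly isolated, and the reduction of the limit statement to the colimit one via $\op{(F/G)} ≅ \op{G}/\op{F}$ is a legitimate duality.
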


The comma category $F/G$ is well-known~\cite{Kelly89,TwoToposes}
to be the universal category equipped with projections to
$𝐀$ and $𝐁$ and a natural transformation as in the following
diagram.
\begin{center}
  \Diag{%
    \twocell{m-1-1}{m-2-1}{m-1-1}{m-1-2}{}{cell=.1,bend right} %
  }{%
    F/G \& 𝐁 \\
    𝐀 \& 𝐂
  }{%
    (m-1-1) edge[labela={}] (m-1-2) %
    edge[labell={}] (m-2-1) %
    (m-2-1) edge[labelb={F}] (m-2-2) %
    (m-1-2) edge[labelr={G}] (m-2-2) %
  }%
  \end{center}

  Kelly~\cite{Kelly89} explains that the comma category is a kind of
  lax limit of $F$ and $G$.  When $F$ is an identity, we call the
  comma category a \emph{lax limit} of $G$.

\subsubsection{Presheaves}
Let $\psh$ denote the category of (contravariant) presheaves on $ℂ$,
and $𝐲∶ ℂ → \psh$ the Yoneda embedding, mapping $c$ to $ℂ(-, c)$.
Given a presheaf $F∈ \psh$, an element $x∈ F(c)$, and a morphism
$c \xto{f} c'$, we sometimes denote $F(f)(x)$ by $x·f$.  Given two
categories $ℂ₁$ and $ℂ₂$, we denote by $[ℂ₁, ℂ₂]$ the functor category
between them.

\subsubsection{Spans and relations}
In a category $𝐂$ with binary products, we interchangeably use spans
$X ← R → Y$ and their pairings $R → X×Y$, sometimes also calling the
latter spans.  Spans from $X$ to $Y$ are the objects of a category
$\Span(𝐂)(X,Y)$, which is isomorphic to the slice category $𝐂/X×Y$ in
the presence of binary products. When $𝐂$ has pullbacks, these
categories are the hom-categories of a bicategory~\cite{Benabou}
$\Span(𝐂)$, in which composition of morphisms is given by pullback. A
\emph{relation} from $X$ to $Y$ is merely a span whose pairing
$R → X×Y$ is monic.

\subsubsection{Images}
Let us now recall a few elements about images.
\begin{defi}
  An \emph{image} of a morphism $f∶ A→B$ is a factorisation
  $A \xto{e} M \xinto{m} B$ with $m$ a monomorphism, which is initial
  in the sense that for any factorisation $A \xto{e'} M' \xinto{m'} B$
  there is a (unique by monicness) morphism $k∶ M → M'$ making both
  triangles commute in the following diagram.
  \begin{center}
    \diag{%
      A \& M' \\
      M \& B
    }{%
      (m-1-1) edge[labela={e'}] (m-1-2) %
      edge[labell={e}] (m-2-1) %
      (m-2-1) edge[labelb={m}] (m-2-2) %
      edge[dashed,labelal={k}] (m-1-2) %
      (m-1-2) edge[labelr={m'}] (m-2-2) %
    }
  \end{center}
\end{defi}
\begin{defi}
  A \emph{strong epimorphism} is a morphism with the strong left
  lifting property w.r.t.\ all monomorphisms, i.e., a morphism
  $e∶ A → B$ such that for all (solid) commuting squares
  \begin{center}
    \diag{%
      A \& X \\
      B \& Y %
    }{%
      (m-1-1) edge[labela={}] (m-1-2) %
      edge[labell={e}] (m-2-1) %
      (m-2-1) edge[labelb={}] (m-2-2) %
      edge[dashed,labelal={k}] (m-1-2) %
      (m-1-2) edge[labelr={m}] (m-2-2) %
    }
  \end{center}
  with $m$ monic there exists a unique lifting $k$ making both
  triangles commute.
\end{defi}
The terminology is justified by the following result.
\begin{lem}
  In a category with equalisers, any strong epimorphism is an
  epimorphism.
\end{lem}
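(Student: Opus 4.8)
The plan is to show that if $e\colon A\to B$ is a strong epimorphism, then it is in particular an epimorphism, i.e.\ right-cancellable. So suppose we are given $f,g\colon B\to Y$ with $f\circ e = g\circ e$; we must show $f = g$. The standard trick is to pass to the equaliser of $f$ and $g$ and exploit the lifting property of $e$ against the (monic) equaliser map.

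Concretely, first I would form the equaliser $m\colon E\to B$ of the pair $f,g$, which exists by hypothesis. Equaliser maps are always monomorphisms, so $m$ is monic. Since $f\circ e = g\circ e$, the morphism $e\colon A\to B$ equalises $f$ and $g$ as well, hence by the universal property of the equaliser there is a (unique) morphism $u\colon A\to E$ with $m\circ u = e$. Now I would set up the commuting square with $e$ on the left, $m$ on the right, top edge $u\colon A\to E$, and bottom edge $\id_B\colon B\to B$: it commutes precisely because $m\circ u = e = \id_B\circ e$. Since $e$ is a strong epimorphism and $m$ is monic, there is a (unique) diagonal lift $k\colon B\to E$ making both triangles commute; in particular $m\circ k = \id_B$. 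Thus $m$ is a split epimorphism, and being also monic it is an isomorphism. Finally, from $f\circ m = g\circ m$ (the defining property of the equaliser) and $m$ invertible, we conclude $f = g$, as desired.

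There is essentially no obstacle here: the only thing to be a little careful about is to record that the relevant square genuinely commutes (so that the lifting property applies) and that a monic split epimorphism is an isomorphism — the latter because $m\circ k = \id$ gives $k$ a section of $m$, and then $m\circ k\circ m = m = m\circ \id$ with $m$ monic forces $k\circ m = \id$. One could also phrase the argument without explicitly invoking "monic split epi $\Rightarrow$ iso" by just noting $f = f\circ m\circ k = g\circ m\circ k = g$ directly once $m\circ k = \id_B$ is available, which is perhaps the cleanest route and avoids even needing uniqueness of the lift.
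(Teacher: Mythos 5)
Your proof is correct and follows essentially the same route as the paper: pass to the equaliser of $f$ and $g$, factor $e$ through it, lift against the monic equaliser map to obtain a retraction, and conclude $f = f\circ m\circ k = g\circ m\circ k = g$. The "cleanest route" you mention at the end is exactly the paper's argument.
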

\begin{proof}
  Let us assume that $e∶ A → B$ is a strong epi and $fe = ge$, with
  $f,g∶ B → C$. Then, let $k∶ A' → B$ denote the equaliser of $f$ and
  $g$. Because $e$ equalises $f$ and $g$, it factors as $kh$, for some
  unique $h∶ A → A'$. But now $k$ is monic, so by lifting there is a
  unique $l$ making both triangles commute in the following diagram.
  \begin{center}
    \diag{%
      A \& A' \\
      B \& B
    }{%
      (m-1-1) edge[labela={h}] (m-1-2) %
      edge[labell={e}] (m-2-1) %
      (m-2-1) edge[identity,labelb={}] (m-2-2) %
      edge[dashed,labelal={l}] (m-1-2) %
      (m-1-2) edge[labelr={k}] (m-2-2) %
    }
  \end{center}
  We thus have \[f = f ∘ \id_B = f ∘ k ∘ l = g ∘ k ∘ l = g ∘ \id_B = g.\]
  The morphism $e$ is thus epi, as claimed.
\end{proof}

\begin{cor}
Factoring a morphism as a strong epi followed by a mono
yields an image.
\end{cor}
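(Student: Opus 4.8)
The plan is to show that a strong-epi/mono factorisation $A \xto{e} M \xinto{m} B$ satisfies the universal property of the image as spelled out in the definition. So suppose we are given any other factorisation $A \xto{e'} M' \xinto{m'} B$ with $m'$ monic. I want to produce a morphism $k\colon M → M'$ with $k ∘ e = e'$ and $m' ∘ k = m$, and check it is unique.

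First I would assemble the relevant commuting square. Since $m ∘ e = f = m' ∘ e'$, the outer square with left side $e$, bottom side $m$, top side $e'$, and right side $m'$ commutes:
\begin{center}
  \diag{
    A \& M' \\
    M \& B
  }{
    (m-1-1) edge[labela={e'}] (m-1-2)
    edge[labell={e}] (m-2-1)
    (m-2-1) edge[labelb={m}] (m-2-2)
    edge[dashed,labelal={k}] (m-1-2)
    (m-1-2) edge[labelr={m'}] (m-2-2)
  }
\end{center}
Now $e$ is a strong epimorphism and $m'$ is a monomorphism, so by the strong left lifting property there is a (unique) diagonal filler $k\colon M → M'$ making both triangles commute, i.e.\ $k ∘ e = e'$ and $m' ∘ k = m$. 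This is exactly the morphism required by the definition of image. Its uniqueness as a morphism making both triangles commute is automatic: if $k'$ also satisfies $k' ∘ e = e'$ and $m' ∘ k' = m$, then $m' ∘ k = m = m' ∘ k'$, and since $m'$ is monic, $k = k'$ — so I do not even need the uniqueness clause of the lifting property for this part, only that $m'$ is monic.

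There is essentially no obstacle here; the corollary is a direct unpacking of the two preceding definitions, the only content being the observation that a strong epi lifts against the mono $m'$. The one point worth a word of care is that the statement of the definition of image demands the comparison $k$ be unique "by monicness" among morphisms commuting with the triangles, which is precisely the monic cancellation argument above — so the proof is just: form the square, apply the lifting property for the existence of $k$, and cancel $m'$ for uniqueness.
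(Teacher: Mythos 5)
Your proof is correct and is exactly the paper's argument, just written out in full: the paper's one-line proof ("Initiality is directly given by the lifting property") is precisely your construction of $k$ as the diagonal filler, with uniqueness from monicness of $m'$.
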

\begin{proof}
  Initiality is directly given by the lifting property.
\end{proof}
\begin{prop}
  In any locally finitely presentable category, images always exist,
  and may be computed as (strong epi, mono)-factorisations.
\end{prop}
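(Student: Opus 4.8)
My plan is to use two standard properties of locally finitely presentable categories that are not recalled in the excerpt: they are complete, and they are well-powered (each object has only a small set of subobjects). First I would fix a morphism $f ∶ A → B$ and consider the family of subobjects $m_i ∶ M_i ↪ B$ through which $f$ factors; by well-poweredness this family is small, and it is non-empty since it contains $\id_B$. I would then form its intersection $m ∶ M ↪ B$, which exists as a small limit in the slice over $B$ (complete, since the ambient category is). Because each $m_i$ is monic, the factorisations $A → M_i$ assemble into a cone, so $f$ itself factors through $m$, say as $A \xto{e} M \xto{m} B$, and by construction $M$ is the \emph{least} subobject of $B$ through which $f$ factors. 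By the Corollary above (a (strong epi, mono)-factorisation is an image), it then remains only to check that $e$ is a strong epimorphism.

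To prove that $e$ is strong epi, I would start from a commuting square with $e ∶ A → M$ on the left, a monomorphism $n ∶ X ↪ Y$ on the right, top edge $h ∶ A → X$ and bottom edge $g ∶ M → Y$, so that $n ∘ h = g ∘ e$. Form the pullback of $n$ along $g$: a monomorphism $p ∶ P ↪ M$ together with $q ∶ P → X$ satisfying $n ∘ q = g ∘ p$. Since $n ∘ h = g ∘ e$, the pair $(e,h)$ induces $\langle e,h\rangle ∶ A → P$ with $p ∘ \langle e,h\rangle = e$ and $q ∘ \langle e,h\rangle = h$; in particular $f = m ∘ p ∘ \langle e,h\rangle$ factors through the monomorphism $m ∘ p ∶ P ↪ B$. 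By minimality of $M$ there is $u ∶ M → P$ over $B$, i.e.\ with $m ∘ p ∘ u = m$; since $m$ is monic this forces $p ∘ u = \id_M$, and since $p$ is monic it follows that $u ∘ p = \id_P$, so $p$ is an isomorphism with inverse $u$. I claim $k := q ∘ u ∶ M → X$ is the required lift: indeed $n ∘ k = n ∘ q ∘ u = g ∘ p ∘ u = g$, and from $p ∘ (u ∘ e) = e = p ∘ \langle e,h\rangle$ with $p$ monic we get $u ∘ e = \langle e,h\rangle$, whence $k ∘ e = q ∘ u ∘ e = q ∘ \langle e,h\rangle = h$. Uniqueness of $k$ is immediate since $n$ is monic. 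Hence $e$ is a strong epimorphism.

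Putting this together, every morphism admits a (strong epi, mono)-factorisation, which by the Corollary is an image, so images exist and are computed this way. The genuinely non-formal inputs are completeness and well-poweredness, used only to build the least subobject $M$; within the argument proper, the main obstacle I anticipate is the verification that $e$ is strong epi, and specifically the step where minimality of $M$ is leveraged to force the pulled-back monomorphism $p$ to be invertible — everything after that is diagram chasing. (One could alternatively extract a (strong epi, mono)-factorisation from the small-object argument, but the intersection-of-subobjects construction is more direct and keeps the initiality of the image transparent.)
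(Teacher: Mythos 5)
Your proof is correct. The paper itself gives no argument here: it simply cites \cite[Proposition~1.61]{Adamek}, so your self-contained proof is a genuinely different (and more informative) route. What you reconstruct is in fact the classical existence proof for (strong epi, mono)-factorisations in any \emph{complete, well-powered} category: take the least subobject $m\colon M\hookrightarrow B$ through which $f$ factors (an intersection of a small family of subobjects), and show the induced $e\colon A\to M$ is a strong epimorphism by pulling back a mono $n$ along the bottom edge of a lifting square and using minimality of $M$ to invert the resulting subobject $p\colon P\hookrightarrow M$. All the steps check out: the factorisations $A\to M_i$ do form a cone over the wide pullback, minimality does give $p\circ u=\id_M$ and hence (by monicity of $p$) $u\circ p=\id_P$, and the two triangle identities for $k=q\circ u$ follow by the diagram chase you describe, with uniqueness from monicity of $n$. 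Combined with the paper's Corollary that a (strong epi, mono)-factorisation is an image, this proves the statement. The only external inputs are completeness and well-poweredness of locally finitely presentable categories, both standard; your argument therefore proves slightly more than the stated proposition (it applies to any complete well-powered category), whereas the citation buys brevity and keeps the paper's reliance on \cite{Adamek} uniform.
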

\begin{proof}
  This is (part of)~\cite[Proposition~1.61]{Adamek}.
\end{proof}
Let us finally observe:
\begin{prop}\label{prop:union}
  In locally presentable category, unions of subobjects
  exist, and may be computed by taking the cotupling of all considered
  subobjects, and then their (strong epi, mono)-factorisation.
\end{prop}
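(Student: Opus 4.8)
The plan is to verify directly that the recipe produces the supremum, in the poset $\mathrm{Sub}(X)$ of subobjects of a fixed object $X$, of a given family $(m_i\colon S_i \hookrightarrow X)_{i\in I}$ of subobjects. First I would reduce to the case where $I$ is small: a locally presentable category is well-powered, so the distinct subobjects occurring in the family form a set, and replacing the family by that set alters neither the hypotheses nor the candidate union. Then, using cocompleteness, the coproduct $\coprod_{i\in I} S_i$ exists and the $m_i$ cotuple to a morphism $s = [m_i]_{i\in I}\colon \coprod_{i} S_i → X$, which (as in the previous proposition, whose argument applies unchanged to locally presentable categories) factors as $s = m ∘ e$ with $e\colon \coprod_i S_i → U$ a strong epimorphism and $m\colon U \hookrightarrow X$ a monomorphism. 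It then remains to check that $(U,m)$ is the union.

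For the first half — that $(U,m)$ dominates each $(S_i,m_i)$ — I would use the $i$-th coprojection $\iota_i\colon S_i → \coprod_j S_j$ and note that $m ∘ (e ∘ \iota_i) = s ∘ \iota_i = m_i$, so $e ∘ \iota_i$ witnesses $(S_i,m_i) \leq (U,m)$ in $\mathrm{Sub}(X)$; such a witness is automatically monic and unique, since $m_i$ and $m$ are monic.

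For the second half — minimality — I would take an arbitrary subobject $(V,n)$ of $X$ dominating every $(S_i,m_i)$, so that for each $i$ there is a (unique, as $n$ is monic) morphism $f_i\colon S_i → V$ with $n ∘ f_i = m_i$. Cotupling gives $f = [f_i]_i\colon \coprod_i S_i → V$ with $n ∘ f = s = m ∘ e$, hence a commuting square whose left edge is the strong epimorphism $e$ and whose right edge is the monomorphism $n$. The lifting property then yields $k\colon U → V$ with $n ∘ k = m$, i.e.\ $(U,m) \leq (V,n)$. So $(U,m)$ is an upper bound of the family lying below every upper bound, which is exactly what it means to be the union.

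I expect no genuine obstacle here: the whole argument is a routine application of the lifting property defining strong epimorphisms together with cocompleteness, well-poweredness, and the existence of (strong epi, mono)-factorisations. The only points that need a moment's care are the reduction to a small index set — necessary so that the coproduct is available — and the standard bookkeeping identifying morphisms over $X$ with comparisons in $\mathrm{Sub}(X)$.
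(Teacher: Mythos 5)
Your argument is correct and is exactly the verification the paper has in mind: the construction (cotupling followed by (strong epi, mono)-factorisation) is already given in the statement, and the paper's proof simply reads ``Straightforward'', leaving the check you carried out — the coprojections witness that each $S_i$ lies below $U$, and the lifting property of the strong epi $e$ against any competing mono $n$ gives minimality — to the reader. Your preliminary reduction to a small index set via well-poweredness is a sensible extra precaution that the paper glosses over.
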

\begin{proof}
  Straightforward.
\end{proof}

\subsubsection{Initial algebras}
Any finitary endofunctor $F$ on any cocomplete category admits
by~\cite[Theorem~2.1]{Reiterman} an initial algebra, which we denote
by $𝐙_F$.  Although this is detailed below, we prefer to avoid
confusion and warn the reader that we also use $𝐙_F$ for the initial
$F$-monoid, for any \emph{pointed strong} endofunctor $F$ on any nice
monoidal category (which is incidentally the initial $(I+F)$-algebra).
Throughout the paper, when not explicitly attached to any $F$, $𝐙$ is
shorthand for $𝐙_{\check{Σ}₁}$ (see, e.g., Proposition~\ref{prop:Z} or
Theorem~\ref{thm:Z}).

\subsubsection{Weak factorisation systems}
Finally, let us fix some notation about weak factorisation systems.
In any category $𝒞$, we say that a morphism $f∶ A → B$ has the (weak)
left lifting property w.r.t.\ $g∶ C → D$ when for all commuting
squares 
\begin{center}
  \diaginline(.6,.6){%
    A \& C \\
    B \& D\rlap{,} %
  }{%
    (m-1-1) edge[labela={u}] (m-1-2) %
    edge[labell={f}] (m-2-1) %
    (m-2-1) edge[labelb={v}] (m-2-2) %
    edge[dashed,labelal={k}] (m-1-2) %
    (m-1-2) edge[labelr={g}] (m-2-2) %
  }
\end{center}
there is a lifting $k$ as shown that makes both triangles commute.
Equivalently, we say that $g$ has the right lifting property w.r.t.\
$f$, and write $f \mathrel{⋔} g$.  Given a fixed set $𝕁$ of morphisms,
the set of morphisms $g$ such that $j \mathrel{⋔} g$ for all $j ∈ 𝕁$
is denoted by $𝕁^⋔$. Similarly, the set of morphisms $f$ such that
$f \mathrel{⋔} j$ for all $j ∈ 𝕁$ is denoted by $\wbotleft{𝕁}$.  In
particular, if $f ∈ \wbotrightleft{𝕁}$ and $g ∈ \wbotright{𝕁}$, then
$f \mathrel{⋔} g$.  If $𝒞$ is locally presentable~\cite{Adamek}, then
$(\wbotrightleft{𝕁},𝕁^⋔)$ forms a \emph{weak factorisation system}, in
the sense that additionally any morphism $f∶ X → Y$ factors as
$X \xto{l} Z \xto{r} Y$ with $l ∈ \wbotrightleft{𝕁}$ and $r ∈ 𝕁^⋔$
(see~\cite[Theorem~2.1.14]{Hovey}).  Morphisms in $𝕁^⋔$ are
generically called \emph{fibrations}, while morphisms in
$\wbotrightleft{𝕁}$ are called \emph{cofibrations}.

Let us conclude with the following easy, yet helpful result.
\begin{lem}\label{lem:colimitsim}
  For any locally finitely presentable category and set $𝕁$ of maps
  therein, if the domains and codomains of maps in $𝕁$ are finitely
  presentable, then fibrations are closed under filtered colimits in
  the arrow category.
\end{lem}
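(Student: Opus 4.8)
The plan is to run the standard interaction between finite presentability and filtered colimits. Recall first that $\mathcal C$, being locally finitely presentable, is cocomplete, so the arrow category $\mathcal C^{\to}$ is cocomplete and its colimits are computed pointwise. Thus a filtered diagram of fibrations $(g_i ∶ C_i → D_i)_{i ∈ I}$ (with $I$ filtered) has as colimit the map $g ∶ C → D$, where $C = \colim_i C_i$ and $D = \colim_i D_i$, with colimit injections $α_i ∶ C_i → C$ and $β_i ∶ D_i → D$ satisfying $g ∘ α_i = β_i ∘ g_i$ and the usual cocone identities $α_k ∘ C_{i→k} = α_i$, $β_k ∘ D_{i→k} = β_i$ for each map $i → k$ in $I$. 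Since being a fibration means lying in $𝕁^⋔$, it suffices to show $j \mathrel{⋔} g$ for each $j ∶ A → B$ in $𝕁$; so I fix such a $j$ together with a commuting square $v ∘ j = g ∘ u$, where $u ∶ A → C$ and $v ∶ B → D$, and look for a diagonal filler.

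First I would push both legs of the square into the diagram. Because $A = \dom(j)$ is finitely presentable and $C = \colim_i C_i$ is a filtered colimit, $u$ factors as $α_{i_0} ∘ u_0$ for some $u_0 ∶ A → C_{i_0}$; likewise, because $B$ is finitely presentable, $v$ factors as $β_{i_1} ∘ v_1$ for some $v_1 ∶ B → D_{i_1}$. Using filteredness of $I$, choose an index $i$ equipped with maps $i_0 → i$ and $i_1 → i$, and transport $u_0, v_1$ along these to obtain $u_i ∶ A → C_i$ and $v_i ∶ B → D_i$ with $α_i ∘ u_i = u$ and $β_i ∘ v_i = v$.

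The square need not commute at stage $i$, but the two maps $g_i ∘ u_i,\ v_i ∘ j ∶ A → D_i$ become equal after post-composition with $β_i$: indeed $β_i ∘ g_i ∘ u_i = g ∘ α_i ∘ u_i = g ∘ u = v ∘ j = β_i ∘ v_i ∘ j$. Since $A$ is finitely presentable and the colimit is filtered, there is a map $i → k$ in $I$ whose image $D_{i→k} ∶ D_i → D_k$ equalises $g_i ∘ u_i$ and $v_i ∘ j$. Writing $u_k = C_{i→k} ∘ u_i$ and $v_k = D_{i→k} ∘ v_i$, naturality of $g$ and the choice of $k$ give $g_k ∘ u_k = D_{i→k} ∘ g_i ∘ u_i = D_{i→k} ∘ v_i ∘ j = v_k ∘ j$, i.e.\ a genuinely commuting square at stage $k$. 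Since $g_k$ is a fibration and $j ∈ 𝕁$, this square has a filler $ℓ ∶ B → C_k$ with $ℓ ∘ j = u_k$ and $g_k ∘ ℓ = v_k$. Then $α_k ∘ ℓ ∶ B → C$ is the desired filler for the original square: $(α_k ∘ ℓ) ∘ j = α_k ∘ u_k = α_k ∘ C_{i→k} ∘ u_i = α_i ∘ u_i = u$, and $g ∘ (α_k ∘ ℓ) = β_k ∘ g_k ∘ ℓ = β_k ∘ v_k = β_i ∘ v_i = v$. Hence $g ∈ 𝕁^⋔$.

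There is no genuine obstacle here beyond careful bookkeeping with filteredness; the two points that need attention are (i) that the two legs $u$ and $v$ must be pushed into a \emph{common} stage of the diagram — which forces the use of finite presentability of the codomain $B$, the hypothesis's only role for codomains — and (ii) that an \emph{extra} application of finite presentability of $\dom(j)$ is needed to make the lifting square commute on the nose at some stage, since a filtered colimit only identifies parallel maps "further along" the diagram. Since the weak lifting property asks only for existence of a filler, no uniqueness or coherence between the choices of $i$, $k$, and $ℓ$ is required, which keeps the argument routine.
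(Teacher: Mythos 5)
Your proof is correct and follows essentially the same route as the paper's: compute the colimit pointwise, use finite presentability of the domain and codomain of $j$ plus filteredness to push the lifting problem down to a single stage, lift there, and push back up. The one difference is that you explicitly carry out the extra equalisation step needed to make the square commute on the nose at some stage $k$ (a point the paper's proof compresses into ``w.l.o.g.\ we may take $i_0=i_1$, such that $(u,v)∶ j → f$ factors through $f_{i_0}$''), which is a welcome bit of added rigour rather than a divergence in method.
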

\begin{proof}
  Let us consider any given filtered diagram $(fᵢ∶ Aᵢ → Bᵢ)_{i ∈ 𝔻}$
  of fibrations, and a colimit $f∶ A → B$ in the arrow category, say
  $𝐂^{→}$.  We must show that $j \mathrel{⋔} f$ for all $j ∈ 𝕁$. Let
  us thus  consider any given commuting square
  \begin{center}
    \diag{%
      X \& A \\
      Y \& B %
    }{%
      (m-1-1) edge[labela={u}] (m-1-2) %
      edge[labell={j}] (m-2-1) %
      (m-2-1) edge[labelb={v}] (m-2-2) %
      (m-1-2) edge[labelr={f}] (m-2-2) %
    }
  \end{center}
  with $j ∈ 𝕁$.  Colimits in the arrow category are pointwise, so
  $A = \colim ᵢ Aᵢ$ and $B = \colim ᵢ Bᵢ$. Thus, by finite
  presentability of $X$ and $Y$, and by filteredness of the diagram
  $𝔻 → 𝐂$, $u$ and $v$ factor through some $A_{i₀}$ and $B_{i₁}$,
  respectively. By filteredness of the diagram again, w.l.o.g., we
  may take $i₀ = i₁$, such that $(u,v)∶ j → f$ factors through
  $f_{i₀}$.
  But because $f_{i₀}$ is a fibration, we find a lifting as
  in
  \begin{center}
    \diag{%
      X \& A_{i₀} \& A \\
      Y \& B_{i₀} \& B\rlap{,} %
    }{%
      (m-1-1) edge[labela={u'}] (m-1-2) %
      edge[labell={j}] (m-2-1) %
      edge[bend left,labela={u}] (m-1-3) %
      (m-2-1) edge[labela={v'}] (m-2-2) %
      edge[dashed,labelal={}] (m-1-2) %
      edge[bend right,labelb={v}] (m-2-3) %
      (m-1-2) edge[labelr={f_{i₀}}] (m-2-2) %
      (m-1-2) edge[labela={}] (m-1-3) %
      (m-2-2) edge[labelb={}] (m-2-3) %
      (m-1-3) edge[labelr={f}] (m-2-3) %
    }
  \end{center}
  which provides the desired lifting for the original square.
\end{proof}

\section{A brief review of Howe's method}\label{s:prereq}
\subsection{Applicative bisimilarity}
Let us consider the standard, big-step presentation of call-by-name
$λ$-calculus:
\begin{mathpar}
  \inferrule{ }{λx.e ⇓ λx.e}
  \and
  \inferrule{e₁ ⇓ λx.e'₁ \\ e'₁[x↦e₂] ⇓ e₃}{e₁\ e₂ ⇓ e₃}
\end{mathpar}
Standardly, the evaluation relation $⇓$ is considered between closed
terms only.

Applicative bisimilarity is an important notion of program equivalence
in this language. Indeed, it is coinductive, so one may prove that any
two given programs are applicative bisimilar merely by exhibiting an
applicative bisimulation. Furthermore, it is sound and complete
w.r.t.\ (i.e., it coincides with) standard contextual equivalence.

Let us briefly recall the definition.  Applicative bisimilarity is
standardly introduced in two stages, which we now recall.
\begin{defiC}[{\cite[Definition~2.3]{Applicative}}]
  A relation $R$ over closed $λ$-terms is an \emph{applicative
    simulation} iff $e₁ \mathrel{R} e₂$ and $e₁ ⇓ λx.e'₁$ entail
  the existence of $e'₂$ such that $e₂ ⇓ λx.e'₂$ and, for all terms
  $e$, $e'₁[x↦e] \mathrel{R} e'₂[x↦e]$.

  An \emph{applicative bisimulation} is an applicative simulation $R$
  whose converse, say $R^†$, is also an applicative simulation.
\end{defiC}
Applicative bisimulations are closed under unions, and so there is a
largest applicative bisimulation, called \emph{applicative
  bisimilarity} and denoted by $∼$.

Then comes the second stage:
\begin{defi}
  The \emph{open extension} of a relation $R$ on closed terms is the
  relation $R^⊗$ on potentially open terms such that
  $e \mathrel{R^⊗} e'$ iff for all closed substitutions $σ$ covering
  all involved free variables we have $e[σ] \mathrel{R} e'[σ]$.
\end{defi}

Let us readily notice the following alternative characterisation of
open extension.
\begin{defi}
  A relation $S$ on open terms is \emph{substitution-closed} iff for
  all $e \mathrel{S} e'$ and (potentially open) substitutions $σ$, we
  have $e[σ] \mathrel{S} e'[σ]$.
\end{defi}
\begin{lem}\label{lem:bisimequiv}
  The open extension of any relation $R$ is the greatest
  \emph{substitution-closed} relation contained in $R$ on closed
  terms.
\end{lem}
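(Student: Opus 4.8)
The plan is to unfold the two definitions and check three things: that $R^⊗$ is substitution-closed, that its restriction to closed terms is contained in $R$, and that it is the largest relation with both of these properties. None of the steps is deep; the only care needed is the bookkeeping of free variables.

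First I would show that $R^⊗$ is substitution-closed. Suppose $e \mathrel{R^⊗} e'$ and let $σ$ be an arbitrary (possibly open) substitution; I must prove $e[σ] \mathrel{R^⊗} e'[σ]$, i.e.\ that $(e[σ])[τ] \mathrel R (e'[σ])[τ]$ for every closed substitution $τ$ covering the free variables of $e[σ]$ and $e'[σ]$. The key point is the associativity-style identity $(e[σ])[τ] = e[σ;τ]$, where $σ;τ$ denotes $x ↦ (σx)[τ]$, together with the observation that $σ;τ$ is closed and covers the free variables of $e$ and of $e'$: indeed, for a free variable $x$ of $e$ (or $e'$), the free variables of $σx$ all occur free in $e[σ]$ (resp.\ $e'[σ]$), hence are covered by $τ$, so $(σx)[τ]$ is closed. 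The claim then follows by applying $e \mathrel{R^⊗} e'$ to the closed covering substitution $σ;τ$. Next, for containment in $R$ on closed terms, I instantiate the definition of $R^⊗$ at the empty substitution: if $e, e'$ are closed and $e \mathrel{R^⊗} e'$, then, since the empty substitution is closed and vacuously covers the (empty set of) free variables of $e$ and $e'$, we get $e = e[\varnothing] \mathrel R e'[\varnothing] = e'$.

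Finally, for maximality, let $S$ be any substitution-closed relation on open terms whose restriction to closed terms is contained in $R$, and suppose $e \mathrel S e'$. For any closed substitution $σ$ covering the free variables of $e$ and $e'$, substitution-closedness of $S$ gives $e[σ] \mathrel S e'[σ]$; as $e[σ]$ and $e'[σ]$ are closed, the hypothesis on $S$ yields $e[σ] \mathrel R e'[σ]$. Since $σ$ was arbitrary, $e \mathrel{R^⊗} e'$, so $S ⊆ R^⊗$. Combining the three points gives the statement.

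I expect the only delicate point to be the handling of free variables in the first step — in particular verifying that $σ;τ$ really is a legitimate closed covering substitution for $e$ and $e'$, which relies on $\mathrm{FV}(σx) ⊆ \mathrm{FV}(e[σ])$ for $x ∈ \mathrm{FV}(e)$. Everything else is a direct unfolding of the definitions, so no genuine obstacle is anticipated.
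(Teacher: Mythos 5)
Your proof is correct and follows essentially the same route as the paper's: substitution-closedness of $R^⊗$ via the composite substitution $σ;τ$ (the paper writes it $σ[γ]$), and maximality by applying substitution-closedness of $S$ to closing substitutions and then the containment in $R$ on closed terms. The only difference is that you explicitly verify that $R^⊗$ restricted to closed terms is contained in $R$ (via the empty substitution), a step the paper leaves implicit; this is a harmless and correct addition.
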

\begin{proof}
  Let us first show that $R^⊗$ is substitution-closed.  For any
  $e₁ \mathrel{R^⊗} e₂$ and $σ$, we want to show
  $e₁[σ] \mathrel{R^⊗} e₂[σ]$.  For this, we in turn need to show that
  for all closing substitutions $γ$, we have
  $e₁[σ][γ] \mathrel{R} e₂[σ][γ]$.  But $eᵢ[σ][γ] = eᵢ[σ[γ]]$, where
  by definition and $σ[γ](x) = σ(x)[γ]$. Furthermore, $σ[γ]$ is closing. So,
  because we have $e₁ \mathrel{R^⊗} e₂$, by definition of open extension, we
  get $e₁[σ][γ] \mathrel{R} e₂[σ][γ]$ as desired.

  Let us now show that $R^⊗$ is the greatest substitution-closed
  relation contained in $R$ on closed terms. For this, consider any
  substitution-closed $R'$ contained in $R$ on closed terms: for all
  $e \mathrel{R'} e'$, by substitution-closedness, we have
  $e[σ] \mathrel{R'} e'[σ]$ for all closing $σ$.  So because $R'$ is
  contained in $R$ on closed terms, we further have
  $e[σ] \mathrel{R} e'[σ]$. This proves $e \mathrel{R^⊗} e'$, and 
  thus $R' ⊆ R^⊗$ as desired.
\end{proof}

The result we wish to abstract over is the following
(see~\cite{Pitts:howe} for a historical account).
\begin{thm}\label{thm:wow}
  The open extension $∼^⊗$ of applicative bisimilarity is a congruence:
  it is an equivalence relation, and furthermore it is context-closed, i.e.,
  \begin{itemize}
  \item   $e₁ \mathrel{∼^⊗} e₂$ entails $λx.e₁ \mathrel{∼^⊗} λx.e₂$ for all $x$;
  \item $e₁ \mathrel{∼^⊗} e₂$ and $e'₁ \mathrel{∼^⊗} e'₂$ entail
    $e₁\ e'₁ \mathrel{∼^⊗} e₂\ e'₂$.
  \end{itemize}
\end{thm}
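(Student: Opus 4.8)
The plan is to follow Howe's method, which proceeds by defining an auxiliary relation—Howe's closure $\sim^{⊗H}$ of $\sim^⊗$—that is context-closed essentially by construction, and then showing that it coincides with $\sim^⊗$. Concretely, I would define $\sim^{⊗H}$ inductively as the least relation $R$ on open terms such that (i) $R$ is substitution-closed; (ii) whenever $e \mathrel{\sim^⊗} e'$ and $e' \mathrel{R} e''$, then $e \mathrel{R} e''$ (precomposition with $\sim^⊗$, i.e.\ $\sim^⊗ \cdot R \subseteq R$); and (iii) $R$ is closed under the term-forming operations (abstraction and application), built "on top of" clause (ii) so that each syntactic constructor applied to $R$-related arguments, then precomposed with $\sim^⊗$, lands in $R$. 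From the definition one reads off immediately that $\sim^{⊗H}$ is reflexive (hence contains $\sim^⊗$ via clauses (ii)–(iii) and reflexivity), substitution-closed, and context-closed; so the entire content of the theorem reduces to proving the reverse inclusion $\sim^{⊗H} \subseteq \sim^⊗$, together with the observation that $\sim^⊗$ is an equivalence relation.

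The first routine block establishes the easy structural lemmas about $\sim^{⊗H}$: it is reflexive, it contains $\sim^⊗$, it is substitution-closed (clause (i), checked against the inductive generation using the fact that $\sim^⊗$ is itself substitution-closed by Lemma~\ref{lem:bisimequiv}), it is transitive after composing with $\sim^⊗$ (a so-called "pseudo-transitivity": $\sim^{⊗H} \cdot \sim^⊗ \subseteq \sim^{⊗H}$, proved by induction on the generation of $\sim^{⊗H}$), and the converse relation $(\sim^{⊗H})^†$ restricted to closed terms is an applicative simulation candidate. I would also record that $\sim^⊗$ is an equivalence relation: reflexivity and transitivity of $\sim$ on closed terms are standard, symmetry is built into the definition of applicative bisimulation, and all three properties transfer to the open extension pointwise.

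The heart of the argument is the \emph{key lemma}: the restriction of $\sim^{⊗H}$ to closed terms is an applicative simulation, i.e.\ if $e_1 \mathrel{\sim^{⊗H}} e_2$ (closed) and $e_1 ⇓ λx.e'_1$, then there is $e'_2$ with $e_2 ⇓ λx.e'_2$ and $e'_1[x↦e] \mathrel{\sim^{⊗H}} e'_2[x↦e]$ for all $e$. This is proved by induction on the derivation of $e_1 ⇓ λx.e'_1$, using the evaluation rules of the big-step semantics, and crucially exploiting substitution-closedness of $\sim^{⊗H}$ (to handle the substitution $e'_1[x↦e_2]$ in the application rule) together with pseudo-transitivity (to absorb the trailing $\sim^⊗$ produced by the context-closure clauses). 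This step is the main obstacle: one must carefully track how the inductive clauses of $\sim^{⊗H}$ interact with evaluation, and the application case requires inverting the generation of $\sim^{⊗H}$ on an application term, applying the induction hypothesis twice, and re-closing under substitution—this is exactly the delicate computation at the core of Howe's original proof.

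Once the key lemma is in hand, I finish as follows. By the key lemma, $\sim^{⊗H}$ restricted to closed terms is an applicative simulation; by pseudo-transitivity and the fact that $\sim^{⊗H}$ contains $\sim^⊗$ (hence its symmetric closure behaves well), a short diagram chase shows that the \emph{symmetrisation} of $\sim^{⊗H}$ restricted to closed terms is an applicative bisimulation—here one uses that $\sim^⊗$ is symmetric and that $(\sim^{⊗H} \cup (\sim^{⊗H})^†)^\ast$ is again substitution-closed and context-closed, so it survives the argument. Since $\sim$ is the greatest applicative bisimulation, this symmetrisation is contained in $\sim$ on closed terms; then by Lemma~\ref{lem:bisimequiv} (as $\sim^{⊗H}$ is substitution-closed and contained in $\sim$ on closed terms) we get $\sim^{⊗H} \subseteq \sim^⊗$. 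Combined with the reverse inclusion $\sim^⊗ \subseteq \sim^{⊗H}$ from the first block, we obtain $\sim^{⊗H} = \sim^⊗$, and since $\sim^{⊗H}$ is context-closed and $\sim^⊗$ is an equivalence relation, $\sim^⊗$ is a congruence, as required.
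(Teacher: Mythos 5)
Your overall strategy is exactly the paper's (and Howe's): build a closure of $\sim^{\otimes}$ that is context-closed by construction, prove a key simulation lemma by induction on evaluation derivations, pass to a symmetric transitive closure, and conclude by maximality of $\sim^{\otimes}$. However, as written, your defining clause for the Howe closure puts the composition with $\sim^{\otimes}$ on the wrong side, and this is not a cosmetic choice. You impose it on the \emph{source} side (``$e \mathrel{\sim^{\otimes}} e'$ and $e' \mathrel{R} e''$ imply $e \mathrel{R} e''$''), whereas Howe's closure --- the paper's $\sim^{\bullet}$ --- absorbs $\sim^{\otimes}$ on the \emph{target} side: coupled with context closure, each generating clause concludes $\sigma(\vec a)\mathrel{\sim^{\bullet}}e''$ from $\vec a\mathrel{\sim^{\bullet}}\vec a'$ and $\sigma(\vec a')\mathrel{\sim^{\otimes}}e''$. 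The target-side orientation is what makes the method work: (a) it gives ${\sim^{\bullet}};{\sim^{\otimes}}\subseteq{\sim^{\bullet}}$ for free (it is literally the second rule), whereas with your orientation this ``pseudo-transitivity'' cannot be recovered by induction on the generation of the closure; and (b) in the key lemma, inverting $e\mathrel{\sim^{\bullet}}f$ must exhibit the \emph{evaluated} term $e$ literally as a constructor applied to closure-related arguments, so that the case analysis on the derivation of $e\Downarrow v$ lines up with the last clause generating $e\mathrel{\sim^{\bullet}}f$; the residual $\sim^{\otimes}$ then sits between the reconstructed term $\sigma(\vec a')$ and $f$, where the simulation property of $\sim^{\otimes}$ transfers the transition one final step and the trailing $\sim^{\otimes}$ on targets is reabsorbed by the second rule. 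With $\sim^{\otimes}$ on the source side, $e$ is only bisimilar to a term of the right shape, and transferring the transition across $\sim^{\otimes}$ first produces a fresh evaluation derivation to which the induction hypothesis no longer applies. Your later remarks (``trailing $\sim^{\otimes}$'', $\sim^{\otimes H}\cdot\sim^{\otimes}\subseteq\sim^{\otimes H}$) suggest you intend the standard orientation, but the definition must be stated that way for the proof to go through.

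Two smaller points. First, your clauses (i)--(iii) have no base case: each one requires something already in $R$ as a premise, so the least relation satisfying them is empty; you need the explicit rule $x\mathrel{\sim^{\bullet}}x$, after which reflexivity and containment of $\sim^{\otimes}$ are derived, not read off. Relatedly, substitution-closedness should be \emph{derived} by induction (as the paper does), not imposed as a generating clause, since an extra generating clause adds a case to the key lemma's induction that it cannot handle. Second, your concluding step takes the reflexive-transitive closure of $\sim^{\otimes H}\cup(\sim^{\otimes H})^{\dagger}$ and asserts via a ``short diagram chase'' that it is a bisimulation; but the union of a simulation with its converse is not a simulation in general --- that is precisely what is at stake. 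The paper's route is the correct one: show that the transitive closure $(\sim^{\bullet})^{+}$ is already \emph{symmetric} (by an induction over the generation of $\sim^{\bullet}$ using symmetry of $\sim$), and use the fact that transitive closures of simulations are simulations.
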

Proving that $∼^⊗$ is an equivalence relation is in fact straightforward. 
In the following, we focus on the context-closedness property. 
\subsection{Howe's method}
Howe's method for proving Theorem~\ref{thm:wow} consists in
considering a suitable relation $∼^•$, closed under substitution and
context, and containing $∼^⊗$ by construction. He then shows that this
relation $∼^•$ is an applicative bisimulation. By maximality of $∼^⊗$,
we thus also have ${∼^•} ⊆ {∼^⊗}$ hence both relations coincide and
$∼^⊗$ is context-closed as desired.  However, as explained
in~\cite[§5.1]{BHL}, the presence of a substitution in the premises of
a transition rule seems to require ${∼^•}$ to be closed under
\emph{heterogeneous} substitution, in the sense that, e.g., if
$e₁ ∼^• e'₁$ and $e₂ ∼^• e'₂$ (for open terms), then
$e₁[x↦e₂] ∼^• e'₁[x↦e'₂]$.  The problem is that building this into the
definition of $∼^•$ leads to difficulties in the proof that it is an
applicative bisimulation. Howe's workaround consists in requiring $∼^•$ to be
closed under sequential composition with $∼^⊗$ from the outset.
Coupling this right action with context closedness, he thus defines
$∼^•$ as the smallest context-closed relation satisfying the rules
  \begin{mathpar}
    \inferrule{ }{x ∼^• x} \and
    \inferrule{e ∼^• e' \\ e' ∼^⊗ e''}{e ∼^• e''}~·
  \end{mathpar}
  By construction, $∼^•$ is reflexive and context-closed.  By 
  induction, it also substitution-closed. Furthermore, by reflexivity
  and the second rule, it also contains $∼^⊗$, and finally the second
  rule clearly entails ${∼^•};{∼^⊗} ⊆ {∼^•}$, where ``$;$'' denotes
  relational (or sequential) composition.  It takes an induction to
  prove stability under heterogeneous substitution, but to give a feel
  for it, in the basic case where $e₁ ∼^⊗ e'₁$, we have
\[e₁[x↦e₂] ∼^• e₁[x↦e'₂] ∼^⊗ e'₁[x↦e'₂]\] by context closedness of $∼^•$
and substitution closedness of $∼^⊗$, so we conclude by
${∼^•};{∼^⊗} ⊆ {∼^•}$.

The initial plan was to show that $∼^•$ is an applicative bisimulation
and deduce that it coincides with $∼^⊗$. It can in fact be slightly
optimised by first showing that $∼^•$ is an applicative simulation,
and then that its transitive closure $(∼^•)⁺$ is
symmetric.
The relation $(∼^•)⁺$ is also an applicative
simulation, hence by symmetry an applicative bisimulation.  This entails
the last inclusion in the chain ${∼^⊗} ⊆ {∼^•} ⊆ {(∼^•)⁺} ⊆ {∼^⊗},$
showing that all relations coincide. Finally, because $∼^•$ is
context-closed, so is $∼^⊗$, as desired.

\subsection{Non-standard presentation}\label{ss:nonstandard}
The above, standard evaluation rules for call-by-name $λ$-calculus are
not directly compatible with our framework.  We thus adopt a slightly
different presentation, where the evaluation relation relates closed
terms to terms with just one potential free variable.  The problem and
its solution should become clear in~§\ref{ss:wow:format}, where we
investigate Howe's general format~\cite[Lemma 6.1]{DBLP:journals/iandc/Howe96}.
There, we show that any language complying with Howe's format may be
covered by our framework, up to suitable encoding.  The present,
non-standard presentation is a slight variant of this encoding,
optimised for $λ$-calculus.
The new transition rules are as follows.
\begin{mathpar}
  \inferrule{ }{λx.e ⇓ e}
  \and
  \inferrule{e₁ ⇓ e'₁ \\ e'₁[e₂] ⇓ e₃}{e₁\ e₂ ⇓ e₃}
\end{mathpar}
Here $e'₁[e₂]$ denotes substitution of the unique potential free
variable in $e'₁$ by $e₂$.  We will see below that, with this
transition system, the essentially standard notion of bisimulation
coupled with the substitution-closedness requirement yields
applicative bisimilarity.

\section{Overview by example}\label{s:overview}
In this section, we describe one particular instance of our framework,
which models call-by-name $λ$-calculus.

\subsection{Syntax}\label{s:exsyntax}
Let us first define the syntax of $λ$-calculus,
following~\cite{fiore:presheaf}, as an initial\footnote{This pattern
  is advocated by the approach of \emph{initial algebra
    semantics}~\cite{InitialSemantics}, where initiality provides a
  recursion principle.} object in a suitable category of models.  Very
roughly, a model of $λ$-calculus syntax should be something equipped
with operations modelling abstraction and application, but also with
substitution.  Furthermore, certain natural compatibility axioms
should be satisfied, e.g.,
\begin{equation}
  (e₁\ e₂)[σ] = e₁[σ]\ e₂[σ]\rlap{.}
  \label{eq:substapp}
\end{equation}

A natural setting for specifying such operations is the functor
category $𝐂₀ ≔ [𝔽,𝐒𝐞𝐭]$, where $𝔽 ↪ 𝐒𝐞𝐭$ denotes the full subcategory
spanning all sets of the form $n$ (i.e., $\ens{1,…,n}$, recalling
notation from §\ref{s:intro}).  For any $X ∈ [𝔽,𝐒𝐞𝐭]$ and $n ∈ 𝔽$, we
think of $X(n)$ as a set of `terms' with $n$ potential free variables,
e.g., in $\ens{1,…,n}$ or, if the reader prefers, $\ens{x₁,…,xₙ}$. The
action of $X$ on morphisms $n → n'$ is thought of as variable
renaming.  Returning to operations, being equipped with abstraction is
the same as being a $Σ₀$-algebra, where $Σ₀∶ 𝐂₀ → 𝐂₀$ is defined by
$Σ₀(X)(n) = X(n+1)$.  An algebra structure on any $X$ thus consists
of a family of maps $X(n+1) → X(n)$, natural in $n$.  Similarly, for
specifying both application and abstraction, we consider
\begin{equation}
  Σ_Λ(X)(n) = X(n+1) + X(n)².
  \label{eq:lambdasyntax}
\end{equation}

Let us now consider substitution. The idea here is to equip $𝐂₀$ with
monoidal structure $(⊗,I)$, such that
\begin{itemize}
\item elements of $(X⊗Y)(n)$ are like
\emph{explicit substitutions} $x⦇σ⦈$, where $x ∈ X(p)$ and
$σ∶ p → Y(n)$ for some $p$, considered equivalent up to some standard
equations\footnote{In~\cite{BHL}, we instead considered a skew-monoidal
  variant where the tensor product does not enforce any standard equation.};
\item elements of $I(n) ≔ \ens{1,…,n}$ are merely variables.
\end{itemize}
Being equipped with substitution (and variables) is thus
the same as being a \emph{monoid} for this tensor product:
\begin{itemize}
\item the multiplication $m_X∶ X⊗X → X$ maps any formal, explicit
  substitution $x⦇σ⦈$ to an actual substitution $x[σ]$, and
\item the unit $e_X∶ I → X$ injects variables into terms.
\end{itemize}
Finally, how do we enforce equations such as~\eqref{eq:substapp}?
This goes in two stages:
\begin{itemize}
\item we first collect the way substitution is supposed to commute
  with each operation, by providing a \emph{pointed strength}, i.e., a
  natural transformation with components
  $st_{X,Y}∶ Σ_Λ(X)⊗Y → Σ_Λ(X⊗Y)$, where $X ∈ 𝐂₀$ and $Y ∈ I/𝐂₀$,
  satisfying some equations~\cite[§I.1.2]{DBLP:conf/lics/Fiore08};
\item we then use the pointed strength to enforce all equations in one
  go, by requiring models to have compatible $Σ_Λ$-algebra and
  substitution structure, in a suitable sense.
\end{itemize}
Let us first explain the notion of pointed strength.
\begin{description}
\item[Application] For modelling Equation~\eqref{eq:substapp} for
  application, we would in particular define $st_{X,Y}$ to map any
  $(in₂(x₁,x₂))⦇σ⦈$ to $in₂(x₁⦇σ⦈,x₂⦇σ⦈)$, for all $x₁,x₂ ∈ X(p)$ and
  $σ∶ p → Y(n)$. (The coproduct injection $in₂$ here acts as a formal
  application, recalling $Σ_Λ(X)(n) = X(n+1) + X(n)²$.)
\item[Abstraction] For abstraction, let us start by first stating the
  corresponding equation.  We will then define the pointed strength
  accordingly.  Supposing that $Y$ is equipped with a point
  $e_Y∶ I → Y$, we define $σ^↑∶ p+1 → Y(n+1)$ by copairing
  \begin{center}
    $p \xto{σ} Y(n) \xto{Y(in₁)} Y(n+1)$ \hfil and \hfil
    $1 = I(1) \xto{(e_Y)₁} Y(1) \xto{Y(in₂)} Y(n+1)$.
  \end{center}
  The equation is then
\begin{equation}
  \label{eq:substlambda}
  λ(e)[σ] = λ(e[σ^↑]).
\end{equation}
Accordingly, we define the pointed strength to map any $in₁(x)⦇σ⦈$,
where $x ∈ X(p+1)$ and $σ∶ p → Y(n)$, to $in₁(x⦇σ^↑⦈)$.
\end{description}
Let us now go through the second stage of how we impose the desired
equations: a model of syntax will be a monoid $X$ equipped with
$Σ_Λ$-algebra structure $ν_X∶ Σ_Λ(X) → X$, such that the following
diagram commutes.
  \begin{equation}
    \diag{%
      Σ_Λ(X)⊗X \& Σ_Λ (X⊗X) \& Σ_Λ(X) \\
      X⊗X \& \& X %
    }{%
      (m-1-1) edge[labela={st_{X,X}}] (m-1-2) %
      edge[labell={ν_X⊗X}] (m-2-1) %
      (m-1-2) edge[labela={Σ_Λ(m_X)}] (m-1-3) %
      (m-2-1) edge[labelb={m_X}] (m-2-3) %
      (m-1-3) edge[labelr={ν_X}] (m-2-3) %
    }
    \label{eq:monoidalg}
  \end{equation}
  Indeed, suppose given, e.g., $in₁(e)⦇σ⦈ ∈ Σ_Λ(X)⊗X$, by applying the
  left then bottom morphisms we obtain $λ(e)[σ]$, while applying the
  top then right morphisms we obtain $λ(e[σ^↑])$, as desired.

All in all, we have:  
\begin{defi}
  For any finitary, pointed strong endofunctor $Σ₀$, a \emph{monoid
    algebra} for $Σ₀$, or a \emph{$Σ₀$-monoid}, is a $Σ₀$-algebra
  $(X,ν_X∶ Σ₀(X) → X)$, equipped with monoid structure
  $(m_X∶ X⊗X → X, e_X∶ I → X)$, such that~\eqref{eq:monoidalg}
  commutes.  A $Σ₀$-monoid \emph{morphism} is a morphism in $𝐂₀$ which
  is both a monoid morphism and a $Σ₀$-algebra morphism.

  Let $Σ₀\mon$ denote the category of $Σ₀$-monoids and morphisms between
  them.
\end{defi}

Let us conclude by (slightly informally) stating the result exhibiting
standard syntax as the initial
model~\cite{fiore:presheaf,DBLP:conf/rta/FioreS17,BHL}. See
Proposition~\ref{prop:ptstrgeneral} below for a general and rigorous
statement.
\begin{prop}\label{prop:ptstr}
  For any finitary, pointed strong endofunctor $Σ₀$, under mild
  hypotheses, the forgetful functor $Σ₀\mon → 𝐂₀$ is monadic, and the
  free $Σ₀$-algebra over $I$ (equivalently the initial
  $(I+Σ₀)$-algebra) is an initial $Σ₀$-monoid.
\end{prop}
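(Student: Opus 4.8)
The plan is to split the statement into its two halves: monadicity of the forgetful functor $U : Σ₀\mon → 𝐂₀$, and the identification of the initial $Σ₀$-monoid with the free $Σ₀$-algebra over $I$. I would begin with monadicity. The natural route is to exhibit $Σ₀\mon$ as (equivalent to) a category of algebras for a suitable monad on $𝐂₀$, or to verify Beck's monadicity criterion for $U$ directly. The cleanest path is the classical two-step argument à la Fiore--Plotkin--Turi: first note that the category of monoids for the tensor product $(⊗, I)$ on $𝐂₀$ is monadic over $𝐂₀$ (this requires $𝐂₀$ to be cocomplete and $X ⊗ -$ to preserve the relevant colimits — the "mild hypotheses" of the statement, presumably local presentability of $𝐂₀$ together with finitariness/accessibility of the tensor in the appropriate variable), and similarly the category of $Σ₀$-algebras is monadic over $𝐂₀$ since $Σ₀$ is finitary; then observe that a $Σ₀$-monoid is precisely an algebra for the coproduct-in-the-2-category-of-monads–style combination of these two structures, with the compatibility diagram \eqref{eq:monoidalg} being exactly the distributivity condition making the composite well-defined. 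Here the pointed strength on $Σ₀$ is what supplies the distributive law (or the needed "mixed" interaction) between the free-monoid monad and the free-$Σ₀$-algebra monad; I would cite \cite{fiore:presheaf,DBLP:conf/rta/FioreS17,BHL} for the precise bookkeeping and refer forward to Proposition~\ref{prop:ptstrgeneral} for the rigorous version.

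For the second half, I would use the general principle that, for a well-behaved endofunctor $F$ on a cocomplete category, the free $F$-algebra over an object $A$ coincides with the initial $(A + F(-))$-algebra (this is the parenthetical remark in the statement, and it is the standard colimit-of-the-initial-chain construction from \cite{Reiterman}, using that $I + Σ₀$ is finitary because $Σ₀$ is). So it suffices to show that the initial $(I + Σ₀)$-algebra, call its carrier $Z₀$, carries a canonical $Σ₀$-monoid structure and is initial as such. The $Σ₀$-algebra structure and the unit $e : I → Z₀$ come for free from the $(I + Σ₀)$-algebra structure. The monoid multiplication $m : Z₀ ⊗ Z₀ → Z₀$ is defined by initiality: one equips $Z₀ ⊗ Z₀$ (or rather exhibits the functor $(-) ⊗ Z₀$ as lifting) with an $(I + Σ₀)$-algebra structure using the pointed strength $st$ — precisely as in diagram~\eqref{eq:monoidalg}, which becomes the defining equation — and $m$ is then the induced unique algebra morphism out of the initial algebra. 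Associativity of $m$, the unit laws, and compatibility \eqref{eq:monoidalg} are then all proved by the same technique: both sides of each required equation are $(I + Σ₀)$-algebra morphisms out of an initial algebra (into an appropriately constructed algebra on a threefold tensor, resp. on $Z₀$), so they coincide by uniqueness. Finally, given any other $Σ₀$-monoid $(X, ν_X, m_X, e_X)$, the unique $(I + Σ₀)$-algebra morphism $Z₀ → X$ is automatically a $Σ₀$-algebra morphism and respects $e$; that it also respects $m$ is once more a uniqueness argument, comparing two algebra morphisms $Z₀ ⊗ Z₀ → X$. Hence $Z₀$ is the initial $Σ₀$-monoid.

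The main obstacle, and the part requiring genuine care rather than routine diagram chasing, is the construction of the $(I + Σ₀)$-algebra structure on $(-) ⊗ Z₀$ — equivalently, showing that the tensor interacts correctly with the free-algebra construction so that $m$ is well-defined and the strength axioms genuinely deliver associativity and \eqref{eq:monoidalg}. This is exactly where the "mild hypotheses" bite: one needs $X ⊗ -$ (and/or $- ⊗ Y$) to commute with the colimits building the free algebra, and one needs the pointed strength's coherence axioms \cite[§I.1.2]{DBLP:conf/lics/Fiore08} to match up with the monoid axioms. I would therefore state precisely which preservation properties of $⊗$ are assumed, and then defer the detailed verification to the general Proposition~\ref{prop:ptstrgeneral}, treating the present Proposition~\ref{prop:ptstr} as the informal, instance-level shadow of that result.
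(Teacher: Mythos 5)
Your proposal is correct and matches the paper's treatment: Proposition~\ref{prop:ptstr} is stated there only as an informal instance of Proposition~\ref{prop:ptstrgeneral}, whose proof is deferred to the cited literature and a Coq formalisation, with the ``mild hypotheses'' made precise as cocompleteness of $𝐂₀$ plus the tensor preserving all colimits on the left and filtered colimits on the right. Your sketch of the underlying argument --- combining the free-monoid and free-$Σ₀$-algebra structures via the pointed strength, defining the multiplication on the initial $(I+Σ₀)$-algebra by initiality, and checking the monoid and compatibility laws by uniqueness of algebra morphisms --- is the standard one from the cited references and supplies detail the paper itself omits.
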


\begin{exa}\label{ex:syntax}
  In the case of $λ$-calculus, the initial $Σ_Λ$-monoid is thus the least
  fixed point $𝐙₀ ≔ μA.(I + Σ_Λ(A))$, which is isomorphic to the standard,
  low-level construction of syntax.
\end{exa}

From this, one may deduce a characterisation of not only the initial
$Σ_Λ$-monoid, but all \emph{free} $Σ_Λ$-monoids, or in other words an
explicit formula for the left adjoint to the forgetful functor.
Namely:
\begin{prop}\label{prop:freelambda}
  For any finitary, pointed strong endofunctor $Σ₀$, under the same
  hypotheses as in Proposition~\ref{prop:ptstr}, the free
  $Σ₀$-monoid, say $ℒ₀(K)$, over any $K ∈ 𝐂₀$ is
\[μA.(I + Σ₀(A) + K⊗A).\]
\end{prop}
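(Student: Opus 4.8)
The plan is to show that the object $\mathcal{L}_0(K) = \mu A.(I + \Sigma_0(A) + K \otimes A)$ carries a $\Sigma_0$-monoid structure and that, together with the obvious insertion $K \to K \otimes \mathcal{L}_0(K) \to \mathcal{L}_0(K)$ (using the unit of $\mathcal{L}_0(K)$), it satisfies the universal property of the free $\Sigma_0$-monoid over $K$. The key observation is that, by Proposition~\ref{prop:ptstr} applied not to $\Sigma_0$ but to the endofunctor $\Sigma_0(-) + K \otimes (-)$, the initial $(I + \Sigma_0(-) + K\otimes(-))$-algebra is an initial monoid algebra for $\Sigma_0(-) + K\otimes(-)$ — provided this enlarged endofunctor is again finitary and pointed strong, and that the ``mild hypotheses'' of Proposition~\ref{prop:ptstr} are stable under this modification. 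So the first step is to check exactly that: $\Sigma_0(-) + K\otimes(-)$ is finitary because $\Sigma_0$ is and because $K \otimes (-)$ preserves filtered colimits (the tensor is assumed suitably cocontinuous in our setting); and it inherits a pointed strength from that of $\Sigma_0$ by combining $st^{\Sigma_0}$ with the canonical ``strength'' of $K \otimes (-)$ given by associativity, $(K \otimes X) \otimes Y \xrightarrow{\alpha} K \otimes (X \otimes Y)$.

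The second step is to reconcile the two notions of algebra. A monoid algebra for $\Sigma_0(-) + K\otimes(-)$ is by definition a monoid $(X, m_X, e_X)$ equipped with a map $\Sigma_0(X) + K \otimes X \to X$ — equivalently a pair of a $\Sigma_0$-algebra structure $\nu_X \colon \Sigma_0(X) \to X$ and a map $k_X \colon K \otimes X \to X$ — satisfying the analogue of diagram~\eqref{eq:monoidalg} for the combined strength. Unwinding that single diagram along the coproduct decomposition, one sees it is equivalent to the conjunction of: (i) diagram~\eqref{eq:monoidalg} for $\Sigma_0$ alone, i.e. $(X,\nu_X,m_X,e_X)$ is a genuine $\Sigma_0$-monoid; and (ii) a compatibility square saying $k_X$ commutes with substitution in the expected way, namely $(K \otimes X) \otimes X \xrightarrow{\alpha} K\otimes(X\otimes X) \xrightarrow{K \otimes m_X} K \otimes X \xrightarrow{k_X} X$ equals $(K\otimes X)\otimes X \xrightarrow{k_X \otimes X} X \otimes X \xrightarrow{m_X} X$. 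But condition (ii) is precisely the statement that the map $\bar{k}_X \colon K \to X$ obtained as $K \cong K \otimes I \xrightarrow{K \otimes e_X} K \otimes X \xrightarrow{k_X} X$ freely determines $k_X$ as the composite $K \otimes X \xrightarrow{\bar k_X \otimes X} X \otimes X \xrightarrow{m_X} X$, and conversely any morphism $K \to X$ in $\mathbf{C}_0$ gives rise to such a $k_X$. Hence the category of monoid algebras for $\Sigma_0(-) + K\otimes(-)$ is isomorphic to the comma category $K / (\Sigma_0\mon \to \mathbf{C}_0)$, i.e. the category of $\Sigma_0$-monoids under $K$ in $\mathbf{C}_0$ — which is exactly the category whose initial object is the free $\Sigma_0$-monoid on $K$.

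The third and final step is to assemble these identifications: by the modified Proposition~\ref{prop:ptstr}, the enlarged category of monoid algebras has an initial object whose carrier is $\mu A.(I + \Sigma_0(A) + K \otimes A)$; by the isomorphism of step two, that initial object is the free $\Sigma_0$-monoid on $K$; so $\mathcal{L}_0(K) = \mu A.(I + \Sigma_0(A) + K \otimes A)$ as claimed. I expect the main obstacle to be step two — the careful bookkeeping that verifies diagram~\eqref{eq:monoidalg} for the combined strength decomposes exactly into the $\Sigma_0$-monoid axiom plus the ``$k_X$ is induced by a plain morphism $K \to X$'' condition, and in particular checking that the combined pointed strength really does restrict, on the $K \otimes (-)$ summand, to the associator (so that no spurious twist appears). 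A secondary, more routine obstacle is confirming that whatever ``mild hypotheses'' Proposition~\ref{prop:ptstr} requires of $\Sigma_0$ and of the monoidal structure are preserved when passing from $\Sigma_0$ to $\Sigma_0(-) + K \otimes (-)$; this should follow from cocontinuity of the tensor in each variable, which is in force throughout.
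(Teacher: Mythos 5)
Your proposal is correct and is essentially the intended argument: the paper leaves this proposition unproved, deferring to Proposition~\ref{prop:ptstr} and the cited references, but the standard deduction is exactly yours — apply the initial-monoid-algebra result to the pointed strong endofunctor $Σ₀(-)+K⊗(-)$ (whose strength on the second summand is the associator, available since the tensor preserves coproducts on the left and filtered colimits on the right) and identify $(Σ₀(-)+K⊗(-))$-monoids with $Σ₀$-monoids under $K$. This is also the same pattern the paper itself uses later for the Howe closure $Σ₀ᴴ(R)=Σ₀(R)+(R;{∼₀^⊗})$ in Propositions~\ref{prop:SoHptstr} and~\ref{prop:initalt}.
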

Syntactically, when $Σ₀=Σ_Λ$, letting $n ⊢_K e$ mean that
$e ∈ ℒ₀(K)(n)$, $ℒ₀(K)$ is inductively generated by the following
rules~\cite[§3.1]{DBLP:conf/aplas/Hamana04},
\begin{mathpar}
  \inferrule{ }{n ⊢_K x}~(x ∈ n) \and
  \inferrule{n ⊢_K e₁ \\ … \\ n ⊢_K eₚ}{n ⊢_K k⦇e₁,…,eₚ⦈}~(k ∈ K(p)) \\
  \inferrule{n ⊢_K e₁ \\ n ⊢_K e₂}{n ⊢_K e₁\ e₂} \and
  \inferrule{n+1 ⊢_K e}{n ⊢_K λ(e)}   
\end{mathpar}
modulo the equivalence
\[(f · k)⦇e₁,…,e_q⦈ ∼ k⦇e_{f(1)},…,e_{f(p)}⦈\rlap{,}\]
for all $f∶ p → q$, $k ∈ K(p)$, and $n ⊢_K e₁,…,e_q$,
or perhaps more synthetically
\[(f · k)⦇σ⦈ ∼ k⦇σ∘f⦈\rlap{,}\]
where $σ∶ q → ℒ₀(K)(n)$ denotes the
cotupling of $e₁,…,e_q$ viewed as
maps $1 → ℒ₀(K)(n)$.

The first rule is the standard rule for variables, while the second
one is for ``constants'', i.e., elements of $K$. It corresponds to the
term $K⊗A$ in the above fixed-point formula. When $p = 0$, we
sometimes shorten the notation from $k⦇⦈$ to $k$.  The last two rules
are the standard rules for application and abstraction, and they
correspond to the term $Σ₀(A)$ in the formula. The $Σ_Λ$-monoid
structure is syntactically straightforward; notably substitution
satisfies $k⦇e₁,…,eₚ⦈[σ] = k⦇e₁[σ],…,eₚ[σ]⦈$.

\subsection{Transition systems and bisimilarity}\label{ss:lts:ex}
The appropriate notion of transition system for $λ$-calculus
is as follows.
\begin{defi}
  A \emph{transition system} $X$ consists of
  \begin{itemize}
  \item a \emph{state object}
    $X₀ ∈ 𝐂₀ = [𝔽,𝐒𝐞𝐭]$, 
  \item a set $X₁$ of \emph{transitions}, and 
  \item maps $X₀(0) \xot{s_X} X₁ \xto{t_X} X₀(1)$ giving the source and target of
    transitions (cf.~§\ref{ss:nonstandard}).
  \end{itemize}
  Transition systems form a category $𝐂$, whose morphisms $X → Y$
  consist of compatible morphisms $f₀∶ X₀ → Y₀$ and $f₁∶ X₁ → Y₁$, in the sense
  that both of the following squares commute.
  \begin{center}
    \diag{%
      X₁ \& Y₁ \\
      X₀(0) \& Y₀(0) %
    }{%
      (m-1-1) edge[labela={f₁}] (m-1-2) %
      edge[labell={s_X}] (m-2-1) %
      (m-2-1) edge[labelb={f₀}] (m-2-2) %
      (m-1-2) edge[labelr={s_Y}] (m-2-2) %
    }
    \hfil
    \diag{%
      X₁ \& Y₁ \\
      X₀(1) \& Y₀(1) %
    }{%
      (m-1-1) edge[labela={f₁}] (m-1-2) %
      edge[labell={t_X}] (m-2-1) %
      (m-2-1) edge[labelb={f₀}] (m-2-2) %
      (m-1-2) edge[labelr={t_Y}] (m-2-2) %
    }    
  \end{center}
\end{defi}
\begin{nota}
  We write $r∶ e ⇓ f$ for $r ∈ X₁$ such that $s_X(r) = e$ and
  $t_X(r) = f$.
\end{nota}
\begin{exa} \label{ex:syntactic} The \emph{syntactic} transition
  system has $𝐙₀ ∈ 𝐂₀$ from Example~\ref{ex:syntax} as state object,
  and as transitions all derivations following the transition
  rules. We will come back to this case in Proposition~\ref{prop:Z}.
\end{exa}

Our next goal is to introduce bisimulation, for which it is convenient
to characterise $𝐂$ as a presheaf category. This characterisation may
be established by abstract means, but let us describe it concretely
first. It is clear from the definition that transitions systems are
glorified graphs. And they form a presheaf category for essentially
the same reason as graphs do. Here is the base category:
\begin{defi}
  Let $𝔽[⇓]$ denote the category obtained by augmenting $𝔽$ with an
  object $⇓$, together with morphisms $0 \xot{s_⇓} {⇓} \xto{t_⇓} 1$, and
  their formal composites with non-identity morphisms from $𝔽$.
\end{defi}
More concretely:
\begin{itemize}
\item There is exactly one morphism $0 → n$ in $𝔽$ for all $n$, which
  is an identity when $n=0$, so for all $n ≠ 0$ we have a morphism
  $s_{⇓,n}∶ {⇓} → n$ making the following triangle commute.
  \begin{center}
    \diag{%
      \& {⇓} \\
      0 \& \& n %
    }{%
      (m-1-2) edge[labelal={s_⇓}] (m-2-1) %
      (m-1-2) edge[labelar={s_{⇓,n}}] (m-2-3) %
      (m-2-1) edge[labelb={!}] (m-2-3) %
    }
  \end{center}
\item There are exactly $n$ morphisms $1 → n$ in $𝔽$ for all
  $n ∉ \ens{0,1}$ (and no morphisms $1→0$), so for all such $n$ and
  $i ∈ n$ we have a morphism $t_{⇓,n,i}∶ {⇓} → n$ making the following
  triangle commute.
  \begin{center}
    \diag{%
      \& {⇓} \\
      1 \& \& n %
    }{%
      (m-1-2) edge[labelal={t_⇓}] (m-2-1) %
      (m-1-2) edge[labelar={t_{⇓,n,i}}] (m-2-3) %
      (m-2-1) edge[labelb={i}] (m-2-3) %
    }
  \end{center}
\end{itemize}

\begin{prop}
  Transition systems are isomorphic to covariant presheaves on $𝔽[⇓]$.
\end{prop}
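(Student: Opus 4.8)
The plan is to produce a direct isomorphism of categories $𝐂 \cong [𝔽[⇓],𝐒𝐞𝐭]$, in the same spirit as the classical identification of directed multigraphs with presheaves on the walking parallel pair. One functor $Ψ∶ [𝔽[⇓],𝐒𝐞𝐭] → 𝐂$ is obtained by restriction along the inclusion $𝔽 ↪ 𝔽[⇓]$ together with evaluation at the adjoined morphisms: a functor $F∶ 𝔽[⇓] → 𝐒𝐞𝐭$ is sent to the transition system whose state object is $\restr{F}{𝔽} ∈ [𝔽,𝐒𝐞𝐭] = 𝐂₀$, whose transition set is $F({⇓})$, and whose source and target maps are $F(s_⇓)∶ F({⇓}) → F(0)$ and $F(t_⇓)∶ F({⇓}) → F(1)$; a natural transformation $α∶ F → F'$ is sent to the pair $(\restr{α}{𝔽}, α_{⇓})$, whose two required commuting squares are literally the naturality squares of $α$ at $s_⇓$ and at $t_⇓$. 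Functoriality of $Ψ$ is immediate.

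For the inverse $Φ∶ 𝐂 → [𝔽[⇓],𝐒𝐞𝐭]$, given a transition system $X$ I would let $Φ(X)$ restrict to $X₀$ on $𝔽$, send ${⇓}$ to $X₁$, send $s_⇓$ and $t_⇓$ to $s_X$ and $t_X$, and send a formal composite $g ∘ s_⇓$ (resp.\ $g ∘ t_⇓$), with $g$ a non-identity morphism of $𝔽$ out of $0$ (resp.\ $1$), to $X₀(g) ∘ s_X$ (resp.\ $X₀(g) ∘ t_X$). To see that this is a well-defined functor, I would first pin down all the morphisms of $𝔽[⇓]$: apart from $\id_{⇓}$ and the morphisms of $𝔽$, every morphism has domain ${⇓}$, none has codomain ${⇓}$, and $𝔽[⇓]({⇓},n)$ consists of the single morphism $s_{⇓,n}$ (the composite of $s_⇓$ with the unique map $0 → n$) together with the $n$ morphisms $t_{⇓,n,i}$, for $i ∈ n$ (the composites of $t_⇓$ with the $n$ maps $1 → n$), these being pairwise distinct --- in particular $s_{⇓,1} ≠ t_⇓$ --- with no further relations beyond those inherited from $𝔽$. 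Granting this, the only nontrivial instances of composition-preservation for $Φ(X)$ concern a morphism $h$ of $𝔽[⇓]$ with domain ${⇓}$ postcomposed with some $g∶ n → n'$ in $𝔽$; there one uses that $g ∘ s_{⇓,n} = s_{⇓,n'}$ (since $0$ is initial in $𝔽$) and $g ∘ t_{⇓,n,i} = t_{⇓,n',g(i)}$ (since $g$ composed with the map $1 → n$ picking $i$ is the map $1 → n'$ picking $g(i)$), together with functoriality of $X₀$, to get $Φ(X)(g ∘ h) = X₀(g) ∘ Φ(X)(h)$. On morphisms, $Φ$ sends $(f₀,f₁)∶ X → Y$ to the natural transformation equal to $f₀$ on $𝔽$ and to $f₁$ at ${⇓}$; naturality at $s_⇓$ and $t_⇓$ is exactly the pair of squares defining a $𝐂$-morphism, and naturality at the remaining morphisms follows by pasting these onto the naturality of $f₀$ inside $𝔽$.

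It then remains to check that $Ψ$ and $Φ$ are mutually inverse. That $Ψ ∘ Φ = \id_𝐂$ is immediate from the definitions, and $Φ ∘ Ψ = \id$ follows from the same combinatorial description: a functor $F∶ 𝔽[⇓] → 𝐒𝐞𝐭$ is completely determined by $\restr{F}{𝔽}$ and the two maps $F(s_⇓)$ and $F(t_⇓)$, since every other morphism of $𝔽[⇓]$ is a composite of $s_⇓$ or $t_⇓$ with a morphism of $𝔽$; and $Φ(Ψ(F))$ is built to agree with $F$ on all of these, hence everywhere, with the analogous remark handling natural transformations. Thus $Ψ$ (equivalently $Φ$) is an isomorphism of categories. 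The one genuinely load-bearing step is the description of $𝔽[⇓]$ --- i.e.\ confirming that adjoining $s_⇓$, $t_⇓$ and their formal composites introduces no identifications beyond those forced by $𝔽$ and no new endomorphisms of ${⇓}$; this can be verified by hand as sketched, or read off abstractly from the observation that $𝔽[⇓]$ is $𝔽$ with one extra object ${⇓}$ satisfying $𝔽[⇓]({⇓},n) = 𝔽(0,n) + 𝔽(1,n)$, functorially in $n$, and with no arrows into ${⇓}$, so that a functor out of it amounts --- by the Yoneda lemma, $𝔽(0,-) + 𝔽(1,-)$ being a coproduct of representables --- to a presheaf $X₀$ on $𝔽$ together with a set $X₁$ and a map $X₁ → X₀(0) × X₀(1)$, i.e.\ a transition system.
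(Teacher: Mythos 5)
Your proposal is correct and follows essentially the same route as the paper: the paper's proof sketch exhibits exactly the same pair of mutually inverse functors (restriction/evaluation at $⇓$ in one direction, extension by $s_X$, $t_X$ and composition in the other), and defers full rigour to the abstract lax-colimit argument of Proposition~\ref{prop:isom}, which is precisely the ``read off abstractly'' observation you make at the end. Your only addition is to carry out by hand the combinatorial description of $𝔽[⇓]$ (no arrows into $⇓$, $𝔽[⇓]({⇓},n) ≅ 𝔽(0,n)+𝔽(1,n)$) that the paper leaves implicit, and that part is accurate.
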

\begin{nota}
  We often implicitly convert from transition systems to covariant
  presheaves, and conversely.
\end{nota}
\begin{proof}[Proof sketch]
  This will be proved below by abstract means, but for intuition let us
  sketch the correspondence.  Given a transition system
  $⟨s_X,t_X⟩∶ X₁ → X₀(0)×X₀(1)$, we construct a presheaf $X'$ by
  setting
  \begin{itemize}
  \item $X'(n) = X₀(n)$ for all $n ∈ 𝔽$,
  \item $X'(⇓) = X₁$,
  \item with the action of $s_⇓,t_⇓ ∈ 𝔽$ given by $s_X$ and $t_X$,
  \item inducing the action of all $s_{⇓,n}$ and $t_{⇓,n,i}$ by composition.
  \end{itemize}
  Conversely, for any presheaf $Y$, we construct a transition system $Y'$ defined as follows:
  \begin{itemize}
  \item the state object $Y'₀ ∈ [𝔽,𝐒𝐞𝐭]$ is given by restriction of
    $Y$;
  \item the set $Y'₁$ of transitions is $Y(⇓)$;
  \item and $s_{Y'}$ and $t_{Y'}$ are $Y(s_⇓)$ and $Y(t_⇓)$,
    respectively. \qedhere
\end{itemize}
\end{proof}

The correspondence yields basic, graph-like examples of transition
systems.
\begin{exa}
  \label{ex:trans-systems}
  \hfill
  \begin{enumeratawide}
  \item The representable presheaf $𝐲₀$ associated to $0 ∈ 𝔽$ has a
    single closed state $k₀$ and its renamings (i.e., $(𝐲₀)₀(n) = 1$
    for all $n$ and for transitions $(𝐲₀)₁ = ∅$).
  \item The representable presheaf $𝐲_⇓$ consists of a closed state
    $k₀$, a state $k₁$ with one free variable, their renamings, and a
    transition $e∶ k₀ ⇓ k₁$.
  \item Let $𝐲_{s_⇓}∶ 𝐲₀ → 𝐲_⇓$ denote the morphism mapping $k₀$ to
    $k₀$.
  \end{enumeratawide}
\end{exa}
  
Using these basic examples, we may define bisimulation and
bisimilarity by lifting following~\cite{DBLP:conf/lics/JoyalNW93}:
\begin{defi}\label{def:bisimlifting}
  A morphism $X → Y$ in $𝐂$ is a \emph{functional bisimulation} when
  it has the right lifting property w.r.t.\ the source map
  $𝐲_{s_⇓}∶ 𝐲₀ → 𝐲₁$.  A span $X ← R → Y$ is a \emph{simulation} when its
  left leg $R → X$ is a functional bisimulation, and a
  \emph{bisimulation} when both legs are.  
\end{defi}
\begin{rem}
  In this case, the Yoneda lemma says that $𝐂(𝐲₀,X) ≅ X₀(0)$ and
  $𝐂(𝐲_⇓,Y) ≅ Y₁$.  The right lifting property for a morphism
  $f∶ X → Y$ thus says that given any $e ∈ Y₁$ whose source
  $e · s_⇓$ is $f(x)$ for some $x ∈ X₀(0)$, there exists $e' ∈ X₁$
  such that $f(e') = e$ and $e' · s_⇓ = x$, which matches the
  usual definition of functional bisimulation. The following diagram
  might help readability.
  \begin{center}
    \diag{%
      𝐲₀ \& X \\
      𝐲_⇓ \& Y
    }{%
      (m-1-1) edge[labela={x}] (m-1-2) %
      edge[labell={s_⇓}] (m-2-1) %
      (m-2-1) edge[labelb={e}] (m-2-2) %
      edge[dashed,labelal={e'}] (m-1-2) %
      (m-1-2) edge[labelr={f}] (m-2-2) %
    }
  \end{center}
\end{rem}

\begin{defi}
  Let $𝐁𝐢𝐬𝐢𝐦(X,Y)$ denote the category of bisimulations, with span
  morphisms between them.
\end{defi}
\begin{prop}
  $𝐁𝐢𝐬𝐢𝐦(X,Y)$ has a terminal object, called \emph{bisimilarity} and
  denoted by $∼_{X,Y}$.
\end{prop}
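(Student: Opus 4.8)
The plan is to construct $\sim_{X,Y}$ as a union of bisimulations, mimicking the classical argument that bisimulations are closed under unions. First I would observe that $\mathbf{Bisim}(X,Y)$ is, by the preceding material, a subcategory of $\mathit{Span}(\mathbf{C})(X,Y) \cong \mathbf{C}/(X\times Y)$, where $\mathbf{C}$ is the presheaf category $[\mathbb{F}[\Downarrow],\mathbf{Set}]$, hence locally presentable. The key closure property to establish is that functional bisimulations — maps with the right lifting property against $\mathbf{y}_{s_\Downarrow}\colon \mathbf{y}_0 \to \mathbf{y}_\Downarrow$ — are stable under the relevant colimits in $\mathbf{C}/(X\times Y)$. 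Since $\mathbf{y}_0$ and $\mathbf{y}_\Downarrow$ are representable, hence finitely presentable, Lemma~\ref{lem:colimitsim} gives stability of fibrations (here, functional bisimulations) under filtered colimits in the arrow category; and by an easy direct check, the right lifting property against a single map is also preserved by arbitrary coproducts. Combining these, functional bisimulations are closed under all small colimits of the relevant shape.

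Next I would take the (small, up to iso) family of all bisimulations $(R_i \to X\times Y)_{i\in I}$, form their coproduct $\coprod_i R_i \to X\times Y$ in $\mathbf{C}/(X\times Y)$, and then its (strong epi, mono)-factorisation $\coprod_i R_i \twoheadrightarrow \sim_{X,Y} \hookrightarrow X\times Y$, available by local presentability of $\mathbf{C}$ (Proposition on images / Proposition~\ref{prop:union}). I claim $\sim_{X,Y}$ is a bisimulation: its two legs to $X$ and $Y$ factor as $\coprod_i R_i \to \sim_{X,Y} \to X$ (resp.\ $\to Y$), where the composite $\coprod_i R_i \to X$ is the cotupling of the legs $R_i \to X$, each a functional bisimulation, so the composite is a functional bisimulation by the coproduct closure above; and the strong epi $\coprod_i R_i \to \sim_{X,Y}$ is then required to "transfer" this lifting property down to $\sim_{X,Y}$. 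This last transfer is the step I expect to be the main obstacle: one must show that if $g\circ e$ has the right lifting property against $\mathbf{y}_{s_\Downarrow}$ and $e$ is a strong epi, then $g$ does too. This is not automatic in general, but here it holds because $\mathbf{y}_{s_\Downarrow}$ is a monomorphism of presheaves that is bijective on all representables except $\mathbf{y}_\Downarrow$ — concretely, lifting a transition of $Y$ whose source lies in the image of $\sim_{X,Y}$ reduces, by surjectivity of $e$ on states and the transitions-preservation of $e$, to lifting it along $\coprod_i R_i$; one then checks the lifted transition descends along $e$ using that $e$ is epi and that the source/target maps are natural. Alternatively, and perhaps more cleanly, one can avoid this by noting that the monic image of a bisimulation need not be taken at all: $\coprod_i R_i \to X\times Y$ already has both legs functional bisimulations, so bisimilarity can first be exhibited as that (not-necessarily-monic) span, shown terminal in $\mathbf{Bisim}(X,Y)$ by the universal property of the coproduct together with the fact that every bisimulation $R_j$ maps into it via its coproduct injection; terminality then forces it to be (isomorphic to) the monic image anyway.

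Finally, for terminality: given any bisimulation $S \to X\times Y$, it occurs as some $R_j$ in our indexed family, so the injection $R_j \to \coprod_i R_i$ followed by the strong epi to $\sim_{X,Y}$ is a span morphism $S \to {\sim_{X,Y}}$ in $\mathbf{C}/(X\times Y)$; uniqueness is immediate since $\sim_{X,Y} \hookrightarrow X\times Y$ is monic, so there is at most one morphism into it over $X\times Y$. Hence $\sim_{X,Y}$ is terminal in $\mathbf{Bisim}(X,Y)$, as required. I would present the coproduct-then-image construction as the main line, flagging the strong-epi transfer lemma as the one substantive point and isolating it as a small auxiliary lemma about the specific map $\mathbf{y}_{s_\Downarrow}$.
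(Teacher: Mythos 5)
Your main line is essentially the paper's argument: build $\sim_{X,Y}$ as the image of the cotupling of a family of bisimulations, use representability of $\mathbf{y}_0$ (hence projectivity and connectedness) to transfer lifting problems from the image back through the coproduct to a single summand, and get uniqueness from monicity of $\sim_{X,Y}\hookrightarrow X\times Y$. That transfer step, which you rightly isolate as the substantive point, is exactly the paper's Lemma on stability of (bi)simulation relations under unions, and your justification for it is correct in substance.

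There is, however, one genuine gap: your indexing family. The class of \emph{all} bisimulations $R\to X\times Y$ is not small up to isomorphism --- for any bisimulation $R$ and any set $S$, the copower $S\cdot R\to X\times Y$ is again a bisimulation (a lifting problem against $\mathbf{y}_{s_\Downarrow}$ lands in a single summand by connectedness of $\mathbf{y}_0$), and these are pairwise non-isomorphic for varying cardinalities. So the coproduct $\coprod_i R_i$ over all bisimulations does not exist, and your terminality argument ("any bisimulation $S$ occurs as some $R_j$") cannot be run as stated. The paper's fix is to index over all bisimulation \emph{relations}, i.e.\ subobjects of $X\times Y$, which form a set by well-poweredness of the presheaf category; terminality for a general (non-monic) bisimulation $R$ then requires the extra step of factoring $R\twoheadrightarrow \mathrm{im}(R)\hookrightarrow X\times Y$ and observing that $\mathrm{im}(R)$ is again a bisimulation --- by the very same projectivity-of-representables argument you already use for the union --- so that $R\to\mathrm{im}(R)\to{\sim_{X,Y}}$ gives the required (unique, by monicity) morphism. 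Separately, your proposed "cleaner alternative" of taking the bare coproduct as the terminal object fails even ignoring size: uniqueness breaks, since e.g.\ $S$ admits two distinct morphisms over $X\times Y$ into the summand $S+S$. The monic image is not optional; it is what forces uniqueness.
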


\begin{exa}\label{ex:bisimZ}
  Bisimilarity on the syntactic transition system merely amounts
  to simultaneous convergence, because evaluation returns an open
  term, which does not have any further transition. In this case, a
  more relevant behavioural equivalence is substitution-closed
  bisimilarity, which we will define below.
\end{exa}

\subsection{Operational semantics}\label{ss:opsem}
Just as we have defined the syntax as an initial $Σ₀$-monoid
(Example~\ref{ex:syntax}), let us now define the dynamics by
initiality, again starting by finding the right notion of model.
First of all, models will be found among transition systems $X$ whose
underlying presheaf $X₀ ∈ [𝔽,𝐒𝐞𝐭]$ is a $Σ₀$-monoid. Let us give these
a name.
\begin{defi}
  A \emph{\transitionmonoidof{Σ₀}} is a transition system $X$,
  together with $Σ₀$-monoid structure on its vertices (a.k.a.\ states)
  object $X₀$.  When $Σ₀ = \id$, we call \transitionmonoidsof{Σ₀}
  simply \emph{\transitionmonoids}. (Any \transitionmonoidof{Σ₀} is
  thus in particular a \transitionmonoid.)

  Let $Σ₀\Trans$ denote the category of \transitionmonoidsof{Σ₀}, with as
  morphisms all transition system morphisms which induce $Σ₀$-monoid
  morphisms on vertices objects.
\end{defi}

The idea is to model the transition rules as an endofunctor on
\transitionmonoidsof{Σ₀}, leaving the underlying $Σ₀$-monoid untouched,
i.e., a functor making the triangle
\begin{center}
  \diag|baseline=(m-1-1.base)|(.6,.6){%
    Σ₀\Trans \& \&     Σ₀\Trans \\
    \& Σ₀\mon %
  }{%
    (m-1-1) edge[labela={Σ₁}] (m-1-3) %
    edge[labelbl={𝒟}] (m-2-2) %
    (m-1-3) edge[labelbr={𝒟}] (m-2-2) %
  }
\end{center}
commute, where $𝒟$ denotes the forgetful functor (i.e., $𝒟(X) = X₀$).

For call-by-name $λ$-calculus, the functor $Σ₁∶ Σ₀\Trans → Σ₀\Trans$
modelling the non-standard rules at the end of~§\ref{s:prereq} is
defined as follows.
\begin{itemize}
\item On states, commutation of the above triangle imposes
  $Σ₁(X)₀ = X₀$.
\item On transitions, let  \[Σ₁(X)₁ = X₀(1) + Aᵦ(X)\rlap{,}\]
 where $Aᵦ(X)$ denotes the set of valid premises for the
  second rule in~§\ref{ss:nonstandard}, i.e., triples $(r₁,e₂,r₂)$
  such that
  \begin{itemize}
  \item $r₁,r₂ ∈ X₁$ are transitions,
  \item $e₂ ∈ X₀(0)$ is a state, and
  \item $r₂·s_⇓ = (r₁·t_⇓)[e₂]$, i.e., the source $r₂ · s_⇓$ of $r₂$ is
    obtained by substituting $e₂$ for the unique free variable in the target of
    $r₁$.
  \end{itemize}
  In other words:
  \begin{mathpar}
    \inferrule{
      \inferrule{r₁}{e₁ ⇓ e'₁} \\
      \inferrule{r₂}{e'₁[e₂] ⇓ e₃} \\
    }{
      e₁\ e₂ ⇓ e₃
    }~·
  \end{mathpar}
  Let us notice that substitution here follows from the monoid
  structure of $X$.
\item We then define the source and target maps:
  \begin{itemize}
  \item for the first term $X₀(1)$,
    \begin{itemize}
    \item the source of any $in₁(e)$ is $λ₁(e)$, where
      $λₙ∶ X₀(n+1) → X₀(n)$ follows from the $Σ₀$-algebra
      structure of $X₀$;
    \item the target is $e$ itself;
    \end{itemize}
  \item for the second term $Aᵦ(X)$,
    \begin{itemize}
    \item the source of any $in₂(r₁,e₂,r₂)$ is $(r₁·s_⇓)\ e₂$, i.e., the
      application of the source of $r₁$ to $e₂$ (again using the
      $Σ₀$-algebra structure of $X₀$);
    \item the target is $r₂ · t_⇓$.
    \end{itemize}
  \end{itemize}
\end{itemize}

Accordingly, our notion of model is the following.
\begin{defi}\label{def:verticalg:ex}
  A \emph{vertical $Σ₁$-algebra} is a \transitionmonoidof{Σ₀} $X$
  equipped with a morphism $ν_X∶ Σ₁(X) → X$ such that
  $𝒟(ν_X) = \id_{X₀}$, or equivalently a map $(ν_X)₁$ making the following
  triangle commute.
  \begin{center}
    \diag|baseline=(m-1-1.base)|(.6,.6){%
      Σ₁(X)₁ \& \& X₁ \\
      \& X₀(0)×X₀(1) %
    }{%
      (m-1-1) edge[labela={(ν_X)₁}] (m-1-3) %
      edge[labelbl={⟨s_{Σ₁(X)},t_{Σ₁(X)}⟩}] (m-2-2) %
      (m-1-3) edge[labelbr={⟨s_X,t_X⟩}] (m-2-2) %
    }
  \end{center}
\end{defi}    
In the case of call-by-name $λ$-calculus, it should be clear that such
a vertical algebra is indeed a model of the rules.

However, in order to ensure that the rules are syntax-directed, we
want to distinguish, for each rule, the head operator of the source of
the conclusion (abstraction for the first rule; application for the
second one).  Instead of demanding that $(Σ₁^∂)_X$ have the form
$Σ₁(X)₁ → X₀(0)×X₀(1)$, we thus rather require something of the form
$Σ₁(X)₁ → Σ₀(X₀)(0)×X₀(1)$:
\begin{defi}[Dynamic signatures and vertical
  algebras]\label{def:dynsig}\
  \begin{itemize}
  \item A \emph{dynamic signature} $Σ₁$ consists of
    \begin{itemize}
    \item a finitary functor $Σ₁^F∶ Σ₀\Trans → 𝐒𝐞𝐭$, and
    \item a natural transformation $(Σ₁^∂)_X∶ Σ₁^F(X) → Σ₀(X₀)(0) × X₀(1)$.
    \end{itemize}
  \item The endofunctor $\check{Σ}₁$ \emph{induced} by a dynamic
    signature maps any $X$ to the composite
    $Σ₁^F(X) \xto{(Σ₁^∂)_X} Σ₀(X₀)(0) × X₀(1) \xto{ν_{X₀,0} × X₀(1)}
    X₀(0)×X₀(1)$, where $ν_{X₀}∶ Σ₀(X₀) → X₀$ denotes the $Σ₀$-algebra
    structure of $X₀$.
\item A \emph{vertical algebra} of a dynamic signature is a vertical
  algebra of the induced endofunctor, in the sense of
  Definition~\ref{def:verticalg:ex}.
\end{itemize}
\end{defi}
Concretely, a vertical algebra is a dashed map making the following
diagram commute.
\begin{center}
  \diag|baseline=(m-1-1.base)|(.6,1.8){%
    Σ₁^F(X) \& X₁ \\
    Σ₀(X₀)(0)×X₀(1)       \& X₀(0)×X₀(1) %
  }{%
    (m-1-1) edge[dashed,labela={}] (m-1-2) %
    edge[labell={(Σ₁^∂)_X}] (m-2-1) %
    (m-2-1) edge[labelb={ν_{X₀,0} × X₀(1)}] (m-2-2) %
    (m-1-2) edge[labelr={⟨s_X,t_X⟩}] (m-2-2) %
  }
\end{center}
\begin{exa}\label{ex:Sigmaun}
  For call-by-name $λ$-calculus, we only need to modify the source
  components of the above definition of $Σ₁$, replacing actual
  operations by formal ones, like so:
\begin{itemize}
\item the source of any $in₁(e) ∈ X₀(1) + Aᵦ(X)$ is
  $in₁(e) ∈ Σ₀(X₀)(0) = X₀(1) + X₀(0)²$;
\item the source of any $in₂(r₁,e₂,r₂)$ is $in₂((r₁·s_⇓), e₂) ∈ Σ₀(X₀)(0)$.
\end{itemize}
\end{exa}

This successfully captures the syntactic transition system:
\begin{prop}\label{prop:Z}
  The initial $\check{Σ}₁$-algebra $𝐙_{\check{Σ}₁}$, or $𝐙$ for short,
  is an initial vertical algebra, and is isomorphic to the transition
  system of Example~\ref{ex:syntactic}.
\end{prop}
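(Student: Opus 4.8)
The plan is to identify $𝐙 = 𝐙_{\check{Σ}₁}$, the initial $\check{Σ}₁$-algebra (which exists by the general fact on finitary endofunctors, since $\check{Σ}₁$ is finitary: $Σ₁^F$ is finitary by assumption, products with the fixed objects $X₀(0), X₀(1)$ are not an issue because we are landing in $𝐒𝐞𝐭$ at the level of transitions, and $Σ₀$ is finitary), with the syntactic transition system $𝐙_{\mathrm{syn}}$ of Example~\ref{ex:syntactic}. First I would observe that by Proposition~\ref{prop:ptstr} (and Example~\ref{ex:syntax}), the state object of $𝐙$ is forced: commutation of the defining triangle means $𝒟(\check{Σ}₁(X)) = X₀$, so the carrier of the initial $\check{Σ}₁$-algebra, restricted to states, is the initial $Σ₀$-monoid $𝐙₀$, which is exactly the state object of $𝐙_{\mathrm{syn}}$. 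So it remains to match the transitions. Unfolding the fixed-point construction $𝐙_{\check{Σ}₁} = \colim_{k} \check{Σ}₁^k(\bot)$, the transition set of $𝐙$ is the least fixed point of $T \mapsto Σ₁^F(𝐙₀, T)$ over the fixed state object $𝐙₀$ — i.e., the set inductively generated by: one transition $in₁(e) : λ(e) ⇓ e$ for each $e ∈ 𝐙₀(1)$, and one transition $in₂(r₁,e₂,r₂) : e₁\,e₂ ⇓ e₃$ for each valid premise triple as in Example~\ref{ex:Sigmaun}/\ref{ex:Sigmaun}. This is, step by step, precisely the inductive definition of "derivations following the (non-standard) transition rules" of §\ref{ss:nonstandard}, so there is a canonical bijection between $𝐙₁$ and the derivations of $𝐙_{\mathrm{syn}}$, compatible with source and target by construction of the source/target components of $Σ₁$.

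For the "initial vertical algebra" half, I would argue as follows. A vertical $\check{Σ}₁$-algebra is by Definition~\ref{def:verticalg:ex} a $\check{Σ}₁$-algebra $ν_X : \check{Σ}₁(X) → X$ with $𝒟(ν_X) = \id_{X₀}$; morphisms of vertical algebras are $\check{Σ}₁$-algebra morphisms that are the identity on states — equivalently, vertical algebras form a (non-full) subcategory of $\check{Σ}₁\text{-}\mathbf{alg}$, in fact the fibre over $\id_{𝐙₀}$ after pushing to $Σ₀\mon$. The key point is that $𝐙$, being initial in $\check{Σ}₁\text{-}\mathbf{alg}$, is a fortiori initial in any subcategory containing it and receiving a morphism from it into every object; so I must check that $𝐙$ is itself a vertical algebra, i.e. that its structure map is the identity on states, and that the unique $\check{Σ}₁$-algebra morphism $𝐙 → X$ into any vertical algebra $X$ is again the identity on states. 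Both follow from the commuting triangle defining $Σ₁$: $𝒟 ∘ \check{Σ}₁ = 𝒟$, so $𝒟$ applied to any $\check{Σ}₁$-algebra structure map, resp.\ to the unique morphism $𝐙 → X$, gives an endomorphism, resp.\ a morphism, of $Σ₀$-monoids out of $𝐙₀$; since $𝐙₀$ is the initial $Σ₀$-monoid by Proposition~\ref{prop:ptstr}, there is only one such, namely the identity, resp.\ the unique map, and chasing the triangle shows it must be $\id$. Hence the forgetful functor from vertical algebras to $\check{Σ}₁\text{-}\mathbf{alg}$ reflects the initial object, and $𝐙$ is the initial vertical algebra.

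The main obstacle I anticipate is not conceptual but bookkeeping: carefully unwinding the initial-algebra colimit through the "indexed over a fixed state object" structure of $Σ₁^F$ to confirm that the generated transition set really is the set of derivations, and checking that the source and target maps assigned in Example~\ref{ex:Sigmaun} (the formal-operation versions, composed with $ν_{𝐙₀}$ to recover the actual operations) agree on the nose with the source/target of derivations in $𝐙_{\mathrm{syn}}$ — in particular that $in₂(r₁,e₂,r₂)$ has source $(r₁·s_⇓)\,e₂$ using the $Σ₀$-algebra structure of $𝐙₀$, which is the genuine application, and target $r₂·t_⇓$, matching the conclusion $e₁\,e₂ ⇓ e₃$ of the second rule. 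Once the data is lined up, the isomorphism of transition systems is immediate, and its compatibility with the algebra structures (hence with initiality among vertical algebras) is formal.
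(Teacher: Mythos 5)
Your proof is correct and follows essentially the same route as the paper's (given for the general case in Theorem~\ref{thm:Z}): compute $𝐙$ as the colimit of the initial chain starting from the discrete transition system $ℳ(𝐙₀)$, observe that the state part stays constantly equal to the initial $Σ₀$-monoid $𝐙₀$ so that initiality of $𝐙₀$ forces the structure map to be vertical, and then identify the resulting least fixed point on transitions with the set of derivations of Example~\ref{ex:syntactic}. One small inaccuracy: the paper defines $Σ₁\algv$ as the \emph{full} subcategory of $\check{Σ}₁\alg$ spanned by vertical algebras (and a general vertical algebra's state object need not be $𝐙₀$, so it is not the ``fibre over $\id_{𝐙₀}$''), which makes your extra verification about the state components of morphisms unnecessary --- though it is correct and harmless.
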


\subsection{Substitution-closed bisimilarity}\label{ss:substclosed:ex}
There is an obvious notion of bisimulation for \transitionmonoidsof{Σ₀}:
\begin{defi}
  \label{def:Sigma0-Mon-bisim}
  A morphism is $Σ₀\Trans$ is a \emph{functional bisimulation} iff its
  underlying morphism in $𝐂$ is.
\end{defi}
However, as foreshadowed by Example~\ref{ex:bisimZ}, the relevant
notion in this case combines bisimulation with
substitution-closedness, in the following sense.
\begin{defi}
  For any monoid $M$ in $𝐂₀$, an \emph{$M$-module} is an object $X$
  equipped with algebra structure $X⊗M → X$ for the monad ${-}⊗M$. A
  module morphism is an algebra morphism.
\end{defi}

\begin{exa}
  The monoid $M$ is itself an $M$-module by multiplication, and
  $M$-modules are closed under limits in $𝐂₀$, so in particular $M²$
  is an $M$-module, with action given by the composite
  $M²⊗M \xto{⟨π₁⊗M,π₂⊗M⟩} (M⊗M)² \xto{m_M²} M².$
\end{exa}

\begin{defi}
  For any \transitionmonoid $M$, a span of the form $R → M²$ in $𝐂$ is
  \emph{substitution-closed} iff $R₀$ may be equipped with $M₀$-module
  structure making the morphism $R₀ → M₀²$ into an $M₀$-module
  morphism.
\end{defi}

\begin{exa}
  To see what this definition has to do with substitution-closedness,
  let us observe that if $R$ is a relation in $[𝔽,𝐒𝐞𝐭]$, an element of
  $(R⊗M)(n)$ is an explicit substitution $r⦇σ⦈$ with $r ∈ R(p)$ for
  some $p$, and $σ∶ p → M(n)$. Now, substitution-closedness amounts to
  a morphism $m∶ R⊗M → R$ commuting with projections, so if $R$ is a
  relation, then $r$ is merely a pair $e \mathrel{R} e'$, and the
  morphism $m$ ensures that $e[σ] \mathrel{R} e'[σ]$.
\end{exa}

\begin{prop}
  For any \transitionmonoidof{Σ₀} $M$, there is a terminal
  subs\-ti\-tu\-tion-closed bisimulation $∼_M^⊗$, called \emph{substitution-closed
    bisimilarity}.
\end{prop}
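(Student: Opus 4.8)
The plan is to obtain $\sim_M^\otimes$ as the terminal object of the category of substitution-closed bisimulations from $M$ to $M^2$, by the same "terminal object in a category of spans" strategy already used to obtain ordinary bisimilarity $\sim_{X,Y}$. First I would fix the ambient category. Let $\mathbf{SCBisim}(M)$ denote the category whose objects are substitution-closed bisimulations $R\to M^2$ in $\mathbf C$ — that is, spans $R\to M^2$ in $\mathbf C$ whose two legs are functional bisimulations (Definition~\ref{def:bisimlifting}), and such that $R_0$ carries $M_0$-module structure making $R_0\to M_0^2$ an $M_0$-module morphism — and whose morphisms are span morphisms in $\mathbf C$ over $M^2$ which additionally commute with the chosen $M_0$-module structures. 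This is a (non-full) subcategory of $\mathbf{Bisim}(M,M^2)$, and also fibred over the category of $M_0$-modules-over-$M_0^2$ via $R\mapsto R_0$.

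Next I would check that $\mathbf{SCBisim}(M)$ has all small colimits and that the forgetful functor to $\mathbf{Bisim}(M,M^2)$ preserves them. Colimits of transition systems are computed in the presheaf category $[\mathbb F[\Downarrow],\mathbf{Set}]$, hence pointwise; the class of functional bisimulations is closed under the relevant colimits in the arrow category (being a right lifting class against a single map between finitely presentable objects, by Lemma~\ref{lem:colimitsim} and the fact that functional bisimulations are stable under composition and pushout along $\mathbf y_{s_\Downarrow}$); and $M_0$-module structure is preserved by colimits since ${-}\otimes M_0$ is a strong monad and its Eilenberg--Moore category is cocomplete with colimits computed on underlying objects when $\otimes$ preserves them, which holds here. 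So $\mathbf{SCBisim}(M)$ is cocomplete. Then I would invoke a standard solution-set / smallness argument: up to isomorphism there is only a set of substitution-closed bisimulations $R\to M^2$, because $R_0$ is a subquotient-bounded object of $[\mathbb F,\mathbf{Set}]$ determined by a submonoid-module of $M_0^2$ together with a set $R_1$ of transitions mapping into $M^2$'s transitions, and all of this is bounded by the size of $M$. A cocomplete category with a (small) generating family of objects that is also essentially small has a terminal object — concretely, take the coproduct of (one representative of each isomorphism class of) all substitution-closed bisimulations, and then its image in $\mathbf{Bisim}(M,M^2)$; one checks the image is again substitution-closed (images of $M_0$-module morphisms are $M_0$-modules, using Proposition~\ref{prop:union} and the fact that $\mathbf C$ and the module category are locally presentable) and is terminal, since any substitution-closed bisimulation $R$ maps into the coproduct via its own coprojection and then into the image. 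Call this terminal object $\sim_M^\otimes$.

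The main obstacle is the interaction between the two structures being combined. Ordinary bisimilarity $\sim_{X,Y}$ exists because bisimulations are closed under unions of spans and one simply takes the union of all of them; the subtlety here is to verify that this union — or, more abstractly, the image of the coproduct of all substitution-closed bisimulations — remains substitution-closed. Concretely one must produce, on the state part of the union $\bigcup_i R^{(i)}$, an $M_0$-module structure compatible with the inclusion into $M_0^2$; this follows because each $R_0^{(i)}\otimes M_0\to R_0^{(i)}$ is compatible with the projections, so on the union (computed as a strong-epi/mono factorisation of the cotupling, Proposition~\ref{prop:union}) the induced map $(\bigcup_i R_0^{(i)})\otimes M_0\to M_0^2$ factors through $\bigcup_i R_0^{(i)}$ by the lifting property of strong epis against the mono $\bigcup_i R_0^{(i)}\hookrightarrow M_0^2$ — here one uses that ${-}\otimes M_0$ preserves strong epis, which holds since $\otimes$ is cocontinuous in its first argument and strong epis in a locally presentable category are exactly the strong-epi part of the factorisation system, stable under such colimits. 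Once this is in place, terminality is automatic. (Alternatively, and perhaps more cleanly in line with the paper's stated methodology of defining bisimilarity as a final object in a category of spans, one may simply apply the general fact that $\mathbf{SCBisim}(M)$, being an accessible category with all colimits, is cocomplete and has a terminal object — but the explicit union construction is the one I would write out, since it is the one needed downstream for Howe's closure.)
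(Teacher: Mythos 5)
Your overall strategy---realise $\sim_M^\otimes$ as the image of the coproduct of all substitution-closed bisimulations, and transport the $M_0$-module structure along the strong-epi/mono factorisation using cocontinuity of ${-}\otimes M_0$---is essentially the paper's (Proposition~\ref{prop:scbisim}, via Lemma~\ref{lem:bisimuniongeneral}, Proposition~\ref{prop:modulespreserve}, and the proof of Proposition~\ref{prop:bisim}). But there is a genuine gap in the smallness step. Bisimulations here are arbitrary \emph{spans} $R\to M^2$, not relations, and up to isomorphism they form a proper class: for any set $I$, the fold map $\coprod_{i\in I}{\sim_M^\otimes}\to M^2$ is again a substitution-closed bisimulation (a lifting problem against $\mathbf{y}_{s_L}$ factors through a single summand because $\mathbf{y}_{\source(L)}$ is representable, and $M_0$-modules are closed under coproducts), and these are pairwise non-isomorphic over $M^2$ as $I$ varies, for non-trivial $M$. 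So ``one representative of each isomorphism class'' is not a set, your justification that $R_0$ is determined by a sub-module of $M_0^2$ plus a set of transitions mapping into those of $M^2$ silently assumes the pairing $R\to M^2$ is monic, and the coproduct you want does not exist. The fix---which is what the paper does---is to take the union only over substitution-closed bisimulation \emph{relations}, a set by well-poweredness of the presheaf category $\mathbf{C}$, and then to handle an arbitrary span $R$ in the terminality step by factoring it as $R\twoheadrightarrow \mathrm{im}(R)\hookrightarrow M^2$ and checking that $\mathrm{im}(R)$ is again a substitution-closed bisimulation relation (again by Proposition~\ref{prop:modulespreserve} plus projectivity of representables). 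Your terminality argument as written (``$R$ maps into the coproduct via its own coprojection'') only applies to the chosen summands, so this image-factorisation step is exactly what is missing; uniqueness of the resulting morphism is then automatic because $\sim_M^\otimes\hookrightarrow M^2$ is monic.

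Two secondary points you should not lean on. First, right lifting classes are \emph{not} closed under arbitrary colimits in the arrow category; Lemma~\ref{lem:colimitsim} gives closure under \emph{filtered} colimits only, and a union of subobjects is the image of a coproduct, not a filtered colimit of the given family, so the claimed cocompleteness of your category of substitution-closed bisimulations is both unsupported and unnecessary. The argument that actually works is the one in Lemma~\ref{lem:bisimunion}: a lifting problem against $\mathbf{y}_{s_L}$ into the union lifts along the strong epi from the coproduct because the domain of $\mathbf{y}_{s_L}$ is representable, hence the relevant maps factor through a summand. Second, the proposition concerns the \emph{full} subcategory of $\mathbf{C}/M^2$ spanned by substitution-closed bisimulations, so morphisms are mere span morphisms rather than module morphisms; your non-full variant changes the universal property being asserted, although monicness of the terminal object makes the difference harmless here.
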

\begin{proof}
  See Proposition~\ref{prop:scbisim} for a proof in the general case.
\end{proof}
\begin{rem}
  One may prove that substitution-closed bisimilarity is a relation.
\end{rem}
\begin{prop}
  Substitution-closed bisimilarity $∼_𝐙^⊗$ on the syntactic transition
  system $𝐙$ coincides with applicative bisimilarity.
\end{prop}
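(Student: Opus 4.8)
The plan is to unfold both sides of the claimed equality and match them against the classical definitions recalled in §\ref{s:prereq}. On the left, $\sim_{\mathbf Z}^\otimes$ is the terminal substitution-closed bisimulation on the syntactic transition system $\mathbf Z$ of Example~\ref{ex:syntactic}, where, by Proposition~\ref{prop:Z}, transitions are derivations of the \emph{non-standard} rules of §\ref{ss:nonstandard}. On the right, applicative bisimilarity $\sim$ is the largest applicative bisimulation in the sense of \cite[Definition~2.3]{Applicative}, and what we really must compare it to is its open extension $\sim^\otimes$, which by Lemma~\ref{lem:bisimequiv} is the greatest substitution-closed relation contained in $\sim$ on closed terms. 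So the first step is a bookkeeping step: reduce the statement to showing that, restricted to closed terms, a relation underlies a substitution-closed bisimulation on $\mathbf Z$ if and only if it underlies a substitution-closed applicative bisimulation in the classical sense; terminality on both sides then gives the coincidence, and Lemma~\ref{lem:bisimequiv} transfers it to open terms.

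The second step is to verify that the non-standard transition system computes the same evaluation as the standard one, up to the encoding. Concretely, $e \Downarrow e'$ holds in $\mathbf Z$ (with $e$ closed, $e'$ having at most one free variable) if and only if $e \Downarrow \lambda x.e'$ in the standard big-step semantics; this is an easy induction on derivations in both directions, using that $\beta$-redex contraction $e_1'[e_2]$ in the non-standard rule corresponds exactly to $e_1'[x\mapsto e_2]$ after re-abstracting. Granting this, a transition $r\colon e \Downarrow e'$ out of a closed term $e$ exists precisely when $e$ converges, and then $e'$ records the \enquote{body} of the resulting abstraction; moreover $e'$ has no further transitions, matching Example~\ref{ex:bisimZ}.

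The third step unpacks the lifting definition of (bi)simulation for $\mathbf Z$ (Definition~\ref{def:bisimlifting}, via the remark following it) and checks it matches \cite[Definition~2.3]{Applicative}. The right-lifting property against $\mathbf y_{s_\Downarrow}$ says: whenever $e_1 \mathrel R e_2$ and $e_1 \Downarrow e_1'$, there is $e_2'$ with $e_2 \Downarrow e_2'$ and $e_1' \mathrel R e_2'$ — but here $R$ is the state component of a span $R \to \mathbf Z^2$ over all of $[\mathbb F,\mathbf{Set}]$, so this holds at every arity $n$; at arity $1$ it gives $e_1' \mathrel R e_2'$ on the one-free-variable bodies. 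Substitution-closedness of the span (Definition above) supplies an $\mathbf Z_0$-module structure on $R_0$, i.e.\ $e_1' \mathrel R e_2'$ implies $e_1'[\sigma] \mathrel R e_2'[\sigma]$ for every substitution $\sigma$, and in particular for every closing $\sigma = (x\mapsto e)$ — which is exactly the quantifier \enquote{for all terms $e$, $e_1'[x\mapsto e] \mathrel R e_2'[x\mapsto e]$} in the classical applicative simulation clause. Conversely, any substitution-closed applicative bisimulation on closed terms extends (by its module structure, equivalently by taking the open extension as in Lemma~\ref{lem:bisimequiv}) to a span over $\mathbf Z^2$ satisfying the lifting property. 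Combining the two directions, the terminal objects agree.

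The main obstacle is the third step, and specifically the interplay between arities: the classical definition quantifies only over closed instantiations of the single free variable, whereas a substitution-closed span over $\mathbf Z^2$ carries a full $\mathbf Z_0$-module structure, hence data at every arity and closure under \emph{arbitrary} (open) substitutions. One must check these are not in tension — that the largest substitution-closed bisimulation in our sense, when read off at closed arity, does not become strictly smaller than $\sim^\otimes$ because of the extra open-substitution closure, nor strictly larger because the module structure is only required to exist rather than be canonical. The key observations resolving this are that $\sim^\otimes$ is already substitution-closed (shown in Lemma~\ref{lem:bisimequiv}), so it does lift to a span over $\mathbf Z^2$; that module structure on a \emph{relation} in $[\mathbb F,\mathbf{Set}]$ is a property, not extra structure, so terminality is about the underlying relation; and that a transition out of an open term in $\mathbf Z$ never occurs (only closed states have $\Downarrow$-successors, since $s_\Downarrow$ lands in arity $0$), so the lifting condition genuinely only constrains behaviour at closed states, matching the classical setup. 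With these in hand the equivalence of the two extremal relations is a routine mutual-containment argument.
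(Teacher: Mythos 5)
Your proposal is correct and follows essentially the same route as the paper's proof: both reduce the statement to a mutual containment between the two terminal objects, using Lemma~\ref{lem:bisimequiv} to show that the open extension of applicative bisimilarity is a substitution-closed bisimulation (hence contained in $\sim^\otimes_{\mathbf Z}$), and conversely that any substitution-closed bisimulation relation restricts on closed terms to an applicative bisimulation, hence is contained in the open extension. Your second step (the induction relating the non-standard evaluation $e\Downarrow e'$ to the standard $e\Downarrow\lambda x.e'$) and your arity/module-structure sanity checks make explicit what the paper compresses into the word ``straightforwardly''; they are correct and welcome detail rather than a different argument.
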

\begin{proof}
  Let us denote the open extension of applicative bisimilarity by
  $∼^⊗_{\std}$, and recall that applicative bisimilarity is denoted by
  $∼$.  Using Lemma~\ref{lem:bisimequiv}, $∼^⊗_{\std}$ is
  straightforwardly a substitution-closed bisimulation, so we have
  ${∼^⊗_{\std}} ⊆ {∼^⊗_𝐙}$ .  But conversely any substitution-closed
  bisimulation relation $R$ (hence $∼^⊗_𝐙$) is in particular a
  substitution-closed relation contained in $∼$ on closed terms. It is
  thus globally contained in $∼^⊗_{\std}$ by
  Lemma~\ref{lem:bisimequiv}.
\end{proof}

Finally, our main result instantiates to the following.
\begin{thm}\label{thm:main:ex}
  Substitution-closed bisimilarity is context-closed. More precisely, it
  is a \transitionmonoidof{Σ₀}, and ${∼^⊗_𝐙} → 𝐙²$ is a 
  \transitionmonoidof{Σ₀} morphism.
\end{thm}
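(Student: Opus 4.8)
This is the instance for call-by-name $λ$-calculus of our general
  congruence theorem (Theorem~\ref{thm:main}), so the plan is to check
  its hypotheses for the present signature and then run the abstract
  analogue of Howe's method, in three steps.

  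\emph{Step~1: Howe's closure as a free monoid algebra.} Following the
  strategy outlined in the introduction, I would regard the terminal
  substitution-closed bisimulation ${∼^⊗_𝐙} → 𝐙²$ as an object of a
  suitable category of spans over $𝐙²$, monoidal via the structure
  lifted from $𝐂₀$, and equip that category with a finitary pointed
  strong endofunctor whose monoid algebras are exactly the spans that
  (i)~are closed under the $Σ₀$-operations applied componentwise and
  (ii)~are closed under right composition with $∼^⊗_𝐙$; the
  pointedness of the endofunctor accounts for the variable rule, and
  the ambient monoid structure encodes (heterogeneous) substitution.
  Howe's closure $∼^•$ is then the carrier of the \emph{free} such
  monoid algebra, which exists by the span-level analogue of
  Proposition~\ref{prop:ptstr}. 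By construction, $∼^•$ is a
  \transitionmonoidof{Σ₀} equipped with a projection to $𝐙²$; it is
  reflexive — the unique $Σ₀$-monoid morphism $𝐙₀ → (∼^•)₀$ composes
  with each projection to the identity of $𝐙₀$ — hence it contains
  $∼^⊗_𝐙$; and it satisfies ${∼^•};{∼^⊗_𝐙} ⊆ {∼^•}$.

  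\emph{Step~2: Howe's closure is a simulation.} This is the crux. I
  would show that the left leg ${∼^•} → 𝐙$ is a functional
  bisimulation, i.e., has the right lifting property w.r.t.\
  $𝐲_{s_⇓}$, using the universal property of $∼^•$ as a free algebra:
  it suffices to match, rule by rule, every transition out of the
  source of a related pair. For the abstraction rule this is
  immediate; for the $β$-rule of~§\ref{ss:nonstandard} the
  substitution appearing in the premise is matched using the monoid
  multiplication of $∼^•$, and the residual mismatch is absorbed by
  the right action of $∼^⊗_𝐙$ via ${∼^•};{∼^⊗_𝐙} ⊆ {∼^•}$.
  Categorically, this is exactly the point at which one needs the
  dynamic signature $Σ₁$ to preserve functional bisimulations (for
  which cellularity, the main hypothesis as reformulated
  in~§\ref{s:cellularity}, is a sufficient condition), checked here by
  hand since the only delicate rule is the $β$-rule. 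I expect this
  step to be the main obstacle: it carries the genuine content of
  Howe's method, and it is where the careful typing of $Σ₁$ through
  $Σ₀(X₀)(0)×X₀(1)$ and the substitution-closedness requirement pay
  off.

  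\emph{Step~3: closing the loop.} Given that $∼^•$ is a simulation,
  its transitive closure $(∼^•)⁺$ is again a simulation and, by a
  short argument using that $∼^•$ contains the symmetric relation
  $∼^⊗_𝐙$ together with ${∼^•};{∼^⊗_𝐙} ⊆ {∼^•}$, it is symmetric,
  hence a bisimulation; being moreover substitution-closed, it is a
  substitution-closed bisimulation. By terminality of $∼^⊗_𝐙$
  (Proposition~\ref{prop:scbisim}) we get ${(∼^•)⁺} ⊆ {∼^⊗_𝐙}$, and
  together with ${∼^⊗_𝐙} ⊆ {∼^•} ⊆ {(∼^•)⁺}$ this forces the three
  relations to coincide. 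In particular $∼^⊗_𝐙$ inherits the
  \transitionmonoidof{Σ₀} structure of $∼^•$, so it is context-closed,
  and the projection ${∼^⊗_𝐙} → 𝐙²$, being the corresponding leg of
  $∼^•$, is a \transitionmonoidof{Σ₀} morphism, as claimed.
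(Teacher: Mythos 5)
Your sketch follows essentially the same route as the paper: Theorem~\ref{thm:main:ex} is stated there as the call-by-name instance of Theorem~\ref{thm:main}, whose proof is exactly the categorified Howe's method you describe — Howe's closure as a free/initial monoid algebra for a pointed strong endofunctor on spans over $𝐙²$, the key lemma that this closure is a substitution-closed simulation via preservation of functional bisimulations (checked for the $β$-rule through cellularity, Corollary~\ref{cor:cellularenough} and Proposition~\ref{prop:cellscof}), then transitive closure, symmetry, and terminality. Two points in your Step~3 are looser than what the setting actually requires. First, Howe's closure here is a proof-relevant span, not a relation, so the chain of maps ${∼^⊗_𝐙} → {∼^•} → {(∼^•)^{\overline{+}}} → {∼^⊗_𝐙}$ does not force the objects to \emph{coincide}: one only gets that $∼^⊗_𝐙$ is a retract of the Howe closure, and transferring the \transitionmonoidof{Σ₀} structure along that retraction is the content of Lemma~\ref{lem:retracte}, which uses that $∼^⊗_𝐙 → 𝐙²$ is monic and that \transitionmonoidsof{Σ₀} are monadic over $𝐂$ (Proposition~\ref{prop:SigmaMon-monadic}). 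Second, symmetry of the transitive closure on \emph{states} does not by itself make it a bisimulation — transitions must be symmetrized too, which is Lemma~\ref{lem:projsimsym}. Neither point changes your strategy, but both are precisely where the categorification departs from the classical relational argument you are transcribing.
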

In particular, there exists a span morphism $Σ₀((∼^⊗_{𝐙})₀) → (∼^⊗_{𝐙})₀$.

\section{Transition systems and bisimilarity}\label{s:trans}
In this section, we start to abstract over the situation
of~§\ref{s:overview}, by introducing a general framework for
transition systems and bisimilarity.  In~§\ref{ss:prehowe}, we first
introduce the ambient setting for this, \emph{pre-Howe contexts}, and
construct a category of transition systems, for any pre-Howe context.
Then, in~§\ref{ss:transpresh}, we show that transition systems form a
presheaf category.  We then exploit this in~§\ref{ss:bisimex} to
define bisimulation and bisimilarity.

\subsection{Pre-Howe contexts and transition systems}\label{ss:prehowe}
\begin{defi}
  A \emph{pre-Howe context}\footnote{The Howe contexts of~\cite{BHL}
    may be defined similarly. The difference is that for them,
    $\source$ and $\but$ are not necessarily functorial, but
    $c₁ ↦ (\source(c₁),\but(c₁))$ defines a functor
    $ ℂ₁ → ℂ₀\underline{×} ℂ₀$, where $ℂ₀\underline{×} ℂ₀$ denotes the
    category whose objects are pairs of elements of $ℂ₀$, and where a
    morphism $(a₁,a₂) → (b₁,b₂)$ consists of some indices
    $i,j∈\{1,2\}$, together with a pair of morphisms $a₁ → bᵢ$
    and $a₂ → bⱼ$.} consists of
  \begin{itemize}
  \item a small category $ℂ₀$ of \emph{state types},
  \item a small category $ℂ₁$ of \emph{transition types}, and
  \item two \emph{source} and \emph{target} functors $\source,\but∶ ℂ₁ → ℂ₀$.
  \end{itemize}
  Precomposition by $\source$ and $\but$ yields functors
  $Δ_\source,Δ_{\but}∶ \psh[ℂ₀] → \psh[ℂ₁]$ mapping any $X ∈ \psh[ℂ₀]$
  to $X ∘ \source$ and $X ∘ \but$, respectively.  Let $Δ$ denote the
  pointwise product $Δ_\source × Δ_{\but}$.
\end{defi}
Intuitively, objects of $ℂ₀$ may be thought of as typings (typically
sequents $A₁,…,Aₙ ⊢ A$, or merely natural numbers, as
in~§\ref{s:overview}), and objects of $ℂ₁$ as transition types. The
source and target functors associate to every transition type the
corresponding typings.  We now use these functors to define transition
systems.

\begin{defi}
  Given any pre-Howe context, a \emph{transition system} $X$
  consists of
  \begin{itemize}
  \item a \emph{state} presheaf $X₀ ∈ \psh[ℂ₀]$,
  \item a \emph{transition} presheaf $X₁ ∈ \psh[ℂ₁]$, and
  \item two \emph{source} and \emph{target} natural
    transformations $X₀ ∘ \source ← X₁ → X₀ ∘ \but$, or equivalently a
    natural transformation $X₁ → Δ(X₀)$.
  \end{itemize}
\end{defi}

\begin{prop}
  In any pre-Howe context, transition systems are precisely the objects of
  the lax limit category $\psh[ℂ₁]/Δ$ of the functor
  $\psh[ℂ₀] \xto{Δ} \psh[ℂ₁]$ in $𝐂𝐀𝐓$, or equivalently the comma
  category $\id_{\psh[ℂ₁]} / Δ$.
\end{prop}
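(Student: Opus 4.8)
\emph{Plan.} This statement is essentially a matter of unwinding the definition of comma category recalled in §\ref{ss:notation}. Recall that for $F∶ 𝐀 → 𝐂$ and $G∶ 𝐁 → 𝐂$, an object of $F/G$ is a triple $(A,B,ϕ)$ with $A ∈ 𝐀$, $B ∈ 𝐁$, and $ϕ∶ F(A) → G(B)$; and that, by the convention fixed there, when $F$ is an identity the comma category $\id/G$ is what we call the lax limit of $G$ — so the two descriptions in the statement, $\psh[ℂ₁]/Δ$ and $\id_{\psh[ℂ₁]}/Δ$, refer to the same category by definition, and it suffices to identify its objects with transition systems.

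First I would specialise: an object of $\id_{\psh[ℂ₁]}/Δ$ is a triple $(X₁,X₀,ϕ)$ with $X₁ ∈ \psh[ℂ₁]$, $X₀ ∈ \psh[ℂ₀]$, and $ϕ∶ X₁ → Δ(X₀)$, since $\id_{\psh[ℂ₁]}(X₁) = X₁$. The state presheaf $X₀$ and the transition presheaf $X₁$ are thus exactly the first two pieces of data of a transition system. For the third, I would invoke the universal property of the binary product $Δ = Δ_\source × Δ_{\but}$, which is computed pointwise in the presheaf category $\psh[ℂ₁]$: a morphism $ϕ∶ X₁ → Δ_\source(X₀) × Δ_{\but}(X₀)$ corresponds naturally and bijectively to a pair of natural transformations $X₁ → Δ_\source(X₀) = X₀ ∘ \source$ and $X₁ → Δ_{\but}(X₀) = X₀ ∘ \but$, i.e.\ to the source and target natural transformations $X₀ ∘ \source ← X₁ → X₀ ∘ \but$. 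This is precisely the remaining data in the definition of a transition system, so objects of $\id_{\psh[ℂ₁]}/Δ$ and transition systems are the same thing, as claimed.

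I would close by remarking — although the proposition as stated only asserts the identification at the level of objects — that the same unwinding upgrades to morphisms: a morphism $(X₁,X₀,ϕ) → (Y₁,Y₀,ψ)$ in the comma category is a pair $(u∶ X₁ → Y₁,\, v∶ X₀ → Y₀)$ with $Δ(v) ∘ ϕ = ψ ∘ u$, which, after the product decomposition above, is exactly the requirement that $u$ commute with source and target over $v$; hence transition systems form a category isomorphic to $\psh[ℂ₁]/Δ$. There is no real obstacle here — the argument is pure bookkeeping — and the only point worth pausing on is the passage between a single map into $Δ(X₀)$ and the separate source/target maps, which is just the universal property of products applied pointwise.
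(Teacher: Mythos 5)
Your proposal is correct and follows the same route as the paper, whose proof is just the one-line observation that an object of the lax limit is by construction a triple $(X₁,X₀,∂)$ with $∂∶ X₁ → Δ(X₀) = X₀\source×X₀\but$. Your extra unwinding of the product into separate source and target maps, and the remark about morphisms, are harmless elaborations of the same bookkeeping.
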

\begin{proof}
  An object of the lax limit is by construction a triple $(X₁,X₀,∂)$,
  where $∂∶ X₁ → Δ(X₀) = X₀\source×X₀\but$.
\end{proof}

\begin{nota}
  In any pre-Howe context, we let $𝐂 ≔ \psh[ℂ₁]/Δ$, and
  denote the projections by
  $pr₁∶ 𝐂 → \psh[ℂ₁]$ and $pr₀∶ 𝐂 → \psh[ℂ₀]$, respectively.
\end{nota}

\begin{prop}
  \label{prop:proj-adjoint}
  The projection functor $-₀∶ 𝐂 → \psh[ℂ₀]$ has a left adjoint mapping any
  object $X₀$ to $∅ → X₀\source×X₀\but$, where $∅$ denotes the initial
  presheaf on $ℂ₁$.  For any $X₀$, we call this object the
  \emph{discrete} transition system on $X₀$.
\end{prop}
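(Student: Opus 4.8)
The plan is to check directly that the stated assignment $X₀ ↦ (∅ → Δ(X₀))$ underlies a functor $L∶ \psh[ℂ₀] → 𝐂$ which is left adjoint to $-₀$, exploiting throughout the fact that the empty presheaf $∅$ is initial in $\psh[ℂ₁]$. Recall from the identification of $𝐂$ with the comma category $\id_{\psh[ℂ₁]}/Δ$ that an object is a triple $(X₁,X₀,∂)$ with $∂∶ X₁ → Δ(X₀) = X₀\source × X₀\but$, and a morphism $(X₁,X₀,∂) → (Y₁,Y₀,∂')$ is a pair $(f₁∶ X₁ → Y₁, f₀∶ X₀ → Y₀)$ satisfying $∂' ∘ f₁ = Δ(f₀) ∘ ∂$.

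First I would define $L(X₀) ≔ (∅, X₀, !_{X₀})$, where $!_{X₀}∶ ∅ → Δ(X₀)$ is the unique map out of the initial presheaf, and on a morphism $g∶ X₀ → X₀'$ set $L(g) ≔ (\id_∅, g)$. The comma square required for $L(g)$ to be a morphism commutes because both of its composites are maps $∅ → Δ(X₀')$, hence agree by initiality; functoriality of $L$ is then immediate from functoriality of $\id_∅$ and of identities.

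Next I would exhibit a bijection $𝐂(L(X₀), (Y₁,Y₀,∂')) ≅ \psh[ℂ₀](X₀, Y₀)$, natural in both variables, which is exactly the (unit–counit-free) data of an adjunction $L ⊣ {-₀}$. A morphism $L(X₀) → (Y₁,Y₀,∂')$ is a pair $(u∶ ∅ → Y₁, v∶ X₀ → Y₀)$ with $∂' ∘ u = Δ(v) ∘ !_{X₀}$; but $u$ is forced to be the unique arrow $∅ → Y₁$, and the commutativity condition holds automatically, both sides being arrows $∅ → Δ(Y₀)$. So the pair is determined by $v$ alone, and sending a morphism to its second component is visibly natural; the corresponding unit is $\id_{X₀}$ at each $X₀$.

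Since everything reduces to the uniqueness of maps out of $∅$, there is essentially no obstacle here: the only points requiring (trivial) care are the two commutativity checks — that $L$ is well defined on morphisms, and that the hom-set bijection respects the comma-square condition — and both hold for the same reason. Alternatively, one can simply invoke the general fact that in a comma category $F/G$ with $F∶ \mathbf{A} → \mathbf{C}$, whenever $\mathbf{A}$ has an initial object preserved by $F$ the projection to the codomain factor has a left adjoint; here $F = \id_{\psh[ℂ₁]}$ trivially preserves the initial object $∅$.
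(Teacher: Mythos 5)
Your proof is correct and is exactly the direct verification the paper has in mind — its own proof is just the word ``Straightforward'', and your argument (everything reduces to uniqueness of maps out of the initial presheaf $∅$, both for well-definedness of $L$ on morphisms and for the hom-set bijection) is the standard way to discharge it. The closing remark about projections out of comma categories $F/G$ when the domain of $F$ has an initial object preserved by $F$ is a nice, slightly more general packaging of the same observation.
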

\begin{proof}
  Straightforward.
\end{proof}

\begin{exa}
  We can get $𝐂 → \psh[ℂ₀]$ to be the forgetful functor $𝐆𝐩𝐡 → 𝐒𝐞𝐭$ by
  taking
  \begin{itemize}
  \item $ℂ₀ = 1$, so that $\psh[ℂ₀] = \psh[1] ≅ 𝐒𝐞𝐭$,
  \item $ℂ₁ = 1$, so that $𝐂 = 𝐒𝐞𝐭$, and
  \item $\source,\but∶ 1 → 1$ to be the unique such functor, i.e., the
    identity.
\end{itemize}
A transition system thus consists of sets $V$ and $E$ together with
a map $E → V²$, i.e., a graph.
\end{exa}

\begin{exa}
  A proof-relevant variant of standard labelled transition
  systems (over any set $𝔸$ of labels) may be obtained as follows. We take
  \begin{itemize}
  \item $ℂ₀ = 1$ again,
  \item $ℂ₁ = 𝔸$ viewed as a discrete category, and
  \item $\source,\but∶ ℂ₁ → ℂ₀$ the unique such functor.
  \end{itemize}
  Thus, a transition system $X$ consists of a set $X₀$ and sets $Xₐ$ for
  all $a ∈ 𝔸$, together with maps $Xₐ → X₀²$ returning the source
  and target of each $a$-labelled edge.

  More generally, given any graph $𝕃$, taking $\source,\but∶ ℂ₁ → ℂ₀$ to be the
  source and target maps $𝕃₁ → 𝕃₀$ viewed as functors between discrete
  categories, we obtain for $𝐂 → \psh[ℂ₀]$ a functor equivalent to
  $𝐆𝐩𝐡/𝕃 → 𝐒𝐞𝐭/𝕃₀$.
\end{exa}

\begin{exa}
  Let $ℂ₀ = \op{𝔽}$ and $ℂ₁ = 1$, with $\source$ and $\but$ picking
  respectively $0$ and $1$. In particular, $\psh[ℂ₁] ≅ 𝐒𝐞𝐭$.  Then,
  $Δ(X₀) = X₀(0)×X₀(1)$ and we recover the category
  $𝐂$ of~§\ref{ss:lts:ex}, and its forgetful functor to
  $\psh[ℂ₀] = [𝔽,𝐒𝐞𝐭]$.
\end{exa}

\subsection{Transition systems as presheaves}\label{ss:transpresh}
Before introducing bisimulation, let us establish an alternative
characterisation of the category $𝐂$ of transition systems.

\begin{prop}\label{prop:isom}
  The lax limit category $\psh[ℂ₁]/Δ$ of transition systems is
  isomorphic to a presheaf category $\psh[ℂ_{\source,\but}]$.
\end{prop}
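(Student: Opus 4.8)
The plan is to construct the base category $ℂ_{\source,\but}$ explicitly, mimicking the concrete construction of $𝔽[⇓]$ from the overview section, and then exhibit the isomorphism $\psh[ℂ₁]/Δ ≅ \psh[ℂ_{\source,\but}]$. First I would define $ℂ_{\source,\but}$ as the category whose objects are the disjoint union of the objects of $ℂ₀$ and the objects of $ℂ₁$; the morphisms between two $ℂ₀$-objects are the $ℂ₀$-morphisms, the morphisms between two $ℂ₁$-objects are the $ℂ₁$-morphisms, there are no morphisms from a $ℂ₀$-object to a $ℂ₁$-object, and the morphisms from a $ℂ₁$-object $c₁$ to a $ℂ₀$-object $c₀$ are pairs $(\epsilon, f)$ with $\epsilon ∈ \{\source,\but\}$ and $f∶ \epsilon(c₁) → c₀$ in $ℂ₀$. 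Composition on the mixed homs is given by postcomposition in $ℂ₀$ and by using functoriality of $\source,\but$ for precomposition by a $ℂ₁$-morphism $g∶ c₁' → c₁$, namely $(\epsilon,f) ∘ g ≔ (\epsilon, f ∘ \epsilon(g))$; identities and associativity are then routine. (Equivalently, one may present $ℂ_{\source,\but}$ as the collage/cograph of the profunctor $ℂ₁^{\mathit{op}} × ℂ₀ → 𝐒𝐞𝐭$ sending $(c₁,c₀)$ to $ℂ₀(\source c₁, c₀) + ℂ₀(\but c₁, c₀)$.)

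Next I would define the comparison functor. Given a transition system $(X₁,X₀,∂∶ X₁ → X₀\source × X₀\but)$, I build a presheaf $\widetilde X$ on $ℂ_{\source,\but}$ by setting $\widetilde X(c₀) = X₀(c₀)$ for $c₀ ∈ ℂ₀$ and $\widetilde X(c₁) = X₁(c₁)$ for $c₁ ∈ ℂ₁$; the action on $ℂ₀$- and $ℂ₁$-morphisms is inherited from $X₀$ and $X₁$ respectively, while on a mixed morphism $(\source,f)∶ c₁ → c₀$ the action $\widetilde X(c₀) → \widetilde X(c₁)$ — wait, presheaves are contravariant, so $(\source,f)∶ c₁ → c₀$ induces $\widetilde X(c₀) → \widetilde X(c₁)$, namely the composite $X₀(c₀) \xto{X₀(f)} X₀(\source c₁) = (X₀\source)(c₁) \xto{\pi₁ ∂} X₁(c₁)$, and dually using $\pi₂ ∂$ for the $\but$-case. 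One checks this is functorial using the square defining $∂$ as a natural transformation (this encodes exactly that $s_X, t_X$ are arrows in $\psh[ℂ₁]$). Conversely, from a presheaf $Y$ on $ℂ_{\source,\but}$ I restrict along the two inclusions $ℂ₀ ↪ ℂ_{\source,\but}$ and $ℂ₁ ↪ ℂ_{\source,\but}$ to get $Y₀ ∈ \psh[ℂ₀]$ and $Y₁ ∈ \psh[ℂ₁]$, and I recover $∂$ from the action of the two ``generic'' mixed morphisms $(\source, \id)$ and $(\but, \id)$. These two assignments are mutually inverse essentially by construction, and they are functorial, since a morphism of transition systems is exactly a pair $(f₀,f₁)$ commuting with the source and target maps, which is exactly a natural transformation $\widetilde X → \widetilde Y$ compatible with the mixed-morphism actions.

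The only genuinely nontrivial point — and the step I expect to cost the most care — is checking that the two mutually-inverse assignments are well-defined on morphisms and do not merely match on objects: one must verify that compatibility of a presheaf morphism $α∶ \widetilde X → \widetilde Y$ with the mixed morphisms $(\epsilon,\id)$ is equivalent to the commutation of the source/target squares of Definition of transition-system morphisms, and that the composition law on the mixed homs (involving $\epsilon(g)$ for $ℂ₁$-morphisms $g$) is precisely what makes $\widetilde X$ into an honest presheaf, i.e. that naturality of $∂∶ X₁ → X₀\source × X₀\but$ is equivalent to the required functoriality of $\widetilde X$ on composites of mixed morphisms with $ℂ₁$-morphisms. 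This is a bookkeeping argument rather than a deep one, but it is where an error would most easily creep in, so I would lay out the relevant commuting squares carefully. Everything else — identities, associativity, the two round-trips being identities — is a straightforward diagram chase, and an alternative, slicker route is to invoke the well-known fact that presheaves on the collage of a profunctor $P∶ 𝐃 ⇸ 𝐄$ are equivalent to the comma category of the restriction functors, instantiated with the profunctor above, which sidesteps the explicit verification at the price of importing that lemma.
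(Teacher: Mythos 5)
Your overall strategy --- build the collage explicitly and check that contravariant presheaves on it unpack to exactly the data of a transition system --- is essentially the paper's, which defines $ℂ_{\source,\but}$ as the lax colimit of the parallel pair $\source,\but$ and invokes the fact that the presheaf construction turns lax colimits into lax limits. However, your explicit construction of the collage has the mixed morphisms pointing the wrong way, and this is not a cosmetic slip: it breaks the proof. You put the nontrivial hom-sets from $ℂ₁$-objects to $ℂ₀$-objects, $\hom(c₁,c₀) = ℂ₀(\source(c₁),c₀)+ℂ₀(\but(c₁),c₀)$. Since presheaves are contravariant here, the morphism $(\source,\id)∶ c₁ → \source(c₁)$ of your category acts on a presheaf $\widetilde X$ as a map $\widetilde X(\source(c₁)) → \widetilde X(c₁)$, i.e.\ $X₀(\source(c₁)) → X₁(c₁)$ --- a map from states \emph{into} transitions. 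But the structure a transition system actually carries is $π₁ ∘ ∂∶ X₁ → X₀\source$, a map from transitions \emph{to} states; your own formula ``$X₀(\source c₁) \xto{π₁∂} X₁(c₁)$'' does not typecheck, since $π₁∂$ goes the other way. Contravariant presheaves on your category are objects $(X₀,X₁)$ equipped with a map $X₀\source + X₀\but → X₁$, i.e.\ the comma category $(Δ_\source+Δ_{\but})/\id_{\psh[ℂ₁]}$, not $\id_{\psh[ℂ₁]}/Δ$.

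The fix is to reverse the mixed homs: $ℂ_{\source,\but}$ must have $\hom(P,L) = ℂ₀(P,\source(L)) + ℂ₀(P,\but(L))$ for $P ∈ ℂ₀$ and $L ∈ ℂ₁$, and $\hom(L,P)=∅$; equivalently, it is $ℂ₀+ℂ₁$ augmented with generating arrows $s_L∶ \source(L) → L$ and $t_L∶ \but(L) → L$ natural in $L$, which is exactly what the paper's lax colimit produces. Then restriction along $s_L$ and $t_L$ yields $X₁(L) → X₀(\source(L))$ and $X₁(L) → X₀(\but(L))$ as required, and the rest of your argument (the two mutually inverse assignments, the functoriality check, or the appeal to presheaves on a collage forming a comma category) goes through as you describe. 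You were likely misled by the concrete category $𝔽[⇓]$ of §\ref{ss:lts:ex}, whose arrows do go out of $⇓$ --- but there transition systems are \emph{covariant} presheaves, whereas the general statement takes $ℂ₀ = \op{𝔽}$ and contravariant presheaves, which flips all the arrows.
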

\begin{proof}
  Let $ℂ_{\source,\but}$ denote the lax colimit in $𝐂𝐚𝐭$ of the parallel pair
  $\source,\but$. By definition, it is the universal category
  with functors and natural transformations as in
  \begin{center}
    \Diag(1.2,1){%
    \twocell[.55]{m-2-2}{m-1-3}{m-2-2}{m-1-1}{}{cell=0.2,bend right=10,labelon={s}} %
    \twocell[.45]{m-2-2}{m-1-3}{m-2-2}{m-1-1}{}{cell=0.1,bend left=10,labelon={t}} %
  }{%
      ℂ₁ \& \& ℂ₀ \\
      \& ℂ_{\source,\but}\rlap{.} %
    }{%
      (m-1-1) edge[bend left=10,labela={\source}] (m-1-3) %
       edge[bend right=10,labelb={\but}] (m-1-3) %
      edge[labelbl={in₁}] (m-2-2) %
      (m-1-3) edge[labelbr={in₀}] (m-2-2) %
    }
  \end{center}
  It thus consists of the coproduct $ℂ₁ + ℂ₀$, augmented with arrows
  $s_L∶ \source(L) → L$ and $t_L∶ \but(L) → L$ for all $L ∈ ℂ₁$,
  naturally in $L$. Presheaves on $ℂ_{\source,\but}$ coincide with
  $\psh[ℂ₁]/Δ$ because the presheaf construction turns lax colimits
  into lax limits.
\end{proof}

\begin{nota}
  We often omit the isomorphism $\psh[ℂ₁]/Δ ≅ \psh[ℂ_{\source,\but}]$,
  considering it as an implicit coercion. E.g., for any $P ∈ ℂ₀$,
  $𝐲_P$ may be used to denote the transition system $\underline{P}$
  with $\underline{P}₁ = ∅$ and $\underline{P}₀ = 𝐲_P$.

  Similarly, $𝐲_L$ may be used to denote the `minimal' transition
  system with one transition over $L$, say $\underline{L}$, i.e.,
  $\underline{L}₁ = 𝐲_L$,
  $\underline{L}₀ = 𝐲_{\source(L)} + 𝐲_{\but(L)}$, and the map
  $\underline{L}₁ → \underline{L}₀\source×\underline{L}₀\but$ uniquely
  determined by the element
  $(in₁(\id_{\source(L)}),in₂(\id_{\but(L)})) ∈ \underline{L}₀(\source(L)) ×
  \underline{L}₀(\but(L))$.

  Finally, $𝐲_{s_L}∶ 𝐲_{\source(L)} → 𝐲_L$ and $𝐲_{t_L}∶ 𝐲_{\but(L)} → 𝐲_L$
  denote the Yoneda embedding of the canonical morphisms $s_L$ and
  $t_L$ from the proof of Proposition~\ref{prop:isom}.
\end{nota}

By Yoneda, we thus have:
\begin{cor}
  For all $X$, we have $𝐂(𝐲_L,X) ≅ X₁(L)$ and $𝐂(𝐲_P,X) ≅ X₀(P)$.
\end{cor}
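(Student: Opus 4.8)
The plan is to reduce everything to the ordinary Yoneda lemma by passing through the identification $𝐂 = \psh[ℂ₁]/Δ ≅ \psh[ℂ_{\source,\but}]$ supplied by Proposition~\ref{prop:isom}. The first step is to spell out what that isomorphism does on objects: via the implicit coercion fixed in the Notation above, a transition system $X = (X₀,X₁,∂)$ becomes the presheaf on $ℂ_{\source,\but}$ whose value at $in₀ P$ is $X₀(P)$ (for $P ∈ ℂ₀$) and whose value at $in₁ L$ is $X₁(L)$ (for $L ∈ ℂ₁$), the $ℂ₀$- and $ℂ₁$-arrows acting as in $X₀$ and $X₁$ and the new generators $s_L, t_L$ acting by the source and target components of $∂$. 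So, once coerced, $X(in₀ P) = X₀(P)$ and $X(in₁ L) = X₁(L)$.

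The second step is to check that the transition systems written $𝐲_P$ and $𝐲_L$ in the Notation are precisely the images, under the same coercion, of the representable presheaves $ℂ_{\source,\but}(-, in₀ P)$ and $ℂ_{\source,\but}(-, in₁ L)$. This is a short computation in the lax colimit $ℂ_{\source,\but}$: since the only non-identity arrows between the two components are the $s_{L'}$ and $t_{L'}$, which run from $ℂ₀$-objects to $ℂ₁$-objects, one has $ℂ_{\source,\but}(in₁ M, in₀ P) = ∅$, $ℂ_{\source,\but}(in₁ M, in₁ L) = ℂ₁(M,L)$, $ℂ_{\source,\but}(in₀ Q, in₀ P) = ℂ₀(Q,P)$, and, normalising composites by naturality of $s$ and $t$, $ℂ_{\source,\but}(in₀ Q, in₁ L) ≅ ℂ₀(Q,\source L) + ℂ₀(Q,\but L)$. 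These match the concrete presheaves $\underline{P}₁ = ∅$, $\underline{P}₀ = 𝐲_P$, $\underline{L}₁ = 𝐲_L$, $\underline{L}₀ = 𝐲_{\source(L)} + 𝐲_{\but(L)}$ of the Notation, and the span structure of $\underline{L}$ (restriction along $s_L$ and $t_L$) picks out exactly the element $(in₁(\id_{\source(L)}), in₂(\id_{\but(L)}))$; so the two presentations agree. For $𝐲_P$ one can alternatively observe that $\underline P$ is the discrete transition system on $𝐲_P$, and use the adjunction of Proposition~\ref{prop:proj-adjoint} together with Yoneda in $\psh[ℂ₀]$.

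With both sides transported, the result is just the Yoneda lemma in $\psh[ℂ_{\source,\but}]$:
\[ 𝐂(𝐲_L, X) ≅ \psh[ℂ_{\source,\but}]\bigl(ℂ_{\source,\but}(-, in₁ L),\, X\bigr) ≅ X(in₁ L) = X₁(L), \]
and symmetrically $𝐂(𝐲_P, X) ≅ X(in₀ P) = X₀(P)$, naturally in $X$. I do not expect any genuine obstacle: the statement is Yoneda in disguise, and the only point that needs care is the bookkeeping in the second step, namely confirming that the implicit coercion of the Notation really does carry representable presheaves on $ℂ_{\source,\but}$ to the transition systems $𝐲_P$ and $𝐲_L$ as described there.
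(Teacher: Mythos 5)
Your proposal is correct and is exactly the paper's argument: the paper simply says ``By Yoneda'' immediately after Proposition~\ref{prop:isom} and the Notation identifying $𝐲_P$, $𝐲_L$ with the transition systems $\underline{P}$, $\underline{L}$. The only extra content in your write-up is the (correct) hom-set computation in the lax colimit confirming that the coercion sends representables to $\underline{P}$ and $\underline{L}$, which the paper leaves implicit in the Notation.
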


\begin{nota}\label{not:Downarrow}
  In the case of call-by-name $λ$-calculus, we call $⇓$ the unique
  object coming from $ℂ₁ = 1$.
\end{nota}

\begin{rem}
  Presheaves on $ℂ_{\source,\but}$ intuitively have two dimensions,
  $0$ and $1$; the projection functor forgets dimension 1, while the
  left adjoint (Proposition~\ref{prop:proj-adjoint}) adds an empty
  dimension 1, thus lifting its 0-dimensional argument to a
  1-dimensional object.
\end{rem}

This dimensional intuition leads to the following useful observation
on the forgetful functor.
\begin{prop}\label{prop:forgetfulpreserves}
  The forgetful functor $𝐂 → \psh[ℂ₀]$ preserves all limits and colimits, as
  well as image (in the sense of strong epi, mono) factorisations.
\end{prop}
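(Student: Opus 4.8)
The plan is to exploit the presheaf characterisation $𝐂 ≅ \psh[ℂ_{\source,\but}]$ from Proposition~\ref{prop:isom}, under which the forgetful functor $-₀∶ 𝐂 → \psh[ℂ₀]$ becomes restriction along the inclusion $in₀∶ ℂ₀ → ℂ_{\source,\but}$. First I would recall that restriction along any functor between small categories has both a left adjoint (left Kan extension) and a right adjoint (right Kan extension), hence preserves all (small) limits and colimits. This immediately gives the first two claims, and indeed this is just the standard fact that inverse image functors between presheaf categories are cocontinuous and continuous. Alternatively, and perhaps more in keeping with the ``dimensional'' intuition of the preceding remark, one observes directly that limits and colimits in $𝐂 = \psh[ℂ₁]/Δ$ are computed componentwise in $\psh[ℂ₀]$ and $\psh[ℂ₁]$ — this follows from Proposition~\ref{prop:projcreates} applied to the comma category $\id_{\psh[ℂ₁]}/Δ$, since $\id_{\psh[ℂ₁]}$ preserves all colimits and $Δ$ (being a product of restriction functors, each a right adjoint) preserves all limits; in fact since $Δ$ also has a right adjoint it preserves colimits too, so the projection to $\psh[ℂ₁] × \psh[ℂ₀]$ creates all limits and colimits. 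In particular $-₀$, being the composite of this projection with the (co)continuous second projection, preserves them.

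For the statement about image factorisations, I would argue as follows. Given $f∶ X → Y$ in $𝐂$, a (strong epi, mono)-factorisation of $f$ is, by the presheaf characterisation, computed pointwise: monos and strong epis in a presheaf category are exactly the pointwise injections and pointwise surjections, and the (strong epi, mono)-factorisation is taken objectwise (the usual epi–mono factorisation in $𝐒𝐞𝐭$). Since $ℂ₀$ embeds into $ℂ_{\source,\but}$, restricting a pointwise factorisation to the objects coming from $ℂ₀$ yields a pointwise factorisation, i.e.\ an epi followed by a mono, in $\psh[ℂ₀]$. Hence $-₀$ sends the image factorisation of $f$ to an image factorisation of $f₀$. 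One may phrase this cleanly by noting $-₀$ preserves both monos and strong epis — preservation of monos is immediate since monos are the pointwise injections and restriction drops coordinates, and preservation of strong epis similarly, or via the lifting-property characterisation together with preservation of limits (pullbacks) — and then invoking the Corollary that a (strong epi, mono)-factorisation is an image, together with uniqueness of images.

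The one genuinely delicate point is not the preservation of monos and strong epis but the claim that the \emph{factorisation} is preserved, i.e.\ that $-₀$ takes the image object $M$ of $f$ to the image object of $f₀$. This requires knowing that images in these presheaf categories are genuinely pointwise, which rests on the pointwise description of the (strong epi, mono) factorisation system in a presheaf category — a standard fact, but the one that actually does the work. The rest is bookkeeping with adjoints. I expect the proof in the paper to simply say this is all pointwise and hence clear; the main obstacle, such as it is, is being careful that ``$-₀$ preserves strong epis and monos'' plus ``image $=$ (strong epi, mono)-factorisation'' genuinely entails ``$-₀$ preserves images'', which follows because $-₀(e)$ is a strong epi, $-₀(m)$ is a mono, their composite is $f₀$, and images are unique up to canonical iso.
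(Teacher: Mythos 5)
Your proposal is correct and follows essentially the same route as the paper: identify $-₀$ with the restriction functor $\psh[ℂ_{\source,\but}] → \psh[ℂ₀]$, which is both a left and a right adjoint (via Kan extensions) and hence preserves all limits and colimits, and then note that (strong epi, mono)-factorisations in presheaf categories are computed pointwise, so restriction preserves them. Your extra care about why preservation of strong epis and monos together with uniqueness of images yields preservation of the factorisation is a sound elaboration of what the paper leaves implicit.
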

\begin{proof}
  The forgetful functor $𝐂 → \psh[ℂ₀]$ is equivalent to the
  restriction functor $\psh[ℂ_{\source,\but}] → \psh[ℂ₀]$, which is
  both a left and right adjoint, hence preserves all limits and
  colimits.  Finally, image factorisations are computed pointwise in
  presheaf categories (see, e.g., \cite[§0]{Adamek}), hence are
  preserved by restriction functors.
\end{proof}

\subsection{Bisimulation and bisimilarity}\label{ss:bisimex}
Morphisms in $𝐂$ are a generalisation of graph
morphisms, which are a proof-relevant version of functional
simulations.  The analogue of functional bisimulations is as follows.
\begin{defi}\label{def:fib}
  A morphism $f∶ X → Y$ in $\psh[ℂ_{\source,\but}]$ is a \emph{functional
    bisimulation}, or a \emph{fibration}, iff it enjoys the (weak)
  right lifting property w.r.t.\ $𝐲_{s_L}∶ 𝐲_{\source(L)} → 𝐲_L$, for all
  $L ∈ ℂ₁$.
\end{defi}
\begin{rem}
  This definition is strongly inspired by Joyal et al.'s~\cite{DBLP:conf/lics/JoyalNW93}.
\end{rem}

Here is a characterisation of fibrations which will be important.  Let
us recall that a weak pullback satisfies the same universal property
as a pullback, albeit without uniqueness.
\begin{prop}\label{prop:sim:wpbk}
  A morphism $f∶ X → Y$ is a functional bisimulation iff the following
  diagram is a pointwise weak pullback.
  \begin{center}
  \begin{tikzcd}
    X₁ \arrow[r,"f₁"] \arrow["s_X"',d] & Y₁ \arrow[d,"s_Y"] \\
    X₀\source \arrow[r,"f₀\source"] & Y₀\source 
  \end{tikzcd}
\end{center}
\end{prop}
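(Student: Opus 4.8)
The plan is to unfold both conditions at the level of a fixed object $L \in ℂ₁$ and check they are equivalent. Recall that by the corollary following Proposition~\ref{prop:isom}, we have $𝐂(𝐲_L, X) ≅ X₁(L)$ and $𝐂(𝐲_P, X) ≅ X₀(P)$ naturally, and that a commuting square against $𝐲_{s_L} : 𝐲_{\source(L)} → 𝐲_L$ corresponds, via Yoneda, to a pair $(e, x)$ with $e ∈ Y₁(L)$, $x ∈ X₀(\source(L))$, subject to the compatibility $(s_Y)_L(e) = f₀(x)$ — that is, exactly an element of the pointwise pullback $Y₁(L) \times_{Y₀\source(L)} X₀\source(L)$. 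A lift of the square is then an element $e' ∈ X₁(L)$ with $f₁(e') = e$ and $(s_X)_L(e') = x$; this is precisely an element of $X₁(L)$ mapping to the chosen pullback element under the canonical comparison map $X₁(L) → Y₁(L) \times_{Y₀\source(L)} X₀\source(L)$. So the right lifting property against $𝐲_{s_L}$ holds iff this comparison map is surjective, which is exactly the statement that the square in question is a weak pullback at $L$.

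Concretely I would proceed as follows. First, fix $L$ and identify, via the two Yoneda isomorphisms and the description of $𝐲_L$ and $𝐲_{s_L}$ given in the notation after Proposition~\ref{prop:isom} ($\underline{L}₁ = 𝐲_L$, $\underline{L}₀ = 𝐲_{\source(L)} + 𝐲_{\but(L)}$, with the source/target legs picking out the two identities), a commuting square $𝐲_{\source(L)} → X$, $𝐲_L → Y$ over $f$ and $𝐲_{s_L}$ with a compatible pair $(x, e) ∈ X₀(\source(L)) \times Y₁(L)$ satisfying $f₀\source(x) = s_Y(e)$; here one uses naturality of the Yoneda isomorphism to see that precomposition with $𝐲_{s_L}$ corresponds to applying $s_Y$, and that the outer square commuting is the condition $f₀\source(x) = s_Y(e)$. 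Second, observe that a diagonal filler $𝐲_L → X$ corresponds to $e' ∈ X₁(L)$, and the two triangle identities become $f₁(e') = e$ (lower triangle) and $s_X(e') = x$ (upper triangle). Third, note uniqueness of the filler is automatic in the weak sense — in fact, the original Definition~\ref{def:fib} only asks for the weak lifting property here — so the existence of a filler for every compatible $(x, e)$ is literally the surjectivity of the canonical map into the pointwise pullback. Since limits (hence pullbacks) in $\psh[ℂ₁]$ and $\psh[ℂ₀]$ are computed pointwise, "weak pullback" means exactly "pointwise, the comparison to the pullback is epi", so the two conditions coincide.

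The only mildly delicate point — and the place I would be most careful — is matching up the abstract square in the proposition (which lives in $\psh[ℂ₁]$, with vertices $X₁, Y₁, X₀\source, Y₀\source$ and legs the source natural transformations) with the lifting problem against $𝐲_{s_L}$ (which lives in $𝐂 = \psh[ℂ_{\source,\but}]$). One must check that the representable $𝐲_L ∈ 𝐂$ restricts along $ℂ₁ ↪ ℂ_{\source,\but}$ to the representable $𝐲_L ∈ \psh[ℂ₁]$ and that $𝐲_{\source(L)} ∈ 𝐂$ has trivial transition part, so that the relevant hom-sets $𝐂(𝐲_{\source(L)}, X)$, $𝐂(𝐲_L, Y)$ are computed by Yoneda in the respective presheaf categories — this is exactly the content recorded in the notation block and corollary after Proposition~\ref{prop:isom}, so no new work is needed, only bookkeeping. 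Once this dictionary is in place the equivalence is a direct unwinding, and I would present it as a short chain of "iff"s indexed by $L ∈ ℂ₁$.
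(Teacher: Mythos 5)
Your proposal is correct and follows essentially the same route as the paper's own proof: both use Yoneda to identify a lifting problem against $𝐲_{s_L}$ with an element of the pointwise pullback $Y₁(L) ×_{Y₀\source(L)} X₀\source(L)$ (the paper phrases this as a cone in $\psh[ℂ₁]$ with vertex $𝐲_L$), and a filler with a preimage under the comparison map $X₁(L) → Y₁(L) ×_{Y₀\source(L)} X₀\source(L)$. The bookkeeping point you flag about moving between $𝐂 = \psh[ℂ_{\source,\but}]$ and $\psh[ℂ₁]$ is exactly the parenthetical remark the paper makes at the end of its proof, so nothing is missing.
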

\begin{rem}
  Being a \emph{pointwise} weak pullback means that all squares
  \begin{center}
  \begin{tikzcd}
    X₁(L) \arrow[r,"(f₁)_L"] \arrow["(s_X)_L"',d] & Y₁(L) \arrow[d,"(s_Y)_L"] \\
    X₀(\source(L)) \arrow[r,"(f₀)_{\source(L)}"] & Y₀(\source(L))
  \end{tikzcd}
\end{center}
should be weak pullbacks, for $L ∈ 𝐨𝐛(ℂ₁)$. This is weaker than being
a weak pullback.
\end{rem}
\begin{proof}[Proof of Proposition~\ref{prop:sim:wpbk}]
  By Yoneda, a lifting problem in $𝐂$ as below left is the same as a
  cone in $\psh[ℂ₁]$ as below right, and a lifting is the same as a
  mediating morphism to $X₁$.
  \begin{center}
    \diag|baseline=(m-2-2.base)|(.6,.6){%
      𝐲_{\source(L)} \& X \\
      𝐲_L \& Y %
    }{%
      (m-1-1) edge[labela={x}] (m-1-2) %
      edge[labell={𝐲_{s_L}}] (m-2-1) %
      (m-2-1) edge[labelb={e}] (m-2-2) %
      edge[dashed] (m-1-2) %
      (m-1-2) edge[labelr={f}] (m-2-2) %
    }
    \hfil
    \diag|baseline=(m-4-2.base)|(.1,.6){
      𝐲_L \\
      \& X₁ \& \& Y₁ \\
      \& {} \\
      \& X₀ \source \& \& Y₀ \source %
    }{%
      (m-1-1) edge[bend right=10,labelbl={x}] (m-4-2) %
      (m-1-1) edge[bend left=10,labelar={e}] (m-2-4) %
      (m-1-1) edge[dashed] (m-2-2) %
      (m-2-2) edge[labelb={f₁}] (m-2-4) %
      edge[labelr={s_X}] (m-4-2) %
      (m-4-2) edge[labelb={f₀ \source}] (m-4-4) %
      (m-2-4) edge[labelr={s_Y}] (m-4-4) %
      }
  \end{center}
  (On the left $L$ and $\source(L)$ are viewed as objects of
  $ℂ_{\source,\but}$, hence should technically by written $in₁(L)$ and
  $in₀ (\source(L))$, respectively.)
\end{proof}

We now define general bisimulations, based on functional
bisimulations. Usually, one considers bisimulation relations. Here, we
generalise this a bit and consider arbitrary spans:

\begin{defi}
  A \emph{simulation} is a span $X ← R → Y$ whose left leg is a
  fibration.  A \emph{bisimulation} is a span of fibrations
  (equivalently, a simulation whose converse span is also a
  simulation).

  A \emph{simulation (resp.\ bisimulation) relation} is a relation
  which is a simulation (resp.\ bisimulation).
\end{defi}

\begin{rem}
  Of course, the relevant notion in our applications is
  substitution-closed bisimulation, to which we will come below.
\end{rem}

\begin{lem}\label{lem:bisimunion}
  Simulation relations and bisimulation relations are stable under unions.
\end{lem}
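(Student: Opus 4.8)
The plan is to realise the union via Proposition~\ref{prop:union} and then transport any lifting problem for the union to one of the relations in the given family. Let $(R_i → X×Y)_{i ∈ I}$ be a family of simulation relations, with union $R → X×Y$. Since $𝐂 = \psh[ℂ_{\source,\but}]$ is a presheaf category, hence locally presentable, Proposition~\ref{prop:union} applies: $R$ is again a relation, obtained as the (strong epi, mono)-factorisation $\sum_{i ∈ I} R_i \xto{q} R \hookrightarrow X×Y$ of the cotupling of the $R_i$. In particular, each inclusion $R_i \hookrightarrow X×Y$ factors as $R_i \xto{w_i} R \hookrightarrow X×Y$, so each $w_i$ is a morphism over $X×Y$, hence over both $X$ and $Y$. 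It remains to show that the left leg $R → X$ is a fibration (and, in the bisimulation case, the right leg $R → Y$ as well).

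So fix $L ∈ ℂ₁$ and a commuting square
\begin{center}
  \begin{tikzcd}
    𝐲_{\source(L)} \arrow[r,"x"] \arrow["𝐲_{s_L}"',d] & R \arrow[d] \\
    𝐲_L \arrow[r,"e"] & X
  \end{tikzcd}
\end{center}
in $𝐂$, the right-hand vertical being the left leg of $R$. As $q$ is a strong epi it is in particular an epi, hence a pointwise surjection; so by the Yoneda lemma ($𝐂(𝐲_{\source(L)},X) ≅ X₀(\source(L))$) together with the fact that coproducts in $𝐂$ are computed pointwise, $x$ factors as $𝐲_{\source(L)} \xto{x'} R_{i₀} \xto{w_{i₀}} R$ for some $i₀ ∈ I$. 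Since $w_{i₀}$ lies over $X$, the square obtained from the one above by replacing $x$ with $x'$, $R$ with $R_{i₀}$, and the right-hand vertical with the left leg of $R_{i₀}$ still commutes. Because $R_{i₀}$ is a simulation relation, its left leg is a fibration, so this new square admits a lifting $h∶ 𝐲_L → R_{i₀}$. Then $w_{i₀} ∘ h∶ 𝐲_L → R$ solves the original lifting problem: the upper triangle commutes because $w_{i₀} ∘ (h ∘ 𝐲_{s_L}) = w_{i₀} ∘ x' = x$, and the lower one because $w_{i₀}$ lies over $X$ and $h$ is a lifting. Hence $R → X$ is a fibration, so $R$ is a simulation relation.

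For bisimulation relations, each $R_i$ moreover has its right leg a fibration, and running the same argument with $X$ replaced throughout by $Y$ shows that $R → Y$ is a fibration too; so $R$ is a bisimulation relation. The only step that is not completely formal is the factorisation of $x$ through a single $R_{i₀}$: this is exactly where one uses that the domain $𝐲_{\source(L)}$ of each generating map $𝐲_{s_L}$ is \emph{representable} — so that a map into $R$ is an element of a presheaf, which can be lifted along the pointwise-surjective $q$ into one coproduct summand — together with the fact that the $w_i$ are morphisms over $X×Y$, ensuring that the transferred square is a genuine lifting problem for $R_{i₀}$.
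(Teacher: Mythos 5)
Your proof is correct and follows essentially the same route as the paper's: realise the union as the image of the cotupling via Proposition~\ref{prop:union}, then use representability of $𝐲_{\source(L)}$ (hence pointwise surjectivity of the strong epi and indecomposability of representables) to transport any lifting problem for the union back to a single $R_{i₀}$ and push its lifting forward. Your write-up is just a more detailed unfolding of the paper's argument, including the check that the transported square still commutes over $X×Y$.
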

\begin{proof}
  By symmetry, it is enough to deal with the case of simulation
  relations.  Consider any family $(Rᵢ ↪ X×Y)_{i ∈ I}$ of simulation
  relations.  By Proposition~\ref{prop:union}, their union is the
  image of their cotupling. But because the domain $𝐲_{\source(L)}$ of
  $𝐲_{s_L}$ is representable for all $L ∈ ℂ₁$, any lifting problem
  $𝐲_{s_L} → ⋃ᵢ Rᵢ$ lifts to a lifting problem $𝐲_{s_L} → ∑ᵢ Rᵢ$,
  which in turn lifts to a lifting problem $𝐲_{s_L} → R_{i₀}$, for
  some $i₀ ∈ I$. We then find a lifting for the latter because
  $R_{i₀}$ is a simulation by hypothesis, which yields a lifting for
  the original.
\end{proof}

\begin{prop}\label{prop:bisim}
  For all $X,Y ∈ 𝐂$, the full subcategory $𝐁𝐢𝐬𝐢𝐦(X,Y)$ of spans
  between $X$ and $Y$ 
  which are bisimulations admits a terminal object
  $∼_{X,Y}$, called \emph{bisimilarity}.
\end{prop}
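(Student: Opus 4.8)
The plan is to realise $\sim_{X,Y}$ as the \emph{largest bisimulation relation} between $X$ and $Y$, and then to check that every bisimulation span admits a unique span morphism into it. Concretely, the argument splits into three parts: (i) construct a terminal object among bisimulation \emph{relations} using stability under unions; (ii) show that the strong-epi/mono image of an arbitrary bisimulation span is again a bisimulation relation, hence factors through that terminal relation; (iii) deduce terminality in $\mathbf{Bisim}(X,Y)$.

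For~(i): by Proposition~\ref{prop:isom} the category $\mathbf{C}$ is a presheaf category, hence well-powered, so the bisimulation relations between $X$ and $Y$, being subobjects of $X\times Y$, form a set up to isomorphism. By Lemma~\ref{lem:bisimunion} their union (computed as in Proposition~\ref{prop:union}, via cotupling followed by strong-epi/mono factorisation) is again a bisimulation relation; call it $\sim_{X,Y}$. It contains every bisimulation relation as a subobject and, being a relation, is itself an object of $\mathbf{Bisim}(X,Y)$.

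For~(ii): given a bisimulation span $\langle l,r\rangle\colon S\to X\times Y$, factor it as a strong epimorphism $e\colon S\to\bar S$ followed by a mono $m\colon\bar S\to X\times Y$, which exists and is computed pointwise since $\mathbf{C}$ is a presheaf category. I claim $\bar S$, with legs $\pi_X\circ m$ and $\pi_Y\circ m$, is a bisimulation relation. This follows from the general fact that \emph{if $e$ is a strong epimorphism and $g\circ e$ is a fibration, then so is $g$}. Indeed, strong epimorphisms in a presheaf category are pointwise surjections, so every representable presheaf --- in particular each domain $\mathbf{y}_{\source(L)}$ of the generating maps $\mathbf{y}_{s_L}$ of Definition~\ref{def:fib} --- is projective with respect to them. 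Hence, given $g\colon T\to Z$ and a lifting problem
\[
  \mathbf{y}_{\source(L)}\xto{u}T,\qquad \mathbf{y}_L\xto{v}Z,\qquad g\circ u = v\circ\mathbf{y}_{s_L},
\]
projectivity of $\mathbf{y}_{\source(L)}$ yields $u'\colon\mathbf{y}_{\source(L)}\to S$ with $e\circ u' = u$; then $(g\circ e)\circ u' = v\circ\mathbf{y}_{s_L}$, so the fibration $g\circ e$ supplies a diagonal $k'\colon\mathbf{y}_L\to S$ with $k'\circ\mathbf{y}_{s_L} = u'$ and $(g\circ e)\circ k' = v$, and then $e\circ k'$ is a diagonal for the original square. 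Applying this to $g = \pi_X\circ m$ (so that $g\circ e = l$) and to $g = \pi_Y\circ m$ shows both legs of $\bar S$ are fibrations. Therefore $\bar S\subseteq\sim_{X,Y}$, and composing $e$ with the subobject inclusion $\bar S\to\sim_{X,Y}$ gives a span morphism $S\to\sim_{X,Y}$.

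For~(iii): any span morphism into $\sim_{X,Y}$ is a morphism over $X\times Y$, and since the pairing $\sim_{X,Y}\to X\times Y$ is monic there is at most one such; combined with~(ii), this shows $\sim_{X,Y}$ is terminal in $\mathbf{Bisim}(X,Y)$. The only non-formal step is~(ii), namely stability of (bi)simulations under image factorisation --- equivalently, the lemma that a fibration ``pushed down'' a strong epimorphism stays a fibration. Parts~(i) and~(iii) are routine manipulation with unions, monomorphisms, and universal properties; note that the identical one-leg argument also handles simulations, should that be needed elsewhere.
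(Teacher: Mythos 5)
Your proof is correct and follows essentially the same route as the paper's: form the union of all bisimulation relations (using well-poweredness of the presheaf category $𝐂$ and Lemma~\ref{lem:bisimunion}), then factor an arbitrary bisimulation span through its image and use monicity of ${\sim_{X,Y}}\hookrightarrow X\times Y$ for uniqueness. The only difference is that you explicitly justify the step that the image of a bisimulation is again a bisimulation --- via projectivity of representables with respect to strong epimorphisms, i.e.\ the lemma that a fibration precomposed with a strong epi descends to a fibration --- whereas the paper asserts this without proof; your justification is sound.
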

\begin{proof}
  As a presheaf category by Proposition~\ref{prop:isom}, $𝐂$ is
  well-powered, so we may consider the union $∼_{X,Y}$ of all
  bisimulation relations, which is again a bisimulation by
  Lemma~\ref{lem:bisimunion}.  Finally, $∼_{X,Y}$ is terminal, because
  any bisimulation $R$ factors through its image $im(R)$, which is
  again a bisimulation; as a bisimulation relation, $im(R)$ thus
  embeds into $∼_{X,Y}$, hence we obtain a morphism
  $R ↠ im(R) ↪ {∼_{X,Y}}$, which is unique by monicity of
  ${∼_{X,Y}} ↪ X×Y$.
\end{proof}

\section{Howe contexts for operational semantics}\label{s:howecontexts}
Operational semantics is a combination of syntax and transition
systems, in the sense that it is about transition systems whose states
form a model of a certain syntax.  Our framework for operational
semantics thus combines the frameworks of Fiore et
al.~\cite{fiore:presheaf} for syntax with variable binding, and
of~§\ref{s:trans} for transition systems.

In~§\ref{ss:transitionmonoidalgebras}, we introduce the ambient
setting for our framework, \emph{Howe contexts}, which are pre-Howe
contexts equipped with structure modelling substitution.  Furthermore,
for any Howe context and pointed strong endofunctor $Σ₀$ on
$\psh[ℂ₀]$, we introduce the category $Σ₀\Trans$ of \emph{transition
  $Σ₀$-monoids}, which are transition systems whose states form a
$Σ₀$-monoid.  We prove that the forgetful functor $Σ₀\Trans → 𝐂$ is
monadic.

In~§\ref{ss:sigs}, we then introduce \emph{dynamic signatures} over
$Σ₀$, which specify the dynamics of a transition system.  The
(dependent) pair $(Σ₀,Σ₁)$ then forms what we call an
\emph{operational semantics signature}.  We then define the models of
any such signature, called \emph{vertical} $Σ₁$-algebras.  They form a
category $Σ₁\algv$, and we prove that the forgetful functor
$Σ₁\algv → Σ₀\Trans$ is monadic.  We also prove that a suitably
constrained construction of the initial $Σ₁$-algebra is in fact
vertical, yielding an initial vertical $Σ₁$-algebra.  Finally, we
prove that, although both components are monadic, the composite
functor $Σ₁\algv → Σ₀\Trans → 𝐂$ is not.

\subsection{Transition monoid
  algebras} \label{ss:transitionmonoidalgebras} In this section, we
introduce Howe contexts, and introduce transition $Σ₀$-monoids, for
any suitable endofunctor $Σ₀$.

\begin{defi}
  A \emph{Howe context} consists of a pre-Howe context
  $\source,\but ∶ ℂ₁ → ℂ₀$, together with a monoidal structure on $\psh[ℂ₀]$,
  such that the tensor preserves all colimits on the left and filtered
  colimits on the right.
\end{defi}
\begin{nota}
  As for pre-Howe contexts, we let $𝐂 ≔ \psh[ℂ₁]/Δ$.
\end{nota}

Let us assume that some syntax has been specified by a finitary,
pointed strong endofunctor $Σ₀$ on $\psh[ℂ₀]$.  We then define 
\transitionmonoidsof{Σ₀} just as in~§\ref{s:overview}.
\begin{defi}
  The category $Σ₀\Trans$ of \emph{\transitionmonoidsof{Σ₀}} is the following pullback in $𝐂𝐀𝐓$. \hfil
    \Diag|baseline=(m-1-1.base)|{%
      \stdpbk %
    }{%
      Σ₀\Trans \&       Σ₀\mon \\
      𝐂 \& \psh[ℂ₀] %
    }{%
      (m-1-1) edge[labela={𝒟}] (m-1-2) %
      edge[labell={𝒰}] (m-2-1) %
      (m-2-1) edge[labelb={(-)₀}] (m-2-2) %
      (m-1-2) edge[shorten >=.5ex,labelr={𝒰₀}] (m-2-2) %
    }%

    When $Σ₀$ is the constantly empty endofunctor, we speak of
    \emph{\transitionmonoids}: they consist of objects $X$ equipped
    with monoid structure on $X₀$.
\end{defi}

We are now interested in computing initial $Σ₀$-monoids. For this, we
abstract over the concrete Proposition~\ref{prop:ptstr}, as follows,
replacing $[𝔽,𝐒𝐞𝐭]$ with any suitable category $𝒞₀$.
\begin{propC}[{\cite{fiore:presheaf,DBLP:conf/rta/FioreS17,BHL}}]\label{prop:ptstrgeneral}
  For any finitary, pointed strong endofunctor $Σ₀$ on a monoidal,
  cocomplete category $𝒞₀$ such that the tensor preserves all colimits
  on the left and filtered colimits on the right, the forgetful
  functor $𝒰₀∶ Σ₀\mon → 𝒞₀$ is monadic, and the free $Σ₀$-algebra over
  $I$ (equivalently the initial $(I+Σ₀)$-algebra) is an initial
  $Σ₀$-monoid.
\end{propC}
\begin{proof}
  This has been proved in Coq~\cite{BHL}.
\end{proof}
\begin{nota}\label{not:Z0}
  We denote the initial $Σ₀$-monoid by $𝐙_{Σ₀}$, or $𝐙₀$ for short.
\end{nota}

Using this, we obtain the following.
\begin{prop}
  \label{prop:adj-DM}
  The adjunction between $𝐂$ and $\psh[ℂ₀]$ (Proposition~\ref{prop:proj-adjoint}) lifts to an
  adjunction
  \begin{center}
    \adj{Σ₀\mon}{Σ₀\Trans}{ℳ}{𝒟}
  \end{center}
  with
  \begin{itemize}
  \item $𝒟(ℳ(X₀)) = X₀$ and 
  \item the left adjoint $ℳ$ maps any $Σ₀$-monoid $M$ to the discrete
    transition system on $M$, equipped with the original $Σ₀$-monoid
    structure on $M$. (In particular, we have $ℳ(X₀)₁ = ∅$.)
  \end{itemize}
\end{prop}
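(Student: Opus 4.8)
The plan is to exhibit the left adjoint $ℳ$ explicitly and then verify the hom‑set bijection $Σ₀\Trans(ℳ(M), Y) ≅ Σ₀\mon(M, 𝒟(Y))$, naturally in $M ∈ Σ₀\mon$ and $Y ∈ Σ₀\Trans$. Write $L \dashv (-)₀$ for the adjunction of Proposition~\ref{prop:proj-adjoint}, so that $L∶ \psh[ℂ₀] → 𝐂$ sends $X₀$ to the discrete transition system $∅ → X₀\source × X₀\but$. First I would record two elementary facts about it: that $(-)₀ ∘ L = \id_{\psh[ℂ₀]}$ holds strictly (immediate from the formula for $L$), and that transposing along $L \dashv (-)₀$ amounts to ``take the state component'' — a morphism $L(X₀) → A$ in $𝐂$ is automatically just its state component $X₀ → A₀$, since its transition component is forced to factor out of $∅$. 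In particular the unit of $L \dashv (-)₀$ is the identity.

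Next I would unfold the (strict) pullback defining $Σ₀\Trans$: an object is a pair $(A, M)$ with $A ∈ 𝐂$, $M ∈ Σ₀\mon$ and $A₀ = 𝒰₀(M)$ on the nose, a morphism $(A, M) → (A', M')$ is a pair $(f∶ A → A', g∶ M → M')$ with $f₀ = 𝒰₀(g)$, and $𝒟$ is the second projection. Because $(-)₀ ∘ L = \id$, setting $ℳ(M) := (L(𝒰₀ M),\, M)$ does give a bona fide object of this strict pullback; a $Σ₀$‑monoid morphism $g∶ M → M'$ yields the $Σ₀\Trans$‑morphism $(L(𝒰₀ g),\, g)$, and functoriality of $ℳ$ is then immediate. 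By construction $𝒟 ∘ ℳ = \id_{Σ₀\mon}$, which is the first bullet (stated there as $𝒟(ℳ(X₀)) = X₀$); and unwinding the formula for $L$, $ℳ(M)$ is precisely the discrete transition system on the carrier of $M$ carrying its original $Σ₀$‑monoid structure, with $ℳ(M)₁ = ∅$, which is the second bullet.

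For the adjunction itself, fix $Y = (A', M') ∈ Σ₀\Trans$. A morphism $ℳ(M) → Y$ is a pair $(f∶ L(𝒰₀ M) → A',\ g∶ M → M')$ with $f₀ = 𝒰₀(g)$. Since $f$ is, by the remark above, the same datum as $f₀∶ 𝒰₀ M → A'₀ = 𝒰₀ M'$, the constraint $f₀ = 𝒰₀(g)$ says exactly that $f$ is determined by $g$, and conversely every $g ∈ Σ₀\mon(M, M')$ determines such a pair. Hence $(f, g) ↦ g$ is a bijection $Σ₀\Trans(ℳ(M), Y) ≅ Σ₀\mon(M, M') = Σ₀\mon(M, 𝒟 Y)$; naturality in $M$ and $Y$ follows from naturality of transposition for $L \dashv (-)₀$ and functoriality of the two pullback projections. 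The associated unit at $M$ is the identity $ℳ(M) → ℳ(𝒟 ℳ(M)) = ℳ(M)$, and the counit at $(A', M')$ has state component the counit $L((-)₀ A') → A'$ of the base adjunction; I would close by checking the triangle identities through a routine diagram chase reducing to those of $L \dashv (-)₀$.

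I do not expect a genuine obstacle here. The only point requiring care is that $ℳ(M)$ be a literal object of the \emph{strict} pullback $Σ₀\Trans$, which is guaranteed exactly by $(-)₀ ∘ L = \id$; everything else is bookkeeping. A slicker alternative would be to isolate the general change‑of‑base lemma — if in a pullback square the functor being pulled back along has a left adjoint that is moreover a strict section of it, then the opposite projection inherits a left adjoint — but writing out the bijection directly, as above, keeps the argument self‑contained and is barely longer.
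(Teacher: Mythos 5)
Your proof is correct and is essentially the argument the paper gives: the paper packages it as a general lemma about strict pullbacks in $𝐂𝐀𝐓$ along a functor admitting a left adjoint with identity unit (setting $K(C) = (LU(C),C)$ and transposing along $L \dashv (-)₀$), and then applies that lemma, which is exactly the ``slicker alternative'' you mention at the end. Your direct verification of the hom-set bijection, resting on the same two key facts --- that $(-)₀ ∘ L = \id$ strictly so $ℳ(M)$ lands in the strict pullback, and that maps out of a discrete transition system are determined by their state components --- is just the unpacked instance of that lemma.
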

\begin{proof}
  This directly follows from the next lemma.
\end{proof}
\begin{lem}
  Let us consider any pullback \begin{center} \Diag{%
      \stdpbk }{%
      𝐀 \& 𝐂 \\
      𝐁 \& 𝐃 %
    }{%
      (m-1-1) edge[labela={S}] (m-1-2) %
      edge[labell={V}] (m-2-1) %
      (m-2-1) edge[labelb={R}] (m-2-2) %
      (m-1-2) edge[labelr={U}] (m-2-2) %
    }%
    \end{center}
    in $𝐂𝐀𝐓$ such that $U$ is monadic, say with left adjoint
    $J∶ 𝐃 → 𝐂$, and $R$ has a left adjoint $L$ with identity unit
    $ηᴿ_D = \id_D∶ D → RLD$.

    Then, $S$ admits a left adjoint $K$ with identity unit, such that
    the canonical natural transformation $LU → VK$ is an identity.
\end{lem}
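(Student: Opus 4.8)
The plan is to construct the left adjoint $K$ by hand, exploiting the facts that the square is a \emph{strict} pullback in $𝐂𝐀𝐓$ and that the unit $ηᴿ$ is an identity. Concretely, an object of $𝐀$ is a pair $(b,c)$ with $Rb = Uc$, a morphism $(b,c) → (b',c')$ is a pair $(β,γ)$ with $Rβ = Uγ$, and $V,S$ are the projections, so $RV = US$ holds strictly. The hypothesis $ηᴿ_D = \id_D$ says exactly that $RL = \id_𝐃$ as a functor, whereupon the triangle identities for $L \dashv R$ collapse to $εᴿ_{LD} = \id_{LD}$ and $R(εᴿ_b) = \id_{Rb}$; these are the only facts about the adjunction I shall need, and in particular monadicity of $U$ (and the functor $J$) are not used for the stated conclusion.

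I would then set $K(c) = (LUc, c)$ and $K(γ) = (LUγ, γ)$, which is well typed since $R(LUc) = RLUc = Uc$ and $R(LUγ) = Uγ$, and is plainly functorial. By construction $SK = \id_𝐂$ and, literally, $VK = LU$ (on objects and morphisms alike), so the unit $η ∶ \id_𝐂 → SK$ is taken to be the identity; it remains to supply a counit $ε ∶ KS → \id_𝐀$ satisfying the triangle identities. At $(b,c) ∈ 𝐀$ put $ε_{(b,c)} = (εᴿ_b, \id_c) ∶ (LUc, c) → (b,c)$. This is a morphism of $𝐀$ precisely because $R(εᴿ_b) = \id_{Rb} = \id_{Uc} = U(\id_c)$, using $Rb = Uc$ and the collapsed triangle identity $R εᴿ = \id$; naturality of $ε$ follows from naturality of $εᴿ$ together with the defining equation $Rβ = Uγ$ of morphisms in $𝐀$. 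The triangle identities are then immediate: $Sε$ is the identity (it is $\id_c$ in the second coordinate), and $ε_{K(c)} = (εᴿ_{LUc}, \id_c) = (\id_{LUc}, \id_c) = \id_{K(c)}$ by $εᴿ_{LD} = \id_{LD}$. Hence $K \dashv S$ with identity unit.

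For the last clause, I would note that $RVK = USK = U$, so $LU = LRVK$ and the canonical comparison $LU → VK$ is $εᴿ_{VK}$; since $VK(c) = LUc$, its component at $c$ is $εᴿ_{LUc} = \id_{LUc}$, so the whole transformation is the identity, as claimed.

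I expect no real obstacle here, only careful bookkeeping. The single point deserving attention is the verification that the proposed counit $ε$ actually lands in the pullback $𝐀$ — this is exactly where the identity-unit hypothesis is used, via $R εᴿ = \id_R$ — together with being precise about what ``the canonical natural transformation $LU → VK$'' denotes, namely the mate of the equality $RV = US$ along the adjunctions $L \dashv R$ and $K \dashv S$.
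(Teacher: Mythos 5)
Your proof is correct and takes essentially the same route as the paper's: the same definition $K(C) = (LUC,C)$, the same collapsed identities $RL = \id$, $R(ε^R) = \id$, $ε^R_{L} = \id$, and the same literal identification $VK = LU$; you merely establish the adjunction by exhibiting the counit $(ε^R_b,\id_c)$ and checking the triangle identities, where the paper verifies the universal property of $K(C)$ directly via transposition along $L \dashv R$. Your side remark that monadicity of $U$ and the functor $J$ are never needed is accurate — the paper's own argument does not use them either.
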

\begin{proof}
  First of all by the triangle identities
  \begin{center}
    \diag{%
      RB \& \& RLRB \\
      \& RB\rlap{,}
    }{%
      (m-1-1) edge[identity,labela={ηᴿ_{RB}}] (m-1-3) %
      edge[identity,labelbl={}] (m-2-2) %
      (m-1-3) edge[labelbr={R(εᴿ_B)}] (m-2-2) %
    }
    \hfil
      \diag{%
        LD \& \& LRLD \\
        \& LD
      }{%

        (m-1-1) edge[identity,labela={Lηᴿ_D}] (m-1-3) %
        edge[identity,labelbl={}] (m-2-2) %
        (m-1-3) edge[labelbr={εᴿ_{LD}}] (m-2-2) %
      }
  \end{center}
  we have
  \begin{equation}
    R(εᴿ_B) = \id_{RB} \qquad \mbox{and} \qquad
    εᴿ_{LD} = \id_{LD}
    \label{eq:triangles}
  \end{equation}
  for all $C$ and $D$.

  Similarly, we have
  \begin{equation}
    RL = \id_{𝐃}\label{eq:RLid}
  \end{equation}
  not only on objects but also on
  morphisms, since by naturality of $η$ we have for any
  $f∶ D → D'$:

  \[RLf = RLf ∘ η_D = η_{D'} ∘ f = f.\]

  Let us furthermore assume w.l.o.g.\ that the pullback is constructed
  in the standard way, using compatible pairs.

  We then define $K(C) = (LU(C),C)$, which is legitimate since
  $RLU(C) = U(C)$ by hypothesis.  To prove the universal property,
  assume given $(B',C')$ such that $R(B') = U(C')$, and a morphism
  $f∶ C → S(B',C') = C'$ in $𝐂$.  Then, letting $\widetilde{U(f)}∶ LU(C) → B'$
  denote the transpose of $U(C) \xto{U(f)} U(C') = R(B')$, we have
  by~\eqref{eq:triangles} and~\eqref{eq:RLid}
  \[R(\widetilde{U(f)}) = R(εᴿ_{B'} ∘ LUf) = RLUf = Uf\rlap{,}\]
    so $(\widetilde{U(f)},f)∶ (LUC,C) → (B',C')$ in the pullback $𝐀$.
    Furthermore, the desired triangle
    \begin{center}
      \diag{%
        C \& S(LUC,C) \\
        \& S(B',C')\rlap{ = C'}
      }{%
        (m-1-1) edge[identity,labela={}] (m-1-2) %
        edge[labelbl={f}] (m-2-2) %
        (m-1-2) edge[labelr={S(\widetilde{U(f)},f) = f}] (m-2-2) %
      }
    \end{center}
    commutes as desired, trivially. Finally, any
    $(g,h)∶ (LUC,C) → (B',C')$ making it commute must satisfy $h = f$
    and $Rg = Uh = Uf$.  But $Rg∶ UC = RLUC → RB'$ is the transpose of
    $g$, so $g$ must conversely be the transpose of $Rg = Uf$, and
    hence $(g,h) = (\widetilde{U(f)},f)$, proving the desired
    uniqueness property.

    It remains to prove that the canonical natural transformation
    \[LUC = LUSKC = LRVKC \xto{εᴿ_{VKC}} VKC\]
    is an identity. But by construction $VKC = V(LUC,C) = LUC$, and
    $εᴿ_{LUC} = \id_{LUC}$ by~\eqref{eq:triangles}, hence the result.
\end{proof}

\begin{rem}
  The names, $ℳ$ and $𝒟$, stand for ``monter'' and ``descendre'', ``go
  up'' and ``go down'' in French.
\end{rem}

\begin{prop}
  \label{prop:SigmaMon-monadic}
  The forgetful functor $𝒰∶ Σ₀\Trans → 𝐂$ is finitary and monadic.
\end{prop}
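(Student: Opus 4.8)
The plan is to realise $Σ₀\Trans$ as the category of algebras for a finitary monad on $𝐂$, obtained by lifting the $Σ₀$-monoid monad $T₀ ≔ 𝒰₀∘ℒ₀$ on $\psh[ℂ₀]$ — whose existence and finitariness are granted by Propositions~\ref{prop:ptstrgeneral} and~\ref{prop:freelambda} — along the forgetful functor $(-)₀∶ 𝐂 → \psh[ℂ₀]$. Two structural facts make this work: the functor $(-)₀$ has a left adjoint $ℳ₀$, the discrete transition system functor, which satisfies $(-)₀∘ℳ₀ = \id$ with identity unit and, crucially, adds no transitions (Proposition~\ref{prop:proj-adjoint}); and $(-)₀$ preserves all colimits (Proposition~\ref{prop:forgetfulpreserves}).

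First I would exhibit an explicit left adjoint $F∶ 𝐂 → Σ₀\Trans$ to $𝒰$. Write $η∶ \id ⇒ T₀$ for the unit of the adjunction $ℒ₀ ⊣ 𝒰₀$. For $X ∈ 𝐂$ with structure map $∂_X∶ X₁ → Δ(X₀)$, let $F(X)$ be the \transitionmonoidof{Σ₀} whose transition presheaf is $X₁$, whose state $Σ₀$-monoid is the free one $ℒ₀(X₀)$, and whose structure map is the composite $X₁ \xto{∂_X} Δ(X₀) \xto{Δ(η_{X₀})} Δ(T₀ X₀)$. A morphism $F(X) → Y$ in $Σ₀\Trans$ is a pair $(g₁∶ X₁ → Y₁,\, g₀∶ ℒ₀(X₀) → Y₀)$ with $g₀$ a $Σ₀$-monoid morphism and $Δ(g₀) ∘ Δ(η_{X₀}) ∘ ∂_X = ∂_Y ∘ g₁$. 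By freeness of $ℒ₀(X₀)$, the datum $g₀$ is equivalently a presheaf morphism $f₀ ≔ g₀ ∘ η_{X₀}∶ X₀ → Y₀$, and then $Δ(g₀)∘Δ(η_{X₀}) = Δ(g₀∘η_{X₀}) = Δ(f₀)$, so the displayed equation reduces to $Δ(f₀)∘∂_X = ∂_Y∘g₁$, which is precisely the condition for $(f₀,g₁)$ to be a morphism $X → 𝒰(Y)$ in $𝐂$. This correspondence is natural, giving $F ⊣ 𝒰$.

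Now put $T ≔ 𝒰F$. By construction $(TX)₁ = X₁$ and $(TX)₀ = T₀(X₀)$, with the unit and multiplication of $T$ acting as the identity on transitions and as those of $T₀$ on states. Consequently a $T$-algebra structure on $Z ∈ 𝐂$ amounts exactly to a $T₀$-algebra structure on its state presheaf $Z₀$: the component on transitions is forced to be the identity by the unit law, and the required compatibility of the structure map $TZ → Z$ with $∂_Z$ then holds automatically (because the $T₀$-algebra structure $\alpha$ satisfies $\alpha∘η_{Z₀} = \id$). Since $𝒰₀$ is monadic, a $T₀$-algebra structure on $Z₀$ is the same as a $Σ₀$-monoid structure, so the comparison functor $Σ₀\Trans → 𝐂^T$ is an isomorphism over $𝐂$, and $𝒰$ is monadic. (Equivalently, one may check Beck's conditions directly: $𝒰$ has the left adjoint $F$, and it creates coequalisers of $𝒰$-split pairs, since these are sent by $(-)₀$ to split coequalisers in $\psh[ℂ₀]$, which the monadic $𝒰₀$ creates, and to colimits in the cocomplete category $𝐂$, from which the lift is assembled componentwise.)

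It remains to note that $𝒰$ is finitary. The monad $T₀$ is finitary: by Proposition~\ref{prop:freelambda} it sends $K$ to $μA.(I + Σ₀(A) + K⊗A)$, which depends finitarily on $K$ since $Σ₀$ is finitary and the tensor preserves filtered colimits in each variable (part of the Howe-context hypotheses). Colimits in the presheaf category $𝐂$ are computed pointwise and are preserved by $(-)₀$, so $T$ — which is $X₁$ on transitions and $T₀(X₀)$ on states — preserves filtered colimits, and since $𝐂$ is cocomplete the forgetful functor $𝐂^T → 𝐂$, hence $𝒰$, is finitary. I expect the only genuinely delicate step to be the verification of $F ⊣ 𝒰$ and of the identification $Σ₀\Trans ≅ 𝐂^T$: both hinge on $ℳ₀$ adding no transitions and on the triangle identity for $ℒ₀ ⊣ 𝒰₀$, and they cannot be obtained by formally composing the adjoints already at hand — the naive composite $ℳ∘ℒ₀∘(-)₀$ would discard the transitions of $X$, whereas the free $Σ₀\Trans$-object on $X$ must retain them, re-indexed along $η_{X₀}$.
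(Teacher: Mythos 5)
Your proof is correct, and it takes a genuinely different route from the paper's. The paper dispatches the proposition in one line by observing that \transitionmonoidsof{Σ₀} are the algebras of an equational system over $𝐂$ in the sense of Fiore and Hur and invoking their general theorem~\cite[Theorem~6.1]{FioreHurEquational}, so that the existence of the left adjoint, the creation of the relevant coequalisers, and finitarity are all delegated to that black box. You instead construct the left adjoint $F$ by hand --- transitions kept fixed, states freely completed to a $Σ₀$-monoid, source and target reindexed along $Δ(η_{X₀})$ --- and then identify the Eilenberg--Moore category of $T=𝒰F$ with $Σ₀\Trans$ by reducing a $T$-algebra structure to a $T₀$-algebra structure on states (the unit law forces the transition component to be the identity, and the compatibility with $∂_Z$ is then automatic from $α₀∘η_{Z₀}=\id$). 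Each step checks out: the transposition argument for $F⊣𝒰$ is exactly the triangle identity for $ℒ₀⊣𝒰₀$, and your warning that $F$ is \emph{not} the composite $ℳ∘ℒ₀∘(-)₀$ is well placed, since the free object must retain the generating transitions. What your route buys is an explicit description of the free \transitionmonoidof{Σ₀} $ℒ(X)$ --- which the paper in fact relies on later without comment, e.g.\ in Example~\ref{ex:cbn:familial}, where $𝒰(ℒ(𝐲₁+𝐲₀))$ is described as the syntax over two constants with the generating transition retained --- together with a transparent reason for finitarity. What it costs is the routine verification of naturality and of the monad laws, which the equational-systems theorem packages once and for all and which also covers, by the same token, the analogous monadicity claim for $Σ₁\algv$ in Theorem~\ref{thm:Z}.
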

\begin{proof}
  This follows from the fact that \transitionmonoidsof{Σ₀} are the
  algebras of an equational system over $𝐂$ in the sense of Fiore and
  Hur, to which~\cite[Theorem~6.1]{FioreHurEquational} applies.
\end{proof}
\begin{nota}\label{not:L}
  We denote by $ℒ$ the left adjoint to $𝒰$.
\end{nota}

\subsection{Operational semantics signatures}\label{ss:sigs}

Similarly, we define abstract dynamic signatures, which abstract over
those of Definition~\ref{def:dynsig}:
\begin{defi}
  Given a Howe context $\source,\but ∶ ℂ₁ → ℂ₀$ and a pointed strong
  $Σ₀∶ \psh[ℂ₀] → \psh[ℂ₀]$, a \emph{dynamic signature}
  $Σ₁ = (Σ₁^F,Σ₁^∂)$ over $Σ₀$ consists of a finitary functor
  $Σ₁^F∶ Σ₀\Trans → \psh[ℂ₁]$, together with a natural transformation
  $Σ₁^∂$ with components $Σ₁^F(X) → Σ₀(X₀)\source × X₀\but$.
\end{defi}

Let us pack up the static and dynamic notions of signature.
\begin{defi}
  An \emph{operational semantics signature} $(Σ₀,Σ₁)$ on a given Howe context
  $\source,\but ∶ ℂ₁ → ℂ₀$ consists of a pointed strong endofunctor $Σ₀$
  preserving sifted colimits, together with a dynamic signature $Σ₁$ over
  it.
\end{defi}
\begin{rem}\label{rk:sifted}
  Preservation of sifted colimits~\cite{Sifted} is stronger than
  finitarity for $Σ₀$.  We need it \shortfull{in the proof for $Σ₀$ to
    behave nicely w.r.t.\ relational transitive closure (when proving
    Lemma~\ref{lem:howtranssym} below)}{for
    Lemma~\ref{lem:wowomonoid} below}.  In a presheaf category like
  $\psh[ℂ₀]$, if $Σ₀$ preserves pullbacks (for example, by familiality), it
  is equivalent to being finitary and preserving all epis, as seen
  from the proof of \cite[Theorem 18.1]{algebraictheories}.
  In~\cite{BHL}, we mistakenly only require $Σ₀$ to be finitary, which
  yields a gap in the proof of~\cite[Lemma~5.13]{BHL}.
\end{rem}
\begin{exa}
  The endofunctor $Σ₀(X)(n) = X(n+1) + X(n)²$ on $[𝔽,𝐒𝐞𝐭]$ preserves
  sifted colimits.  This easily follows from the fact that sifted colimits
  are the ones commuting with products in sets.
\end{exa}

Let us now introduce the models of a dynamic signature.  We start by
fixing, for the rest of this section, an operational semantics
signature $(Σ₀,Σ₁)$ on a Howe context $\source,\but∶ ℂ₁ → ℂ₀$.
\begin{defi}
  Let
  $\check{Σ}₁∶ Σ₀\Trans → Σ₀\Trans$ map any \transitionmonoidof{Σ₀} $X$ to
  the composite 
  \[Σ₁^F(X) \xto{(Σ₁^∂)_X} Σ₀(X₀)\source × X₀\but \xto{ν_{X₀}\source × X₀\but} X₀\source × X₀\but\rlap{,}\]
  where $ν_{X₀}$ denotes the $Σ₀$-algebra structure of $X₀$.
\end{defi}
\begin{prop}
The endofunctor  $\check{Σ}₁$ is finitary and makes
  the following triangle commute.
  \hfil\begin{tikzcd}[baseline=(\tikzcdmatrixname-1-1.base)]
      Σ₀\Trans \arrow[rr,"\check{Σ}₁"] \arrow[dr,"𝒟"'] & &  Σ₀\Trans \arrow[dl,"𝒟"] \\
      & Σ₀\mon 
    \end{tikzcd}
\end{prop}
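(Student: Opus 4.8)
The plan is to dispose of the commuting triangle, where there is essentially nothing to check, and then to prove finitarity by factoring $\check{Σ}₁$ through the two projections out of $𝐂$ and reducing to the finitarity of its transition part $Σ₁^F$ and of the state part $X ↦ X₀$. Commutation of the triangle is immediate from the definition of $\check{Σ}₁$: the \transitionmonoidof{Σ₀} $\check{Σ}₁(X)$ has $X₀$, equipped with its \emph{given} $Σ₀$-monoid structure, as state object, and on a morphism $f = (f₀,f₁)$ it acts by $\check{Σ}₁(f) = (f₀,\, Σ₁^F(f))$; hence $𝒟 ∘ \check{Σ}₁ = 𝒟$ strictly, both on objects and on morphisms.

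For finitarity, I would first record that $Σ₀\Trans$ has filtered colimits and that the forgetful functor $𝒰 ∶ Σ₀\Trans → 𝐂$ creates them, since by Proposition~\ref{prop:SigmaMon-monadic} it is finitary and monadic; being monadic, $𝒰$ is moreover conservative, so it preserves filtered colimits and reflects isomorphisms. It therefore suffices to show that $𝒰 ∘ \check{Σ}₁ ∶ Σ₀\Trans → 𝐂$ preserves filtered colimits. Now $𝐂$ is by definition the comma category $\id_{\psh[ℂ₁]}/Δ$, and $\id_{\psh[ℂ₁]}$ preserves all colimits, so Proposition~\ref{prop:projcreates} applies: the projection $𝐂 → \psh[ℂ₁] × \psh[ℂ₀]$ creates filtered colimits, and consequently a functor into $𝐂$ preserves filtered colimits as soon as its composites with the two projections $pr₁$ and $pr₀$ do.

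It remains to check this for $𝒰 ∘ \check{Σ}₁$. Its transition component $pr₁ ∘ 𝒰 ∘ \check{Σ}₁$ is exactly $Σ₁^F$, which is finitary by definition of a dynamic signature. Its state component $pr₀ ∘ 𝒰 ∘ \check{Σ}₁$ is the functor $X ↦ X₀$, which factors as $pr₀ ∘ 𝒰$; here $pr₀ ∶ 𝐂 → \psh[ℂ₀]$ is (equivalent to) a restriction functor between presheaf categories, hence preserves \emph{all} colimits by Proposition~\ref{prop:forgetfulpreserves}, while $𝒰$ preserves filtered colimits, so the composite is finitary. This finishes the argument. The only point requiring a little care is the bookkeeping of \emph{where} each filtered colimit is computed; everything rests on Proposition~\ref{prop:projcreates} together with the fact that $𝒰$ creates filtered colimits. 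In particular the source/target map built into $\check{Σ}₁$ (and hence the naturality of $Σ₁^∂$) plays no role in the colimit computation, since projecting down to $\psh[ℂ₁] × \psh[ℂ₀]$ discards it, and its compatibility with the colimiting cocone is automatic because $\check{Σ}₁$ is a functor and the induced comparison map in $𝐂$ has already been shown to be an isomorphism on both underlying presheaf components.
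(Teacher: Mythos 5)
Your proposal is correct and takes essentially the same route as the paper, whose proof is the one-liner "commutativity of the triangle holds by construction, and finitarity follows from finitarity of $Σ₁^F$." You have simply filled in the bookkeeping behind that reduction (creation of filtered colimits by the monadic forgetful functor $𝒰$ and by the comma-category projection), which is exactly the intended justification.
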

\begin{proof}
  Commutativity of the triangle holds by construction, and finitarity
  follows from finitarity of $Σ₁^F$.
\end{proof}
\begin{defi}
  A $\check{Σ}₁$-algebra structure $\check{Σ}₁(X) → X$ on an object
  $X ∈ Σ₀\Trans$ is \emph{vertical} when its image under the forgetful
  functor $Σ₀\Trans → Σ₀\mon$ is the identity. Let $Σ₁\algv$ denote
  the full subcategory of $\check{Σ}₁\alg$ spanned by all vertical
  algebras.
\end{defi}

\begin{thm}\label{thm:Z}
  The forgetful functor $Σ₁\algv → Σ₀\Trans$ is monadic, and
  furthermore the initial $\check{Σ}₁$-algebra $𝐙_{\check{Σ}₁}$, or
  $𝐙$ for short, may be chosen to be vertical, hence is also initial
  in $Σ₁\algv$.
\end{thm}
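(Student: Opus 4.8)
The plan is to prove the two assertions separately, relying heavily on the monadicity results already in hand. For the first claim — that the forgetful functor $Σ₁\algv → Σ₀\Trans$ is monadic — I would first observe that $\check{Σ}₁$ is a finitary endofunctor on $Σ₀\Trans$ (just shown), and that $Σ₀\Trans$ is cocomplete, being monadic over the presheaf category $𝐂$ by Proposition~\ref{prop:SigmaMon-monadic}. Hence the category $\check{Σ}₁\alg$ of all $\check{Σ}₁$-algebras is monadic over $Σ₀\Trans$, with free functor computed as the usual initial-chain colimit. The subtlety is that $Σ₁\algv$ is only a full subcategory, cut out by the verticality condition $𝒟(ν_X)=\id_{X₀}$. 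Here I would use the fact that verticality is preserved by the free-algebra construction: starting from an object $X$ of $Σ₀\Trans$, the free $(I_{Σ₀\Trans}+\check{Σ}₁)$-chain has each stage lying over a fixed $Σ₀$-monoid (since $𝒟∘\check{Σ}₁ = 𝒟$ and $𝒟$ preserves the relevant colimits, being a left adjoint via Proposition~\ref{prop:adj-DM} — or more directly because $𝒟$ factors through $𝒰$ which is monadic hence preserves and reflects enough structure). Concretely the free vertical algebra on $X$ should be $𝒟$-vertically-free, i.e. its underlying $Σ₀$-monoid is $X₀$ itself. One then checks that $Σ₁\algv$ is closed in $\check{Σ}₁\alg$ under the relevant coequalisers (of reflexive pairs between free algebras), which is immediate since all objects in sight lie over the same $Σ₀$-monoid, and invokes Beck's monadicity theorem (or the crude monadicity theorem, since we have a left adjoint and $Σ₀\Trans$ has the needed colimits).

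For the second claim, I would proceed as follows. The initial $\check{Σ}₁$-algebra is the initial $(I_{Σ₀\Trans}+\check{Σ}₁)$-algebra, constructed as the colimit of the initial chain $∅ → \check{Σ}₁(∅)+ \text{(unit)} → \dots$, which exists and is finitary since $\check{Σ}₁$ is finitary and $Σ₀\Trans$ is cocomplete. The point is to show this construction can be arranged to be vertical. Since $𝒟∘\check{Σ}₁ = 𝒟$ and $𝒟$ preserves the filtered colimit building the initial chain, the underlying $Σ₀$-monoid of $𝐙_{\check{Σ}₁}$ is the colimit of a \emph{constant} diagram after the first step — hence equals $𝒟$ of the initial object of $Σ₀\Trans$, which is $𝒟(ℒ(∅))$, i.e. exactly $𝐙₀$ by Proposition~\ref{prop:ptstrgeneral} and the construction of $ℒ$ in Notation~\ref{not:L}. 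Each connecting map in the chain is then seen to induce the identity on this fixed $Σ₀$-monoid, and taking the colimit of these identities gives that the algebra structure map $\check{Σ}₁(𝐙) → 𝐙$ is vertical. Finally, an initial object of $\check{Σ}₁\alg$ which happens to lie in the full subcategory $Σ₁\algv$ is a fortiori initial there, giving the last assertion.

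The main obstacle I anticipate is the bookkeeping in showing that the initial-chain (equivalently, free-algebra) construction genuinely stays \emph{inside} $Σ₀\Trans$ over a \emph{fixed} $Σ₀$-monoid, rather than merely over \emph{some} one at each stage. This requires knowing that $𝒟∶ Σ₀\Trans → Σ₀\mon$ preserves the colimits used in the construction — filtered colimits for the initial chain, and reflexive coequalisers of free algebras for the monadicity argument. That preservation follows because $𝒟$ is a left adjoint (Proposition~\ref{prop:adj-DM}), but one must be a little careful that the \emph{same} adjunction data makes $Σ₁^F$ and the verticality condition interact correctly — in particular that $Σ₁^F$ being finitary is exactly what is needed so that $\check{Σ}₁$ commutes with the initial chain. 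Once that is pinned down, everything else is a routine application of the finitary-monad machinery.
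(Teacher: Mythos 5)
Your proof of the second claim is essentially the paper's own: the initial chain for $\check{Σ}₁$ starts at the initial object $ℳ(𝐙₀)$ of $Σ₀\Trans$, each stage lies over $𝐙₀$ because $𝒟∘\check{Σ}₁=𝒟$, the connecting maps project to identities by initiality of $𝐙₀$ in $Σ₀\mon$, the colimit is preserved, and an initial $\check{Σ}₁$-algebra that happens to be vertical is a fortiori initial in the full subcategory $Σ₁\algv$. No complaints there.

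The monadicity claim is where there is a genuine gap. You assert that the free $\check{Σ}₁$-algebra on an arbitrary $X ∈ Σ₀\Trans$ is vertical because ``the free chain has each stage lying over a fixed $Σ₀$-monoid''. It does not: the stages of that chain are built from coproducts of the form $X + \check{Σ}₁(Y)$ taken in $Σ₀\Trans$, and $𝒟$ sends these to coproducts $𝒟(X)+𝒟(Y)$ in $Σ₀\mon$. (Incidentally, $𝒟$ is the \emph{right} adjoint in Proposition~\ref{prop:adj-DM}, not the left adjoint; it nevertheless preserves colimits, because it has a further right adjoint sending $M$ to the transition system with transition presheaf $Δ(𝒰₀(M))$ and identity structure map --- and this preservation is precisely what sinks the argument.) Since the coproduct of a $Σ₀$-monoid with itself in $Σ₀\mon$ is not that monoid again, the chain drifts over $X₀$, then $X₀+X₀$, then $X₀+X₀+X₀$, and so on, so the resulting free $\check{Σ}₁$-algebra is not vertical. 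The fixed-base phenomenon you rely on occurs only over the \emph{initial} $Σ₀$-monoid, where $𝐙₀+𝐙₀ ≅ 𝐙₀$ --- which is exactly why the second claim works but the first does not follow the same way. As a consequence, the free--forgetful adjunction for $\check{Σ}₁\alg$ does not restrict to $Σ₁\algv$, and the subsequent Beck-style step (``all objects in sight lie over the same $Σ₀$-monoid'') has nothing to stand on. A repair along your lines would construct the left adjoint fibrewise: the free vertical algebra on $X$ is the free algebra, inside the fibre of $Σ₀\Trans$ over $X₀$ (equivalently in $\psh[ℂ₁]/Δ(𝒰₀(X₀))$), for the restriction of $\check{Σ}₁$ to that fibre. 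The paper sidesteps all of this by presenting vertical algebras as an equational system and invoking~\cite[Theorem~6.1]{FioreHurEquational}.
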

\begin{proof}
  For the first statement, vertical algebras may be specified as an
  equational system, in the sense of~\cite{FioreHurEquational}, so the
  result follows by~\cite[Theorem~6.1]{FioreHurEquational}.  For the
  second statement, $𝐙$ is the colimit of the initial chain
  \[𝐙₀ \xto{!} \check{Σ}₁(𝐙₀) \xto{\check{Σ}₁(!)} …
  \xto{\check{Σ}₁^{n-1}(!)} \check{Σ}₁ⁿ(𝐙₀) → … \] (where $𝐙₀$ is
  shorthand for $ℳ(𝐙₀)$, for readability, which is initial in
  $Σ₀\Trans$, by Proposition~\ref{prop:adj-DM}).  The image of this
  chain in $Σ₀\mon$ is the everywhere-identity chain on $𝐙₀$.
\end{proof}
By construction, we have:
\begin{prop}
  We have $𝒟(𝐙) = 𝐙₀$.
\end{prop}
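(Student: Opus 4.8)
The plan is to read the statement off directly from the explicit construction of $𝐙$ carried out in the proof of Theorem~\ref{thm:Z}. Recall that there $𝐙$ is taken to be the colimit in $Σ₀\Trans$ of the initial chain $ℳ(𝐙₀) \to \check{Σ}₁(ℳ(𝐙₀)) \to \check{Σ}₁(\check{Σ}₁(ℳ(𝐙₀))) \to \cdots$. Since $𝒟 ∘ ℳ$ is the identity on objects (Proposition~\ref{prop:adj-DM}) and $𝒟 ∘ \check{Σ}₁ = 𝒟$ by construction, the image of this chain under $𝒟∶ Σ₀\Trans → Σ₀\mon$ is the constant chain on $𝐙₀$, with every connecting map the identity --- this is precisely the observation already recorded in the proof of Theorem~\ref{thm:Z}. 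Hence, provided $𝒟$ preserves the colimit of this chain, we obtain $𝒟(𝐙) = \colim_n 𝐙₀ = 𝐙₀$.

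So the only real point is that $𝒟$ preserves this filtered colimit, and this is the step I would treat with care, since $𝒟$ is a priori only a right adjoint (Proposition~\ref{prop:adj-DM}) and hence not obviously cocontinuous. I would route through the defining pullback square of $Σ₀\Trans$, which gives $𝒰₀ ∘ 𝒟 = (-)₀ ∘ 𝒰$. Here $𝒰∶ Σ₀\Trans → 𝐂$ is finitary (Proposition~\ref{prop:SigmaMon-monadic}) and $(-)₀∶ 𝐂 → \psh[ℂ₀]$ preserves all colimits (Proposition~\ref{prop:forgetfulpreserves}), so $𝒰₀ ∘ 𝒟$ preserves filtered colimits. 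Moreover $𝒰₀∶ Σ₀\mon → \psh[ℂ₀]$ is monadic (Proposition~\ref{prop:ptstrgeneral}) for a finitary monad --- its free-algebra functor $K ↦ μA.(I + Σ₀(A) + K⊗A)$ of Proposition~\ref{prop:freelambda} is finitary because $Σ₀$ is finitary and the tensor preserves filtered colimits on the right --- so $𝒰₀$ creates filtered colimits. Combining these two facts, $𝒟$ preserves the colimit of the initial chain, which finishes the argument.

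If one prefers to stay closer to the ``by construction'' phrasing and avoid invoking preservation of colimits, an equivalent route is: the colimit cocone in $Σ₀\Trans$ maps under $𝒟$ to a cocone $𝐙₀ → 𝒟(𝐙)$ in $Σ₀\mon$; the computation just sketched (via finitarity of $𝒰$ and cocontinuity of $(-)₀$) shows that its stage-$0$ component has identity underlying presheaf morphism, and since the monadic functor $𝒰₀$ reflects isomorphisms, this component is an isomorphism $𝐙₀ ≅ 𝒟(𝐙)$, which the bookkeeping of the construction turns into an equality. Either way, the substantive content lies entirely in the interaction between the pullback square, the finitarity of $𝒰$, and the cocontinuity of the forgetful functor $(-)₀$; everything else is routine bookkeeping.
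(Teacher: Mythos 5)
Your proof is correct and follows the same route as the paper, which simply reads the identity $𝒟(𝐙) = 𝐙₀$ off the initial-chain construction of Theorem~\ref{thm:Z}, where the image of the chain in $Σ₀\mon$ is the everywhere-identity chain on $𝐙₀$. The only difference is that you explicitly justify why $𝒟$ preserves the colimit of that chain (via the identity $𝒰₀ ∘ 𝒟 = (-)₀ ∘ 𝒰$, finitarity of $𝒰$, cocontinuity of $(-)₀$, and creation of filtered colimits by the finitary monadic $𝒰₀$), a step the paper leaves implicit under ``by construction''.
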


Let us readily annihilate any hope that vertical $Σ₁$-algebras are
monadic over $𝐂$.
\begin{lem}
  Consider the operational semantics signature of $λ$-calculus with
  $Σ₀$ from~\eqref{eq:lambdasyntax} and $Σ₁$ from
  Example~\ref{ex:Sigmaun}.  Then, the composite forgetful functor
  $Σ₁\algv → Σ₀\Trans → 𝐂$ is not monadic.
    \end{lem}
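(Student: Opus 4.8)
The plan is to use Beck's monadicity theorem in the form: a functor is monadic iff it has a left adjoint and creates coequalizers of the parallel pairs it sends to split pairs. Write $W$ for the composite $Σ_1\algv → Σ_0\Trans → 𝐂$. It has a left adjoint, obtained by composing $ℒ$ (Notation~\ref{not:L}) with the free vertical $\check{Σ}_1$-algebra functor of Theorem~\ref{thm:Z}; and since $Σ_1\algv$ is cocomplete, the only possible obstruction to monadicity is that $W$ fails to preserve the coequalizer of some $W$-split pair. (It is routine that $W$ reflects isomorphisms — the inverse in $𝐂$ of an invertible $Σ_1\algv$-morphism is again a morphism of $Σ_0$-monoids on states and commutes with the $\check{Σ}_1$-structures — so the obstruction is genuinely about coequalizers.) Thus I would exhibit a $W$-split pair $f,g\colon A → B$ in $Σ_1\algv$ whose coequalizer there is not preserved by $W$.

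Such a pair must exist because $\check{Σ}_1$ lives over $Σ_0\Trans$ and depends essentially on the \emph{monoid} structure of the state object: in Example~\ref{ex:Sigmaun}, $A_β(X)$ is the subobject of $X_1 × X_0(0) × X_1$ cut out by $r_2·s_⇓ = (r_1·t_⇓)[e_2]$, whose right-hand side uses substitution. Coequalizers in $𝐂$ are pointwise, but coequalizers in $Σ_1\algv$ must also re-close the transition part under the rules relative to the resulting monoid structure on states. So I would take a vertical $\check{Σ}_1$-algebra $B$ with transitions $r_1,r_2$ and a state $e_2$ such that $(r_1·t_⇓)[e_2]$ and $r_2·s_⇓$ are \emph{distinct} in $B_0$ — so $(r_1,e_2,r_2)$ is not a premise in $B$, and $B_1$ has no corresponding conclusion — together with a $W$-split pair $f,g\colon A → B$ in $Σ_1\algv$ whose coequalizer in $𝐂$ identifies exactly these two states. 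Being $W$-split forces the induced state congruence in $\psh[ℂ_0]$ to be absolute, hence to coincide with the coequalizer in $Σ_0\mon$, so no extra identifications on states are forced; the point is that the $W$-splitting is only a morphism of $𝐂$, not of $Σ_0\Trans$, which is exactly why the forgetful functor $Σ_1\algv → Σ_0\Trans$ fails to create the coequalizer although the monadic $Σ_0\Trans → 𝐂$ of Proposition~\ref{prop:SigmaMon-monadic} creates it. In the $𝐂$-coequalizer $Q$ the triple $(r_1,e_2,r_2)$ \emph{is} a valid premise, so $Q$ is not closed under the second rule and is not the underlying object of any vertical $\check{Σ}_1$-algebra; the coequalizer of $f,g$ in $Σ_1\algv$ is therefore $Q$ with the missing conclusion transitions freely adjoined (and this iterated), hence strictly larger. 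So $W$ does not preserve this coequalizer, and is not monadic.

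The main obstacle is assembling this concretely: one must produce an explicit $B$ with the required pair of not-yet-composable transitions, and a parallel pair $f,g$ that genuinely lands in $Σ_1\algv$ — so $A$ must carry compatible $Σ_0$-monoid and vertical $\check{Σ}_1$-algebra structure, most cleanly by building it from a free vertical $\check{Σ}_1$-algebra on a suitable object of $𝐂$ — and whose image under $W$ really is a split coequalizer in $𝐂$. Granting the pair, checking that a new premise of the second rule appears in the pointwise coequalizer, and hence that the $Σ_1\algv$-coequalizer picks up extra transitions, is a direct computation.
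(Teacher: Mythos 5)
Your overall strategy is exactly the paper's: apply Beck's monadicity theorem by exhibiting a parallel pair in $Σ₁\algv$ whose image in $𝐂$ has a split coequaliser that the composite forgetful functor fails to preserve, with the failure driven by the fact that quotienting states can create new valid premises for the $β$-rule (whose side condition $r₂·s_⇓ = (r₁·t_⇓)[e₂]$ uses substitution) while the pointwise coequaliser in $𝐂$ does not re-close the transition part. You have also correctly isolated the reason the splitting can exist at all: the section lives only in $𝐂$, not in $Σ₀\Trans$, so it need not respect substitution. All of this matches the paper, which models its counterexample on the classical non-monadic composite $𝐂𝐚𝐭 → 𝐑𝐆𝐩𝐡 → 𝐒𝐞𝐭$.

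However, there is a genuine gap, and you name it yourself: you never produce the pair. The existence of a $W$-split pair with the required properties is asserted (``such a pair must exist because\dots''), but this is precisely the nontrivial content of the lemma. The paper's construction is delicate in two respects that your sketch does not address. First, the choice of data: it takes $Y$ to be the $λ$-calculus extended with constants $a$, $b$, unary operations $k$, $l$, and an axiom $b ⇓ l⦇x⦈$; $X = Y$ plus a fresh constant $c$; and $u,v∶ X → Y$ sending $c$ to $b$ and to $k⦇a⦈$ respectively — so that the quotient identifies $b$ with $k⦇a⦈$ and thereby makes $(λx.k⦇x⦈)\ a$ newly reducible via the $β$-rule, while its unique representative in $Y$ is stuck. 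Second, and more subtly, the verification that the coequaliser is split in $𝐂$: the section $g∶ E → Y$ must send $[k⦇a⦈]$ to $b$, and this is only a well-defined morphism of presheaves on $ℂ_{\source,\but}$ because $k⦇a⦈$ is closed, hence (by an induction the paper sketches in its remark) is neither the source nor the target of any transition in $Y$; naturality and compatibility with source/target maps would otherwise fail. Your remark that ``being $W$-split forces the induced state congruence to be absolute'' puts the cart before the horse — one must \emph{choose} the pair so that a splitting exists, and checking this is not a formality. So the proposal is a correct and well-motivated plan, but as it stands it establishes the lemma only modulo the construction that constitutes its actual proof.
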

    \begin{proof}
      Let us draw inspiration from the classical example of a
      non-monadic composite of monadic
      functors~\cite[p.~107]{BarrWells:ttt}, namely the composite
      $𝐂𝐚𝐭 → 𝐑𝐆𝐩𝐡 → 𝐒𝐞𝐭$, where $𝐑𝐆𝐩𝐡$ denotes the category of
      reflexive graphs, and the forgetful functor to sets returns the
      set of arrows. One may show that both functors are monadic, but
      that their composite is not.  By Beck's monadicity
      theorem~\cite[Theorem~VI.7.1]{MacLane:cwm}, it suffices to find
      a parallel pair $X ⇉ Y$ in $𝐂𝐚𝐭$ whose image in sets admits an
      absolute coequaliser, and show that this coequaliser is not
      created by the composite forgetful functor.  The idea is to take
      $Y$ to consist of two arrows
      \begin{mathpar}
        1 \xto{a} 2 \and 3 \xto{b} 4\rlap{,}
      \end{mathpar}
      and $X = 1 + Y$ to have an additional object, say $0$. We then
      define $u,v∶ X → Y$ to be the identity on $Y$, and respectively
      map $0$ to $2$ and $3$.  The coequalisers in $𝐂𝐚𝐭$ (top) and
      $𝐒𝐞𝐭$ (bottom) look as follows, abbreviating each $\id_C$ to
      just $C$ for readability.
      \begin{center}
        \diag(1,0.01){%
          0 \&\& |(one)| 1 \&\& |(two)| 2 \&\& |(three)| 3 \&\& |(four)| 4
          \& \qquad\qquad \&
          |(onei)| 1 \&\& |(twoi)| 2 \&\& |(threei)| 3 \&\& |(fouri)| 4
          \& \qquad\qquad \&
          |(oneii)| 1 \&\& |(twothree)| {\ens{2,3}} \&\& |(fourii)| 4 \\
          0 \&\& 1 \& a \& 2 \&\& 3 \& b \& |(quatre)| 4 
          \& \qquad\qquad \&
          |(uni)| 1 \& a \& 2 \&\& 3 \& b \&  |(quatrei)| 4
          \& \qquad\qquad \&
          |(unii)| 1 \& a \& {\ens{2,3}} \& b \& |(quatreii)| 4
    }{%
      (one) edge[labela={a}] (two) %
      (three) edge[labela={b}] (four) %
      (onei) edge[labela={a}] (twoi) %
      (threei) edge[labela={b}] (fouri) %
      (oneii) edge[labela={a}] (twothree) %
      edge[bend right,labelb={b∘a}] (fourii) %
      (twothree) edge[labela={b}] (fourii) %
      (four) edge[bend left,shorten <=1em,shorten >=1em,labela={u}] (onei) %
      (four) edge[bend right,shorten <=1em,shorten >=1em,labelb={v}] (onei) %
      (fouri) edge[onto,shorten <=1em,shorten >=1em] (oneii) %
      (quatre) edge[bend left,shorten <=1em,shorten >=1em,labela={}] (uni) %
      (quatre) edge[bend right,shorten <=1em,shorten >=1em,labelb={}] (uni) %
      (quatrei) edge[onto,shorten <=1em,shorten >=1em] (unii) %
      (fourii) edge[bend left,negate,mapsto,shorten <=.5em,shorten >=.5em] (quatreii) %
        }
      \end{center}
      One easily proves that the one in sets is split, hence absolute.
      Because of the composite arrow $b∘a$ that the
      coequaliser in $𝐂𝐚𝐭$ must have but the one in sets does not, the
      coequaliser is not created by the forgetful functor, thus
      contradicting monadicity.      
      
      For proving the lemma, we rely on the $β$-rule to mimick
      composition in constructing the following parallel pair $X ⇉ Y$
      in $Σ₁\algv$.
      \begin{itemize}
      \item $Y$ is the vertical $Σ₁$-algebra defined as the
        $λ$-calculus extended with two constants $a$ and $b$, unary
        operations $k$ and $l$, and an axiom $b ⇓ l⦇x⦈$;
      \item 
        $X$ is the vertical $Σ₁$-algebra extending $Y$ with a constant $c$;
      \item the $Σ₁$-algebra morphisms $u,v∶ X → Y$ respectively map
        $c$ to $b$ and $k⦇a⦈$.
      \end{itemize}
      The coequaliser $E$ of these two morphisms computed in the
      presheaf category $𝐂$ is a quotient of $Y$ by the equation
      $b = k⦇a⦈$. It thus has as reductions $[e] ⇓ [f]$ between
      equivalence classes, for all reductions $e' ⇓ f'$ between
      representatives $e' ∈ [e]$ and $f' ∈ [f]$.  E.g., it has a
      reduction $[k⦇a⦈]⇓ [l⦇x⦈]$, since $b ∈ [k⦇a⦈]$ and $b ⇓ l⦇x⦈$.
      However, $E$ lacks a reduction $[(λx.k⦇x⦈)\ a] ⇓ [l⦇x⦈]$.
      Indeed, $[(λx.k⦇x⦈)\ a]$ has a unique representative, namely
      $(λx.k⦇x⦈)\ a$, whose evaluation in $Y$ gets stuck at $k⦇a⦈$.
      However, the coequaliser in $Σ₁\algv$ does have such a
      reduction by applying the $β$-rule:
      \begin{mathpar}
        \inferrule{[λx.k⦇x⦈] ⇓ [k⦇x⦈] \\ [k⦇a⦈] = [b] ⇓ [l⦇x⦈]}{
          [(λx.k⦇x⦈)\ a] ⇓ [l⦇x⦈]}~·
      \end{mathpar}
      Thus, the coequaliser is not created by the forgetful functor.

      Finally, this coequaliser is split (hence absolute): the
      morphism $v∶ X → Y$ mapping $c$ to $k⦇a⦈$ has a section
      $f∶ Y → X$, which maps $k⦇a⦈$ to $c$, and the coequaliser arrow
      $e∶Y → E$ has a section $g∶ E → Y$ mapping $k⦇a⦈$ to $b$.  It
      is straightforward to check that this indeed defines a split
      coequaliser, i.e., that $g∘ e = u ∘ f$.
    \end{proof}
    \begin{rem}
      A crucial point is that $g$ may (and does) map $a$ and $k$ to
      themselves (the latter being necessary to preserve the target of
      $(λx.k⦇x⦈) ⇓ k⦇x⦈$), and $k⦇a⦈$ to $b ≠ k⦇a⦈$. Indeed, $g$ lives
      in $𝐂$, as opposed to $Σ₀\Trans$, hence need not preserve
      substitution. All that is required is naturality. But $k⦇a⦈$,
      being closed, cannot be the target of any transition, and by
      induction cannot be the source of any transition either, hence
      the result.

      Also, let us stress that in $Y$, $(λx.k⦇x⦈)\ a$ does not
      evaluate, since $k⦇a⦈$ itself does not. Similarly, in $E$, which
      is not saturated by the evaluation rules, we do have the
      premises $(λx.k⦇x⦈) ⇓ k⦇x⦈$ and $k⦇a⦈ = b⇓l⦇x⦈$, but not the
      conclusion $(λx.k⦇x⦈)\ a ⇓ l⦇x⦈$.
    \end{rem}

\section{Substitution-closed bisimilarity}\label{s:substclosedbisim}
We have now introduced our notion of syntactic transition system,
given by $Σ₀$-transition monoids in a Howe context, and explained how
to generate such systems from operational semantics signatures.  In
this section, we incorporate substitution into the notions of
bisimulation and bisimilarity introduced in~§\ref{ss:bisimex}, which
yields substitution-closed bisimilarity.  We then state our main
theorem, for which we include a high-level proof sketch.

In order to introduce substitution-closed bisimilarity, we first lift
the notion of bisimulation to $Σ₀$-monoids,
generalising~§\ref{ss:substclosed:ex}:
\begin{defi}\label{def:bisimSigmaMon}
  A morphism in $Σ₀\Trans$ is a \emph{functional bisimulation} iff its
  underlying morphism in $𝐂$ is. A span is a \emph{simulation} iff its
  left leg is a functional bisimulation, and a \emph{bisimulation} iff
  both of its legs are functional bisimulations.
\end{defi}
Let us readily prove the following characterisation by lifting,
recalling from Notation~\ref{not:L} that $ℒ∶ 𝐂 → Σ₀\Trans$ is left
adjoint to the forgetful functor.  
\begin{prop}\label{prop:bisimadj}
   A morphism in $Σ₀\Trans$ is a functional bisimulation
   iff it has the right lifting property w.r.t.\ 
   $ℒ(𝐲_{s_L})∶ ℒ(𝐲_{\source(L)}) → ℒ(𝐲_L)$, for all $L ∈ ℂ₁$.
 \end{prop}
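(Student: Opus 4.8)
The plan is to derive this as an instance of the standard fact that adjunctions transport lifting properties. First I would recall (Proposition~\ref{prop:SigmaMon-monadic} and Notation~\ref{not:L}) that $ℒ$ is left adjoint to the forgetful functor $𝒰 ∶ Σ₀\Trans → 𝐂$, and that, by Definition~\ref{def:bisimSigmaMon}, a morphism $f$ in $Σ₀\Trans$ is a functional bisimulation exactly when $𝒰(f)$ is one in $𝐂$, i.e., by Definition~\ref{def:fib}, when $𝒰(f)$ has the right lifting property w.r.t.\ $𝐲_{s_L} ∶ 𝐲_{\source(L)} → 𝐲_L$ for all $L ∈ ℂ₁$. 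So it suffices to prove the general claim that, for any morphism $g$ in $𝐂$ and any morphism $f$ in $Σ₀\Trans$, $f$ has the right lifting property w.r.t.\ $ℒ(g)$ if and only if $𝒰(f)$ has the right lifting property w.r.t.\ $g$; the proposition then follows by taking $g = 𝐲_{s_L}$.

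For this general claim I would use the natural bijection $Σ₀\Trans(ℒ(-),-) ≅ 𝐂(-,𝒰(-))$, with unit $η$ and counit $ε$. Given a commuting square with $ℒ(g) ∶ ℒ(A) → ℒ(B)$ on the left, $f ∶ X → Y$ on the right, and legs $u ∶ ℒ(A) → X$, $v ∶ ℒ(B) → Y$, transpose the legs to $\bar{u} = 𝒰(u) ∘ η_A$ and $\bar{v} = 𝒰(v) ∘ η_B$; naturality of $η$ (applied to $g$) together with functoriality of $𝒰$ shows that $𝒰(f) ∘ \bar{u} = \bar{v} ∘ g$, i.e., we obtain a commuting square in $𝐂$ with $g$ on the left and $𝒰(f)$ on the right. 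Conversely, a diagonal filler $d ∶ ℒ(B) → X$ of the original square transposes to $\bar{d} = 𝒰(d) ∘ η_B ∶ B → 𝒰(X)$, and the same naturality and functoriality computations show that $\bar{d}$ is a filler of the transposed square; since transposition is a bijection and these operations are mutually inverse and preserve commutativity of both triangles, fillers of the two squares correspond bijectively. In particular one exists in one case iff it does in the other, which is all that the (weak) lifting property requires.

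I do not expect a genuine obstacle here; the only point requiring care is the bookkeeping of adjunction transposes — in particular checking that the transpose of the square with $ℒ(g)$ on the left really has $g$ itself (and not some composite with the unit) as its left-hand map, which is precisely naturality of $η$. No hypothesis beyond the existence of the adjunction $ℒ ⊣ 𝒰$ is used, so the argument is purely formal.
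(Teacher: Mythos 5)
Your argument is correct and is exactly what the paper's one-line proof (``By adjunction'') abbreviates: the adjunction $ℒ ⊣ 𝒰$ puts squares against $ℒ(𝐲_{s_L})$ in bijection with squares against $𝐲_{s_L}$, and likewise for their fillers, so the two lifting properties coincide. Nothing further is needed.
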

 \begin{proof}
   By adjunction.
 \end{proof}

 As seen in Example~\ref{ex:bisimZ}, we now want to go beyond
 bisimilarity, and introduce abstract versions of substitution-closed
 bisimulation and bisimilarity.  For this, let us give the general
 definition of modules over a monoid.

\begin{defi}
  For any monoid $M$ in a monoidal category $𝒞$, let $M\Mod$ denote
  the category of algebras for the monad ${-}⊗M$.
\end{defi}
An $M$-module thus consists of an object $X$ equipped with an action
$X⊗M → X$ of $M$ satisfying straightforward coherence conditions.

\begin{exa}\label{ex:MasanMmodule}
  $M$ itself is an $M$-module, with action given by multiplication.
\end{exa}

Before going into substitution-closed bisimulation, let us record the
following useful properties of modules in a Howe context.
\begin{prop}\label{prop:modulespreserve}
  In any Howe context,  for all monoids $M$ in $\psh[ℂ₀]$,
  \begin{itemize}
  \item the forgetful functor $M\Mod → \psh[ℂ₀]$ creates all limits
    and colimits, and furthermore
  \item the category $M\Mod$ is regular and the forgetful functor
    $M\Mod → \psh[ℂ₀]$
  creates image factorisations.
\end{itemize}
\end{prop}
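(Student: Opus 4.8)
The plan is to recall that $M\Mod$ is the Eilenberg-Moore category of the monad $T ≔ {-}⊗M$ on $\psh[ℂ₀]$, and to exploit that in a Howe context ${-}⊗M$ preserves all colimits (the tensor does so on the left), whence $T$ and $T² = T∘T$ are both cocontinuous. The forgetful functor $U∶ M\Mod → \psh[ℂ₀]$ then creates all limits, as for any Eilenberg-Moore category, and creates all colimits, by the standard criterion that $U$ creates colimits of any shape that $T$ and $T²$ preserve. This is the first bullet.

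For the second bullet I would show that $U$ creates (strong epi, mono)-factorisations; these are images by the corollary that a strong epi followed by a mono yields an image, and they exist since $\psh[ℂ₀]$, being locally presentable, has them. Given $f∶ A → B$ in $M\Mod$, factor $Uf$ as $UA \xto{e} P \xto{m} UB$ in $\psh[ℂ₀]$ with $e$ a strong epi and $m$ a mono. The key point is that $e⊗M$ is again a strong epi: in the presheaf topos $\psh[ℂ₀]$ strong epis are regular epis, and ${-}⊗M$, being cocontinuous, preserves regular epis. Since $f = m∘e$ is a module morphism, the square
\begin{center}
\begin{tikzcd}
UA⊗M \arrow[r,"e∘a_A"] \arrow[d,"e⊗M"'] & P \arrow[d,"m"] \\
P⊗M \arrow[r,"a_B∘(m⊗M)"'] & UB
\end{tikzcd}
\end{center}
commutes, hence admits a unique diagonal $a_P∶ P⊗M → P$. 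One then checks --- using that $e$, $e⊗M$ and $e⊗M⊗M$ are epis, the last because ${-}⊗(M⊗M)$ is cocontinuous too --- that $a_P$ is an $M$-action and that $e$ and $m$ are module morphisms. Being faithful, $U$ reflects monos, so $m$ is monic in $M\Mod$; and $e$ is a strong epi in $M\Mod$, by transporting any lifting problem against a mono of $M\Mod$ down to $\psh[ℂ₀]$, solving it there, and observing the solution is again a module morphism (the mono on the right being cancellable). Thus $A \xto{e} (P,a_P) \xto{m} B$ is a (strong epi, mono)-factorisation of $f$ lying over the chosen one downstairs, essentially unique since $U$ is faithful; so $U$ creates image factorisations.

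Finally $M\Mod$ is regular. It has finite limits and (strong epi, mono)-factorisations, so it suffices to check that there strong epis coincide with regular epis and are pullback-stable. A morphism $f$ of $M\Mod$ is a strong epi iff $Uf$ is: one direction is the diagonal-lifting argument just used, the other because in the lifted factorisation of a strong epi $f$ its mono part is forced to be split, hence invertible. Since $U$ creates pullbacks and strong (equivalently regular) epis are pullback-stable in the topos $\psh[ℂ₀]$, strong epis are pullback-stable in $M\Mod$; then the usual argument --- factor a strong epi $f$ through the coequaliser $c$ of its kernel pair as $f = h∘c$, use pullback-stability to see that $h$ is monic, hence (being also split epi) invertible --- shows every strong epi is a coequaliser. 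Alternatively one may invoke that the Eilenberg-Moore category of a regular-epi-preserving monad on a regular category is regular. The delicate step throughout is transporting the $M$-action along the factorisation and verifying its laws, which is precisely where the Howe-context hypothesis enters, through preservation of regular epis by ${-}⊗M$; note that one should \emph{not} expect ${-}⊗M$ to preserve monos, so the argument is arranged never to need $m⊗M$ monic.
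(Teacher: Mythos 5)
Your proof is correct, and while the first bullet is handled exactly as in the paper (Eilenberg--Moore creation of limits, plus creation of colimits because ${-}⊗M$ and ${-}⊗M⊗M$ are cocontinuous by the Howe-context axiom), your treatment of the second bullet takes a genuinely different route. The paper never transports the module structure "by hand": it factors $f$ downstairs, observes that the kernel pair of the regular-epi part lifts to $M\Mod$ (creation of limits), that its coequaliser is created (creation of colimits), and reads off the image factorisation and then regularity by pulling back a regular epi down to $\psh[ℂ₀]$ and lifting its kernel-pair presentation back up. You instead use the orthogonality of the strong epi $e⊗M$ (strong $=$ regular in the presheaf topos, and regular epis are preserved by the cocontinuous ${-}⊗M$) against the mono $m$ to produce the action $a_P$ as a unique diagonal, and verify the module axioms by epi/mono cancellation; regularity is then derived from your characterisation that a morphism of $M\Mod$ is a strong epi iff its underlying morphism is, together with pullback-stability in the base. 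Both arguments rest on the same hypothesis (cocontinuity of ${-}⊗M$ on the left), but yours is more elementary and makes explicit exactly where that hypothesis enters and why one never needs $m⊗M$ to be monic; the paper's is more uniform in that the second bullet is a formal consequence of the first. Your byproduct --- that $U$ both preserves and reflects strong epis --- is a slightly stronger statement than what the paper records, and your fallback reference to regularity of Eilenberg--Moore categories for regular-epi-preserving monads is a legitimate shortcut. No gaps.
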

\begin{proof}
  For creation of limits and colimits:
  \begin{itemize}
  \item As algebras for the monad ${-}⊗M$, $M$-modules are closed
    under limits.
  \item   They are also closed under all types of colimits
    preserved by ${-}⊗M$, i.e., all of them by definition of Howe
    contexts.
\end{itemize}
Thus, $M\Mod$ is complete and cocomplete, hence regularity reduces to
showing that the pullback of any regular epi is again a regular
epi. So let us consider any pullback square
  \begin{center}
    \Diag{%
      \stdpbk %
    }{%
      A \& B \\
      C \& D
    }{%
      (m-1-1) edge[labela={u}] (m-1-2) %
      edge[labell={v}] (m-2-1) %
      (m-2-1) edge[labelb={g}] (m-2-2) %
      (m-1-2) edge[labelr={f}] (m-2-2) %
    }
  \end{center}
  in $M\Mod$, with $f$ a regular epi, and show that $v$ must also be a
  regular epi.  By creation, hence preservation, of limits and
  colimits, the given pullback square is also a pullback in $\psh[ℂ₀]$ and
  $f$ is a regular epi there too. So by regularity of the presheaf
  category $\psh[ℂ₀]$, $v$ is a regular epi in $\psh[ℂ₀]$. Equivalently, it is a
  coequaliser of its kernel pair. But by creation of limits the kernel
  pair uniquely lifts to a kernel pair in $M\Mod$, and by creation of
  colimits $v$ is a coequaliser there too. This shows that $M\Mod$ is
  regular.

  Finally, given $X,Z ∈ M\Mod$, let us consider any image
  factorisation $X \xarrow[onto]{e} Y \xarrow[into]{m} Z$ in $\psh[ℂ₀]$ of a
  morphism $f∶ X → Z$ in $M\Mod$, i.e., $e$ is a regular epi and $m$
  is a mono in $\psh[ℂ₀]$.  In this situation, $e$ is the coequaliser of its
  kernel pair in $\psh[ℂ₀]$, but, as we just saw, this kernel pair lifts to
  a kernel pair in $M\Mod$, whose coequaliser is created by the
  forgetful functor, hence $e$ is a coequaliser, hence a regular epi
  in $M\Mod$. Finally, $f$ also coequalises the kernel pair, hence the
  existence of a unique mediating morphism $Y → Z$ in $M\Mod$, which
  must be $m$ by faithfulness of the forgetful functor $M\Mod →
  \psh[ℂ₀]$. Thus, $m$ is also a morphism in $M\Mod$. Finally, its monicity
  follows again by faithfulness of the forgetful functor.
\end{proof}

Let us now introduce substitution-closed spans, first in an arbitrary
monoidal category, and then in a Howe context. This then leads us to
substitution-closed bisimulation.
\begin{defi}
  In a monoidal category $𝒞$ with binary products, given a monoid $M$
  and $M$-modules $X$ and $Y$, a \emph{substitution-closed} span is a
  span $p∶ R → X×Y$ equipped with $M$-module structure $ρ∶ R⊗M → R$ on
  $R$, such that $p∶ R → X×Y$ is an $M$-module morphism.
\end{defi}
The last condition is equivalent to commutation of the following
diagram.
\begin{center}
\diag|baseline=(m-1-1.base)|(.6,1.8){%
  R⊗M \& \& R \\
  (X×Y)⊗M \& (X⊗M)×(Y⊗M) \& X×Y %
}{%
  (m-1-1) edge[labela={ρ}] (m-1-3) %
  edge[labell={p⊗M}] (m-2-1) %
  (m-2-1) edge[labelb={⟨π₁⊗M,π₂⊗M⟩}] (m-2-2) %
  (m-2-2) edge[labelb={a_X × a_Y}] (m-2-3) %
  (m-1-3) edge[labelr={p}] (m-2-3) %
}
\end{center}
\begin{defi}
  Consider any Howe context $\source,\but ∶ ℂ₁ → ℂ₀$ and 
  \transitionmonoid $M ∈ 𝐂$. Let $X,Y ∈ 𝐂$ be equipped with $M₀$-module structure
  on $X₀$ and $Y₀$. A \emph{substitution-closed} span is a span
  $R → X×Y$ equipped with substitution-closed structure on
  $R₀ → X₀×Y₀$.
\end{defi}

\begin{defi}
  For any Howe context $\source,\but ∶ ℂ₁ → ℂ₀$ and \transitionmonoid
  $M ∈ 𝐂$, a \emph{substitution-closed simulation (resp.\
    bisimulation)} is a substitution-closed span $R → M²$ (viewing
  $M₀$ itself as an $M₀$-module by Example~\ref{ex:MasanMmodule})
  which is a simulation (resp.\ bisimulation).
  Let $𝐁𝐢𝐬𝐢𝐦^⊗(M)$ denote the full subcategory of $𝐂/M²$ spanned by
  substitution-closed bisimulations. 
\end{defi}
Let us now prove the existence of substitution-closed bisimilarity.
\begin{lem}\label{lem:bisimuniongeneral}
  Substitution-closed simulation and bisimulation relations are stable
  under unions.
\end{lem}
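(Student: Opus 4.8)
The statement to prove is Lemma~\ref{lem:bisimuniongeneral}: substitution-closed simulation and bisimulation relations are stable under unions. By symmetry (a substitution-closed bisimulation is a substitution-closed simulation whose converse span is one too, and taking converses commutes with unions), it suffices to treat substitution-closed \emph{simulation relations}. So fix a family $(R_i ↪ M²)_{i ∈ I}$ of substitution-closed simulation relations, and let $R ↪ M²$ be their union, computed by Proposition~\ref{prop:union} as the image (strong epi, mono)-factorisation $∑_i R_i \twoheadrightarrow R \hookrightarrow M²$ of their cotupling.

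There are two things to check: (a) $R$ is again a simulation, i.e.\ its left leg $R → M$ is a functional bisimulation; and (b) $R$ carries substitution-closed structure, i.e.\ $R₀$ is an $M₀$-module and $R₀ → M₀²$ is a module morphism. For (a), the argument is exactly that of Lemma~\ref{lem:bisimunion}: since the domains $𝐲_{\source(L)}$ of the maps $𝐲_{s_L}$ are representable, any lifting problem against $R$ lifts along the strong epi $∑_i R_i \twoheadrightarrow R$ to one against $∑_i R_i$, which by the universal property of the coproduct reduces to a lifting problem against some $R_{i₀}$; that has a solution because $R_{i₀}$ is a simulation, and this solution transports back to the original problem. (One should note that the forgetful functor $𝐂 → \psh[ℂ₀]$ preserves image factorisations by Proposition~\ref{prop:forgetfulpreserves}, so ``image in $𝐂$'' and ``image in $\psh[ℂ₀]$ dimension by dimension'' agree.)

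For (b), the key input is Proposition~\ref{prop:modulespreserve}: $M₀\Mod$ is regular and the forgetful functor $M₀\Mod → \psh[ℂ₀]$ creates limits, colimits, and image factorisations. Each $(R_i)₀$ is an $M₀$-module with $(R_i)₀ → M₀²$ a module morphism. Form the coproduct $∑_i (R_i)₀$ in $M₀\Mod$ (which exists and is created from $\psh[ℂ₀]$); the cotupling $∑_i (R_i)₀ → M₀²$ is then a module morphism. Taking its image factorisation in $M₀\Mod$ — which by creation is computed as the image factorisation in $\psh[ℂ₀]$ — yields a module structure on the image $R₀$ making $R₀ → M₀²$ a module morphism. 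Since images in $𝐂$ are computed dimensionwise (Proposition~\ref{prop:forgetfulpreserves}), this $R₀$ is exactly the dimension-$0$ part of the union $R$ computed in $𝐂$. Hence $R$ is substitution-closed, and combined with (a) it is a substitution-closed simulation relation.

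The main obstacle — really the only non-bookkeeping point — is ensuring that the two image factorisations (the one defining the union $R ↪ M²$ in $𝐂$, and the one giving $R₀$ its module structure in $M₀\Mod$) genuinely agree on the state object. This is where Propositions~\ref{prop:forgetfulpreserves} and~\ref{prop:modulespreserve} do the work: both the forgetful functor $𝐂 → \psh[ℂ₀]$ and the forgetful functor $M₀\Mod → \psh[ℂ₀]$ reduce image factorisation to the pointwise one in the presheaf category, so the resulting subobject of $M₀²$ is the same, and the module structure obtained from $M₀\Mod$ restricts along this identification to the required substitution-closed structure on the union. Everything else is a routine transfer of the Lemma~\ref{lem:bisimunion} argument along representability and the symmetry reduction.
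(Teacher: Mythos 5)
Your proposal is correct and follows essentially the same route as the paper: the paper's proof simply invokes Lemma~\ref{lem:bisimunion} for the (bi)simulation part and Proposition~\ref{prop:modulespreserve} for substitution-closedness, and your argument spells out exactly those two steps (including the compatibility of the image factorisations in $𝐂$, $M₀\Mod$, and $\psh[ℂ₀]$, which the paper leaves implicit).
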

\begin{proof}
  By Lemma~\ref{lem:bisimunion}, the union of a family of
  substitution-closed simulation (resp.\ bisimulation) relations is
  again a simulation (resp.\ bisimulation) relation.  But by
  Proposition~\ref{prop:modulespreserve}, the union in $𝐂$ is again
  substitution-closed, which concludes the proof.
\end{proof}
\begin{prop}\label{prop:scbisim}
  For any Howe context $\source,\but ∶ ℂ₁ → ℂ₀$ and monoid
  $M ∈ 𝐂$, the category $𝐁𝐢𝐬𝐢𝐦^⊗(M)$ of
  substitution-closed bisimulations over $M$ admits a terminal object
  $∼^⊗_M$, called \emph{substitution-closed bisimilarity}.
\end{prop}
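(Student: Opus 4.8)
The plan is to follow the proof of Proposition~\ref{prop:bisim} essentially verbatim, threading the substitution-closedness bookkeeping through each step. Since $𝐂 ≅ \psh[ℂ_{\source,\but}]$ by Proposition~\ref{prop:isom}, it is a presheaf category and hence well-powered, so I would form the union $∼^⊗_M$ of all substitution-closed bisimulation relations over $M$ (a set, as each is a subobject of $M²$; see Proposition~\ref{prop:union}). By Lemma~\ref{lem:bisimuniongeneral}, this union is again a substitution-closed bisimulation relation, hence an object of $𝐁𝐢𝐬𝐢𝐦^⊗(M)$; it then remains only to prove terminality.

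For terminality, given any substitution-closed bisimulation $p∶ R → M²$, I would factor it through its image, $R ↠ im(R) ↪ M²$, and check that $im(R)$ is again a substitution-closed bisimulation relation. The ``bisimulation relation'' half is exactly as in Proposition~\ref{prop:bisim}: the domains $𝐲_{\source(L)}$ of the maps $𝐲_{s_L}$ are representable, and $R ↠ im(R)$, being a strong epi in a presheaf category, is pointwise surjective, so any lifting problem $𝐲_{s_L} → im(R)$ lifts to one against $R$, where it is solved by hypothesis, and the solution is transported back along $R ↠ im(R)$. The ``substitution-closed'' half is where the extra work lies: by Proposition~\ref{prop:forgetfulpreserves}, the forgetful functor $𝐂 → \psh[ℂ₀]$ preserves image factorisations, so $im(R)₀$ is precisely the image of $R₀ → M₀²$ in $\psh[ℂ₀]$; then, since $R₀ → M₀²$ is an $M₀$-module morphism and $M₀\Mod → \psh[ℂ₀]$ creates image factorisations by Proposition~\ref{prop:modulespreserve}, the object $im(R)₀$ inherits $M₀$-module structure making $im(R)₀ → M₀²$ a module morphism — i.e., $im(R) → M²$ is substitution-closed. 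Hence $im(R)$ embeds into the union $∼^⊗_M$, and composing gives a span morphism $R ↠ im(R) ↪ {∼^⊗_M}$, which is unique because $∼^⊗_M ↪ M²$ is monic.

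The only genuinely non-routine ingredient is the stability of ``being a substitution-closed bisimulation'' under image factorisation, and this splits cleanly into the two halves above. The bisimulation half is already dispatched in the proof of Proposition~\ref{prop:bisim} and need only be re-invoked; the substitution-closedness half reduces, via Proposition~\ref{prop:forgetfulpreserves}, to the creation-of-images statement in Proposition~\ref{prop:modulespreserve}. So I expect no real obstacle: the argument is essentially an assembly of Lemma~\ref{lem:bisimuniongeneral}, the image argument from Proposition~\ref{prop:bisim}, and Propositions~\ref{prop:forgetfulpreserves} and~\ref{prop:modulespreserve}.
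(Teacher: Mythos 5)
Your proposal is correct and follows exactly the route the paper intends: the paper's proof is literally ``straightforward generalisation of the proof of Proposition~\ref{prop:bisim} using Lemma~\ref{lem:bisimuniongeneral}'', and you have carried out that generalisation, correctly identifying the one non-routine point (that the image of a substitution-closed bisimulation is again substitution-closed) and dispatching it via Propositions~\ref{prop:forgetfulpreserves} and~\ref{prop:modulespreserve}, which is precisely what those results were set up for.
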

\begin{proof}
  Straightforward generalisation of the proof of
  Proposition~\ref{prop:bisim} using the lemma.
\end{proof}

\begin{nota}\label{not:simtens}
  When $M = 𝐙$, we abbreviate $∼^⊗_𝐙$ to just $∼^⊗$.
\end{nota}

We now want to state the abstract version of our main theorem, but we
need an additional hypothesis, which we now introduce.  The idea is
essentially that $Σ₁$ should preserve functional bisimulations, which
does not quite make sense, because the codomain of $Σ₁^F$ is
$\psh[ℂ₁]$, where no notion of functional bisimulation has been
defined yet.  Recalling Proposition~\ref{prop:sim:wpbk}, we work
around this as follows.
\begin{defi}
  \label{def:Sigma1-preserve-bisim}
  A dynamic signature $Σ₁ = (Σ₁^F,Σ₁^∂)$ \emph{preserves functional
    bisimulations} iff for any functional bisimulation $f∶ R → X$ in
  $Σ₀\Trans$, the following square is a pointwise weak pullback.
  \begin{equation}
    \diag(.6,1.6){%
      Σ₁^F(R) \& Σ₁^F(X) \\
      Σ₀(R₀)\source \& Σ₀(X₀)\source %
    }{%
      (m-1-1) edge[labela={Σ₁^F(f)}] (m-1-2) %
      edge[labell={π₁ ∘ (Σ₁^∂)_R}] (m-2-1) %
      (m-2-1) edge[labelb={Σ₀(f₀)\source}] (m-2-2) %
      (m-1-2) edge[labelr={π₁ ∘ (Σ₁^∂)_X}] (m-2-2) %
    }
    \label{eq:presfib}
  \end{equation}
\end{defi}

\begin{rem}
  It may not be obvious that the dynamic signature for call-by-name
  $λ$-calculus preserves functional bisimulations. We will come back
  to this in~§\ref{s:cellularity} by showing that it satisfies a
  sufficient condition, cellularity.
\end{rem}

\begin{rem}
  It may seem linguistically inappropriate to say that $Σ₁$ preserves
  functional bisimulations, since $Σ₁$ is not merely a functor, and we
  have not even defined fibrations in the codomain category $\psh[ℂ₁]$
  anyway. We will justify the terminology in
  Proposition~\ref{prop:ling-mischief}, but for now let us move on
  directly to the main result.
\end{rem}

\begin{thm}\label{thm:main}
  If $Σ₁$ preserves functional bisimulations, then substitution-closed
  bisimilarity is context-closed. More precisely, ${∼^⊗}$ is a
  \transitionmonoidof{Σ₀}, and ${∼^⊗} → 𝐙²$ is a
  \transitionmonoidof{Σ₀} morphism.
\end{thm}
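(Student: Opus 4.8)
We prove this by an abstract rendering of Howe's method, as recalled in §\ref{s:prereq}, in which the inductive arguments are replaced throughout by universal properties; here $∼^⊗$ denotes substitution-closed bisimilarity on the initial vertical algebra $𝐙$ (Proposition~\ref{prop:scbisim}, Notation~\ref{not:simtens}).

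The first step is to realise an abstract \emph{Howe closure} $∼^•$ as a free monoid algebra. Concretely, we equip a suitable monoidal category of substitution-closed spans $R → 𝐙²$, carrying transition structure, with a finitary pointed strong endofunctor built from the syntax $Σ₀$ and the dynamic signature $Σ₁$, and take $∼^•$ to be the free monoid algebra it generates over $∼^⊗$ (suitably construed as such a span). Existence follows from an analogue of Proposition~\ref{prop:ptstrgeneral}, using finitarity of $Σ₁^F$ and preservation of sifted colimits by $Σ₀$. By construction, and now by \emph{universal property} rather than induction (compare the discussion in §\ref{s:prereq}), $∼^•$ is then reflexive (it receives the diagonal span $𝐙 → 𝐙²$), context-closed (its state object is a $Σ₀$-algebra over $𝐙₀²$), closed under substitution --- indeed under \emph{heterogeneous} substitution, as this is exactly the monoid multiplication --- , contains $∼^⊗$, and carries $\check{Σ}₁$-generated transitions making ${∼^•} → 𝐙²$ a morphism in $𝐂$.

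The heart of the proof is to show that $∼^•$ is a \emph{simulation}, i.e.\ that the appropriate leg ${∼^•} → 𝐙$ is a functional bisimulation. Just as in Howe's argument, the point is that closure under heterogeneous substitution lets one absorb the substitution occurring in the premises of the transition rules: abstractly, the monoid-algebra structure of $∼^•$ provides, for $e₁ ∼^• e₁'$ and $e₂ ∼^• e₂'$, a witness of $e₁[e₂] ∼^• e₁'[e₂']$. Using this together with the universal property of $𝐙$ as the initial vertical $\check{Σ}₁$-algebra (Theorem~\ref{thm:Z}), we verify the required lifting property by cases on the free-algebra structure of $∼^•$: on the reflexivity (variable) generators the transitions of $𝐙$ lift trivially; on the $Σ₀$-context generators, the hypothesis that $Σ₁$ preserves functional bisimulations (Definition~\ref{def:Sigma1-preserve-bisim}) is precisely what transports the premises of a rule applied to $∼^•$-related sources to $∼^•$-related data, after the heterogeneous-substitution step; and on the contributions coming from $∼^⊗$ we use that $∼^⊗$ is itself a bisimulation. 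I expect this step --- threading the preservation hypothesis through the initial-vertical-algebra argument while correctly handling the substitutions in rule premises --- to be the main obstacle, and it is where the hypothesis of the theorem is used.

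It remains to symmetrise and conclude. Since $∼^⊗$ is symmetric (its converse span is again a substitution-closed bisimulation, hence contained in it, and conversely) and $∼^•$ is context-closed and contains $∼^⊗$, a further argument by universal property shows that the relational transitive closure $(∼^•)⁺$ is symmetric; its good behaviour here again relies on $Σ₀$ preserving sifted colimits (cf.\ Remark~\ref{rk:sifted}). Simulations are closed under relational composition (left legs compose and fibrations are closed under composition) and under the directed union defining $(∼^•)⁺$ (as in Lemma~\ref{lem:bisimuniongeneral}), so $(∼^•)⁺$ is a simulation; being symmetric it is a bisimulation, and it is substitution-closed as a directed union of relational composites of substitution-closed spans (Proposition~\ref{prop:modulespreserve}). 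Hence ${∼^⊗} ⊆ {∼^•} ⊆ {(∼^•)⁺} ⊆ {∼^⊗}$ --- the last inclusion by terminality of $∼^⊗$ among substitution-closed bisimulations (Proposition~\ref{prop:scbisim}) --- so all three coincide. In particular $∼^⊗$ inherits the $Σ₀$-algebra structure of $∼^•$; together with the monoid structure supplied by substitution-closedness and reflexivity, and with the compatibility square~\eqref{eq:monoidalg} (which holds because $∼^•$ is a monoid algebra), this exhibits $∼^⊗$ as a \transitionmonoidof{Σ₀} with ${∼^⊗} → 𝐙²$ a \transitionmonoidof{Σ₀} morphism, which is precisely the claimed context-closedness.
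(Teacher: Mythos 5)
Your proposal follows essentially the same route as the paper: the Howe closure is built as a free monoid algebra for a pointed strong endofunctor on spans over $𝐙²$ (the paper's $Σ₀ᴴ$ on states and $Σ₁ᴴ$ on transitions), the simulation property is obtained by propagating the hypothesis that $Σ₁$ preserves functional bisimulations along the initial chain starting from the trivially-a-simulation projection, and one concludes via symmetry of the relational transitive closure and terminality of ${∼^⊗}$. The only place your sketch is looser than the paper is the endgame: because the Howe closure is a genuine span rather than a relation, and its relational transitive closure is symmetric only on states, the paper does not conclude that the three objects literally coincide, but instead passes to an auxiliary substitution-closed bisimulation (Lemma~\ref{lem:projsimsym}), exhibits ${∼^⊗}$ as a retract of the Howe closure, and transfers the \transitionmonoidof{Σ₀} structure along that retraction using monadicity and Lemma~\ref{lem:retracte}.
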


\begin{proof}[{Proof sketch (see~§\ref{s:congruence} for the full proof)}]
  The proof takes inspiration from Howe's original method.
\begin{enumeratewide}
\item \label{item:wowo} We first define the \emph{Howe closure} $H₀$
  of substitution-closed bisimilarity $∼₀^⊗$ on states as the initial
  $Σ₀ᴴ$-monoid for the pointed strong endofunctor $Σ₀ᴴ$ on
  $\psh[ℂ₀]/𝐙²$ defined by $Σ₀ᴴ(R) = Σ₀(R) + (R;{∼₀^⊗})$.  We then
  show that, by construction, $H₀$ is a $Σ₀$-monoid and both
  projections are $Σ₀$-monoid morphisms. \medskip
  
\item We then define the \emph{transition Howe closure} ${H}$ of (the
  full) substitution-closed bisimilarity ${∼^{⊗}}$, as an initial algebra for
  an endofunctor $Σ₁ᴴ$ on a suitable category $𝐂ᴴ_𝐙$.  Very roughly,
  $𝐂ᴴ_𝐙$ is the category of spans $R → 𝐙²$ whose projection is
  precisely $H₀ → 𝐙²₀$, and $Σ₁ᴴ(R) = \check{Σ}₁(R) + (R;{∼^⊗})$.  We
  show:
\begin{restatable}{lem}{simintowow}
  \label{lem:simintowow}
    There exists a span morphism $𝐢ᴴ∶ {∼^{⊗}} → H$.
\end{restatable}

\item Next comes the key lemma:
\begin{restatable}{lem}{howisim}
  \label{lem:howisim}
    If $Σ₁$ preserves functional bisimulations, then the transition
    Howe closure ${H}$ is a substitution-closed simulation.
  \end{restatable}

  \begin{rem}
    Since ${H₀}$ is a $Σ₀$-monoid by construction, $H$ is
    easily seen to be substitution-closed, so the lemma really is
    about it being a simulation.
  \end{rem}
  The key lemma is proved by characterising $H$ as an initial algebra
  for a different endofunctor on a different category, whose initial
  chain involves iterated applications of $Σ₁$ (preserving simulations by
  hypothesis) to $π₁∶ ℳ(H₀) → ℳ(𝐙₀)$, which is trivially a simulation.
  \medskip
  
\item \label{item:sym} In the standard proof method, the next step is
  to prove that the transitive closure of $H₀$ is symmetric. But in
  our case $H₀$ is a general span, not a relation. In order to avoid
  some coherence issues, we introduce a suitable notion of
  \emph{relational} transitive closure for general spans, denoted by
  $-^{\overline{+}}$, which is equipped with a canonical
  map $𝐢^{\overline{+}}_R∶ R → R^{\overline{+}}$ for each span $R$. We then show:

  \begin{restatable}{lem}{howtranssym}
  \label{lem:howtranssym}
    The relational transitive closure ${H₀}{}^{\overline{+}}$ of the 
    Howe closure $H₀$ on states is symmetric.
 \end{restatable}

  As substitution-closed simulations are closed under transitive
  closure, we obtain
\begin{restatable}{cor}{howtransbisim}
  \label{lem:howtransbisim}
    ${H}{}^{\overline{+}}$ is a substitution-closed simulation which
    is symmetric on states.
\end{restatable}

  We then use the following lemma\iffull{ (proved in~\S\ref{ss:sym-tran-clos})}.
\begin{restatable}{lem}{projsimsym}
  \label{lem:projsimsym}
    For any substitution-closed simulation $R$ such that $R₀$ is
    symmetric, there exists a substitution-closed bisimulation $R'$
    and a span morphism $𝐢'_R∶ R → R'$.
\end{restatable}

  By terminality of $∼^⊗$, we thus get a unique morphism
  ${!}_{{{H}{}^{\overline{+}}}{}'}∶ {{{H}{}^{\overline{+}}}{}'} → {∼^⊗}$ over $𝐙²$.
  \medskip
  
\item From the chain \hfil
  \[{∼^⊗} \xto{𝐢ᴴ} {H} \xto{𝐢^{\overline{+}}_H}
  {{H}{}^{\overline{+}}} \xto{𝐢'_R}
  {{H}{}^{\overline{+}}}{}' \xto{{!}_{{{H}{}^{\overline{+}}}{}'}} {∼^⊗}
  \]
 we get by terminality that ${∼^⊗}$ is a retract of a
  \transitionmonoidof{Σ₀}, namely ${H}$.  The result then readily
  follows from monadicity of \transitionmonoidsof{Σ₀}
  (Proposition~\ref{prop:SigmaMon-monadic}) and the following result,
  taking $X = {H}$, $Y = {∼^⊗}$, and $Z = 𝐙²$.\qedhere
\end{enumeratewide}
\end{proof}

  \begin{lem}\label{lem:retracte}
    Consider a monad $T∶ 𝒞 → 𝒞$ on any category $𝒞$, $T$-algebras $X$
    and $Z$, and morphisms $X \xarrow[onto]{e} Y \xarrow[into]{m} Z$ in $𝒞$ such
    that the composite is a $T$-algebra morphism, $e$ is a split epi,
    and $m$ is monic. Then there is a unique $T$-algebra structure on
    $Y$ such that $e$ and $m$ both are $T$-algebra morphisms.
  \end{lem}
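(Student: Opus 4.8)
The plan is to reduce both existence and uniqueness to the monicity of $m$, constructing the algebra structure on $Y$ from a chosen section of $e$. Write $a∶ TX → X$ and $c∶ TZ → Z$ for the given algebra structures, and fix a section $s∶ Y → X$ with $e ∘ s = \id_Y$. For uniqueness, I observe that any $T$-algebra structure $b∶ TY → Y$ making $m$ an algebra morphism must satisfy $m ∘ b = c ∘ Tm$, and since $m$ is monic this determines $b$ at most uniquely; so the whole content is existence.

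For existence, the candidate I would take is $b ≔ e ∘ a ∘ Ts∶ TY → Y$. The first thing to check is that $m$ is an algebra morphism for it: precomposing with $m$ and using that $m ∘ e$ is an algebra morphism gives $m ∘ b = c ∘ T(m∘e) ∘ Ts = c ∘ Tm ∘ T(e∘s) = c ∘ Tm$, as required; by monicity of $m$ this incidentally shows $b$ is independent of the choice of $s$. Next I would verify that $b$ is a genuine $T$-algebra structure. The unit law $b ∘ η_Y = \id_Y$ falls out directly from naturality of $η$ at $s$ and the unit law $a ∘ η_X = \id_X$. For the associativity law $b ∘ μ_Y = b ∘ Tb$, I would again precompose with the monic $m$: using $m∘b = c∘Tm$, naturality of $μ$ at $m$, and the associativity law $c ∘ μ_Z = c ∘ Tc$ for $c$, both sides reduce to $c ∘ T(c ∘ Tm)$, so the equality follows by cancelling $m$.

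It remains to check that $e$ is an algebra morphism, i.e.\ $e ∘ a = b ∘ Te$; this is obtained once more by precomposing with $m$, since both $m ∘ e ∘ a$ and $m ∘ b ∘ Te$ equal $c ∘ Tm ∘ Te$ (the former because $m∘e$ is an algebra morphism, the latter because $m∘b = c∘Tm$), and we cancel $m$. Together with the uniqueness observation above, this completes the proof.

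There is no real conceptual obstacle here: the work is entirely in chasing the monad and algebra coherence diagrams, and every step is forced by the three hypotheses ($e$ split epi, $m$ monic, $m∘e$ an algebra morphism). If I had to name the one step demanding a little care, it is the associativity law for $b$, where one must correctly invoke the naturality square for $μ$ at $m$ before using the algebra law for $c$; but once the pattern ``precompose with $m$, rewrite, cancel $m$'' is adopted, the calculation is routine.
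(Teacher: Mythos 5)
Your proof is correct and takes exactly the same route as the paper: define the structure on $Y$ as $e ∘ a ∘ Ts$ for a chosen section $s$ of $e$, then verify every required identity by precomposing with the monomorphism $m$ and cancelling. The paper states this in one line and leaves the diagram chases implicit; you have simply written them out.
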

  \begin{proof}
    Let $s∶ Y → X$ denote any section of $e$.
    The desired structure is given by 
    \[T(Y) \xto{T(s)} T(X) → X \xarrow[onto]{e} Y\rlap{,}\]
    where the middle morphism is the given $T$-algebra structure on
    $X$, and the rest follows by monicity of $m$.
  \end{proof}
  
\section{Preservation of functional bisimulations, and cellularity}\label{s:cellularity}
Let us now consider the main hypothesis of Theorem~\ref{thm:main},
preservation of functional bisimulations.  In~§\ref{ss:rephrase}, we
rephrase the condition in a way that makes more sense linguistically,
i.e., by an actual preservation condition.  We then work towards a
characterisation in terms of cellularity.  In~§\ref{ss:fam}, we first
briefly recall \emph{familial}
functors~\cite{Diers1978Spectres,DBLP:journals/mscs/CarboniJ95,Weber:famfun},
and show that the operational semantics signature for call-by-name
$λ$-calculus gives rise to two familial functors, in a suitable sense.
In~§\ref{ss:cell}, we restrict attention to the case where both
components of the dynamic signature give rise to familial functors in
this sense, and show that preservation of functional bisimulations is
then equivalent to a \emph{cellularity}
condition~\cite{garner:hal-01246365,BHL}, which itself comes with a
useful sufficient condition.

\subsection{An alternative characterisation}\label{ss:rephrase}
Let us first give an alternative definition of dynamic
signatures.
 \begin{defi}\label{def:Covers}
  Let $\psh[ℂ₁]/Δ_{\source}$ denote the following lax limit category.
  \begin{center}
    \Diag(.6,.6){%
      \twocell[.5][.5]{m-1-2}{m-2-3}{m-1-2}{m-2-1}{}{cell=.4} %
    }{%
      \& \psh[ℂ₁]/Δ_{\source}   \\
      \psh[ℂ₀] \& \& \psh[ℂ₁]
    }{%
      (m-1-2) edge[labelal={π₂}] (m-2-1) %
      edge[labelar={π₁}] (m-2-3) %
      (m-2-1) edge[labelb={Δ_{\source}}] (m-2-3) %
    }%
    \end{center}
 \end{defi}
 Concretely, an object consists of presheaves $X₁$ and $X₀$, together
 with a morphism $X₁ → X₀\source$.
 Just as $𝐂$, $\psh[ℂ₁]/Δ_{\source}$ is in fact a presheaf category:
 \begin{prop}\label{prop:Csource}
   The category $\psh[ℂ₁]/Δ_{\source}$ is isomorphic to the presheaf
   category over the lax colimit $ℂ_\source$ of the functor
   $\source∶ ℂ₁ → ℂ₀$, as in
  \begin{center}
    \Diag(.6,.6){%
    \twocell[.55]{m-2-2}{m-1-3}{m-2-2}{m-1-1}{}{cell=0.2,labelon={s}} %
  }{%
      ℂ₁ \& \& ℂ₀ \\
      \& ℂ_\source\rlap{.} %
    }{%
      (m-1-1) edge[labela={\source}] (m-1-3) %
      edge[labelbl={in₁}] (m-2-2) %
      (m-1-3) edge[labelbr={in₀}] (m-2-2) %
    }
  \end{center}   
 \end{prop}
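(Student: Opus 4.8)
The plan is to proceed exactly as in the proof of Proposition~\ref{prop:isom}, of which this is the ``one-legged'' special case: there one computes the lax colimit of the parallel pair $\source,\but$, whereas here one only needs the lax colimit of the single functor $\source∶ ℂ₁ → ℂ₀$. First I would describe $ℂ_\source$ concretely. By its universal property, $ℂ_\source$ has as objects those of $ℂ₁$ together with those of $ℂ₀$, and is obtained from the coproduct $ℂ₁ + ℂ₀$ by freely adjoining, for every $L ∈ ℂ₁$, a morphism $s_L∶ in₀(\source(L)) → in₁(L)$, subject only to the naturality equations $s_M ∘ in₀(\source(f)) = in₁(f) ∘ s_L$ for all $f∶ L → M$ in $ℂ₁$. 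Equivalently, $ℂ_\source$ is the collage of the $(ℂ₁,ℂ₀)$-bimodule $(L,A) ↦ ℂ₀(A,\source(L))$: its only non-obvious hom-sets are $ℂ_\source(in₀(A), in₁(L)) = ℂ₀(A,\source(L))$, while there are no morphisms $in₁(L) → in₀(A)$. One checks that this category enjoys the required two-dimensional universal property, with structure $2$-cell $s∶ in₀ ∘ \source \Rightarrow in₁$ given componentwise by the $s_L$.

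The second step invokes the fact — already used in the proof of Proposition~\ref{prop:isom} — that the presheaf construction turns lax colimits in $𝐂𝐚𝐭$ into lax limits in $𝐂𝐀𝐓$. Applied to the lax colimit cocone above, it exhibits $\psh[ℂ_\source]$, equipped with the restriction functors along $in₁$ and $in₀$ and a universal $2$-cell, as the lax limit of $Δ_{\source}∶ \psh[ℂ₀] → \psh[ℂ₁]$ (precomposition by $\source$) — which is exactly $\psh[ℂ₁]/Δ_{\source}$ by Definition~\ref{def:Covers}.

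To keep the argument self-contained, I would also spell out the isomorphism directly. A presheaf $X ∈ \psh[ℂ_\source]$ restricts along $in₁$ and $in₀$ to presheaves $X₁ ∈ \psh[ℂ₁]$ and $X₀ ∈ \psh[ℂ₀]$, and the action of the adjoined morphisms $s_L$ supplies, for each $L$, a map $X₁(L) → X₀(\source(L))$; naturality of the family $(s_L)_L$ says precisely that these assemble into a morphism $X₁ → Δ_{\source}(X₀)$ in $\psh[ℂ₁]$. Conversely, such data determines, via the normal-form description above, the action of all morphisms of $ℂ_\source$, and morphisms of presheaves correspond on the nose to morphisms in $\psh[ℂ₁]/Δ_{\source}$, whence the claimed isomorphism of categories.

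There is essentially no deep obstacle here; the one point requiring care is the bookkeeping of variance — checking that the adjoined morphisms of $ℂ_\source$ run from (copies of) objects of $ℂ₀$ into (copies of) objects of $ℂ₁$, so that after taking presheaves one obtains a map $X₁ → X₀\source$ and not its opposite — but this is the same subtlety already handled in Proposition~\ref{prop:isom}.
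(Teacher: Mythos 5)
Your proposal is correct and follows essentially the same route as the paper: the paper's proof is literally ``Similar to Proposition~\ref{prop:isom}'', i.e., describe $ℂ_\source$ concretely as $ℂ₁+ℂ₀$ with freely adjoined morphisms $s_L∶ \source(L) → L$ (natural in $L$) and invoke the fact that the presheaf construction turns lax colimits into lax limits. Your additional unpacking (the collage/hom-set description and the explicit correspondence $X ↦ (X₁, X₀, X₁ → Δ_\source(X₀))$, with the variance check) is a correct elaboration of the same argument rather than a different one.
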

 \begin{proof}
   Similar to Proposition~\ref{prop:isom}.
 \end{proof}
 \begin{rem}\label{rk:Csigma}
   Concretely, $ℂ_\source$ is the coproduct of $ℂ₀$ and $ℂ₁$, augmented with
   morphisms $s_L∶ \source(L) → L$ for all $L ∈ ℂ₁$, naturally in $L$.
\end{rem}
\begin{nota}\label{not:Downarrowsource}
  In the case of call-by-name $λ$-calculus, as in $𝐂$
  (Notation~\ref{not:Downarrow}), we call $⇓$ the unique object coming
  from $ℂ₁ = 1$.
\end{nota}

\begin{defi}
  For any operational semantics signature $(Σ₀,Σ₁)$, let
  $Σ₁^\source∶ Σ₀\Trans → \psh[ℂ₁]/Δ_{\source}$ map any $X₁ → X₀\source×X₀\but$
  to the first leg $Σ₁^F(X) → Σ₀(X₀)\source$ of $(Σ₁^∂)_X$.
\end{defi}
 \begin{prop}\label{prop:Sigmacomma}
   For any operational semantics signature $(Σ₀,Σ₁)$, the functor
   \[Σ₁^\source∶ Σ₀\Trans → \psh[ℂ₁]/Δ_{\source}\] is finitary, and makes the
   following diagram commute,
   \begin{equation}
     \diag{%
       Σ₀\Trans \& \& \psh[ℂ₁]/Δ_{\source} \\
       Σ₀\Mon \& \psh[ℂ₀] \& \psh[ℂ₀] %
     }{%
       (m-1-1) edge[labela={Σ₁^\source},shorten >=.5ex] (m-1-3) %
       edge[labell={𝒟},shorten >=.5ex] (m-2-1) %
       (m-2-1)edge[labelb={𝒰₀}] (m-2-2) %
       (m-2-2) edge[labelb={Σ₀}] (m-2-3) %
       (m-1-3) edge[labelr={π₂},shorten >=.5ex] (m-2-3) %
     }
     \label{eq:dynalt}
   \end{equation}
   where $π₂$, as in Definition~\ref{def:Covers}, maps any object
   $X₁ → X₀\source$ to $X₀$.
 \end{prop}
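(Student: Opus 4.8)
The plan is to prove the two assertions of the proposition separately: on‑the‑nose commutativity of~\eqref{eq:dynalt}, which is a direct unwinding of definitions, and finitarity of $Σ₁^\source$, which needs a short argument about filtered colimits in a comma category.

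\emph{Commutativity.} By construction, $Σ₁^\source$ sends an object $X ∈ Σ₀\Trans$ to the object of $\psh[ℂ₁]/Δ_{\source}$ consisting of the presheaves $Σ₁^F(X) ∈ \psh[ℂ₁]$ and $Σ₀(X₀) ∈ \psh[ℂ₀]$ together with the first leg $π₁ ∘ (Σ₁^∂)_X ∶ Σ₁^F(X) → Σ₀(X₀)\source$ of $(Σ₁^∂)_X$, while $π₂$ returns the $\psh[ℂ₀]$-component of such an object. Hence $π₂(Σ₁^\source(X)) = Σ₀(X₀)$, which is precisely $Σ₀(𝒰₀(𝒟(X)))$ since $𝒟(X) = X₀$ (as a $Σ₀$-monoid) and $𝒰₀$ forgets the monoid structure. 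Running the same computation on morphisms, using naturality of $Σ₁^∂$, shows that the two composites of functors around~\eqref{eq:dynalt} are literally equal.

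\emph{Finitarity.} First note that $\psh[ℂ₁]/Δ_{\source}$ is the comma category $\id_{\psh[ℂ₁]}/Δ_{\source}$, and since $\id_{\psh[ℂ₁]}$ preserves all colimits, Proposition~\ref{prop:projcreates} shows that the projection $\psh[ℂ₁]/Δ_{\source} → \psh[ℂ₁] × \psh[ℂ₀]$ creates all colimits. As a consequence, a functor with target $\psh[ℂ₁]/Δ_{\source}$ preserves filtered colimits exactly when both of its components do: a filtered colimit in the comma category is carried by the componentwise colimit, and a cocone there is a colimit cocone iff its image under the projection is. It therefore suffices to check that $π₁ ∘ Σ₁^\source = Σ₁^F$ and $π₂ ∘ Σ₁^\source = Σ₀ ∘ 𝒰₀ ∘ 𝒟$ each preserve filtered colimits. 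The first is finitary by the definition of a dynamic signature. For the second, $Σ₀$ preserves sifted colimits, hence in particular filtered ones, by the definition of an operational semantics signature, while $𝒰₀ ∘ 𝒟$ equals, by the pullback square defining $Σ₀\Trans$, the composite $Σ₀\Trans \xto{𝒰} 𝐂 \xto{(-)₀} \psh[ℂ₀]$, in which $(-)₀$ preserves all colimits by Proposition~\ref{prop:forgetfulpreserves} and $𝒰$ creates (hence preserves) filtered colimits, being finitary and monadic by Proposition~\ref{prop:SigmaMon-monadic}. Composing these preservations yields the claim.

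There is no deep obstacle here. The two points deserving a little care are the reduction ``$Σ₁^\source$ preserves filtered colimits iff both its components do'', which should be deduced from creation of colimits by the projection (Proposition~\ref{prop:projcreates}) rather than left implicit, and the appeal to the standard fact that the forgetful functor of a finitary monad creates filtered colimits; everything else is routine bookkeeping with the definitions of $Σ₁^\source$, $π₂$, $𝒟$, and $𝒰₀$.
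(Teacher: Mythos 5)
Your proof is correct and follows essentially the same route as the paper's, which simply records that the diagram commutes by construction and that finitarity holds pointwise; your argument is a faithful unpacking of that, reducing finitarity of $Σ₁^\source$ to finitarity of its two components via creation of colimits by the projection out of the comma category (Proposition~\ref{prop:projcreates}), and then invoking finitarity of $Σ₁^F$, preservation of sifted (hence filtered) colimits by $Σ₀$, and Propositions~\ref{prop:forgetfulpreserves} and~\ref{prop:SigmaMon-monadic} for $𝒰₀∘𝒟 = (-)₀∘𝒰$. No issues.
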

 \begin{proof}
   Finitarity holds because it does pointwise, by assumption.  The
   diagram commutes by construction.
 \end{proof}

 In $\psh[ℂ_\source]$ (or, through the isomorphism of
 Proposition~\ref{prop:Csource}, in $\psh[ℂ₁]/Δ_{\source}$), we may
 define functional bisimulations by analogy with
 Definition~\ref{def:fib}.
 \begin{defi}\label{def:fbisim:Csource}
   A morphism $f∶ X → Y$ in $\psh[ℂ_{\source}]$ is a \emph{functional
     bisimulation}, or a \emph{fibration}, iff it enjoys the (weak)
   right lifting property w.r.t.\ $𝐲_{s_L}∶ 𝐲_{\source(L)} → 𝐲_L$, for all
   $L ∈ ℂ₁$.
 \end{defi}

 \begin{prop}[Price for our linguistic mischief]
   \label{prop:ling-mischief}
   A dynamic signature preserves functional bisimulations
   (Definition~\ref{def:Sigma1-preserve-bisim}) iff the
   induced functor $Σ₁^\source∶ Σ₀\Trans → \psh[ℂ₁]/Δ_{\source}$ does.
 \end{prop}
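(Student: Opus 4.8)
The plan is to transport the weak-pullback characterisation of fibrations (Proposition~\ref{prop:sim:wpbk}) to the category $\psh[ℂ_{\source}] ≅ \psh[ℂ₁]/Δ_{\source}$ of Proposition~\ref{prop:Csource}, and then to observe that, once both sides are unfolded, the two conditions in play are literally the same pointwise weak pullback.

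First I would prove the exact analogue of Proposition~\ref{prop:sim:wpbk} in $\psh[ℂ_{\source}]$: a morphism $g∶ A → B$ is a functional bisimulation in the sense of Definition~\ref{def:fbisim:Csource} iff the square with top edge $g₁∶ A₁ → B₁$, bottom edge $g₀\source∶ A₀\source → B₀\source$, and vertical edges the structure maps $A₁ → A₀\source$ and $B₁ → B₀\source$ is a pointwise weak pullback. The argument is the same Yoneda computation as for Proposition~\ref{prop:sim:wpbk}: a lifting problem $𝐲_{s_L} → g$ corresponds, by Yoneda, to a cone over the cospan $A₀\source \xto{g₀\source} B₀\source \xot{} B₁$ with apex $𝐲_L$, and a solution to a mediating morphism into $A₁$; quantifying over all $L ∈ ℂ₁$ yields exactly the pointwise weak-pullback condition. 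As in Proposition~\ref{prop:sim:wpbk}, only the source structure is used, so the fact that objects of $\psh[ℂ_{\source}]$ carry no target structure causes no difficulty.

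Next I would unfold $Σ₁^{\source}$ concretely, using Proposition~\ref{prop:Sigmacomma}: for $X ∈ Σ₀\Trans$, the object $Σ₁^{\source}(X)$ has transition presheaf $Σ₁^F(X)$, state presheaf $Σ₀(X₀)$, and structure map $π₁ ∘ (Σ₁^∂)_X ∶ Σ₁^F(X) → Σ₀(X₀)\source$; and, for a morphism $f$ of $Σ₀\Trans$, $Σ₁^{\source}(f)$ has transition component $Σ₁^F(f)$ and state component $Σ₀(f₀)$. Substituting these data into the characterisation of the previous paragraph, $Σ₁^{\source}(f)$ is a functional bisimulation in $\psh[ℂ_{\source}]$ exactly when the square~\eqref{eq:presfib} is a pointwise weak pullback, which is verbatim the condition of Definition~\ref{def:Sigma1-preserve-bisim}. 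Since the notion of functional bisimulation in $Σ₀\Trans$ (Definition~\ref{def:bisimSigmaMon}) is the same on both sides of the claimed equivalence, we obtain a chain of equivalences: $Σ₁$ preserves functional bisimulations $\iff$ for every functional bisimulation $f$ in $Σ₀\Trans$ the square~\eqref{eq:presfib} is a pointwise weak pullback $\iff$ for every such $f$ the morphism $Σ₁^{\source}(f)$ is a functional bisimulation $\iff$ $Σ₁^{\source}$ preserves functional bisimulations.

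There is essentially no hard step: the only points requiring a little care are checking that the isomorphism $\psh[ℂ₁]/Δ_{\source} ≅ \psh[ℂ_{\source}]$ identifies the structure map $π₁ ∘ (Σ₁^∂)_X$ with the map along which fibrations are defined in Definition~\ref{def:fbisim:Csource}, and checking that the two occurrences of the word \emph{pointwise} --- over $L ∈ ℂ₁$, in Definition~\ref{def:Sigma1-preserve-bisim} and in the characterisation above --- do coincide. Both are immediate once the definitions are written out.
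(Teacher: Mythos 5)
Your proposal is correct and follows exactly the route of the paper's (very terse) proof: transport the weak-pullback characterisation of Proposition~\ref{prop:sim:wpbk} to $\psh[ℂ_\source]$ via the same Yoneda argument, unfold $Σ₁^\source$ to see that its value on a morphism is precisely the square~\eqref{eq:presfib}, and conclude. Your version merely spells out the details the paper leaves implicit, and all of them check out.
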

 \begin{proof}
   The functor $Σ₁^\source$ maps any functional bisimulation
   $f∶ R → Y$ to the square~\eqref{eq:presfib}, and just as in
   Proposition~\ref{prop:sim:wpbk} a morphism in $\psh[ℂ_\source]$ is
   a functional bisimulation iff the corresponding square in
   $\psh[ℂ₁]$ is a pointwise weak pullback.
 \end{proof}

 \subsection{Familiality}\label{ss:fam}
 In the previous sections, we have seen that functional bisimulations
 may be defined by lifting both in $Σ₀\Trans$ and $\psh[ℂ₁]/Δ_{\source}$.  We
 now want to exploit this to obtain a characterisation of preservation
 of functional bisimulations, which will then lead us to useful
 sufficient conditions.

 For this, let us briefly recall familial functors, and show that the
 functors $Σ₀$ and $Σ₁^\source∶ Σ₀\Trans → \psh[ℂ₁]/Δ_{\source}$ induced
 by the dynamic signature for call-by-name $λ$-calculus are familial.

 Familial functors are a generalisation of polynomial functors on
 sets, i.e., functors of the form $F(X) = ∑_{o ∈ O} X^{nₒ}$, where $O$
 is a set of `operations', and $nₒ ∈ ℕ$ is the `arity' of any $o ∈ O$.
 \begin{short}
    First, let us recall from MacLane and Moerdijk~\cite{MM}:
 \begin{defi}
   The \emph{category of elements} $\el(X)$ of a presheaf $X$ over any
   category $ℂ$ has pairs $(c,x)$ with $x ∈ X(c)$ as objects, and a
   morphism $f↾x'∶ (c,x) → (c',x')$ for all $f∶ c → c'$ such that
   $X(f)(x') = x$.
\end{defi}
 \end{short}
 \begin{full}
   We want to generalise this to presheaf categories.
   \begin{exa}
   Consider for
 example the `free category' monad $T$ on $𝐆𝐩𝐡$.  Analysing and
 abstracting over the definition of $T$, we will arrive at the notion
 of familial functor.  Let us first recall that graphs are presheaves
 over the category
 \begin{center}
   \diaginline(0,.6){{[0]} \& {[1].}}{%
     (m-1-1) edge[bend left=10,labela={s}] (m-1-2) %
     (m-1-1) edge[bend right=10,labelb={t}] (m-1-2) %
   }
 \end{center}
$T$ does not change the vertex set, and an edge of $T(G)$ is
 merely a path in $G$. Indexing this by the length of the path, we
 obtain
   \begin{center}
     $T(G)[0] ≅ 𝐆𝐩𝐡(𝐲_{[0]},G)$ \hfil and \hfil
     $T(G)[1] ≅ ∑ₙ 𝐆𝐩𝐡([n],G)$,
   \end{center}
   where $[n]$ denotes the filiform graph $• → • … → •$ with $n$ edges
   (which is consistent with $[0]$ and $[1]$ through the Yoneda
   embedding).  Furthermore, the source of a path $[n] → G$ 
   is obtained as the composite
   \[[0] \xto{sₙ} [n]→G\rlap{,}\]
   where the first morphism selects the first vertex of the path.
   Similarly the target is obtained by precomposition with the
   morphism, say $tₙ$, selecting the last vertex.

   From these observations, let us now explain how the whole of $T$
   may be derived from
   \begin{itemize}
   \item the graph $T(1)$, which generalises the set of operations, and
   \item a functor $\el(T(1)) → 𝐆𝐩𝐡$, morally describing the arity of
     each operation,
\end{itemize}
where we recall from MacLane and Moerdijk~\cite{MM}:
 \begin{defi}
   The \emph{category of elements} $\el(X)$ of a presheaf $X$ over any
   category $ℂ$ has pairs $(c,x)$ with $x ∈ X(c)$ as objects, and a
   morphism $f↾x'∶ (c,x) → (c',x')$ for all $f∶ c → c'$ such that
   $X(f)(x') = x$.
\end{defi}
The graph $T(1)$ has a single vertex, and as many paths as we can derive from
a single endo-edge on this vertex: $ℕ$, because there is one for each length.
The category of elements of $T(1)$ thus looks like the following,
\begin{center}
  \diag(1,1.5){%
    ([1],0) \& ([1],1) \& … \& ([1],n) \& … \\
    \& \& ([0],⋆) %
  }{%
    (m-2-3) edge[bend left=20,labelon={s↾0}] (m-1-1) %
    (m-2-3) edge[bend left=10,labelon={t↾0}] (m-1-1) %
    (m-2-3) edge[bend left=20,labelon={s↾1}] (m-1-2) %
    (m-2-3) edge[bend right=10,labelon={t↾1}] (m-1-2) %
    (m-2-3) edge[bend right=10,labelon={t↾n}] (m-1-4) %
    (m-2-3) edge[bend left=20,labelon={s↾n}] (m-1-4) %
  }
\end{center}
and the assignments
\begin{mathpar}
([0],⋆) ↦ [0] \and ([1],n) ↦ [n]
\end{mathpar}
extend to a functor $E∶ \el(T(1)) → 𝐆𝐩𝐡$ by mapping each source or target
map $([0],⋆) → ([1],n)$ to the corresponding map $[0] → [n]$.
This functor may be visualised as
\begin{center}
  \diag(1,1.5){%
    {[0]} \& {[1]} \& … \& {[n]} \& … \\
    \& \& {[0]}\rlap{.} %
  }{%
    (m-2-3) edge[bend left=20,labelon={s₀}] (m-1-1) %
    (m-2-3) edge[bend left=10,labelon={t₀}] (m-1-1) %
    (m-2-3) edge[bend left=20,labelon={s₁}] (m-1-2) %
    (m-2-3) edge[bend right=10,labelon={t₁}] (m-1-2) %
    (m-2-3) edge[bend right=10,labelon={sₙ}] (m-1-4) %
    (m-2-3) edge[bend left=20,labelon={tₙ}] (m-1-4) %
  }
\end{center}
The promised expression of $T$ in terms of $T(1)$ and $E$ is: 
\[T(G)(c) ≅ ∑_{o ∈ T(1)(c)} 𝐆𝐩𝐡(E(c,o),G).\]
\end{exa}
\end{full}

 \begin{defi}\label{def:familial}
   A functor $F∶ 𝒜 → \psh$ to some presheaf category is \emph{familial}
   iff we have a natural isomorphism
   \[F(X)(c) ≅ ∑_{o ∈ O(c)} 𝒜(E(c,o),X)\rlap{,}\]
   for some presheaf $O ∈ \psh$ and functor $E∶ \el(O) → 𝒜$.  The
   presheaf $O$ is called the presheaf of \emph{operations}, or the
   \emph{spectrum}~\cite{Diers1978Spectres} of $F$, while $E$ is
   called the \emph{arity}, or \emph{exponent} functor.
 \end{defi}
 \begin{rem}
   This definition is a bit elliptic, so let us make functoriality
   more explicit.
   \begin{itemize}
   \item Functoriality in $X$ is by post-composition.
   \item For functoriality in $c$, for any $f∶ c → d$ in $ℂ$
     and $o ∈ O(d)$, letting $o' = o · f$, we get a morphism
     $f ↾ o∶ (c,o') → (d,o)$ in $\el(O)$, hence a morphism
     $E(f↾o)∶ E (c,o') → E(d,o)$. Precomposition by this morphisms
     yields a map \[𝒜(E(d,o),X) → 𝒜(E(c,o'),X).\] Postcomposing with
     the obvious coproduct injections, and cotupling, we get the desired map
     \[∑_{o ∈ O(d)} 𝒜(E(d,o),X) → ∑_{o' ∈ O(c)} 𝒜(E(c,o'),X).\]
   \end{itemize}
 \end{rem}
 \begin{rem}\label{remark:O}
   If $𝒜$ has a terminal object, we always have $O ≅ F(1)$.
 \end{rem}
 
 \begin{exa}
   Let us show that the endofunctor $Σ₀∶ \psh[ℂ₀] → \psh[ℂ₀]$
   from~§\ref{s:exsyntax} for $λ$-calculus is familial, where we
   recall that $ℂ₀ = \op{𝔽}$. Indeed, we then have
   \[
   \begin{array}[t]{rcll}
     Σ₀(X)(n) & = & X(n+1) + X(n)² \\
              & ≅ & \psh[ℂ₀](𝐲_{n+1},X) + \psh[ℂ₀](2·𝐲ₙ,X) \\
   \end{array}\]
    Thus, we choose:
    \begin{center}
    $\begin{array}[t]{rcl}
     O(n) & = &  \{ \mathrm{abs}, \mathrm{app}\}
   \end{array}$
   \hfil
   $\begin{array}[t]{rcl}
     E(n, \mathrm{abs}) & = & 𝐲_{n+1} \\
     E(n, \mathrm{app}) & = & 2·𝐲ₙ\rlap{.}
   \end{array}$
 \end{center}
 These definitions can be straightforwardly upgraded to functors
 $O∈\psh[ℂ₀]$ and $E ∶ \el(O) → \psh[ℂ₀]$, and we get the desired
 isomorphism.
 \end{exa}

 \begin{exa}\label{ex:cbn:familial}
   Let us now show that the functor
   $Σ₁^\source∶ Σ₀\Trans → \psh[ℂ₁]/Δ_{\source}$ for call-by-name
   $λ$-calculus is familial.  By definition, it maps any $X$ to the
   set-map $X₀(1) + Aᵦ(X) → Σ₀(X₀)(0)$ defined in
   Example~\ref{ex:Sigmaun}.  Let us transfer this across the
   isomorphism $\psh[ℂ₁]/Δ_{\source} ≅ \psh[ℂ_\source]$ (recalling
   Remark~\ref{rk:Csigma}), and show that $Σ₁^\source$ may be
   expressed as in Definition~\ref{def:familial}, first for
   $c ∈ 𝔽$ and then for $c = ⇓$.

   For $c = n ∈ 𝔽$, we almost may proceed as for $Σ₀$, except that the
   domain category has changed (from $\psh[ℂ₀]$ to $Σ₀\Trans$). But
   recalling that $ℒ∶ 𝐂 → Σ₀\Trans$ denotes the left adjoint to the
   forgetful functor $𝒰$, we have
   \[\begin{array}{rcll}
       Σ₁^\source(X)(n) & = & Σ₀((𝒰(X))₀)(n) \\
              & ≅ & 𝐂(𝐲_{n+1},𝒰(X)) + 𝐂(2·𝐲ₙ,𝒰(X)) \\
              & ≅ & Σ₀\Trans(ℒ(𝐲_{n+1}),X) + Σ₀\Trans(ℒ(2 · 𝐲ₙ),X)\rlap{,}
     \end{array}\]
     so we may (partially) define
     \begin{align}
       \label{eq:Sigma1-lc}
       \begin{aligned}
         O(n) & = \{\mathrm{abs}, \mathrm{app}\}
         &
         E(n, \mathrm{abs}) & = ℒ(𝐲_{n+1}) \\
         &&
         E(n, \mathrm{app}) & = ℒ(2·𝐲ₙ)\rlap{.}
       \end{aligned}
     \end{align}
     
     Now, for $c = ⇓$, remembering from
     Notation~\ref{not:Downarrowsource} that we call ${⇓} ∈ ℂ_\source$
     the unique object of $ℂ_\source$ coming from $ℂ₁ = 1$, on
     transitions, we have:
     \begin{align*}
       Σ₁^\source(X)(⇓) & = X₀(1) + Aᵦ(X) \\
                & ≅ 𝐂(𝐲_1,𝒰(X)) + Aᵦ(X) \\
                & ≅ Σ₀\Trans(ℒ(𝐲_1),X) + Aᵦ(X) \rlap{.}
     \end{align*}
     We thus need to find $Eᵦ$ such that $Aᵦ(X)≅Σ₀\Trans(Eᵦ,X)$, and
     then we would complete equations~\eqref{eq:Sigma1-lc} with:
     \hfil 
     \begin{math}
       \begin{array}[t]{rcl}
         O(⇓) & = & \{ λ\text{-}\mathrm{val}, β\text{-}\mathrm{red} \} 
       \end{array}
     \end{math}
     \hfil
     \begin{math}
       \begin{array}[t]{rcl}
         E(⇓, λ\text{-}\mathrm{val}) & = & ℒ(𝐲₁) \\
         E(⇓, β\text{-}\mathrm{red}) & = & Eᵦ\rlap{.}
       \end{array}
     \end{math}
     \begin{defi}\label{def:chibar}
       The morphism $\bar{χ}∶ ℒ(𝐲₀) → ℒ(𝐲₁+𝐲₀)$ is defined as follows.
       \begin{itemize}
       \item We start from $k₁⦇k₀⦈ ∈ 𝒰 (ℒ (𝐲₁+𝐲₀))(0)$, where we
         recall from the rules below Proposition~\ref{prop:freelambda}
         that the presheaf $𝒰 (ℒ (𝐲₁+𝐲₀))(0)$ has as states $λ$-terms
         over a closed constant $k₀$, and a unary constant $k₁$.
       \item We then let $\bar{χ}$ denote the mate of the morphism
         $𝐲₀ → 𝒰 (ℒ (𝐲₁+𝐲₀))$ corresponding to $k₁⦇k₀⦈$ by the Yoneda
         lemma.
       \end{itemize}
     \end{defi}
     Let now $Eᵦ$ denote the following pushout,
  \begin{center}
    \Diag(.6,1.4){%
      \pbk{m-2-1}{m-2-3}{m-1-3} %
    }{%
      ℒ(𝐲₀) \& ℒ(𝐲₁+𝐲₀) \& ℒ(𝐲_⇓+𝐲₀) \\
      ℒ(𝐲_⇓) \& \& Eᵦ %
    }{%
      (m-1-1) edge[labela={\bar{χ}}] (m-1-2) %
      edge[labell={ℒ(𝐲_{s_⇓})}] (m-2-1) %
      (m-1-2) edge[labela={ℒ(𝐲_{t_⇓}+𝐲₀)}] (m-1-3) %
      (m-2-1) edge[labelb={in₁}] (m-2-3) %
      (m-1-3) edge[labelr={in₂}] (m-2-3) %
    }%
  \end{center}
  which exists because by Proposition~\ref{prop:SigmaMon-monadic}
  $Σ₀\Trans$ is the category of algebras for a finitary monad on a
  presheaf category, hence a locally finitely presentable category
  by~\cite[Remark in~§2.78]{Adamek}, hence cocomplete.

  Let us now show that $Aᵦ(X) ≅ Σ₀\Trans(Eᵦ, X)$ for any $X$: 
  as $Σ₀\Trans(-,X)$ turns colimits into limits, we have the pullback
  \begin{center}
    \Diag(.6,2){%
      \pullbackk(5pt){m-2-1}{m-1-1}{m-1-3}{draw,-} %
    }{%
      {[Eᵦ,X]} \& \& {[ℒ(𝐲_⇓), X]} \\
      {[ℒ(𝐲_⇓+𝐲₀), X]} \& {[ℒ(𝐲₁ + 𝐲₀), X]} \& {[ℒ(𝐲₀), X]\rlap{,}} %
    }{%
      (m-2-2) edge[labelb={[\bar{χ}, X]}] (m-2-3) %
      (m-1-1) edge[labelr={}] (m-2-1) %
      (m-1-3) edge[labelr={[ℒ(𝐲_{s_⇓}),X]}] (m-2-3) %
      (m-2-1) edge[labelb={[ℒ(𝐲_{t_⇓}+𝐲₀),X]}] (m-2-2) %
      (m-1-1) edge[labelb={}] (m-1-3) %
    }%
  \end{center}
  where $\bar{χ}$ is as in Definition~\ref{def:chibar}, and we
  abbreviate $Σ₀\Trans(-₁,-₂)$ to $[-₁,-₂]$ for readability.  By
  Yoneda, this reduces to
  \begin{center}
    \Diag(.6,1.4){%
      \pullbackk(5pt){m-2-1}{m-1-1}{m-1-3}{draw,-} %
    }{%
      {[Eᵦ,X]} \& \& X(⇓) \\
      X(⇓)×X(0) \& X(1)×X(0) \& X(0)\rlap{,} %
    }{%
      (m-2-1) edge[labelb={X(t_⇓)×X(0)}] (m-2-2) %
      (m-1-1) edge[labelr={}] (m-2-1) %
      (m-1-3) edge[labelr={X({s_⇓})}] (m-2-3) %
      (m-2-2) edge[labelb={(t,u)↦t[u]}] (m-2-3) %
      (m-1-1) edge[labelb={}] (m-1-3) %
    }%
  \end{center}
  which shows that we have $Aᵦ(X) ≅ [Eᵦ,X]$ as desired.

  We have thus defined the actions of the functors
  $O∈ \psh[ℂ_\source]$ and $E∶ \el(O) → Σ₀\Trans$ on objects.  On
  morphisms, the only non-obvious point is the image of
  $s_⇓ ↾ λ\text{-}\mathrm{val}$ and $s_⇓ ↾ β\text{-}\mathrm{red}$.
  The former morphism is mapped to the identity on
  $E(0,\mathrm{abs}) = ℒ(𝐲₁) = E(⇓,λ\text{-}\mathrm{val})$.
  The latter is mapped to the composite
  \begin{equation}
    ℒ(2·𝐲₀) \xto{ℒ(𝐲_{s_⇓}+𝐲₀)} ℒ(𝐲_⇓+𝐲₀) \xto{in₂} Eᵦ\rlap{.}\label{eq:Ebetasource}
  \end{equation}
  This achieves the desired isomorphism
  $Σ₁^\source(X)(c) ≅ ∑_{o ∈ O(c)} Σ₀\Trans(E(c,o),X)$.
  \end{exa}

\subsection{Cellularity}\label{ss:cell}
We now want to exploit familiality to obtain an alternative
characterisation of preservation of functional bisimulations.  The
starting point is the observation that when a functor $F∶ 𝒜 → \psh$ is
familial, say as $F(A)(c) = ∑_{o ∈ O(c)} 𝒜(E(c,o),A)$, then any
morphism of the form $f∶ 𝐲_c → F(A)$, corresponding by Yoneda and
familiality to some pair $(o,φ)$ with $φ∶ E(c,o) → A$, factors as
\[𝐲_c \xto{(o,\id_{E(c,o)})} F(E(c,o)) \xto{F(φ)} F(A)\rlap{.}\]
Furthermore, the first component $(o,\id_{E(c,o)})$ is easily seen to
be \emph{generic}, in the following sense.

\begin{defi}
  Given any functor $F∶ 𝒜 → ℬ$, a morphism $ξ∶ B → F(A)$ is
  \emph{$F$-generic} (or \emph{generic} for short) whenever
  for all $χ$, $f$, and $g$ making the square below (solid) commute,
  \begin{center}
    \diag{%
      B \& F(C) \\
      F(A) \& F(D) %
    }{%
      (m-1-1) edge[labela={χ}] (m-1-2) %
      edge[labell={ξ}] (m-2-1) %
      (m-2-1) edge[labelb={F(f)}] (m-2-2) %
      edge[dashed,labelal={F(k)}] (m-1-2) %
      (m-1-2) edge[labelr={F(g)}] (m-2-2) %
    }
  \end{center}
  there is a unique lifting $k$ (dashed) such that $F(k) ∘ ξ = χ$ and
  $g ∘ k = f$.
\end{defi}

In fact, we have the following important alternative characterisation
of familial functors to presheaf categories.
\begin{thm}\label{thm:characterisation:familial}
  For any functor $F∶ 𝒜 → \psh$ such that $𝒜$ has a terminal
  object, $F$ is familial iff all morphisms $f∶ X → F(A)$ factor as
  \[X \xto{ξ} F(U) \xto{F(φ)} F(A)\rlap{,}\] with $ξ$ generic.
\end{thm}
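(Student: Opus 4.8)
The plan is to prove both implications, the backward one requiring more work. For the \emph{forward} direction, assume $F$ is familial, say $F(A)(c)≅∑_{o∈O(c)}𝒜(E(c,o),A)$. As already noted in the paragraph before the statement, a morphism $𝐲_c→F(A)$ corresponding by Yoneda and familiality to a pair $(o,φ)$ with $φ∶E(c,o)→A$ factors as $𝐲_c\xto{(o,\id_{E(c,o)})}F(E(c,o))\xto{F(φ)}F(A)$, and a short unfolding of the definition of genericity against the familiality isomorphism shows $(o,\id)$ is generic. To handle a general $f∶X→F(A)$, I would write $X$ as the canonical colimit of representables indexed by $\el(X)$: each $x∈X(c)$ determines a pair $(o_x,φ_x)$ with $o_x∈O(c)$ and $φ_x∶E(c,o_x)→A$, and naturality of the familiality isomorphism makes $(c,x)↦o_x$ a morphism $X→O$ (in fact $F(!_A)∘f$, with $!_A∶A→1$). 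I then set $U≔\colim_{(c,x)∈\el(X)}E(c,o_x)$, assemble the $φ_x$ into $φ∶U→A$, and assemble the elements $(o_x,ι_{(c,x)})∈F(U)(c)$ — $ι$ the colimit injections — into $ξ∶X→F(U)$. One checks $F(φ)∘ξ=f$ directly, and that $ξ$ is generic by restricting any lifting problem for $ξ$ along the coprojections $𝐲_c→X$: each restricted problem is the generic one for $(o_x,\id)$, and the resulting liftings $E(c,o_x)→C$ glue — by the uniqueness clause of genericity and naturality — into the required $U→C$. (Equivalently, this is the standard fact that a familial functor is a parametric right adjoint, which always has generic factorisations.)

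For the \emph{backward} direction, assume every $f∶X→F(A)$ admits a generic factorisation. The first ingredient is the (standard) \emph{uniqueness of generic factorisations up to unique isomorphism}: given generic factorisations $(U,ξ,p)$ and $(U',ξ',p')$ of the same $f$, applying genericity of $ξ$ to $ξ'$ and conversely yields mutually inverse comparison maps, with uniqueness forced again by genericity. Next, take $1∈𝒜$ terminal and set the spectrum to be $O≔F(1)$ (recall Remark~\ref{remark:O}). For $c∈ℂ$ and $o∈O(c)≅\psh(𝐲_c,F(1))$, factor $o$ generically as $𝐲_c\xto{ξ_o}F(E_o)\xto{F(!)}F(1)$ — the second leg is necessarily the unique map to $1$ — and set $E(c,o)≔E_o$. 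To upgrade $E$ to a functor $\el(O)→𝒜$ (as in Definition~\ref{def:familial}), for $g∶c→c'$ and $o'∈O(c')$ with $o=o'·g$, I apply genericity of $ξ_o$ to the (non-generic) factorisation $𝐲_c\xto{ξ_{o'}∘𝐲_g}F(E_{o'})\xto{F(!)}F(1)$ of $o$, obtaining the structure map $E(g↾o')∶E_o→E_{o'}$; functoriality follows from the uniqueness clause of genericity. Finally I define $Θ_{A,c}∶∑_{o∈O(c)}𝒜(E(c,o),A)→F(A)(c)$ by $(o,φ)↦F(φ)∘ξ_o$; functoriality of $E$ makes this natural in $c$ and functoriality of $F$ makes it natural in $A$.

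It then remains to check that $Θ$ is a bijection. For \emph{surjectivity}: given $x∶𝐲_c→F(A)$, choose a generic factorisation $x=F(ψ)∘ξ$ with $ψ∶U→A$; since $1$ is terminal, $ξ$ is also a generic factorisation of $F(!_U)∘ξ=:o∈O(c)$, so by uniqueness there is an isomorphism $E(c,o)≅U$ carrying $ξ_o$ to $ξ$, whence $x=Θ_{A,c}(o,ψ∘(\text{iso}))$. For \emph{injectivity}: if $F(φ)∘ξ_o=F(φ')∘ξ_{o'}$, postcomposing with $F(!_A)$ forces $o=o'$; then $(E(c,o),ξ_o,φ)$ and $(E(c,o),ξ_o,φ')$ are generic factorisations of the same morphism, so genericity of $ξ_o$ produces $θ∶E(c,o)→E(c,o)$ with $F(θ)∘ξ_o=ξ_o$ and $φ'∘θ=φ$ — and the crucial point is that, because $1$ is terminal, the side condition "commutes over $1$" is vacuous, so the \emph{unique} endomorphism $θ$ with $F(θ)∘ξ_o=ξ_o$ is the identity, giving $φ=φ'$.

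I expect the main obstacle to be the forward direction: assembling a single coherent generic factorisation for a non-representable domain $X$ out of the pointwise data, and verifying that the glued $ξ$ is genuinely generic — this is where essentially all of the naturality bookkeeping lives (phrasing it via "familial $⟹$ parametric right adjoint" streamlines the exposition but hides the same work). In the backward direction the only delicate step is injectivity of $Θ$, and the terminal-object observation above settles it cleanly.
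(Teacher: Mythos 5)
Your write-up is a genuinely different route from the paper's, which disposes of this theorem by a one-line citation to Garner--Hirschowitz; what you have reconstructed is essentially the standard Diers--Weber argument behind that citation. The backward direction is complete and correct: building $O = F(1)$ and the exponent $E$ from generic factorisations of representable-domain maps, the functoriality of $E$ via the uniqueness clause, and in particular the terminal-object observation that forces the comparison endomorphism $θ$ to be the identity in the injectivity step are all exactly right. The forward direction is also correct for representable $X$, and, granting the existence of the colimit, your gluing argument for genericity of the assembled $ξ$ is the right one (the compatibility of the liftings over morphisms of $\el(X)$ follows from the uniqueness clause, as you say).

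The genuine gap is in the forward direction for non-representable $X$: you set $U ≔ \colim_{(c,x)∈\el(X)} E(c,o_x)$, but the only hypothesis on $𝒜$ is that it has a terminal object, so this colimit need not exist — and this is not repairable bookkeeping, because without it the generic factorisation can genuinely fail to exist. Concretely, take $ℂ = 1$, let $F = 𝒜(E,-)∶ 𝒜 → 𝐒𝐞𝐭$ (familial with $O = 1$ and exponent $E$), and let $f∶ 2 → 𝒜(E,A)$ pick out two maps $φ_1,φ_2$. Unwinding the definition of genericity of a candidate $ξ∶ 2 → 𝒜(E,U)$ against the terminal object $D = 1$ shows that $ξ$ exhibits $U$ as the coproduct $E+E$ with the two selected maps as coprojections; so if $𝒜$ lacks that coproduct, $f$ admits no generic factorisation even though $F$ is familial. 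Hence the forward implication, read literally for all $X$, requires either restricting to representable $X$ (which is where your argument is complete, and where the generic factorisations of the preceding paragraph of the paper live) or adding the hypothesis that $𝒜$ has colimits of shape $\el(X)$ — automatic in all of the paper's applications, where $𝒜$ is locally presentable, but not part of the stated hypotheses. Your own diagnosis that the forward direction is where the work lies is right, but the true obstruction is the existence of this colimit, not the naturality bookkeeping.
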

\begin{proof}
  This is~\cite[Proposition~3.8]{garner:hal-01246365}, using the
  remark just before it.
\end{proof}
\begin{rem}
  The factorisation is essentially unique.
\end{rem}

The characterisation of familiality in terms of generic morphisms
allows us to characterise familial functors which preserve fibrations,
as cellular functors, which we now introduce.

\begin{defi}
  A \emph{category with generating cofibrations} is a category
  $𝒞$ equipped with a set $𝕁$ of morphisms.

  For any such $(𝒞,𝕁)$, as in~§\ref{ss:notation}, we call
  \emph{fibrations} all morphisms in $\wbotright{𝕁}$, and
  \emph{cofibrations} all morphisms in $\wbotrightleft{𝕁}$.
\end{defi}

\begin{exa}
  In $Σ₀\Trans$, with generating cofibrations consisting of all
  morphisms $ℒ(𝐲_{s_L})$, fibrations are precisely functional
  bisimulations, by Proposition~\ref{prop:bisimadj}.
\end{exa}

\begin{exa}
  In $\psh[ℂ_\source]$, with generating cofibrations consisting of all
  morphisms $𝐲_{s_L}$, fibrations are precisely functional
  bisimulations, by Definition~\ref{def:fbisim:Csource}.
\end{exa}

\begin{defi}
  For any categories with generating cofibrations $(𝒜,𝕁)$ and
  $(\psh,𝕂)$, such that $ℂ$ is small and $𝒜$ has a terminal object, a familial functor
  $F∶ 𝒜 → \psh$ is \emph{cellular} iff for all commuting squares
  \begin{equation}
    \diag{%
      C \& D \\
      F(X) \& F(Y)%
    }{%
      (m-1-1) edge[labela={k}] (m-1-2) %
      edge[labell={ξ}] (m-2-1) %
      (m-2-1) edge[labelb={F(δ)}] (m-2-2) %
      (m-1-2) edge[labelr={χ}] (m-2-2) %
    }
    \label{eq:gensquare}
  \end{equation}
  with $k ∈ 𝕂$ and $ξ$ and $χ$ generic,
  $δ$ is a cofibration (i.e., $δ ∈ \wbotrightleft{𝕁}$).
\end{defi}

Before exploiting cellularity as promised, let us briefly pause to
give an equivalent characterisation of cellularity in suitably nice
cases.

\begin{defi}
  Let us consider any familial functor $F∶ 𝒜 → \psh$, say as
  $F(A)(c) = ∑_{o ∈ O(c)} 𝒜(E(c,o),A)$, such that $ℂ$ is small and $𝒜$
  has a terminal object.  Then, for any operation $o ∈ F(1)(d)$ and
  morphism $s∶ c → d$ in $ℂ$, the \emph{$s$-arity} of $o$, or its
  \emph{boundary arity} when $s$ is clear from context, is the
  morphism $E(d,o↾s)∶ E(c,o·s) → E(d,o)$ in $𝒜$.
\end{defi}

\begin{nota}
Let $χₒ$ denote the generic morphism $𝐲_d → F(E(d,o))$
induced by any operation $o ∈ F(1)(d)$.
\end{nota}
By construction, for any $o$ and $s$ as in the definition, if $𝐲ₛ$ is
a generating cofibration (i.e., is in $𝕂$), then the diagram
  \begin{center}
    \diag(.6,2){%
      𝐲_c \& 𝐲_d \\
      F(E(c,o·s)) \& F(E(d,o))%
    }{%
      (m-1-1) edge[labela={𝐲ₛ}] (m-1-2) %
      edge[labell={χ_{o·s}}] (m-2-1) %
      (m-2-1) edge[labelb={F(E(d,o↾s))}] (m-2-2) %
      (m-1-2) edge[labelr={χₒ}] (m-2-2) %
    }
  \end{center}
  is of the form~\eqref{eq:gensquare} by construction.  We thus have
  by definition:
  \begin{prop}\label{prop:karities}
    Let us consider any cellular $F∶ 𝒜 → \psh$ between categories with
    generating cofibrations $(𝒜,𝕁)$ and $(\psh,𝕂)$ such that $ℂ$ is
    small and $𝒜$ has a terminal object.  Then, for any representable
    morphism $𝐲ₛ∶ 𝐲_c → 𝐲_d$ in $𝕂$, the $s$-arity of any operation
    $o ∈ F(1)(d)$ is a cofibration.
  \end{prop}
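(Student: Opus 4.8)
The plan is to read this off directly from the cellularity condition, applied to the square displayed immediately before the statement. First I would recall that the $s$-arity of an operation $o ∈ F(1)(d)$ is, by definition, the morphism $E(d,o↾s)∶ E(c,o·s) → E(d,o)$ in $𝒜$, so the claim is precisely that $E(d,o↾s) ∈ \wbotrightleft{𝕁}$, i.e.\ that it is a cofibration.

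Next I would verify that the square displayed just above the statement, namely the one with top edge $𝐲ₛ∶ 𝐲_c → 𝐲_d$, bottom edge $F(E(d,o↾s))$, left edge $χ_{o·s}∶ 𝐲_c → F(E(c,o·s))$, and right edge $χₒ∶ 𝐲_d → F(E(d,o))$, is genuinely of the form~\eqref{eq:gensquare}. This amounts to three points: (i) the square commutes — via Yoneda and the familial presentation $F(A)(c) = ∑_{o' ∈ O(c)} 𝒜(E(c,o'),A)$, both composites $𝐲_c → F(E(d,o))$ classify the same datum, the pair $(o·s, E(d,o↾s))$, and hence coincide; (ii) the top edge $𝐲ₛ$ lies in $𝕂$, which is exactly our hypothesis on $s$; and (iii) the vertical legs $χ_{o·s}$ and $χₒ$ are generic, which is the content of the observation preceding Theorem~\ref{thm:characterisation:familial} — they are the generic parts of the canonical factorisations of $𝐲_c → F(E(c,o·s))$ and $𝐲_d → F(E(d,o))$ determined respectively by $o·s$ and $o$.

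Finally, cellularity of $F$ applied to this square of shape~\eqref{eq:gensquare}, with $k = 𝐲ₛ ∈ 𝕂$ and with $ξ = χ_{o·s}$ and $χ = χₒ$ both generic, forces $δ = E(d,o↾s)$ to be a cofibration; this is exactly the $s$-arity of $o$, so we are done. I do not anticipate any real obstacle here: the proposition is stated so as to be an immediate unwinding of the definition of cellularity, and the only substantive step is the elementary check in (i)–(iii) that the square is of the form~\eqref{eq:gensquare}, which the discussion preceding the statement already sketches.
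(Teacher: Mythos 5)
Your proposal is correct and matches the paper exactly: the paper states the proposition as an immediate consequence of the observation that the displayed square (with top edge $𝐲ₛ$, vertical legs $χ_{o·s}$ and $χₒ$, and bottom edge $F(E(d,o↾s))$) is of the form~\eqref{eq:gensquare}, so cellularity directly forces the $s$-arity to be a cofibration. Your points (i)--(iii) simply spell out the verification the paper leaves implicit in the phrase ``by construction''.
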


  When all domains and codomains of generating cofibrations $k ∈ 𝕂$
  are representable, say as $𝐲ₛ∶ 𝐲_c → 𝐲_d$, then all generic
  morphisms $χ∶ D → F(Y)$, with $D$ a codomain of some generating
  cofibration, are isomorphic to some $𝐲_d → F(E(d,o))$, for some
  operation $o ∈ F(1)(d)$, and the morphism $δ$ must then be
  isomorphic to $E(d,o↾s)∶ E(c,o·s) → E(d,o)$.  We thus get a partial
  converse to Proposition~\ref{prop:karities}:
\begin{prop}\label{prop:cellular:boundaries}
  For any categories with generating cofibrations $(𝒜,𝕁)$ and
  $(\psh,𝕂)$, such that $ℂ$ is small, $𝒜$ has a terminal object, and
  all domains and codomains of generating cofibrations $k ∈ 𝕂$ are
  representable, a familial functor $F∶ 𝒜 → \psh$ is cellular iff the
  boundary arities of all operations are cofibrations.
\end{prop}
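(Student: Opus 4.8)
The plan is to derive both implications from the generic‑factorisation characterisation of familial functors (Theorem~\ref{thm:characterisation:familial}), keeping in mind two standard facts about the left class $\wbotrightleft{𝕁}$ of a weak factorisation system: it contains all isomorphisms and is closed under composition. The forward implication is then immediate from Proposition~\ref{prop:karities}: under the representability hypothesis, the Yoneda lemma identifies each generating cofibration $k ∈ 𝕂$ with a map $𝐲_s∶ 𝐲_c → 𝐲_d$ for a unique $s∶ c → d$ in $ℂ$, so the boundary arities of operations of $F$ are precisely the $s$-arities considered there, and that proposition says they are cofibrations.

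For the converse, assume every boundary arity is a cofibration and take a square~\eqref{eq:gensquare} with $k ∈ 𝕂$ and $ξ$, $χ$ generic; write $k = 𝐲_s∶ 𝐲_c → 𝐲_d$. The first step is to use familiality to resolve the two generic legs. By Yoneda and familiality, $ξ∶ 𝐲_c → F(X)$ corresponds to a pair $(o', φ)$ with $o' ∈ O(c)$ and $φ∶ E(c,o') → X$, and factors as $𝐲_c \xto{χ_{o'}} F(E(c,o')) \xto{F(φ)} F(X)$ through the canonical generic $χ_{o'}$; likewise $χ = F(ψ) ∘ χ_o$ for some $o ∈ O(d)$ and $ψ∶ E(d,o) → Y$. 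A short lemma — generic factorisations are essentially unique, i.e.\ if $ζ$, $ζ'$ are both generic and $F(h) ∘ ζ = ζ'$ then $h$ is invertible — then forces $φ$ and $ψ$ to be isomorphisms, since $ξ$, $χ$, $χ_{o'}$ and $χ_o$ are all generic. The second ingredient is the canonical commuting square from the discussion preceding Proposition~\ref{prop:karities}, namely $χ_o ∘ 𝐲_s = F(E(d,o↾s)) ∘ χ_{o·s}$, where the $s$-arity $E(d,o↾s)∶ E(c,o·s) → E(d,o)$ of $o$ is a boundary arity.

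Putting these together, commutativity of~\eqref{eq:gensquare} unwinds to $F(δ ∘ φ) ∘ χ_{o'} = F(ψ ∘ E(d,o↾s)) ∘ χ_{o·s}$. Evaluating both sides at $c$ on $\id_c$ — using that $χ_p(c)(\id_c) = (p, \id_{E(c,p)})$ and that $F(g)$ acts on these representing pairs by post‑composition with $g$ in the second coordinate — one reads off an equality of pairs $(o', δ ∘ φ) = (o·s, ψ ∘ E(d,o↾s))$ in $F(Y)(c) = ∑_{p ∈ O(c)} 𝒜(E(c,p),Y)$. Hence $o' = o·s$ and $δ = ψ ∘ E(d,o↾s) ∘ φ^{-1}$: a composite of the boundary arity $E(d,o↾s)$, which is a cofibration by hypothesis, with the isomorphisms $ψ$ and $φ^{-1}$, hence itself a cofibration. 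This is exactly what cellularity demands.

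I expect the only genuinely delicate point to be the essential‑uniqueness lemma for generic factorisations; it is folklore, but must be extracted carefully from the defining lifting‑and‑uniqueness property of generic morphisms (apply the universal property of $ζ'$ to obtain a one‑sided inverse $g$ of $h$, then the uniqueness clause in the universal property of $ζ$ to see that $g ∘ h$ must be the identity). Everything else is bookkeeping with the familial presentation $F(A)(c) ≅ ∑_{o ∈ O(c)} 𝒜(E(c,o),A)$, together with the two closure properties of $\wbotrightleft{𝕁}$.
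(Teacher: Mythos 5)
Your proof is correct and follows essentially the same route as the paper, which only sketches this argument in the paragraph preceding the proposition: the forward direction is Proposition~\ref{prop:karities}, and the converse rests on the fact that, with representable (co)domains, every generic square~\eqref{eq:gensquare} is isomorphic to a boundary-arity square, so that $δ$ differs from $E(d,o{↾}s)$ only by isomorphisms. Your explicit treatment of the essential uniqueness of generic factorisations supplies exactly the detail the paper leaves implicit.
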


Returning to the general case, let us now prove a first
characterisation of preservation of fibrations in terms of
cellularity.
\begin{lem}\label{lem:characfib}
  For any categories with generating cofibrations $(𝒜,𝕁)$ and
  $(\psh,𝕂)$ such that $ℂ$ is small and $𝒜$ has a terminal object, a
  familial functor $F∶ 𝒜 → \psh$ preserves fibrations iff it is
  cellular.
\end{lem}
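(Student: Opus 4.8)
The plan is to prove both implications by hand, using the generic factorisation of arbitrary maps into $F$ supplied by Theorem~\ref{thm:characterisation:familial} (available since $𝒜$ has a terminal object), and chasing lifting problems back and forth between $𝒜$ and $\widehat{ℂ}$ by means of genericity.

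For ``cellular implies preservation of fibrations'', I would start from a fibration $g∶ X → Y$ in $𝒜$ and a lifting problem for $F(g)$ against some $k∶ C → D$ in $𝕂$, with top $u∶ C → F(X)$ and bottom $v∶ D → F(Y)$. First factor $u$ and $v$ generically, as $C \xto{ξ} F(U) \xto{F(φ)} F(X)$ and $D \xto{χ} F(V) \xto{F(ψ)} F(Y)$. The square identity $F(g) ∘ u = v ∘ k$ then rewrites as $F(g ∘ φ) ∘ ξ = F(ψ) ∘ (χ ∘ k)$, so genericity of $ξ$ gives a unique $δ∶ U → V$ with $F(δ) ∘ ξ = χ ∘ k$ and $ψ ∘ δ = g ∘ φ$. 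The first of these equations exhibits a square of the shape of~\eqref{eq:gensquare}, so by cellularity $δ$ is a cofibration; as $g$ is a fibration, $δ \mathrel{⋔} g$, so the commuting square $g ∘ φ = ψ ∘ δ$ has a lift $h∶ V → X$ with $h ∘ δ = φ$ and $g ∘ h = ψ$. A routine chase (using $F(δ) ∘ ξ = χ ∘ k$) then shows that $F(h) ∘ χ∶ D → F(X)$ solves the original lifting problem, hence $F(g)$ is a fibration.

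For the converse, assume $F$ preserves fibrations, fix a square of shape~\eqref{eq:gensquare}, i.e.\ $χ ∘ k = F(δ) ∘ ξ$ with $k ∈ 𝕂$ and $ξ∶ C → F(X)$, $χ∶ D → F(Y)$ generic, and aim to show $δ∶ X → Y$ is a cofibration. Given any fibration $g∶ A → B$ in $𝒜$ and a commuting square $g ∘ a = b ∘ δ$, I would apply $F$ (so $F(g)$ is a fibration), glue it underneath the generic square to obtain a lifting problem for $F(g)$ against $k$, and solve it to get $\ell∶ D → F(A)$ with $\ell ∘ k = F(a) ∘ ξ$ and $F(g) ∘ \ell = F(b) ∘ χ$. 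Genericity of $χ$, applied to the commuting square $F(g) ∘ \ell = F(b) ∘ χ$, produces $h∶ Y → A$ with $F(h) ∘ χ = \ell$ and $g ∘ h = b$; finally $h ∘ δ = a$ because both $h ∘ δ$ and $a$ solve the genericity problem for $ξ$ given by the commuting square with left leg $ξ$, right leg $F(g)$, top $F(a) ∘ ξ$ and bottom $F(b ∘ δ)$ — indeed $g ∘ a = b ∘ δ = g ∘ (h ∘ δ)$ and $F(h ∘ δ) ∘ ξ = F(h) ∘ χ ∘ k = \ell ∘ k = F(a) ∘ ξ$ — so they coincide by uniqueness. Hence $h$ is the required lift and $δ$ is a cofibration, so $F$ is cellular.

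The delicate point is exactly this last identification, $h ∘ δ = a$: the equation $F(h ∘ δ) ∘ ξ = F(a) ∘ ξ$ on its own is too weak, and the trick is to feed genericity of $ξ$ a square whose right leg is $F(g)$ rather than an identity, so that the accompanying constraint $g ∘ (h ∘ δ) = b ∘ δ = g ∘ a$ pins the lift down. Everything else is routine manipulation of lifting problems and generic factorisations, together with the definitions of fibration and cofibration as $\wbotright{𝕁}$ and $\wbotrightleft{𝕁}$ in $𝒜$ (resp.\ $\wbotright{𝕂}$ and $\wbotrightleft{𝕂}$ in $\widehat{ℂ}$) and the inclusion $𝕂 ⊆ \wbotrightleft{𝕂}$.
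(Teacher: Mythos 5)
Your proof is correct and follows essentially the same route as the paper's: generic factorisation of both horizontal maps, extraction of $δ$ by genericity, cellularity plus lifting against the fibration for one direction; pasting with the generic square, preservation of fibrations, and genericity of $χ$ then $ξ$ for the converse. Your careful treatment of the identification $h ∘ δ = a$ — feeding genericity of $ξ$ a square whose right leg is $F(g)$ so that uniqueness pins down the lift — is exactly the step the paper compresses into ``by genericness of $χ$ (and then $ξ$)''.
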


\begin{full}
\begin{proof}
  Let us first prove the `if' direction.  We must show that for any
  $f∶ A → B$ in $𝕁^⋔$, $F(f)$ is in $𝕂^⋔$, i.e., that any commuting
  square
\begin{equation}
  \diag{%
    C \& F(A) \\
    D \& F(B) 
  }{%
    (m-1-1) edge[labela={u}] (m-1-2) %
    edge[labell={k}] (m-2-1) %
    (m-2-1) edge[labelb={v}] (m-2-2) %
    (m-1-2) edge[labelr={F(f)}] (m-2-2) %
  }
  \label{eq:myliftingpb}
\end{equation}
with $k ∈ 𝕂$ and $f ∈ 𝕁^⋔$ admits a lifting.  But taking generic
factorisations of both horizontal morphisms and using genericness, any
such square factors as the solid part of
\begin{center}
  \diag{%
    C \& F(X) \& F(A) \\
    D \& F(Y) \& F(B)\rlap{.}
  }{%
    (m-1-1) edge[labela={ξ}] (m-1-2) %
    edge[labell={k}] (m-2-1) %
    (m-2-1) edge[labelb={χ}] (m-2-2) %
    (m-1-2) edge[labelr={F(δ)}] (m-2-2) %
    (m-1-2) edge[labela={F(φ)}] (m-1-3) %
    (m-2-2) edge[labelb={F(ψ)}] (m-2-3) %
    edge[dashed,labelon={F(l)}] (m-1-3) %
    (m-1-3) edge[labelr={F(f)}] (m-2-3) %
  }
\end{center}
By cellularity, we have $δ ∈ \wbotrightleft{𝕁}$. We thus find a lifting
$l$ as shown, which makes $F(l) ∘ χ$ into a lifting for the original
square.

Conversely, let us assume that $F$ preserves fibrations, and consider
any square of the form~\eqref{eq:gensquare} with $k ∈ 𝕂$. We need to
show $δ ∈ \wbotrightleft{𝕁}$. But for any commuting square as below
left
  \begin{center}
    \diag{%
      X \& A \\
      Y \& B %
    }{%
      (m-1-1) edge[labela={φ}] (m-1-2) %
      edge[labell={δ}] (m-2-1) %
      (m-2-1) edge[labelb={ψ}] (m-2-2) %
      (m-1-2) edge[labelr={f}] (m-2-2) %
    }
    \hfil
    \diag{%
      C \& F(X) \& F(A) \\
      D \& F(Y) \& F(B)
    }{%
      (m-1-1) edge[labela={ξ}] (m-1-2) %
      edge[labell={k}] (m-2-1) %
      (m-2-1) edge[labelb={χ}] (m-2-2) %
      (m-1-2) edge[labellat={F(δ)}{.3}] (m-2-2) %
      (m-1-2) edge[labela={F(φ)}] (m-1-3) %
      (m-2-2) edge[labelb={F(ψ)}] (m-2-3) %
      edge[dashed,fore,labelon={F(l)}] (m-1-3) %
      (m-1-3) edge[labelr={F(f)}] (m-2-3) %
      (m-2-1) edge[dashed,fore,labelonat={γ}{.3}] (m-1-3)
    }
  \end{center}
  with $f ∈ 𝕁^⋔$, by pasting this square with our generic
  square~\eqref{eq:gensquare}, we obtain the solid part above right.
  Finally, because $F$ preserves fibrations, we find a lifting $γ$ as
  shown, which by genericness of $χ$ (and then $ξ$) yields the desired
  lifting $l$.
\end{proof}
\end{full}

\begin{cor}\label{cor:characfib2}
  For any categories with generating cofibrations $(𝒜,𝕁)$ and
  $(\psh,𝕂)$, such that $ℂ$ is small, $𝒜$ has a terminal object, and
  all domains and codomains of generating cofibrations $k ∈ 𝕂$ are
  representable, a familial functor $F∶ 𝒜 → \psh$ preserves fibrations
  iff the boundary arities of all operations are cofibrations.
  
\end{cor}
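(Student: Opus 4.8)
The plan is to obtain the statement by simply composing the two characterisations already established just above. Lemma~\ref{lem:characfib} applies under the hypotheses that $ℂ$ is small and $𝒜$ has a terminal object, and states that a familial functor $F∶ 𝒜 → \psh$ preserves fibrations if and only if it is cellular. Proposition~\ref{prop:cellular:boundaries} applies under the same hypotheses together with the extra assumption that all domains and codomains of generating cofibrations $k ∈ 𝕂$ are representable, and states that such an $F$ is cellular if and only if the boundary arities of all its operations are cofibrations.

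Concretely, I would first note that the hypotheses of the corollary (namely $ℂ$ small, $𝒜$ with a terminal object, and all $k ∈ 𝕂$ with representable domain and codomain) are exactly the conjunction of the hypotheses of those two results. Hence both biconditionals are available, and chaining them yields: $F$ preserves fibrations $\iff$ $F$ is cellular $\iff$ the boundary arities of all operations are cofibrations. This is precisely the claimed equivalence.

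There is essentially no obstacle here: the corollary is a bookkeeping consequence of Lemma~\ref{lem:characfib} and Proposition~\ref{prop:cellular:boundaries}, and the only thing to check is the compatibility of the hypothesis lists, which is immediate. If anything deserves a word, it is merely that the notion of ``cofibration'' ($\wbotrightleft{𝕁}$) used as the middle term is the same in both statements, so that the composition makes sense; this is clear from the fixed choice of generating cofibrations $(𝒜,𝕁)$.
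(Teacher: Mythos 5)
Your proposal is correct and matches the paper's own proof exactly: the corollary is obtained by chaining Lemma~\ref{lem:characfib} (preservation of fibrations iff cellular) with Proposition~\ref{prop:cellular:boundaries} (cellular iff all boundary arities are cofibrations), after checking that the hypothesis lists are compatible. Nothing more is needed.
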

\begin{proof}
  By Proposition~\ref{prop:cellular:boundaries} and
  Lemma~\ref{lem:characfib}.
\end{proof}

Here is the announced characterisation of preservation of functional
bisimulations, which follows directly from Lemma~\ref{lem:characfib}
and Corollary~\ref{cor:characfib2}.
\begin{cor}\label{cor:cellularenough}
  In any Howe context, for any operational semantics signature
  $(Σ₀,Σ₁)$, let $Σ₁^\source∶ Σ₀\Trans → \psh[ℂ_\source]$ be familial
  with exponent $E∶ \el(Σ₁^\source(1)) → Σ₀\Trans$. Then the following are equivalent:
  \begin{enumerati}
  \item \label{item:pf} $Σ₁^\source$ preserves functional bisimulations;
  \item \label{item:cell} $Σ₁^\source$ is cellular;
  \item \label{item:explicit} the boundary arities of all operations
    are cofibrations, i.e., for all $L ∈ ℂ₁$ and
    $o ∈ Σ₁^\source(1)(L)$, the morphism
  \[E(s_L ↾ o)∶ E(\source(L),o·s_L) → E(L,o)\] is a cofibration.
\end{enumerati}
\end{cor}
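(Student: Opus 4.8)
The plan is to instantiate the machinery of §\ref{ss:cell} with the two ``categories with generating cofibrations'' that are already in play, and then simply read off the statement from Lemma~\ref{lem:characfib} and Corollary~\ref{cor:characfib2}. Concretely, I would equip $Σ₀\Trans$ with the set $𝕁 = \{ℒ(𝐲_{s_L}) : L ∈ ℂ₁\}$ and $\psh[ℂ_\source]$ with the set $𝕂 = \{𝐲_{s_L} : L ∈ ℂ₁\}$. By Proposition~\ref{prop:bisimadj} the fibrations for $𝕁$ are exactly the functional bisimulations in $Σ₀\Trans$, and by Definition~\ref{def:fbisim:Csource} the fibrations for $𝕂$ are exactly the functional bisimulations in $\psh[ℂ_\source]$. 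The two side conditions of Lemma~\ref{lem:characfib} hold: $ℂ_\source$ is small, being a lax colimit in $𝐂𝐚𝐭$ of the small categories $ℂ₀$ and $ℂ₁$ (Remark~\ref{rk:Csigma}); and $Σ₀\Trans$ has a terminal object, since by Proposition~\ref{prop:SigmaMon-monadic} it is the category of algebras for a finitary monad on a presheaf category, hence locally finitely presentable, hence complete.

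With this setup in place, Lemma~\ref{lem:characfib} applied to the familial functor $Σ₁^\source$ gives that $Σ₁^\source$ preserves fibrations --- i.e.\ functional bisimulations --- if and only if it is cellular. Combining this with Proposition~\ref{prop:ling-mischief}, which says that the dynamic signature $Σ₁$ preserves functional bisimulations exactly when $Σ₁^\source$ does, yields the equivalence of \ref{item:pf} and \ref{item:cell}. For \ref{item:explicit}, I would observe that every map of $𝕂$ has the form $𝐲_{s_L}∶ 𝐲_{\source(L)} → 𝐲_L$, so all domains and codomains of generating cofibrations in $𝕂$ are representable; Corollary~\ref{cor:characfib2} then applies and states that $Σ₁^\source$ preserves fibrations iff the boundary arities of all operations are cofibrations. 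Unwinding the definition of the $s_L$-arity of an operation $o ∈ Σ₁^\source(1)(L)$, this boundary arity is precisely the morphism $E(s_L ↾ o)∶ E(\source(L), o·s_L) → E(L,o)$ appearing in~\ref{item:explicit}. Chaining the two equivalences completes the proof.

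I do not expect a genuine obstacle here: the content is entirely in Lemma~\ref{lem:characfib} and Corollary~\ref{cor:characfib2}, and the remaining work is bookkeeping --- smallness of $ℂ_\source$, completeness of $Σ₀\Trans$, representability of the domains and codomains in $𝕂$, and the identification of the boundary arity with $E(s_L ↾ o)$. The one subtlety worth flagging is that ``$Σ₁$ preserves functional bisimulations'' refers to the notion of Definition~\ref{def:Sigma1-preserve-bisim}, which must be routed through Proposition~\ref{prop:ling-mischief} before the fibration-theoretic results of §\ref{ss:cell} can be brought to bear.
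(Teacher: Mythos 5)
Your proposal is correct and follows exactly the paper's route: the paper also equips $Σ₀\Trans$ and $\psh[ℂ_\source]$ with the generating cofibrations $\{ℒ(𝐲_{s_L})\}$ and $\{𝐲_{s_L}\}$ respectively, and derives the corollary directly from Lemma~\ref{lem:characfib} and Corollary~\ref{cor:characfib2}, with the same bookkeeping checks (smallness of $ℂ_\source$, terminal object of $Σ₀\Trans$ via local presentability, representability of the domains and codomains in $𝕂$). The only remark is that item~\ref{item:pf} of the corollary is already phrased in terms of $Σ₁^\source$, so the detour through Proposition~\ref{prop:ling-mischief} is harmless but not strictly needed.
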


This characterisation of preservation of functional bisimulations in
terms of cofibrations is easier to prove in practice, since
cofibrations in turn admit the following well-known characterisation.

\begin{defi}
  Consider any set $𝕁$ of maps in a given category.
  \begin{itemize}
  \item A \emph{basic relative $𝕁$-cell complex} is any morphism $f$ obtained by pushing out some morphism from $𝕁$ along any morphism\shortfull{.}{, as in
  \begin{center}
    \Diag{%
      \stdpo %
    }{%
      A \& B \\
      C \& D\rlap{.} %
    }{%
      (m-1-1) edge[labela={j ∈ 𝕁}] (m-1-2) %
      edge[labell={}] (m-2-1) %
      (m-2-1) edge[labelb={f}] (m-2-2) %
      (m-1-2) edge[labelr={}] (m-2-2) %
    }
  \end{center}
}
\item A \emph{relative $𝕁$-cell
    complex} is a (potentially transfinite) composite of basic relative $𝕁$-cell complexes.
  \end{itemize}
\end{defi}

  \begin{propC}[{\cite[Lemma~2.1.10]{Hovey}}]\label{prop:cellscof}
    For any set $𝕁$ of maps in a locally presentable category, all
    relative $𝕁$-cell complexes are cofibrations in the generated
    weak factorisation system $(\wbotrightleft{𝕁},\wbotright{𝕁})$.
  \end{propC}

\section{Applications}  \label{s:applications}
In this section, we apply our results to show that substitution-closed
bisimilarity is a congruence in concrete examples.
  
\subsection{Call-by-name}
We have already specified the syntax (§\ref{s:exsyntax}) and
transitions (§\ref{ss:opsem}) of call-by-name $λ$-calculus.  We have
also seen (Example~\ref{ex:cbn:familial}) that the induced functor
$Σ₁^\source∶ Σ₀\Trans → \psh[ℂ₁]/Δ_{\source}$ is familial.  By
Theorem~\ref{thm:main}, Corollary~\ref{cor:cellularenough}, and
Proposition~\ref{prop:cellscof}, congruence of substitution-closed
bisimilarity will follow if we prove that the boundary arities
$E(s_⇓ ↾ o)$ corresponding to both transition rules are relative cell
complexes.

The boundary arity $E(s_⇓ ↾ λ\text{-}\mathrm{val})$ is an identity,
hence trivially a relative cell complex.

For the $β$-rule, it is not entirely trivial that the
boundary arity
$E(s_⇓ ↾ β\text{-}\mathrm{red})$ of the second
transition rule, defined as the composite~\eqref{eq:Ebetasource}, is a
cofibration.  But, as (essentially) noted in~\cite[Example~5.21]{BHL},
it is a relative cell complex by construction, as both components are
pushouts of $ℒ(𝐲_{s_⇓})$, as should be clear from the following
diagram, where again $\bar{χ}$ is as in Definition~\ref{def:chibar}.
    \begin{center}
      \diagramme[diagorigins={1}{4}]{}{%
        \pullbackk(6pt){m-2-2}{m-2-3}{m-1-3}{draw,-} %
        \path[->,draw]
        (m-1-2) edge[labell={ℒ(𝐲_{t_⇓}+𝐲₀) ∘ \bar{χ}}] (m-2-2) %
        edge[labela={ℒ(𝐲_{s_⇓})}] (m-1-3) %
        (m-1-3) edge[labela={}] (m-2-3) %
      ; %
    }{%
      \&  ℒ(𝐲₀)   \& ℒ(𝐲_⇓)  \\
      ℒ(𝐲₀+𝐲₀) \& ℒ(𝐲_⇓+𝐲₀) \& Eᵦ
    }{%
      (m-2-1) edge[labelb={ℒ(𝐲_{s_⇓}+𝐲₀)}] (m-2-2) %
      (m-2-2) edge[labela={}] (m-2-3) %
    }
    \end{center}

\subsection{Call-by-value}\label{ss:cbv}
Let us now treat the call-by-value variant of untyped $λ$-calculus,
essentially as in~\cite{OngApplicative,Pitts:howe}.  In this setting,
it is important to distinguish substitution by values and by
terms. Indeed, letting $I$ denote the identity $λx.x$, the terms
$e = λx.I$ and $e' = λx.((λy.I)\ x)$ are contextually equivalent,
since during evaluation in any context, the bound variable $x$ will only
be replaced with a value. However, if one defines applicative
bisimulation naively, i.e., requiring it to be closed under arbitrary
substitution, then $e$ and $e'$ are not bisimilar, as $I$ is not
bisimilar to $(λy.I)\ Ω$ -- which diverges. We thus want to restrict
to value substitution -- which our treatment in~\cite{BHL} overlooks!
    
Here is one way of doing this.  The idea is to have two sorts, one for
values and the other for general terms.  We thus would like
$\psh[ℂ₀]$ to be equivalent to the category $[𝐒𝐞𝐭²,𝐒𝐞𝐭²]_f$
of finitary endofunctors on $𝐒𝐞𝐭²$. But
    \[[𝐒𝐞𝐭²,𝐒𝐞𝐭²]_f ≃ [ 2 · 𝐒𝐞𝐭_f², 𝐒𝐞𝐭 ] ≃ [ 2 · 𝔽², 𝐒𝐞𝐭 ],\]
    so we take $ℂ₀$ to be the opposite of $2 · 𝔽²$.  By the
    equivalence, composition of finitary endofunctors equips
    $\psh[ℂ₀]$ with monoidal structure, and we denote the two sorts by
    $𝐩$ and $𝐯$, respectively for ``programs'' and ``values''.  In the
    presheaf point of view, we denote by $(m,n)_𝐩$ and $(m,n)_𝐯$,
    respectively, objects in the first and second term of
    $2×𝔽^2 = 𝔽² + 𝔽²$, so that $X(m,n)(c)$ in the finitary endofunctor
    world corresponds to $X(m,n)_c$ in the presheaf world, for
    $c ∈ \ens{𝐯,𝐩}$.  We think of $X(m,n)_𝐯$ (resp.\ $X(m,n)_𝐩$) as a
    set of values (resp.\ programs) with $m$ potential free program
    variables, and $n$ potential free value variables.

    Since abstraction should bind a value variable, the syntax
    \[
    \begin{array}{rcll}
      e,f & ::= & e\ f ｜ v \\
      v &  ::= & x ｜ λx.e
    \end{array}\]
    is thus
    specified by
    \begin{center}
      $Σ₀(X)(m, n)(𝐩) = X(m, n)(𝐩)² + X(m, n)(𝐯)$ \hfil and \hfil
      $Σ₀(X)(m, n)(𝐯) = X(m,n+1)(𝐩)$.
    \end{center}
    A model $X$ of the syntax should thus in particular feature
    operations
    \begin{center}
      $\mathrm{abs}_{m,n}∶ X(m,n+1)(𝐩) → X(m, n)(𝐯)$ \hfil
      $\mathrm{app}_{m,n}∶ X(m, n)(𝐩)² → X(m, n)(𝐩)$ \hfil
      $\mathrm{val}_{m,n}∶ X(m, n)(𝐯) → X(m, n)(𝐩)$,      
    \end{center}
    where the last operation requires that values should embed into
    programs.

    \begin{nota}
      We implicitly view any $X ∈ [2·𝔽²,𝐒𝐞𝐭]$ as (some fixed, global
      choice of) the corresponding functor $𝐒𝐞𝐭² → 𝐒𝐞𝐭²$.  In
      particular, we write $X(K)$, for any $K ∈ 𝐒𝐞𝐭²$. Accordingly, we
      view pairs $(m,n)$ as objects of $𝐒𝐞𝐭²$. E.g., in this sense,
      $(m,n+1)$ is isomorphic to $(m, n) + 𝐲_𝐯$, where $𝐲_𝐯$ denotes
      the Yoneda embedding of $𝐯$ along $2 → 𝔽² ↪ 𝐒𝐞𝐭²$.  So the arity
      for abstraction in fact yields an operation
      $X(K + 𝐲_𝐯)(𝐩) → X(K)(𝐯)$.
  \end{nota}
    Denoting composition of finitary endofunctors by $⊗$, we define a
    pointed strength for $Σ₀$ as follows. For any $X ∈ \psh[ℂ₀]$,
    $Y ∈ I/\psh[ℂ₀]$, and $K ∈ 𝔽²$:
    \begin{itemize}
    \item at $𝐩$, we have
      \[(Σ₀(X)⊗Y)(K)(𝐩) = X(Y(K))(𝐩)² + X(Y(K))(𝐯) =
      Σ₀(X⊗Y)(K)(𝐩)\rlap{,}\]
      so we take $st_{X,Y,K,𝐩}$ to be the identity;
    \item at $𝐯$, we have 
      \begin{center}
        $(Σ₀(X)⊗Y)(K)(𝐯) = X(Y(K) + 𝐲_𝐯)(𝐩)$ \hfil and \hfil
        $Σ₀(X⊗Y)(K)(𝐯) = X(Y(K + 𝐲_𝐯))(𝐩)\rlap{,}$
      \end{center}
      so we define $st_{X,Y,K,𝐩}$ by applying
      $X(-)(𝐩)$ to the copairing of $Y(K) → Y(K+𝐲_𝐯)$ and
      \[𝐲_𝐯 → I(K+𝐲_𝐯) → Y(K+𝐲_𝐯).\]
    \end{itemize}

    We then specify transitions, which we first recall:
    \begin{mathpar}
      \inferrule{ }{λx₁.e ⇓ e} \and
      \inferrule{e₁ ⇓ e'₁ \\ e₂ ⇓ e'₂ \\ e'₁[λx₁.e'₂] ⇓ e₃}{e₁\ e₂ ⇓ e₃}~·
    \end{mathpar}
    \begin{rem}
      Here, we adopt the same convention as in the categorical
      picture, where terms are implicitly considered as indexed over
      sets, say $\ens{x₁,…,xₙ}$, of potential free variables.
      Furthermore, the evaluation relation again relates closed terms
      to terms over one potential free variable $x₁$.
  \end{rem}
    In order to specify such transitions, we take $ℂ₁ = 1$, as in the
    call-by-name case, with $\source$ and $\but$ mapping the unique
    object to $(0,0)_𝐩$ and $(0,1)_𝐩$, respectively: transitions will
    relate a closed program to a program with one value variable
    (morally the body of the obtained abstraction).  For any
    $X ∈ 𝐂 = \psh[ℂ₁]/Δ ≅ 𝐒𝐞𝐭/Δ$, We take $Σ₁^F(X)$ to be the
    coproduct \[Σ₁^F(X) = X₀(0,1) + A_{βᵥ}(X)\rlap{,}\] where
    $A_{βᵥ}(X)$ denotes the set of valid premises for the second rule,
    i.e., triples $(r₁,r₂,r₃) ∈ X₁$ such that
    \[r₃·s_⇓ = (r₁·t_⇓)[λ(r₂·t_⇓)].\]

    Familiality of the induced functor $Σ₁^\source∶ Σ₀\Trans → \psh[ℂ₁]/Δ_{\source}$ follows
    similarly to the call-by-name case, from the fact that $A_{βᵥ}(X)$ is isomorphic
    to $[E_{βᵥ},X]$, where $E_{βᵥ}$ denotes the following pushout.
    \begin{center}
      \diagramme[diagorigins={1}{4}]{}{%
        \pullbackk(6pt){m-2-1}{m-2-2}{m-1-2}{draw,-} %
        \path[->,draw]
        (m-1-1) edge[labell={ξ}] (m-2-1) %
        edge[labela={ℒ(𝐲_{s_⇓})}] (m-1-2) %
        (m-1-2) edge[labela={}] (m-2-2) %
      ; %
    }{%
      ℒ(𝐲₀)   \& ℒ(𝐲_⇓)  \\
      ℒ(𝐲_⇓+𝐲_⇓) \& E_{βᵥ}
    }{%
      (m-2-1) edge[labela={}] (m-2-2) %
    }
    \end{center}
    Here, $ξ$ corresponds by adjunction and Yoneda to the closed term
    \[(r₁·t_⇓)[λ(r₂·t_⇓)] ∈ 𝒰 (ℒ(𝐲_⇓+𝐲_⇓))(0)\rlap{,}\] where $r₁$ and $r₂$
    denote the two transition constants generating $𝐲_⇓+𝐲_⇓$.

    By Theorem~\ref{thm:main}, Corollary~\ref{cor:cellularenough}, and
    Proposition~\ref{prop:cellscof}, congruence of substitution-closed
    bisimilarity will follow if we can prove that boundary arities
    $E(s_L ↾ o)$ of both transition rules are relative
    cell complexes. This is again trivial for the first rule, while
    for the second one we obtain the following morphism, which is a
    relative cell complex by construction.
    \begin{center}
      \diagramme[diagorigins={1}{4}]{}{%
        \pullbackk(6pt){m-2-2}{m-2-3}{m-1-3}{draw,-} %
        \path[->,draw]
        (m-1-2) edge[labell={ξ}] (m-2-2) %
        edge[labela={ℒ(𝐲_{s_⇓})}] (m-1-3) %
        (m-1-3) edge[labela={}] (m-2-3) %
      ; %
    }{%
      \&  ℒ(𝐲₀)   \& ℒ(𝐲_⇓)  \\
      ℒ(𝐲₀+𝐲₀) \& ℒ(𝐲_⇓+𝐲_⇓) \& E_{βᵥ}
    }{%
      (m-2-1) edge[labelb={ℒ(𝐲_{s_⇓}+𝐲_{s_⇓})}] (m-2-2) %
      (m-2-2) edge[labela={}] (m-2-3) %
    }
    \end{center}

\subsection{Erratic non-determinism}
In this section, we consider the non-deterministic $λ$-calculus
investigated in~\cite[§7]{DavideLazy}.  Its syntax is that of pure
$λ$-calculus, augmented with a unary operation $⊎$, and its reduction
rules~\cite[on pages 125 and 142]{DavideLazy} are
\begin{mathpar}
      \inferrule{ }{{{⊎}e} ⇒ e}  \and
      \inferrule{ }{{{⊎}e} ⇒ Ω} \and
      \inferrule{ }{(λx.e)\ f ⇒ e[f]} \\
      \inferrule{e₁ ⇒ e₃}{e₁\ e₂ ⇒ e₃\ e₂} \and
      \inferrule{ }{e ⇒ e} \and
      \inferrule{e₁ ⇒ e₂ \\ e₂ ⇒ e₃}{e₁ ⇒ e₃},
    \end{mathpar}
where $Ω$ denotes any diverging term.

Let us start by giving a big-step presentation of this language.  We
consider a labelled transition relation, i.e., we have two transition
relations $⇓_λ$ (between a closed term and a term with one free
variable, as before) and $⇓_τ$ (between closed terms), inductively
generated by the following rules.
\begin{mathpar} \inferrule{ }{{{⊎}e} ⇓_τ e} \and %
  \inferrule{ }{{{⊎}e} ⇓_τ Ω} \and %
  \inferrule{ }{λx.e ⇓_λ e} \and %
  \inferrule{e₁ ⇓_λ e₃ \\ e₃[e₂] ⇓_λ e₄}{e₁\ e₂ ⇓_λ e₄} \and %
  \inferrule{e₁ ⇓_λ e₃ \\ e₃[e₂] ⇓_τ e₄}{e₁\ e₂ ⇓_τ e₄} \and
  \inferrule{e₁ ⇓_τ e₃}{e₁\ e₂ ⇓_τ e₃\ e₂} \and %
  \inferrule{ }{e ⇓_τ e} \and %
  \inferrule{e₁ ⇓_τ e₂ \\ e₂ ⇓_τ e₃}{e₁ ⇓_τ e₃} \and %
  \inferrule{e₁ ⇓_τ e₂ \\ e₂ ⇓_λ e₃}{e₁ ⇓_λ e₃}~·
\end{mathpar}

We have:
\begin{prop}
  A relation is an applicative bisimulation in Sangiorgi's sense, say
  a \emph{Sangiorgi bisimulation}, iff its open extension is a
  substitution-closed bisimulation with the new rules.
\end{prop}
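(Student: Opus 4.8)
The plan is to bridge the big-step labelled presentation with Sangiorgi's small-step semantics, and then to translate both notions of bisimulation across this bridge. Write $⇒^{*}$ for the reflexive–transitive closure of Sangiorgi's one-step reduction $⇒$ on closed terms. Recall from Lemma~\ref{lem:bisimequiv} that, for any relation $R$ on closed terms, $R^{⊗}$ is the greatest substitution-closed relation contained in $R$ on closed terms; in particular $R^{⊗}$ is automatically substitution-closed, and $R^{⊗}$ agrees with $R$ on closed terms. So, unfolding Proposition~\ref{prop:sim:wpbk} for the syntactic transition system $𝐙$ of the new rules (whose transitions are the $⇓_{λ}$- and $⇓_{τ}$-derivations), ``$R^{⊗}$ is a substitution-closed bisimulation with the new rules'' is equivalent to the conjunction of: (i) for all closed $e_{1} \mathrel{R} e_{2}$, every transition $e_{1} ⇓_{λ} e_{1}'$ is matched by some $e_{2} ⇓_{λ} e_{2}'$ with $e_{1}' \mathrel{R^{⊗}} e_{2}'$, and every $e_{1} ⇓_{τ} f_{1}$ by some $e_{2} ⇓_{τ} f_{2}$ with $f_{1} \mathrel{R} f_{2}$; and (ii) the symmetric conditions for $R^{†}$.

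The first ingredient is then the standard correspondence between the two presentations: for closed $e, e'$ we have $e ⇓_{τ} e'$ iff $e ⇒^{*} e'$, and for closed $e$ and a term $e'$ with at most one free variable we have $e ⇓_{λ} e'$ iff $e ⇒^{*} λx.e'$. Each direction is a routine rule induction. For ``only if'' one checks that every big-step rule is derivable in the small-step system: the two $⊎$-rules directly; the application rules by first using the unique congruence rule (reduction in function position) to reduce the head to an abstraction, then the $β$-rule, then the induction hypothesis; reflexivity and transitivity of $⇓_{τ}$ match those already present in $⇒^{*}$; and the rule passing from $⇓_{τ}$ to $⇓_{λ}$ is concatenation of reduction sequences. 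For ``if'' one inducts on the length of an $⇒^{*}$-sequence, analysing the leading $⇒$-step and rebuilding a $⇓_{τ}$- or $⇓_{λ}$-derivation in each case; the only slightly delicate points are the free-variable bookkeeping and the reading of ``$Ω$ denotes any diverging term'', which cause no trouble since a diverging term has no $⇓_{λ}$-transition and, for $⇓_{τ}$, only the reflexive one.

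Plugging the correspondence into the reformulation above finishes the proof. Condition (i), read through the correspondence, says exactly that $R$ weakly simulates internal reductions and that whenever $e_{1} ⇒^{*} λx.e_{1}'$ there is $e_{2} ⇒^{*} λx.e_{2}'$ with $e_{1}' \mathrel{R^{⊗}} e_{2}'$; and since a closing substitution for a term with the single free variable $x$ is just $[x ↦ g]$ for closed $g$, the clause $e_{1}' \mathrel{R^{⊗}} e_{2}'$ unfolds to ``$e_{1}'[g] \mathrel{R} e_{2}'[g]$ for all closed $g$'', which is precisely the applicative clause of Sangiorgi's definition; likewise (ii) matches the symmetric clauses. (The ``only if'' half of this last equivalence combines the weak-$τ$ clause with the applicative clause, since Sangiorgi's applicative clause fires only when the source is itself an abstraction; the ``if'' half is immediate.) Hence $R$ is a Sangiorgi bisimulation iff $R^{⊗}$ is a substitution-closed bisimulation with the new rules.

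I expect the main obstacle to be the correspondence lemma, and more precisely verifying that the labelled big-step rules are both sound and complete for Sangiorgi's rules — in particular that the interplay of $β$-reduction with the single congruence rule and with transitivity, and the split of reductions into the $⇓_{λ}$- and $⇓_{τ}$-labelled relations, faithfully reflects Sangiorgi's observation predicate.
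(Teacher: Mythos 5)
Your proposal is correct and follows essentially the same route as the paper: the paper likewise reduces the proposition to an operational correspondence lemma ($e ⇓_τ e'$ iff $e ⇒ e'$, and $e ⇓_λ e'$ iff $e ⇒ λ(e')$, where the paper's $⇒$ already contains reflexivity and transitivity, so it coincides with your $⇒^{*}$), proved by rule induction in each direction, and then treats the translation of the bisimulation clauses as an easy corollary. The only differences are presentational — you induct on the length of a reduction sequence where the paper inducts on the $⇒$-derivation, and you spell out the unfolding of $R^{⊗}$ on one-variable targets more explicitly than the paper does.
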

\begin{proof}
This is an easy corollary of the following lemma.
\end{proof}
\begin{lem}
  For all closed $e₁$ and $e₂$, and $e₃$ with one potential free variable, we have
  \begin{itemize}
  \item $e₁ ⇒ e₂$ iff $e₁ ⇓_τ e₂$, and
  \item $e₁ ⇒ λ(e₃)$ iff $e₁ ⇓_λ e₃$.
  \end{itemize}
\end{lem}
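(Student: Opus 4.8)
The plan is to prove the two biconditionals by routine inductions on derivations, splitting each into its two directions, and the only point needing a little care is how the big-step presentation packages $β$-reduction together with head reduction.

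For the ``$⇒$ implies big-step'' directions, I would first show, by induction on the derivation of $e₁ ⇒ e₂$, that $e₁ ⇓_τ e₂$ holds for arbitrary $e₁,e₂$ (not caring whether $e₂$ is an abstraction). Every rule for $⇒$ has an evident big-step counterpart: the two rules for $⊎$ and the reflexivity rule translate directly; the left-congruence rule for application uses the corresponding left-congruence rule for $⇓_τ$ on the induction hypothesis; transitivity of $⇒$ uses transitivity of $⇓_τ$ on the induction hypotheses; and the $β$-redex rule $(λx.e)\,f ⇒ e[f]$ is obtained by combining $λx.e ⇓_λ e$ (the $λ$-value rule) with $e[f] ⇓_τ e[f]$ (reflexivity) through the rule that derives $e₁\,e₂ ⇓_τ e₄$ from $e₁ ⇓_λ e₃$ and $e₃[e₂] ⇓_τ e₄$. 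Once this inclusion is available, the forward half of the second biconditional is immediate: from $e₁ ⇒ λ(e₃)$ we get $e₁ ⇓_τ λ(e₃)$, and composing with $λ(e₃) ⇓_λ e₃$ through the rule that derives $e₁ ⇓_λ e₃$ from $e₁ ⇓_τ e₂$ and $e₂ ⇓_λ e₃$ yields $e₁ ⇓_λ e₃$.

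For the converse directions I would run a single mutual induction on the derivations of $⇓_τ$ and $⇓_λ$, simultaneously proving that $e₁ ⇓_τ e₂$ entails $e₁ ⇒ e₂$ and that $e₁ ⇓_λ e₃$ entails $e₁ ⇒ λ(e₃)$. Most cases merely translate a big-step rule into the matching $⇒$-rule, or invoke reflexivity or transitivity of $⇒$ (for the $λ$-value rule $λx.e ⇓_λ e$ one uses reflexivity of $⇒$, since $λ(e)$ is literally $λx.e$). The cases that need a small argument are the two rules that fire a $β$-redex after a head reduction: the one with conclusion $e₁\,e₂ ⇓_τ e₄$ from premises $e₁ ⇓_λ e₃$ and $e₃[e₂] ⇓_τ e₄$, and the one with conclusion $e₁\,e₂ ⇓_λ e₄$ from premises $e₁ ⇓_λ e₃$ and $e₃[e₂] ⇓_λ e₄$. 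There the induction hypotheses give $e₁ ⇒ λ(e₃)$ together with $e₃[e₂] ⇒ e₄$ (resp.\ $e₃[e₂] ⇒ λ(e₄)$), and one concludes via the chain
\[e₁\,e₂ ⇒ λ(e₃)\,e₂ ⇒ e₃[e₂] ⇒ e₄\]
(resp.\ ending in $λ(e₄)$), whose first step is left-congruence of $⇒$ applied to $e₁ ⇒ λ(e₃)$, whose second step is the $β$-rule $(λx.e₃)\,e₂ ⇒ e₃[e₂]$, and whose last step is the second induction hypothesis; transitivity of $⇒$ assembles the chain.

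I do not expect a genuine obstacle here; the only delicate point is precisely this $β$-redex case, where the big-step rules bundle $β$-reduction with head reduction, so the matching $⇒$-derivation has to be rebuilt by hand from left-congruence, the $β$-rule, and transitivity — and, symmetrically, in the forward direction one must notice that $(λx.e)\,f ⇓_τ e[f]$ holds only via that structured rule, not a literal $β$-rule. One should also keep in mind the convention that $e₃[e₂]$ denotes substitution for the unique free variable of $e₃$, which coincides with $e[f]$ once $e$ is identified with that body. Finally, the Proposition follows by unwinding the definitions of Sangiorgi bisimulation and of substitution-closed bisimulation for the new rules, and translating between the two transition structures with the lemma.
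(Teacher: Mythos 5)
Your proposal is correct and follows essentially the same route as the paper: the forward direction is established by first showing ${⇒}\subseteq{⇓_τ}$ (with the $β$-redex handled by assembling $λx.e ⇓_λ e$, reflexivity, and the structured application rule) and then composing with $λ(e₃) ⇓_λ e₃$ via the last rule, while the converse is a single mutual induction on the transition derivation in which the two $β$-firing rules are rebuilt from left-congruence, the $β$-rule, and transitivity of $⇒$. The only difference is that you spell out the derivations the paper leaves as "easily derivable".
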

\begin{proof}\hfill
  \begin{itemize}
  \item All reduction rules also occur as some new rule for $⇓_τ$,
    except the $β$-rule, which is easily derivable. We thus have
    ${⇒} ⊆ {⇓_τ}$.
  \item Similarly, if $e₁ ⇒ λ(e₃)$, then by the previous point
    $e₁ ⇓_τ λ(e₃)$, hence by the last rule $e₁ ⇓_λ e₃$.
  \item Finally, let us prove both converse statements in one go by
    induction on the transition proof.
    \begin{itemize}
    \item The first two axioms also are axioms in the original rules, hence easy.
    \item For the third axiom, we are given $λ(e) ⇓_λ e$, hence
      clearly $λ(e) ⇒ λ(e)$ as desired.
    \item For the first $β$ rule, $e₁ = e₄\ e₅$, and we know by
      induction hypothesis that $e₄ ⇒ λ(e₆)$ and $e₆[e₅] ⇒ λ(e₃)$. We
      thus get $e₄\ e₅ ⇒ λ(e₆)\ e₅ ⇒ e₆[e₅] ⇒ λ(e₃)$, hence the
      desired result.
      
    \item The second $β$ rule is similar, except that we get a chain
      $e₄\ e₅ ⇒ λ(e₆)\ e₅ ⇒ e₆[e₅] ⇒ e₂$.
    \item The next three rules also occur as original rules, hence are
      easily dealt with by induction hypothesis.
    \item For the last rule, we have by induction hypothesis $e₁ ⇒ e₄$
      and $e₄ ⇒ λ(e₃)$, hence $e₁ ⇒ λ(e₃)$, as desired. \qedhere
    \end{itemize}
  \end{itemize}
\end{proof}

The syntax is easily modelled by taking
$Σ₀(X)(n) = X(n+1) + X(n)² + X(n)$, and the new rules are easily seen
to fit into a $Σ₁$ such that $Σ₁^\source$ preserves functional
bisimulations by Corollary~\ref{cor:cellularenough} and
Proposition~\ref{prop:cellscof}. Finally, substitution-closed
bisimilarity in the initial model coincides with Sangiorgi's
applicative bisimilarity, so we again deduce congruence of applicative
bisimilarity by Theorem~\ref{thm:main}.

\subsection{Howe's format}\label{ss:wow:format}
In this section, we show that, up to suitable encoding, our framework
covers languages complying with a format proposed by Howe~\cite[Lemma
6.1]{DBLP:journals/iandc/Howe96} --- or rather a slight variant
thereof (see Remark~\ref{rk:discrepance-howe} below).  We first introduce
the format, and then explain how to embed it into our framework.  The
format illustrates in which sense cellularity is in particular an
acyclicity condition.  Furthermore, it would be easy, but much more
verbose, to extend the format to a simply-typed setting; we refrain
from doing so in this already quite long paper.

\subsubsection{Recalling Howe's format}
Regarding syntax, Howe's framework is parameterised by the choice
between call-by-name and call-by-value.  His signatures are then much
like standard binding signatures~\cite{BindingSignatures}.  We model
this in a setting similar to the call-by-value setting
of~§\ref{ss:cbv}, except that this time we build into the base
category the fact that values embed into terms.  Specifically, we work
with the monoidal category $[𝐒𝐞𝐭^𝟚,𝐒𝐞𝐭^𝟚]_f ≃ [𝟚×𝔽^𝟚,𝐒𝐞𝐭]$, where $𝟚$
denotes the walking arrow category $𝐯 \xrightarrow{ι} 𝐩$.  We think of
an object of the base category $𝟚×𝔽^𝟚$, which is a pair $(s,f∶ m→n)$
with $s ∈ \{𝐯,𝐩\}$, as the index of terms of sort $s$, with $m$ value
variables and $n$ program variables, each $f(i)$ being thought as the
program counterpart of $i ∈ m$.

Let us first explain how the syntactic part of Howe's format may be
understood in terms of pointed strong endofunctors.  Howe's notion of
syntactic signature encodes a choice between call-by-value and
call-by-name, through the choice of a distinguished sort, $𝐯$ or $𝐩$.
\begin{nota} \ \hfill
  \begin{itemize}
  \item 
  Given $m,n∈ 𝔽$ and $s ∈ \{𝐯,𝐩\}$, 
  we denote any object $(s,(m\xrightarrow{f}n))$ of $𝟚×𝔽^𝟚$ by
  $(m\xrightarrow{f}n)ₛ$,
  and the particular morphism $(ι, \id)∶
  (m\xrightarrow{f}n)_𝐯 →
  (m\xrightarrow{f}n)_𝐩
  $ by $ι$.
\item Given $m,n∈ 𝔽$, we denote the object $(m\xrightarrow{in₁}m+n)$
  of $𝔽^𝟚$ by $(m,n)$.  Otherwise said, we treat the corresponding
  embedding $𝔽² ↪ 𝔽^𝟚$ as an implicit coercion.
\item Accordingly, for any $m,n∈𝔽$ and $s ∈ \{𝐯,𝐩\}$, we denote
  $(m\xrightarrow{in₁}m+n)ₛ$ by $(m,n)ₛ$.
\item Given $T∶ 𝟚×𝔽^𝟚 → 𝐒𝐞𝐭$ and $s∈ \{𝐯,𝐩\}$, we denote by
  $Tₛ ∶ 𝔽^𝟚 → 𝐒𝐞𝐭$ the functor mapping $(m\xto{f}n)$ to $T(m\xto{f}n)ₛ$.
\end{itemize}
\end{nota}
\begin{rem}
  \label{rk:howe-format-mon-prod}
  Through the equivalence $[𝐒𝐞𝐭^𝟚,𝐒𝐞𝐭^𝟚]_f ≃ [𝟚×𝔽^𝟚,𝐒𝐞𝐭]$,
  composition of endofunctors becomes a monoidal product
  defined by
  \[(F⊗G)(m\xrightarrow{f}n)ₛ = \Lan_J (Fₛ) (G(m\xrightarrow{f}n)_𝐯 \xrightarrow{Gι}
  G(m\xrightarrow{f}n)_𝐩)\rlap{,}\] where $J: 𝔽^𝟚 → 𝐒𝐞𝐭^{𝟚}$ is the
  canonical embedding. The unit $I$ is defined
  by \[I(n_𝐯\xto{f}n_𝐩)ₛ = nₛ.\] for any object
  $(n_𝐯\xto{f}n_𝐩)ₛ$. Intuitively, $I$ merely returns the set
  of variables of each sort.
\end{rem}
Howe distinguishes value operations\footnote{Value operations correspond
  to canonical operators in~\cite{DBLP:journals/iandc/Howe96}.}
from program operations:
\begin{defi}
  A \emph{Howe binding signature} consists of 
  \begin{itemize}
  \item a \emph{binding sort} $sᵥ ∈ \ens{𝐯,𝐩}$,
  \item a set $O_𝐩$ of \emph{program operations}, equipped with
    a map $N^𝐩∶ O_𝐩 → ℕ$ and a family
    $d^𝐩 ∈ ∏_{o ∈ O_𝐩} ℕ^{N^{𝐩}ₒ}$, and
  \item a set $O_𝐯$ of \emph{value operations}, equipped with two maps
    $N⁺∶ O_𝐯 → ℕ$ and $N⁻∶ O_𝐯 → ℕ$, and a family
    $d⁻ ∈ ∏_{o ∈ O_𝐯} ℕ^{N⁻ₒ}$.
  \end{itemize}
\end{defi}
\begin{terminology}\label{term:howesignatures}
  For any operation $o ∈ O_𝐩$ (resp.\ $O_𝐯$), the sequence
  $(d^𝐩_{o,1},…,d^𝐩_{o,N^𝐩ₒ})$ (resp.\ 
  $(0,…,0,d⁻_{o,1},…,d⁻_{o,N⁻ₒ})$, with $N⁺ₒ$ leading $0$s) is called the
  \emph{(binding) arity} of $o$.
Typically, $λ$-abstraction and pairing are value operations.
The numbers $N⁺ₒ$ and $N⁻ₒ$ respectively
count
\begin{itemize}
\item \emph{active} arguments which should be evaluated before
  reaching a value, as both arguments of the pairing operation, and
\item \emph{passive} arguments which are not evaluated until the
  operation is destroyed, as the argument of a $λ$-abstraction.
\end{itemize}
Active arguments are not allowed to bind any variable, hence the
absence of a family $d⁺$ in signatures.
\end{terminology}

Despite their name, value operations may or may not return values,
depending on the status of their arguments: a pair $⟨e₁,e₂⟩$ is only a
value when both $e₁$ and $e₂$ are.  In order to generate the right
syntax in our strongly-sorted setting, we need to introduce two
incarnations of each value operation, one for constructing values, the
other for constructing programs.
Any Howe binding signature generates a pointed
strong endofunctor, as follows.  We first define an auxiliary
operation for adding bound variables in the right component --- as
prescribed by the binding sort $sᵥ$.
\begin{nota}\label{not:pluss}
  For any $(m\xrightarrow{f}n) ∈ 𝔽^𝟚$, $p ∈ 𝔽$, and $s∈\{𝐯, 𝐩\}$ let
  $(m\xrightarrow{f}n) +_{s} p$ denote
  \begin{itemize}
  \item $(m+p \xrightarrow{f+\id_p}n+p)$ if $s = 𝐯$, and
  \item $(m \xrightarrow{f}n ↪ n+p)$ if $s = 𝐩$.
  \end{itemize}
\end{nota}
\begin{defi}\label{def:generated}
  Given any Howe binding signature $B$, the \emph{generated}
  endofunctor $Σ₀ᴮ$ on $[𝟚×𝔽^𝟚,𝐒𝐞𝐭]$ is defined by 
\[
\begin{array}{rcl}
  Σ₀ᴮ(F)(m\xrightarrow{f}n)_𝐯 & = & ∑_{o ∈ O_𝐯} F((m\xrightarrow{f}n)_𝐯)^{N⁺ₒ} × ∏_{i ∈ N⁻ₒ} F((m\xrightarrow{f}n)
                        +_{sᵥ} d⁻_{o,i})_𝐩\rlap{,} \\
  Σ₀ᴮ(F)(m\xrightarrow{f}n)_𝐩 & = &  
                                   ∑_{o ∈ O_𝐯} F((m\xrightarrow{f}n)_𝐩)^{N⁺ₒ} × ∏_{i ∈ N⁻ₒ} F((m\xrightarrow{f}n) +_{sᵥ} d⁻_{o,i})_𝐩\\  
                              &&{} +  ∑_{o ∈ O_𝐩} ∏_{i ∈ N^𝐩ₒ} F((m\xrightarrow{f}n) +_{sᵥ} d^𝐩_{o,i})_𝐩 
\end{array}
\]
for all $F ∈ [𝟚×𝔽^𝟚,𝐒𝐞𝐭]$ and $m,n ∈ 𝔽$, with obvious action on morphisms.
\end{defi}
We now want to express $Σ₀ᴮ$ as a familial functor.
For this, we start by defining
\begin{itemize}
\item a functor $A^𝐩ₒ∶ \op{𝔽^𝟚} → [𝟚×𝔽^𝟚,𝐒𝐞𝐭]$ for
each program operation $o ∈ O_𝐩$,
\item a functor $A^𝐯ₒ∶ \op{(𝟚×𝔽^𝟚)} → [𝟚×𝔽^𝟚,𝐒𝐞𝐭]$ for
each value operation $o ∈ O_𝐯$,
\end{itemize}
in such a way that $Σ₀ᴮ(F)(m\xrightarrow{f}n)ₛ = ∑ₒ[Aˢₒ(m\xrightarrow{f}n),F]$.
\begin{defi}
  Given any Howe binding signature,
  \begin{itemize}
  \item for any program operation $o ∈ O_𝐩$, let the \emph{arity}
    of $o$ be the functor $A^𝐩ₒ∶ \op{(𝔽^𝟚)} → [𝟚×𝔽^𝟚,𝐒𝐞𝐭]$
    mapping $(m\xrightarrow{f}n)$ to the coproduct of representable presheaves
   \[A^𝐩ₒ(m\xrightarrow{f}n) = ∑_{i ∈ N^𝐩ₒ} 𝐲_{((m\ \xrightarrow{f}\ n) +_{sᵥ} d^𝐩_{o,i})_𝐩}\rlap{;}\]
 \item for any value operation $o ∈ O_𝐯$, let the \emph{arity} of $o$ be the
   functor $A^𝐯ₒ∶ \op{(𝟚×𝔽^𝟚)} → [2·𝔽,𝐒𝐞𝐭]$ mapping any object $(m\xrightarrow{f}n)ₛ$
   to the coproduct of representable presheaves
   \[A^𝐯ₒ(m\xrightarrow{f}n)ₛ = N⁺ₒ · 𝐲_{(m\xrightarrow{f}n)ₛ} + ∑_{i ∈ N⁻ₒ} 𝐲_{((m\xrightarrow{f}n) +_{sᵥ} d⁻_{o,i})_𝐩}.\]   
\end{itemize}
\end{defi}
\begin{rem}
  The arity functors are contravariant because the Yoneda embedding
  is, for covariant presheaves.
\end{rem}
By construction:
\begin{prop}\label{prop:sarity}
  The endofunctor $Σ₀ᴮ$ is familial, with
  \begin{itemize}
  \item as spectrum the functor
    \[Sᴮ ≅ Σ₀ᴮ(1) ≅ O_𝐩 · 𝐲_{(0→0)_𝐩} + O_𝐯 · 𝐲_{(0 → 0)_𝐯}\rlap{,}\]
    i.e.,
    \[\begin{array}[t]{rcl}
      Sᴮ(m\xrightarrow{f}n)_𝐩 & = & O_𝐩 + O_𝐯 \\
      Sᴮ(m\xrightarrow{f}n)_𝐯 & = & O_𝐯\rlap{,}
      \end{array}\]
      with obvious action on morphisms,
      so that \[\el(Sᴮ) = O_𝐩·\op{(𝔽^𝟚)} + O_𝐯·\op{(𝟚×𝔽^𝟚)}\rlap{,}\]
    \item and as exponent the functor $Eᴮ∶ \el(Sᴮ) → [𝟚×𝔽^𝟚,𝐒𝐞𝐭]$
      defined as the cotupling
        \[[[A^𝐩ₒ]_{o ∈ O_𝐩},[A^𝐯ₒ]_{o ∈ O_𝐯}].\]
  \end{itemize}
\end{prop}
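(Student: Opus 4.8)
The plan is to verify the familial isomorphism of Definition~\ref{def:familial} directly, by unwinding Definition~\ref{def:generated} and repeatedly using the Yoneda lemma together with the fact that finite products of representable hom-functors are hom-functors out of finite coproducts, i.e.\ $∏_j [Y_j, F] ≅ [∑_j Y_j, F]$.

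First I would identify the spectrum. By Remark~\ref{remark:O}, since $[𝟚×𝔽^𝟚,𝐒𝐞𝐭]$ has a terminal object $1$ (the constant presheaf on a singleton), it suffices to compute $Σ₀ᴮ(1)$. Substituting $F = 1$ into Definition~\ref{def:generated}, every factor of the form $1(K)^{N}$ or $∏_i 1(Kᵢ)$ collapses to a singleton, so $Σ₀ᴮ(1)(m\xrightarrow{f}n)_𝐯 ≅ O_𝐯$ and $Σ₀ᴮ(1)(m\xrightarrow{f}n)_𝐩 ≅ O_𝐯 + O_𝐩$, independently of $(m\xrightarrow{f}n)$. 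Tracking the action on morphisms of $𝟚×𝔽^𝟚$ — in particular on the walking arrow $ι∶ 𝐯 → 𝐩$, whose image sends each $o ∈ O_𝐯$ to its program incarnation in $O_𝐯 ⊆ O_𝐯 + O_𝐩$ — yields exactly the presheaf $Sᴮ$ of the statement, and hence the claimed description $\el(Sᴮ) = O_𝐩·\op{(𝔽^𝟚)} + O_𝐯·\op{(𝟚×𝔽^𝟚)}$: a value operation carries along the whole arrow category (both sorts and the $ι$-transitions), while a program operation only lives over $𝐩$.

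Next I would assemble the exponent. Fix an object $(m\xrightarrow{f}n)ₛ$ and an operation $o$. For a program operation $o ∈ O_𝐩$ the summand of $Σ₀ᴮ(F)(m\xrightarrow{f}n)_𝐩$ indexed by $o$ is $∏_{i ∈ N^𝐩ₒ} F((m\xrightarrow{f}n)+_{sᵥ} d^𝐩_{o,i})_𝐩$; by Yoneda each factor is $[𝐲_{((m\xrightarrow{f}n)+_{sᵥ} d^𝐩_{o,i})_𝐩}, F]$, and turning the product into a hom out of a coproduct gives $[A^𝐩ₒ(m\xrightarrow{f}n), F]$ with $A^𝐩ₒ$ as defined. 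For a value operation $o ∈ O_𝐯$ the corresponding summand, computed at either sort, is $F((m\xrightarrow{f}n)ₛ)^{N⁺ₒ} × ∏_{i ∈ N⁻ₒ} F((m\xrightarrow{f}n)+_{sᵥ} d⁻_{o,i})_𝐩$, which by the same two facts equals $[N⁺ₒ·𝐲_{(m\xrightarrow{f}n)ₛ} + ∑_{i ∈ N⁻ₒ} 𝐲_{((m\xrightarrow{f}n)+_{sᵥ} d⁻_{o,i})_𝐩}, F] = [A^𝐯ₒ(m\xrightarrow{f}n)ₛ, F]$. Cotupling over all operations and over the two sorts produces the required natural isomorphism $Σ₀ᴮ(F)(m\xrightarrow{f}n)ₛ ≅ ∑_{o ∈ Sᴮ(m\xrightarrow{f}n)ₛ} [Eᴮ((m\xrightarrow{f}n)ₛ, o), F]$ with $Eᴮ$ the cotupling $[[A^𝐩ₒ]_{o ∈ O_𝐩}, [A^𝐯ₒ]_{o ∈ O_𝐯}]$.

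It remains to check naturality in both variables — in $F$ by construction of the Yoneda isomorphism, and in $(m\xrightarrow{f}n)ₛ$ by matching the evident action of $Σ₀ᴮ$ on morphisms against the contravariant functoriality of the arity functors $A^𝐩ₒ$ and $A^𝐯ₒ$ — and that $o ↦ A^sₒ$ indeed extends to a functor $Eᴮ∶ \el(Sᴮ) → [𝟚×𝔽^𝟚,𝐒𝐞𝐭]$ on the category of elements. The one delicate point, which I expect to be the main obstacle, is the walking arrow $ι$: one must verify that the morphism of $\el(Sᴮ)$ from the $𝐯$-incarnation of a value operation to its $𝐩$-incarnation is sent by $Eᴮ$ to the map $A^𝐯ₒ(ι)$ induced by $ι$, and that this is precisely the comparison used by $Σ₀ᴮ$ when computing its action along $ι$. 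Everything else is routine bookkeeping with the operation $+_{sᵥ}$ of Notation~\ref{not:pluss} and with coproduct injections, establishing familiality with the stated spectrum and exponent.
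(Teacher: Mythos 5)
Your proposal is correct and is exactly the "mere calculation, using the Yoneda lemma" that the paper gives as its (one-line) proof: computing the spectrum as $Σ₀ᴮ(1)$, converting each product of values of $F$ into a hom out of a coproduct of representables, and checking functoriality along $ι$ are precisely the details being elided. No gaps; your identification of the action of $ι$ on value operations as the only non-routine point is apt.
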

\begin{proof}
  By mere calculation, using the Yoneda lemma.
\end{proof}

\begin{cor}
    We have, for all $F ∈ [𝟚×𝔽^𝟚,𝐒𝐞𝐭]_f$ and $(m\xrightarrow{f}n) ∈ 𝔽^𝟚$,
    \[Σ₀ᴮ(F)((m\xrightarrow{f}n)_𝐩) ≅ ∑_{o ∈ O_𝐩} [A^𝐩ₒ(m\xrightarrow{f}n), F] %
    + ∑_{o ∈ O_𝐯} [A^𝐯ₒ(m\xrightarrow{f}n)_𝐩, F]\] %
    and \[Σ₀ᴮ(F)(m\xrightarrow{f}n)_𝐯 ≅ ∑_{o ∈ O_𝐯} [ A^𝐯ₒ(m\xrightarrow{f}n)_𝐯, F].\]
\end{cor}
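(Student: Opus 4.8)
The plan is to read the claim off the familiality of $Σ₀ᴮ$ established in Proposition~\ref{prop:sarity}, simply by unfolding Definition~\ref{def:familial}. Since $Σ₀ᴮ$ is familial with spectrum $Sᴮ$ and exponent $Eᴮ ∶ \el(Sᴮ) → [𝟚×𝔽^𝟚,𝐒𝐞𝐭]$, there is for every object $c$ of $𝟚×𝔽^𝟚$ a natural isomorphism $Σ₀ᴮ(F)(c) ≅ ∑_{o ∈ Sᴮ(c)} [Eᴮ(c,o),F]$, where $[-,-]$ denotes the hom-set of $[𝟚×𝔽^𝟚,𝐒𝐞𝐭]$. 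So I would specialise $c$ to $(m\xrightarrow{f}n)_𝐩$ and to $(m\xrightarrow{f}n)_𝐯$ and then compute the indexing set $Sᴮ(c)$ and the objects $Eᴮ(c,o)$ using the explicit formulas of Proposition~\ref{prop:sarity}.

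For $c = (m\xrightarrow{f}n)_𝐩$, Proposition~\ref{prop:sarity} gives $Sᴮ(c) = O_𝐩 + O_𝐯$, so the coproduct splits into a part indexed by $O_𝐩$ and a part indexed by $O_𝐯$. For $o ∈ O_𝐩$, the object $(c,o)$ of $\el(Sᴮ)$ lies in the summand $O_𝐩·\op{(𝔽^𝟚)}$, on which $Eᴮ$ restricts to $A^𝐩ₒ$, with $𝔽^𝟚$-component $(m\xrightarrow{f}n)$ (tracking, through $Sᴮ ≅ O_𝐩·𝐲_{(0→0)_𝐩} + O_𝐯·𝐲_{(0→0)_𝐯}$ in program sort, the identification of an element of a presheaf with a morphism out of a representable); hence $Eᴮ(c,o) = A^𝐩ₒ(m\xrightarrow{f}n)$. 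For $o ∈ O_𝐯$, the object $(c,o)$ lies in $O_𝐯·\op{(𝟚×𝔽^𝟚)}$, where $Eᴮ$ restricts to $A^𝐯ₒ$, so $Eᴮ(c,o) = A^𝐯ₒ((m\xrightarrow{f}n)_𝐩)$. Collecting the two parts yields the first isomorphism. For $c = (m\xrightarrow{f}n)_𝐯$, I similarly get $Sᴮ(c) = O_𝐯$ and $Eᴮ(c,o) = A^𝐯ₒ((m\xrightarrow{f}n)_𝐯)$, giving the second isomorphism; naturality in $F$ and in $(m\xrightarrow{f}n)$ is inherited from the natural isomorphism in Definition~\ref{def:familial}.

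This is essentially bookkeeping, so there is no real obstacle: the only delicate point is the correct identification of the object of $\el(Sᴮ)$ attached to a pair $(c,o)$ --- in particular that a program operation contributes only when $c$ has program sort, and that for a value operation the $𝟚×𝔽^𝟚$-component of the resulting element of $\el(Sᴮ)$ is $c$ itself (of the appropriate sort) rather than any truncation. Given Proposition~\ref{prop:sarity}, nothing else is needed; the finitarity restriction on $F$ plays no role beyond fixing the ambient category $[𝟚×𝔽^𝟚,𝐒𝐞𝐭]_f$.
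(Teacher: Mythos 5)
Your proposal is correct and matches the paper's (implicit) argument: the corollary is exactly the familiality isomorphism of Proposition~\ref{prop:sarity} unfolded at the objects $(m\xrightarrow{f}n)_𝐩$ and $(m\xrightarrow{f}n)_𝐯$, using the explicit spectrum and exponent, and your bookkeeping of which summand of $\el(Sᴮ)$ each pair $(c,o)$ lands in is the only content. Nothing further is needed.
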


We now describe the syntax generated by $B$.

Let $𝒯₀ᴮ$ be the monad induced by the monadic forgetful functor
$Σ₀ᴮ\Mon → [𝟚×𝔽^𝟚,𝐒𝐞𝐭]$.

In order to define the format, we need to unfold $𝒯₀ᴮ(K)$, for $K$ a
coproduct of representable presheaves of the form $𝐲_{(n_𝐯,n_𝐩)ₛ}$.
For a single such presheaf, by Proposition~\ref{prop:freelambda},
$𝒯₀ᴮ(𝐲_{(n_𝐯,n_𝐩)ₛ})=μA.(I + Σ₀ᴮ(A) + 𝐲_{(n_𝐯,n_𝐩)ₛ}⊗A)$.  Unfolding
the definition of tensor product
(Remark~\ref{rk:howe-format-mon-prod}), noticing that the hom-set
$[s,s']$
\begin{itemize}
\item is empty iff $s = 𝐩$ and $s' = 𝐯$, and
\item is otherwise a singleton,
\end{itemize}
we obtain the following result.

\begin{prop}
  We have, for all $n_𝐯,n_𝐩,m,n ∈ ℕ$, $s,s' ∈ \ens{𝐯,𝐩}$,
  $A∶ 𝟚×𝔽^𝟚 → 𝐒𝐞𝐭$, and $f∶ m → n$:
\[
\begin{array}{rcl}
(𝐲_{(n_𝐯,n_𝐩)ₛ}⊗A)(m\xrightarrow{f}n)_{s'} & ≅ &
[s,s'] × A^{n_𝐩}(m\xrightarrow{f}n)_𝐩×A^{n_𝐯}(m\xrightarrow{f}n)_𝐯 \\
& ≅ &
\left \{
  \begin{array}{ll}
    ∅ & \mbox{if $s=𝐩$ and $s'=𝐯$} \\
    A^{n_𝐩}(m\xrightarrow{f}n)_𝐩×A^{n_𝐯}(m\xrightarrow{f}n)_𝐯
    & \mbox{otherwise.}
  \end{array}
\right .
\end{array}
\]
\end{prop}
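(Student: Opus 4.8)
The plan is to push everything through the description of the monoidal product recalled in Remark~\ref{rk:howe-format-mon-prod}, reducing the statement to a single hom-set computation in the arrow category $𝐒𝐞𝐭^𝟚$. Applying that formula with $F=𝐲_{(n_𝐯,n_��p)ₛ}$, $G=A$, at the object $(m\xrightarrow{f}n)$ and taking the $s'$-component gives
\[(𝐲_{(n_𝐯,n_��p)ₛ}⊗A)(m\xrightarrow{f}n)_{s'} \;=\; \Lan_J\bigl((𝐲_{(n_��v,n_��p)ₛ})_{s'}\bigr)\bigl(A(m\xrightarrow{f}n)_�𝐯 \xrightarrow{Aι} A(m\xrightarrow{f}n)_�𝐩\bigr)\rlap{.}\]

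Next I would unfold the functor $(𝐲_{(n_��v,n_��p)ₛ})_{s'}∶ 𝔽^𝟚→𝐒𝐞𝐭$. It sends an object $(m'\xrightarrow{g}n')$ to $(𝟚×𝔽^𝟚)\bigl((n_��v,n_��p)ₛ,(m'\xrightarrow{g}n')_{s'}\bigr)$, which factors as the product $[s,s']×𝔽^𝟚\bigl((n_��v\xrightarrow{in₁}n_��v+n_��p),(m'\xrightarrow{g}n')\bigr)$; hence it is the copower $[s,s']·𝔽^𝟚\bigl((n_��v\xrightarrow{in₁}n_��v+n_��p),-\bigr)$ of a representable presheaf on $𝔽^𝟚$ by the set $[s,s']$. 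Since $\Lan_J$ is a left adjoint (to restriction along $J$) it preserves this copower, and the pointwise coend formula for $\Lan_J$ together with the co-Yoneda lemma gives $\Lan_J\bigl(𝔽^𝟚(c,-)\bigr)(Y)≅𝐒𝐞𝐭^𝟚(Jc,Y)$ for any $c$ and $Y$. Evaluating at $Y=\bigl(A(m\xrightarrow{f}n)_�𝐯 \xrightarrow{Aι} A(m\xrightarrow{f}n)_�𝐩\bigr)$ therefore reduces the left-hand side to $[s,s']×𝐒𝐞𝐭^𝟚\bigl((n_��v\xrightarrow{in₁}n_��v+n_��p),\,A(m\xrightarrow{f}n)_�𝐯 \xrightarrow{Aι} A(m\xrightarrow{f}n)_�𝐩\bigr)$. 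I would then compute this hom-set of the arrow category directly: a morphism out of $(n_��v\xrightarrow{in₁}n_��v+n_��p)$ is a pair of maps $n_��v→A(m\xrightarrow{f}n)_�𝐯$ and $n_��v+n_��p→A(m\xrightarrow{f}n)_�𝐩$ agreeing through $in₁$ and $Aι$; because $in₁$ is a coproduct injection, the restriction of the second map to $n_��v$ is forced, so the data is equivalent to an unconstrained pair, i.e.\ to an element of $A^{n_��v}(m\xrightarrow{f}n)_�𝐯×A^{n_��p}(m\xrightarrow{f}n)_�𝐩$. This yields the first isomorphism, and the second is just the case analysis on $[s,s']=𝟚(s,s')$: in the walking arrow $𝟚=(𝐯\xrightarrow{ι}𝐩)$ this hom-set is empty exactly when $s=𝐩$ and $s'=𝐯$, and a singleton otherwise, so taking the product with it either kills the whole set or changes nothing.

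None of the steps is genuinely hard. The only points that call for some care are the Kan-extension step — checking that $\Lan_J$ preserves the copower and carries the $𝔽^𝟚$-representable at $c$ to $𝐒𝐞𝐭^𝟚(Jc,-)$ — and the bookkeeping of the arrow-category hom-set, where one must exploit precisely that the domain's underlying map is the injection $in₁∶ n_��v→n_��v+n_��p$, so that exactly the program component over the $n_��p$ fresh variables is free while the value component is determined. Everything else is routine unfolding of the equivalence $[𝐒𝐞𝐭^𝟚,𝐒𝐞𝐭^𝟚]_f≃[𝟚×𝔽^𝟚,𝐒𝐞𝐭]$ and of the definitions in Remark~\ref{rk:howe-format-mon-prod}.
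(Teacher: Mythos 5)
Your argument is correct and is exactly the computation the paper has in mind: the paper offers no explicit proof beyond the sentence preceding the proposition (``unfolding the definition of tensor product \dots noticing that the hom-set $[s,s']$ is empty iff $s=𝐩$ and $s'=𝐯$''), and your proposal simply carries out that unfolding --- $\Lan_J$ applied to the copower of a representable, co-Yoneda, and the arrow-category hom-set computation where the coproduct injection $in₁$ forces the value component. Nothing to correct.
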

\begin{rem}
  Intuitively, this will entail that $𝒯₀ᴮ(𝐲_{(n_𝐯,n_𝐩)ₛ})$ extends the
  syntax generated by $Σ₀ᴮ$ with a new operation taking $n_𝐩$ programs
  and $n_𝐯$ values as arguments, and returning a program or a value,
  depending on $s$.  As usual, if the output is a value, by action of
  $ι$, there is a corresponding program operation.
\end{rem}

We now describe $𝒯₀ᴮ(K)$, for $K$ any coproduct of representable presheaves
of the form $𝐲_{(n_𝐯,n_𝐩)ₛ}$.
Let us first introduce some notations.
\begin{defi}
  A \textbf{family of operation arities} is a family
  $c = (mᵢ,nᵢ,sᵢ)_{i ∈ I}$ of triples $(mᵢ,nᵢ,sᵢ)∈ℕ²×\{ 𝐯, 𝐩 \}$.  
\end{defi}
\begin{nota}
  We think of any such triple $cᵢ = (mᵢ,nᵢ,sᵢ)$ as an operation
  $cᵢ ∶ 𝐯^{mᵢ}×𝐩^{nᵢ} → sᵢ$, hence denote any such family by
  $(cᵢ ∶ 𝐯^{mᵢ}×𝐩^{nᵢ} → sᵢ)_{i ∈ I}$.
  We in fact extend this notation by 
    \begin{itemize}
    \item writing $𝐯^{m}× 𝐩^{n}×s^{q}$, 
      to denote either $𝐯^{m+q}×𝐩^{n}$, when $s = 𝐯$, or $𝐯^{m}×𝐩^{n+q}$, when
      $s = 𝐩$, 
    \item omitting $-ⁿ$ if $n = 0$ in the above product, e.g.,
      $𝐯ᵐ → s$ denotes $𝐯ᵐ×𝐩⁰ → s$, and finally
    \item writing $c∶ s$ for $c∶ 𝐯⁰×𝐩⁰ → s$.
    \end{itemize}
    We furthermore denote the disjoint union of families of operation
    arities $K$ and $L$ by $K+L$.  Given a family
    $K = ( cᵢ ∶ 𝐯^{mᵢ}×𝐩^{nᵢ} → sᵢ)ᵢ$ of operation arities, we also
    denote by $K$ the coproduct of representable presheaves
    $∑ᵢ 𝐲_{(mᵢ,nᵢ)_{sᵢ}} ∶ 𝟚×𝔽^𝟚 → 𝐒𝐞𝐭$.
\end{nota}
\begin{rem}
  By Notation~\ref{not:pluss}, for $m,n,p∈𝔽$, $(m,n)+ₛ p$ denotes
  $(m+p,n)$ if $s=𝐯$, and $(m,n+p)$ otherwise.
\end{rem}
We now give an inductive description of $𝒯₀ᴮ(K)$ on objects
of the form $(m,n)ₛ$ (Notation~\ref{not:pluss}).
Letting $m,n ⊢_K e : s$ mean that $e ∈
𝒯₀ᴮ(K)(m,n)ₛ$, for a family $K$ of operation arities,
the free $Σ₀ᴮ$-monoid over $K$ is inductively generated on objects of
the form $(m,n)ₛ$ by the following rules.
\begin{mathpar}
  \inferrule{ }{m,n ⊢_K xᵢˢ : s}~(i ∈ m) \and
  \inferrule{ }{m,n ⊢_K aᵢ : 𝐩}~(i ∈ n) \and
  \inferrule{m,n ⊢_K e₁ : 𝐯 \\ … \\ m,n ⊢_K e_{m'} : 𝐯 \\\\
  m,n ⊢_K f₁ : 𝐩 \\ … \\ m,n ⊢_K f_{n'} : 𝐩}{m,n ⊢_K k(e₁,…,e_{m'};f₁,…,f_{n'})
  : s}~((k ∶ 𝐯^{m'}×𝐩^{n'} → s) ∈ K) \\
  \inferrule{(m,n) +_{sᵥ} d^𝐩_{o,1} ⊢_K e₁ : 𝐩 \\ … \\ (m,n) +_{sᵥ} d^𝐩_{o,N^𝐩ₒ} ⊢_K e_{N^𝐩ₒ} : 𝐩}{%
    m,n ⊢_K o(e₁,…,e_{N^𝐩ₒ}) : 𝐩}~(o ∈ O_𝐩) \and
  \inferrule{
    m,n ⊢_K e₁ : s \\ … \\ m,n ⊢_K e_{N⁺ₒ} : s \\\\
    (m,n) +_{sᵥ} d⁻_{o,1} ⊢_K f₁ : 𝐩 \\ … \\ (m,n) +_{sᵥ} d⁻_{o,N⁻ₒ} ⊢_K f_{N⁻ₒ} : 𝐩 %
  }{%
    m,n ⊢_K oₛ(e₁,…,e_{N⁺ₒ};f₁,…,f_{N⁻ₒ}) : s}~(o ∈ O_𝐯)
\end{mathpar}
\begin{rem}
  Program operations $o ∈ O_𝐩$ only have one list of arguments.
  Furthermore, they always return programs, so there is no need to
  annotate them. By contrast, value operations $o' ∈ O_𝐯$ have two
  lists of arguments (active and passive arguments, see
  Terminology~\ref{term:howesignatures}), and may return values or
  programs, depending on the status of their active arguments. Thus,
  e.g., any unannotated operation application $o(e₁,…,eₙ)$ must be a
  program operation application, while any annontated operation
  application $oₛ(e₁,…,eₘ;f₁,…,fₙ)$ must be a value operation
  application.
\end{rem}

The action on morphisms is straightforward: for renaming, we rename
(value and program) variables accordingly; for $ι$, we replace each
$xᵢ^𝐯$ (resp.\ $o_𝐯$) with $xᵢ^𝐩$ (resp.\ $o_𝐩$). For morphisms
$K → L$, we proceed similarly.

\begin{terminology}
  We think of elements $k$ from $K$ as \emph{metavariables}, while terms
  of the form $aᵢ$ or $xᵢˢ$ are mere \emph{variables}.
\end{terminology}
\begin{nota}\ \hfill 
  \begin{itemize}
  \item Following~\cite{DBLP:conf/aplas/Hamana04}, for any $(k ∶ 𝐯ᵐ×𝐩ⁿ →
    s) ∈ K$, we abbreviate
    $k(x₁^𝐯,…,xₘ^𝐯;  a₁,…,aₙ )$ to just $k$ when $(m,n)$ is clear
    from context.
\item For value operations $o$, we sometimes omit the subscript
  $s$ in $oₛ(…)$, when the expected sort is clear from context.
  \item Similarly, we sometimes omit the exponent $s$ in variables $xᵢˢ$.
\item In metavariable application
  $k(e₁,…,e_{m'};f₁,…,f_{n'})$, as well as in value
  operation application $o(e₁,…,eₚ;f₁,…,f_q)$, when one sequence is
  empty we omit it altogether, writing, e.g., $k(e₁,…,e_{m'})$ or
  $o(f₁,…,f_q)$. When both lists are empty we simply write $k$, resp.\
  $o$.
\end{itemize}
Thus, e.g., $k$ may denote a nullary metavariable, or a non-nullary
metavariable with identity substitution.
\end{nota}

Let us now introduce Howe's notion of signature for evaluation rules,
restricting attention to signatures satisfying the syntactic condition
of~\cite[Lemma 6.1]{DBLP:journals/iandc/Howe96}.  Evaluation is a
binary relation between closed programs and closed values.  By
default, all value operations $o$ are considered as coming equipped
with their \emph{canonical} evaluation rule
\begin{equation}
  \inferrule{%
    e₁ ⇓ v₁ \\ … \\ e_{N⁺ₒ} ⇓ v_{N⁺ₒ}
  }{%
    o_𝐩(e₁,…,e_{N⁺ₒ};f₁,…,f_{N⁻ₒ})
    ⇓ o_𝐯(v₁,…,v_{N⁺ₒ};f₁,…,f_{N⁻ₒ})~\rlap{.}
  }
  \label{rules:canonical}
\end{equation}

Howe's signatures thus only need to specify evaluation of program
operations.  We now successively introduce notions of premises, rules,
and signatures.

We start by fixing a global choice of finite coproducts in both
$[2·ℕ²,𝐒𝐞𝐭]$ (families of operation arities) $[𝟚·𝔽^𝟚,𝐒𝐞𝐭]$
(presheaves).

\begin{defi}
  Given any families $K$ and $L$ of operation arities,
  a \emph{premise} $K → L$
  consists of  a \emph{source} program  $0,0 ⊢_K e : 𝐩$
  and a \emph{target} value $0,0 ⊢_L v : 𝐯$, where
  $L$ and $v$ take one of the following two forms:
  \begin{itemize}
  \item either $L = K+ ( c ∶ 𝐯)$, extending $K$ with one closed value
    metavariable,
    in which case $v=c$,
  \item or $L = K +
    (
    αᵢ ∶ 𝐯  
    )_{i ∈ \{1, …, N^+ₒ\}}
    + 
    (
    βᵢ ∶ sᵥ^{d⁻_{o,i}} → 𝐩  
    )_{i ∈ \{1, …, N^-ₒ\}}
    $
    for some value operation $o ∈ O_𝐯$, in
    which case the target is $o_𝐯(α₁,…,α_{N⁺ₒ}; β₁,…,β_{N^-_o})$.
\end{itemize}
\end{defi}
\begin{nota}
  We write any premise $(e,v)∶ K → L$ as $K \xto{e⇓v} L$.
\end{nota}
\begin{defi}
  A \emph{rule} consists of
  \begin{itemize}
  \item a \emph{head} program operation $o$,
  \item a composable sequence of premises
    $K₀ \xto{e₁ ⇓ v₁} K₁ →  … → K_{q-1}  \xto{e_q ⇓ v_q} K_q$,
    where $K₀ = (
    kᵢ ∶ sᵥ^{d^𝐩_{o,i}} → 𝐩)_{i ∈ \{ 1, …, N^𝐩ₒ\}} 
    $, and
  \item a \emph{tail} metavariable $(kᵥ ∶ 𝐯)$ in $K_q$.
  \end{itemize}
  A \emph{Howe dynamic signature} over a Howe binding signature is a
  family of rules.
\end{defi}
\begin{nota} Such a rule is denoted by
  \begin{mathpar}
    \inferrule{e₁ ⇓ v₁ \\ … \\ e_q ⇓ v_q}{o(k₁,…,k_{N^𝐩ₒ}) ⇓ kᵥ}\rlap{ .}
  \end{mathpar}
  The families of operation arities $K₀,…,K_q$ are left implicit.
\end{nota}
\begin{rem}\label{rk:discrepance-howe}
  There are some discrepancies w.r.t.\ Howe's original
  format~\cite[Lemma 6.1]{DBLP:journals/iandc/Howe96}.
  \begin{itemize}
  \item We restrict to rules with a finite number of premises\footnote{We
      expect that our setting can be extended to account for rules with an infinite
      number of premises by replacing the finitarity condition on $Σ₁^F$ with
      a weaker accessibility requirement.}.
  \item We have a slightly different treatment of value vs.\ program
    variables and metavariables.
  \end{itemize}
\end{rem}

\begin{defi}
  For any Howe binding signature $B$, and Howe dynamic signature
  $D$ over it, the \emph{evaluation}, denoted by $⇓^{B,D}$, or $⇓$ for short,
  is the binary relation between programs and values, obtained
  inductively by instantiating the given rules together with the
  canonical rules~\eqref{rules:canonical}.
\end{defi}
Let us now recall Howe's general definition of applicative
bisimulation, and its open extension.
\begin{defi}
  For any Howe binding signature $B$, and Howe dynamic signature $D$
  over it, an \emph{applicative simulation} is a binary relation $R$
  in $𝐒𝐞𝐭^𝟚$ over the injection $𝒯₀ᴮ(∅)(0,0)_𝐯 ↪ 𝒯₀ᴮ(∅)(0,0)_𝐩$, such
  that for any closed programs $e$ and $e'$ such that
  $e \mathrel{R_𝐩} e'$, and any transition
  $e ⇓ o_𝐯(v₁,…,v_{N⁺ₒ};e₁,…,e_{N⁻ₒ})$, there exist
  $v'₁,…,v'_{N⁺ₒ},e'₁,…,e'_{N⁻ₒ}$ and a transition
  $e' ⇓ o_𝐯(v'₁,…,v'_{N⁺ₒ};e'₁,…,e'_{N⁻ₒ})$, such that
  $vᵢ \mathrel{R_𝐯} v'ᵢ$ for all $i ∈ N⁺_O$, and
  $eⱼ[σ] \mathrel{R_𝐩} e'ⱼ[σ]$, for all $j ∈ N⁻ₒ$ and closing
  substitutions $σ$.

  An \emph{applicative bisimulation} is an applicative simulation
  whose converse relation also is one.

  Applicative bisimulations are closed under union, and we let
  $∼_{B,D}$, or $∼$ for short, denote the largest one.

  Finally, the open extension $R°$ of a relation $R$ is the largest
  substitution-closed relation contained in $R$ on closed programs and
  values.
\end{defi}

\begin{rem}
  By definition, being a subobject of the injection
  \[ 𝒯₀ᴮ(∅)(0,0)_𝐯 ↪ 𝒯₀ᴮ(∅)(0,0)_𝐩 \] in $𝐒𝐞𝐭^𝟚$, any applicative
  simulation is such that for any related values $v \mathrel{R}_𝐯 v'$,
  $v$ and $v'$ should also be related as programs, i.e.,
  $ι · v \mathrel{R}_𝐩 ι · v'$.
\end{rem}

\subsubsection{Congruence by encoding in our framework}
We now want to prove using Theorem~\ref{thm:main} that $∼°$ is a
congruence. For this, we could try to naively model evaluation rules
as a dynamic signature $Σ₁$ over $Σ₀ᴮ$. However, substitution-closed
bisimilarity in the obtained system would not coincide with
applicative bisimilarity in Howe's sense.  Furthermore, the induced
functor $Σ₁^\source∶ Σ₀ᴮ\Trans → \psh[ℂ_\source]$ would not be
cellular.  Indeed, consider for instance the usual, call-by-name rule
for application. In Howe's format, it gives the following:
\begin{mathpar}
  \inferrule{k₁ ⇓ λ(k₃) \\ k₃(k₂) ⇓ k₄}{k₁\ k₂ ⇓ k₄},
\end{mathpar}
where all $kᵢ$ are metavariables.
\begin{nota}
  For any $n ∈ ℕ$, we use the abbreviation
  $nₛ ≔ ((0,0) +_{sᵥ} n)ₛ$ --- because we will mainly need tuples
  $(n_𝐩,n_𝐯)$ with only $n_{sᵥ} ≠ 0$ from now on.
\end{nota}
Letting $E$ denote the exponent of the induced $Σ₁^\source$, and $o$
denote the element corresponding to this rule in the spectrum, the
boundary arity $E(s_⇓ ↾ o)∶ E(\source(⇓),o·s_⇓) → E(⇓,o)$ is the
bottom composite in the following diagram,
\begin{center}
  \Diag(1,2.2){%
    \pbk[3em]{m-3-2}{m-3-4}{m-1-4} %
    }{%
    \& \& ℒ(𝐲_{0_𝐩}) \& ℒ(𝐲_⇓) \\
    \& ℒ(𝐲_{0_𝐯}+𝐲_{0_𝐩}) \& ℒ(𝐲_{1_𝐩}+𝐲_{0_𝐩}) \& \\
    ℒ(𝐲_{0_𝐩}+𝐲_{0_𝐩}) \& ℒ(𝐲_⇓+𝐲_{0_𝐩}) \&  \& E(⇓,o) %
  }{%
    (m-3-1) edge[labelb={ℒ(𝐲_{s_⇓} + 𝐲_{0_𝐩})}] (m-3-2) %
    (m-2-2) edge[labela={λ(k₁)+\id_{ℒ(𝐲_{0_𝐩})}}] (m-2-3) %
    edge[labell={ℒ(𝐲_{t_⇓}+𝐲_{0_𝐩})}] (m-3-2) %
    (m-1-3) edge[labela={ℒ(𝐲_{s_⇓})}] (m-1-4) %
    edge[labell={\bar{χ}}] (m-2-3) %
    (m-3-2) edge[labelb={}] (m-3-4) %
    (m-1-4) edge[labelr={}] (m-3-4) %
  }
\end{center}
where
\begin{itemize}
\item $\bar{χ}$ is analogous to Definition~\ref{def:chibar};
\item we implicitly use transposition, the Yoneda lemma, and canonical
  isomorphisms $ℒ(A)+ℒ(B)≅ℒ(A+B)$.
\end{itemize}
This bottom composite is not a relative cellular complex, 
because $λ(k₁)∶ ℒ(𝐲_{0_𝐯}) → ℒ(𝐲_{1_𝐩})$ is not of the form
$ℒ(𝐲_{s_⇓})$.

So we have two problems: substitution-closed bisimilarity is not as
desired, and $Σ₁^\source$ is not cellular.  The solution to both
problems is to rectify this last point. Namely, we construct a new
signature from the evaluation rules, in such a way that we may observe
each argument of a value.  For this, we first need to generalise the
notions of premise, rule, and signature, as well as the evaluation
relation induced by a signature, which in turn requires us to
introduce the Howe context induced by a Howe dynamic signature over a
Howe binding signature.

We fix a Howe binding signature $B$ and a Howe dynamic signature $D$
over it for the rest of this section.
\begin{defi}
  We define the \emph{Howe context} induced by $(B,D)$, as follows.
  \begin{itemize}
  \item For state types, we take $ℂ₀ = \op{(𝟚×𝔽^𝟚)}$, with the
    monoidal structure on $\hat{ℂ}₀ = [𝟚×𝔽^𝟚,𝐒𝐞𝐭]$ specified in
    Remark~\ref{rk:howe-format-mon-prod}.
  \item For transition types, we take \[ℂ₁ = \{⇓\}
    ⊎ ∑_{o ∈ O_𝐯, i ∈ N⁺ₒ} \{ ⇓⁺_{o,i}\}
    ⊎ ∑_{o ∈ O_𝐯, j ∈ N⁻ₒ} \{ ⇓⁻_{o,j} \}.\]
  \item We define source and target as follows
    \begin{center}
      $\begin{array}{rcl}
         \source∶ ℂ₁ & → & ℂ₀ \\
         ⇓ & ↦ & 0_𝐩 \\
         ⇓⁺_{o,i} & ↦ & 0_𝐯  \\
         ⇓⁻_{o,j} & ↦ & 0_𝐯  \\
      \end{array}$
      \qquad 
      $\begin{array}{rcll}
         \but∶ ℂ₁ & → & ℂ₀ \\
         ⇓ & ↦ & 0_𝐯 \\
         ⇓⁺_{o,i} & ↦ & 0_𝐯 & \mbox{($o ∈ O_𝐯$, $i ∈ N⁺ₒ$)} \\
         ⇓⁻_{o,j} & ↦ & (d⁻_{o,j})_𝐩 &
                       \mbox{($o ∈ O_𝐯$, $j ∈ N⁻ₒ$)} \\
      \end{array}$
    \end{center}
    \end{itemize}
  \end{defi}

  We now introduce the new notions of premise, rule, and dynamic
  signature, which we deem \emph{rigid} to avoid confusion with the
  original.
\begin{defi}
  A \emph{rigid premise} consists of a family of operation arities $K$,
  called the \emph{source type},  
  a transition type $α ∈ ℂ₁$ and a \emph{source} term
  $0,0 ⊢_K e : \source(α)$.

  The \emph{target type} of a rigid premise $(K,α,e)$ is the family
  $L ≔ K + ( k ∶ sᵥⁿ → s )$, where $\but(α)=nₛ$, and its \emph{target} is the fresh
  metavariable $k ∶sᵥⁿ → s  ∈ L$.

  We denote such a premise by $K \xto{e \mathrel{α} k} L$.
\end{defi}

\begin{defi}
  A \emph{rigid rule} consists of
  \begin{itemize}
  \item a \emph{head} program operation $o$,
  \item a composable sequence of rigid premises
    $K₀ \xto{e₁ \mathrel{α₁} k₁} K₁ → … → K_{q-1} \xto{eₚ
      \mathrel{α_q} k_q} K_q$,
    where $K₀ = (kᵢ ∶ sᵥ^{d^𝐩_{o,i}} → 𝐩)_{i ∈ \{ 1, …, N^𝐩ₒ\}} $, and
  \item a \emph{tail} metavariable $(kᵥ ∶ 𝐯)$ in $K_q$.
  \end{itemize}
  A \emph{rigid dynamic signature} over a Howe binding signature is a
  family of rigid rules.
\end{defi}

\begin{defi}
  The labelled transition system induced by $B$ any the rigid dynamic
  signature $D'$ is defined by instantiating the given rigid rules,
  together with the canonical rules~\eqref{rules:canonical}, and the
  following new rules.
\begin{equation}
  \inferrule{ }{o_𝐯(v₁,…,v_{N⁺ₒ};v'₁,…,v'_{N⁻ₒ}) ⇓⁺_{o,i} vᵢ} \qquad
  \quad
    \inferrule{ }{o_𝐯(v₁,…,v_{N⁺ₒ};v'₁,…,v'_{N⁻ₒ}) ⇓⁻_{o,j} v'ⱼ}~·
\label{rules:new}
\end{equation}
\end{defi}
We may now define the rigid signature induced by $(B,D)$.

\begin{defi}
  Let the rigid dynamic signature $R(B,D)$ induced by $(B,D)$ be
  obtained as follows.  For each rule in $D$, $R(B,D)$ has a rule for
  the same program operation, whose premises are obtained by replacing
  each original premise
  $K \xto{e ⇓ o_𝐯(k₁,…,k_{N⁺ₒ};k'₁,…,k'_{N⁻ₒ})} L$ with any
  linearisation (in the straightforward, suitable sense) of the
  following tree.
\begin{center}
  \diagramme[diagorigins={1.2}{1.7}]{}{%
    \mkdots{KHi}{KHl} %
    \mkdots{KKi}{KKl} %
  }{%
    \& \& \& |(K)| K \\
    \& \& \&  |(Ki)| K' ≔ K + ( k₀ ∶ 𝐯 ) \\
    \& \\
    |(KHi)| K' +  ( k₁ ∶ 𝐯  ) \&
    \&
    \& \&
    \&  \&
    |(KKl)| K' + (  k_{Nₒ^-}'∶ sᵥ^{d⁻_{o,Nₒ^-}} → 𝐩  ) \& \&
    \\
    \& 
    |(KHl)| K' + (  k_{N₀⁺} ∶  𝐯  ) \& \&
    \&
    |(KKi)| K' + (  k₁'∶ sᵥ^{d⁻_{o,1}} → 𝐩  ) \& \&
  }{%
    (K) edge[labelr={e ⇓ k₀}] (Ki) %
    (Ki) edge[labelon={k₀ ⇓⁺_{o,1} k₁}] (KHi) %
    edge[labelon={k₀ ⇓⁺_{o,N⁺ₒ} k_{N⁺ₒ}}] (KHl) %
    edge[labelon={k₀ ⇓⁻_{o,1} k'₁}] (KKi) %
    edge[labelon={k₀ ⇓⁻_{o,N⁻ₒ} k'_{N⁻ₒ}}] (KKl) %
  }
\end{center}
\end{defi}

Let us show that the rigid signature $R(B,D)$ is adequate.
\begin{prop}\label{prop:adequacy}
  The open extension of applicative bisimilarity in the sense of
  $(B,D)$ coincides with substitution-closed bisimilarity in the
  labelled transition system generated by $R(B,D)$.
\end{prop}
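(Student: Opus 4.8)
The plan is to prove the two inclusions ${∼°} ⊆ {∼^⊗}$ and ${∼^⊗} ⊆ {∼°}$ separately, writing ${∼} = {∼_{B,D}}$ for Howe's applicative bisimilarity, ${∼°}$ for its open extension, and ${∼^⊗}$ for substitution-closed bisimilarity in the transition system $𝐙$ generated by $R(B,D)$ (whose state object is the syntax $𝒯_0^B(∅)$). Before either inclusion I would record three preliminary facts. \textbf{(a) Adequacy of transitions:} by mutual induction on derivations, for any closed program $e$ and closed value $w$ one has $e ⇓ w$ in $𝐙$ iff $e ⇓^{B,D} w$; moreover, for any closed value $u$ and $o ∈ O_𝐯$, $u ⇓^+_{o,i} v$ holds iff $u = o_𝐯(v_1,…,v_{N^+_o};e_1,…,e_{N^-_o})$ with $v = v_i$, and $u ⇓^-_{o,j} g$ holds iff $u$ has that form with $g = e_j$. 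The only nontrivial point is that the linearisation of a rigid premise forces the value bound to the auxiliary metavariable $k_0$ to have the prescribed head and components, which is exactly what the new rules~\eqref{rules:new} express. \textbf{(b) Values are their own normal forms:} by induction on size, every closed value $u$ of sort $𝐯$ satisfies $u ⇓^{B,D} u$, and this is its only $⇓^{B,D}$-transition; this uses that in $Σ_0^B$ (Definition~\ref{def:generated}) the program incarnation $o_𝐩$ of a value operation is tagged distinctly from the program operations in $O_𝐩$, so that only the canonical rule~\eqref{rules:canonical} for $o$ applies to a term headed by $o_𝐩$. \textbf{(c) Value saturation:} the \emph{largest} applicative bisimulation $∼$ satisfies: whenever $e ∼ e'$, $e ⇓^{B,D} o_𝐯(v_1,…;e_1,…)$ and $e' ⇓^{B,D} o_𝐯(v'_1,…;e'_1,…)$ with $v_i ∼ v'_i$ for $i ∈ N^+_o$ and $e_j[σ] ∼ e'_j[σ]$ for all closing $σ$ and $j ∈ N^-_o$, then $o_𝐯(v_1,…;e_1,…) ∼ o_𝐯(v'_1,…;e'_1,…)$ as values (hence, via the over-the-injection clause, as programs). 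This follows by checking that ${∼}$ augmented with this single pair is again an applicative bisimulation: by (b) the program incarnation of either value evaluates only to itself, so the applicative-simulation clause for the new pair reduces precisely to the given relations between components, which already lie in $∼$.

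For ${∼°} ⊆ {∼^⊗}$ I would exhibit a substitution-closed bisimulation span $S → 𝐙²$ whose state part $S_0$ is $∼°$. Since $∼°$ is the open extension of $∼$ it is substitution-closed (Lemma~\ref{lem:bisimequiv}), hence carries $𝐙_0$-module structure compatible with $S_0 → 𝐙_0²$, and its restriction to the closed state types $0_𝐩, 0_𝐯$ is exactly $∼$. Let $S_1$ consist, at each transition type $L$, of all pairs of $L$-transitions of $𝐙$ with $∼°$-related sources and $∼°$-related targets; this defines $S ∈ 𝐂$ with a map to $𝐙²$, and by Proposition~\ref{prop:bisimadj} it remains to check that both legs are fibrations, which by symmetry reduces to a lifting property for each $L ∈ \{⇓\} ∪ \{⇓^+_{o,i}\} ∪ \{⇓^-_{o,j}\}$. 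For $L = {⇓}$: given $e ∼ e'$ and $e ⇓^{B,D} w$, fact (a) gives $w = o_𝐯(v_1,…;e_1,…)$, applicative simulation of $∼$ supplies $e' ⇓^{B,D} o_𝐯(v'_1,…;e'_1,…) =: w'$ with $v_i ∼ v'_i$ and $e_j[σ] ∼ e'_j[σ]$, and fact (c) then yields $w ∼° w'$, so $(w,w')$ is the required target. For $L = {⇓^+_{o,i}}$: given closed values $w ∼° w'$ and $w ⇓^+_{o,i} v_i$, (a) makes $w = o_𝐯(v_1,…;e_1,…)$; from $w ∼ w'$ we get $ι·w ∼ ι·w'$, and $ι·w ⇓^{B,D} w$ by (b), so applicative simulation together with (b) applied to $ι·w'$ force $w' = o_𝐯(v'_1,…;e'_1,…)$ with $v_i ∼ v'_i$, whence $w' ⇓^+_{o,i} v'_i$ and $v_i ∼° v'_i$. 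The case $L = {⇓^-_{o,j}}$ is analogous, with target type $(d^-_{o,j})_𝐩$, so that being $∼°$-related there unfolds exactly to $e_j[σ] ∼ e'_j[σ]$ for all closing $σ$. Terminality of $∼^⊗$ among substitution-closed bisimulations (Proposition~\ref{prop:scbisim}) then gives a span morphism $S → {∼^⊗}$ over $𝐙²$, hence ${∼°} = S_0 ⊆ {∼^⊗}$.

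For ${∼^⊗} ⊆ {∼°}$, let $R$ be the relation on closed programs and closed values obtained by restricting the state part $(∼^⊗)_0$ to $0_𝐩$ and $0_𝐯$; functoriality of the presheaf $(∼^⊗)_0$ along $ι$ makes $R$ a relation over the injection $𝒯_0^B(∅)(0,0)_𝐯 ↪ 𝒯_0^B(∅)(0,0)_𝐩$. I claim $R$ is an applicative bisimulation in Howe's sense: given $e \mathrel{R} e'$ and $e ⇓^{B,D} w$, (a) gives $e ⇓ w$ in $𝐙$ with $w = o_𝐯(v_1,…;e_1,…)$; the $⇓$-fibration of ${∼^⊗}$ yields $e' ⇓ w'$ with $(w,w') ∈ (∼^⊗)_0(0_𝐯)$, hence $e' ⇓^{B,D} w'$ and $w'$ a closed value; the $⇓^+_{o,i}$- and $⇓^-_{o,j}$-fibrations applied to $(w,w')$ (using (a), so that $w$ admits the transitions $⇓^+_{o,i} v_i$ and $⇓^-_{o,j} e_j$) force $w'$ to be headed by $o$, say $w' = o_𝐯(v'_1,…;e'_1,…)$, with $v_i \mathrel{R} v'_i$ and $(e_j,e'_j) ∈ (∼^⊗)_0((d^-_{o,j})_𝐩)$; substitution-closedness of ${∼^⊗}$ then gives $e_j[σ] \mathrel{R} e'_j[σ]$ for closing $σ$. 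The same argument on the right legs shows $R^†$ is an applicative simulation, so $R$ is an applicative bisimulation and therefore $R ⊆ {∼}$. Finally ${∼^⊗}$ is substitution-closed and agrees with $R$ on closed terms, hence is contained in ${∼}$ on closed terms; by the maximality clause of Lemma~\ref{lem:bisimequiv}, ${∼^⊗} ⊆ {∼°}$.

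The main obstacle, I expect, is fact (c): it is the one ingredient that is not a translation exercise, and it is exactly what makes the rigid encoding faithful — without it the $⇓$-step of the first inclusion fails, since the plain $⇓$-transition of the rigid system demands $∼$-related normal forms, which is not a clause of applicative bisimilarity but a consequence of taking the largest one. The remainder is a somewhat lengthy but routine matching of the fibration conditions in $𝐂$ against the clauses of Howe's applicative (bi)simulation, together with the adequacy bookkeeping (a) and the normal-form lemma (b).
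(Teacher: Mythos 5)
Your proof is correct and follows the same two-inclusion strategy as the paper's: first showing that $∼°$ underlies a substitution-closed bisimulation for the rigid system $R(B,D)$, then that substitution-closed bisimilarity restricts on closed terms to an applicative bisimulation in Howe's sense, and concluding by maximality on each side. The one place where you add genuine content is your fact (c): the paper's proof only invokes the destructuring direction (related values have related components, via your facts (a) and (b)) and leaves implicit the converse saturation property needed to match the \emph{target} of a $⇓$-transition at sort $𝐯$, whereas you correctly isolate it and prove it by adjoining to $∼$ a single value pair together with its image under $ι$ and appealing to maximality.
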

\begin{proof}
  Let $∼'$ denote substitution-closed bisimilarity in the sense of
  $R(B,D)$.

  First of all, ${∼°}$ is a substitution-closed bisimulation; indeed,
  $⇓$ is the same relation in both systems --- by a straightforward
  induction --- and by definition of applicative bisimulation, for any
  $o_𝐯(v₁,…,v_{N⁺ₒ},e₁,…,e_{N⁻ₒ}) ∼
  o_𝐯(v'₁,…,v'_{N⁺ₒ},e'₁,…,e'_{N⁻ₒ})$, we have $vᵢ ∼ v'ᵢ$ and
  $eⱼ ∼° e'ⱼ$, for all $i$ and $j$, which ensures that the new
  transitions $⇓⁺_{o,i}$ and $⇓⁻_{o,j}$ are matched.  Thus, we have
  ${∼°} ⊆ {∼'}$.

  Conversely, the new transitions ensure that ${∼'}$ is an applicative
  bisimulation, hence is contained in $∼$ on closed terms. But it is
  substitution-closed, so we get ${∼'} ⊆ {∼°}$, as desired.
\end{proof}

Finally, we show that the rigid signature $R(B,D)$ straightforwardly
gives rise to a dynamic signature $Σ₁^{B,D}$, such that the initial
vertical $Σ₁^{B,D}$-algebra is isomorphic to the generated labelled
transition system, and furthermore that the induced functor
$(Σ^{B,D}₁)^\source$ is cellular, hence that by
Proposition~\ref{prop:adequacy}, Corollary~\ref{cor:cellularenough},
and Theorem~\ref{thm:main}, $∼°$ is a congruence.

\begin{nota}
  We tend to abbreviate $Σ₁^{B,D}$ to $Σ₁$, for readability.
\end{nota}

Let us present the dynamic signature $Σ₁^{B,D}$, which will consist of
various components: one dynamic signature $Σ_{1,o}$ for each value
operation $o ∈ O_𝐯$, describing both kinds of canonical
rules~\eqref{rules:canonical} and~\eqref{rules:new},
plus one dynamic signature $Σ_{1,r}$ for
each other rule.

\begin{itemize}
\item For each value operation $o ∈ O_𝐯$, we define
  \[\begin{array}{rcl}
      Σ_{1,o}^F(X)(⇓) &=& X(⇓)^{N⁺ₒ} × ∏_{j ∈ N⁻ₒ} X((d⁻_{o,j})_𝐩) \\
      Σ_{1,o}^F(X)(⇓⁺_{o,i}) &=& X(0_𝐯)^{N⁺ₒ} × ∏_{j ∈ N⁻ₒ} X((d⁻_{o,j})_𝐩) \\
      Σ_{1,o}^F(X)(⇓⁻_{o,i}) &=& X(0_𝐯)^{N⁺ₒ} × ∏_{j ∈ N⁻ₒ} X((d⁻_{o,j})_𝐩) \\
    \end{array}\]
    with as
    $Σ_{1,o}^F(X) → Σ₀ᴮ(X)\source × X\but$:
    \begin{itemize}
    \item at $⇓$: $(r₁,…,r_{N⁺ₒ},e₁,…,e_{N⁻ₒ}) ↦
      \begin{array}[t]{l}
        (in_{o_𝐩}(r₁·s_⇓,…,r_{N⁺ₒ}·s_⇓,e₁,…,e_{N⁻ₒ}), \\
        o_𝐯(r₁·t_⇓,…,r_{N⁺ₒ}·t_⇓,e₁,…,e_{N⁻ₒ}))\rlap{,}
      \end{array}$
    \item at $⇓⁺_{o,i}$:
      $(v₁,…,v_{N⁺ₒ},e₁,…,e_{N⁻ₒ}) ↦
      (in_{o_𝐯}(v₁,…,v_{N⁺ₒ},e₁,…,e_{N⁻ₒ}), vᵢ)\rlap{,}$
    \item at $⇓⁻_{o,j}$:
      $(v₁,…,v_{N⁺ₒ},e₁,…,e_{N⁻ₒ}) ↦
      (in_{o_𝐯}(v₁,…,v_{N⁺ₒ},e₁,…,e_{N⁻ₒ}), eⱼ)\rlap{.}$
    \end{itemize}
\item For each rigid premise $p = (K \xto{e \mathrel{α} k} L)$,
  we define the cospan induced by $p$ to be
\[ℒ(K) \xto{ℓ} Eₚ \xot{r} ℒ(L)\rlap{,}\]
where the left-hand morphism $ℓ$ is defined by the pushout
\begin{equation}
  \Diag{%
    \pbk{m-2-2}{m-2-3}{m-1-3} %
  }{%
    \& ℒ(𝐲_{\source(α)}) \& ℒ(K) \\
    ℒ(𝐲_{\but(α)}) \& ℒ(𝐲_α) \& Eₚ\rlap{,}
  }{%
    (m-1-2) edge[labela={e}] (m-1-3) %
    edge[labell={ℒ (𝐲_{s_α})}] (m-2-2) %
    (m-2-2) edge[labelb={}] (m-2-3) %
    (m-1-3) edge[labelr={ℓ}] (m-2-3) %
    (m-2-1) edge[labelb={ℒ(𝐲_{t_α})}] (m-2-2) %
  }%
  \label{eq:cellpremise}
  \end{equation}
  and the right-hand morphism
  \[r∶ ℒ(L) = ℒ(K+𝐲_{\but(α)}) ≅ ℒ(K) + ℒ(𝐲_{\but(α)}) → Eₚ\]
  is obtained by copairing $ℓ$ and the bottom composite
  in~\eqref{eq:cellpremise}.
\item For each rigid rule
  \begin{mathpar}
    \inferrule{%
      (kᵢ ∶ sᵥ^{d^𝐩_{o,i}} → 𝐩)_{i ∈ \{ 1, …, N^𝐩ₒ\}} = K₀ \xto{e₁ \mathrel{α₁} k₁} K₁ → … → K_{n-1}
      \xto{eₙ \mathrel{αₙ} kₙ} Kₙ %
    }{
      o(k₁,…,k_{Nₒ^𝐩}) ⇓ k''
      }~\rlap{,}
    \end{mathpar}
    say $r$, we define
    \begin{equation}
      Σ_{1,r}^F(X)(⇓) = [E_{p₁},X] ×_{[ℒ(K₁),X]} … ×_{[ℒ(K_{p-1}),X]} [E_{pₙ},X]\rlap{,}\label{eq:Howearity}
  \end{equation}
  where for all $i ∈ n$, $ℒ(K_{i-1}) \xto{sᵢ} E_{pᵢ} \xot{tᵢ} ℒ(Kᵢ)$
  denotes the cospan induced by the $i$th premise, with
  $Σ_{1,r}^F(X)(⇓⁺_{o',i}) = Σ_{1,r}^F(X)(⇓⁻_{o',j}) = ∅$ for all
  $o' ∈ O_𝐯$, $i ∈ N⁺_{o'}$, $j ∈ N⁻_{o'}$, and again with the morphism
  $Σ_{1,r}^F(X) → Σ₀ᴮ(X)\source × X\but$ mapping any compatible tuple
  $(ρ₁,…,ρₙ) ∈ Σ_{1,r}^F(X)(⇓)$ to $(inₒ(a),b)$,
  where $a∈ ∏_{i ∈ N^𝐩ₒ} X(( d^𝐩_{o,i})_𝐩)$ corresponds by Yoneda and adjunction to the morphism
  \[
  ℒ(K₀) \xto{s₁} E₁ \xto{ρ₁} X,\] and
    $b∈ X(0ᵥ)$ corresponds to
    \[ℒ(𝐲_{0_𝐯}) \xto{ℒ(in₂)} ℒ(Kₙ) \xto{tₙ} Eₙ \xto{ρₙ} X.\]
  \item We let $(Σ₁^{B,D})^F = 
    ∑_{o ∈ O_𝐯} Σ_{1,o}^F + ∑ᵣ Σ_{1,r}^F$,
    with morphism to $Σ₀ᴮ(-)\source×(-)\but$ given by cotupling.
  \end{itemize}

  By construction, we have:
  \begin{prop}\label{prop:furtheradequacy}
    The transition $Σ₀ᴮ$-monoid generated by the dynamic signature
    $Σ₁^{B,D}$ is isomorphic to the rigid transition system $R(B,D)$.
  \end{prop}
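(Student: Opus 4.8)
The statement holds ``by construction'', so the plan is to unwind the definitions of $𝐙 = 𝐙_{\check{Σ}₁}$ (for the dynamic signature $Σ₁^{B,D}$) and of the rigid transition system $R(B,D)$, and match them rule by rule. First I would recall from Theorem~\ref{thm:Z} that $𝐙$ is the colimit of the initial chain $ℳ(𝐙₀) \xto{!} \check{Σ}₁(ℳ(𝐙₀)) → … → \check{Σ}₁ⁿ(ℳ(𝐙₀)) → …$, whose image under $𝒟$ is the constant chain on the initial $Σ₀ᴮ$-monoid $𝐙₀$; by Propositions~\ref{prop:ptstrgeneral} and~\ref{prop:freelambda} (with $K = ∅$), $𝐙₀ ≅ 𝒯₀ᴮ(∅)$, which is exactly the state presheaf of $R(B,D)$. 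It then remains to identify the transition presheaves. Since all $𝒟$-images in the chain are constant, the transition presheaf $𝐙₁$, together with its source and target maps, is the carrier of the initial algebra of $\check{Σ}₁$ restricted to the fibre of $𝒟$ over $𝐙₀$, that is, the least sub-presheaf of $Δ(𝐙₀) = 𝐙₀\source × 𝐙₀\but$ over $ℂ₁$ closed under the ``rule schemes'' encoded by $(Σ₁^{B,D})^∂$. Dually, $R(B,D)$ is by definition the least such presheaf closed under the canonical rules~\eqref{rules:canonical}, the new rules~\eqref{rules:new}, and its rigid rules. So it suffices to check that the summands of $(Σ₁^{B,D})^F = ∑_{o ∈ O_𝐯} Σ_{1,o}^F + ∑_r Σ_{1,r}^F$, with their maps to $Σ₀ᴮ(-)\source × (-)\but$, encode exactly these rules, uniformly in the argument transition $Σ₀ᴮ$-monoid $X$.

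Second, I would go through the summands. For a value operation $o ∈ O_𝐯$, unfolding $Σ_{1,o}^F(X)$ and its three displayed source/target assignments shows at once that an element of $Σ_{1,o}^F(X)(⇓)$ with its source and target is precisely an instance of the canonical rule~\eqref{rules:canonical} for $o$ (the $r_i$ giving the subderivations $e_i ⇓ v_i$, with $e_i = r_i·s_⇓$ and $v_i = r_i·t_⇓$), while an element of $Σ_{1,o}^F(X)(⇓⁺_{o,i})$ (resp.\ $Σ_{1,o}^F(X)(⇓⁻_{o,j})$) with its source and target is precisely an instance of the first (resp.\ second) new rule~\eqref{rules:new}. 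For a rigid rule $r$ with composable premises $K₀ \xto{e₁ \mathrel{α₁} k₁} K₁ → … → Kₙ$, I would unfold the fibred product~\eqref{eq:Howearity} one factor at a time. Using that $Σ₀ᴮ\Trans(-,X)$ sends colimits to limits (as in Example~\ref{ex:cbn:familial}), the pushout~\eqref{eq:cellpremise} defining $E_{pᵢ}$ becomes a pullback, which by the Yoneda lemma identifies $[E_{pᵢ},X]$ with the set of pairs $(φ,ρ)$ where $φ∶ ℒ(K_{i-1}) → X$ is an instantiation of the source type of the $i$th premise and $ρ$ is an $α_i$-transition of $X$ whose source is the interpretation of $e_i$ under $φ$; moreover, through the copairing defining the right-hand leg $ℒ(Kᵢ) = ℒ(K_{i-1} + 𝐲_{\but(α_i)}) → E_{pᵢ}$, such a pair restricts to an instantiation $ℒ(Kᵢ) → X$ extending $φ$ and sending the fresh metavariable $k_i$ to the target of $ρ$. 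Taking the pullback over $[ℒ(Kᵢ),X]$ therefore glues the target-instantiation of premise $i$ to the source-instantiation of premise $i{+}1$, so a compatible tuple $(ρ₁,…,ρₙ) ∈ Σ_{1,r}^F(X)(⇓)$ is exactly a derivation against $X$ of all the premises of $r$; the displayed map then returns precisely the source $o(k₁,…,k_{N^𝐩ₒ})$ and target $k''$ of the conclusion of $r$, and $Σ_{1,r}^F(X)$ is empty on every other transition type. So this summand contributes exactly the rule $r$.

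Third, I would conclude. Having matched, uniformly in $X$ and compatibly with source and target, each summand of $(Σ₁^{B,D})^F(X)$ with the instances of the corresponding rule of $R(B,D)$, the endofunctor $\check{Σ}₁$ restricted to the fibre of $𝒟$ over $𝐙₀$ is — up to this matching — the very operator whose least fixed point defines the transition presheaf of $R(B,D)$; both are computed as the colimit of the same initial chain, so $𝐙₁$ and the transition presheaf of $R(B,D)$ are canonically isomorphic, compatibly with source and target. Since the state objects already coincide, this assembles into an isomorphism of transition $Σ₀ᴮ$-monoids $𝐙 ≅ R(B,D)$ (equivalently: $R(B,D)$ carries a canonical vertical $\check{Σ}₁$-algebra structure and is initial among such, by the argument in the proof of Theorem~\ref{thm:Z}). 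The step I expect to be the main obstacle is the rigid-rule case above: checking that the fibred product~\eqref{eq:Howearity} built from the pushout cospans~\eqref{eq:cellpremise} faithfully encodes a composable sequence of premise instantiations — in particular that the pullback over $[ℒ(Kᵢ),X]$ really identifies ``the target metavariable of premise $i$'' with ``the fresh metavariable of the source type of premise $i{+}1$'', with no spurious extra data and no missing naturality constraint. This is precisely the computation carried out for $β$-reduction in Example~\ref{ex:cbn:familial}, now performed once per premise and then chained along the pullbacks; it is routine but bureaucratic, whereas the rest of the argument is a direct unfolding of definitions.
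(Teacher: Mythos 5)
Your proposal is correct and is essentially the argument the paper intends: the paper offers no proof beyond the phrase ``by construction'', and your three steps (identifying $𝐙₀$ with $𝒯₀ᴮ(∅)$, matching each summand of $(Σ₁^{B,D})^F$ with the corresponding canonical, new, or rigid rule via the Yoneda/pullback computation of Example~\ref{ex:cbn:familial}, and comparing initial chains in the fibre over $𝐙₀$) are exactly the unfolding that phrase elides. One small imprecision: you describe $𝐙₁$ as the \emph{least sub-presheaf} of $Δ(𝐙₀)$ closed under the rules, but the initial (vertical) algebra is proof-relevant --- its transitions are derivations, not a derivable relation, and likewise for $R(B,D)$ as the paper defines it by ``instantiating'' the rules --- so the comparison should be phrased as an isomorphism of initial algebras for the same endofunctor on the fibre, which is indeed what your final paragraph does.
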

    Furthermore, 
    we observe:
    \begin{lem}\label{lem:howecellular}
      The induced functor $(Σ₁^{B,D})^\source$ is cellular.
    \end{lem}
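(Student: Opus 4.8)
The plan is to invoke Corollary~\ref{cor:cellularenough}: it suffices to check that $(Σ₁^{B,D})^\source$ is familial and that, for every transition type $L ∈ ℂ₁$ and every operation $o ∈ (Σ₁^{B,D})^\source(1)(L)$, the boundary arity $E(s_L ↾ o)∶ E(\source(L), o·s_L) → E(L,o)$ is a cofibration in $Σ₀ᴮ\Trans$, with the generating cofibrations $ℒ(𝐲_{s_L})$, $L ∈ ℂ₁$. By Proposition~\ref{prop:cellscof} it is in fact enough to exhibit each such boundary arity as a relative cell complex for this set of maps. The computation of $(Σ₁^{B,D})^\source(1)$ and of $E$ is the only laborious part, and everything else is then forced.

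First I would establish familiality and read off the exponent, transferring across $\psh[ℂ₁]/Δ_{\source} ≅ \psh[ℂ_\source]$ and arguing as in Example~\ref{ex:cbn:familial} and Proposition~\ref{prop:sarity}. Since $(Σ₁^{B,D})^F = ∑_{o ∈ O_𝐯} Σ_{1,o}^F + ∑_r Σ_{1,r}^F$, its spectrum is the corresponding coproduct and $E$ is the cotupling of the exponents of the summands, so it suffices to treat the summands separately. For a canonical component $Σ_{1,o}$, using $X(⇓) ≅ Σ₀ᴮ\Trans(ℒ(𝐲_⇓),X)$ and $X((d⁻_{o,j})_�p) ≅ Σ₀ᴮ\Trans(ℒ(𝐲_{(d⁻_{o,j})_�p}),X)$, its defining formulae directly present $Σ_{1,o}^F(X)(c) ≅ Σ₀ᴮ\Trans(E(c,-),X)$ with
\[ E(⇓,-) = ℒ\bigl(N⁺ₒ · 𝐲_⇓ + \textstyle∑_{j} 𝐲_{(d⁻_{o,j})_�p}\bigr), \qquad E(⇓⁺_{o,i},-) = E(⇓⁻_{o,j},-) = ℒ\bigl(N⁺ₒ · 𝐲_{0_𝐯} + \textstyle∑_{j} 𝐲_{(d⁻_{o,j})_�p}\bigr). \]
For a rigid-rule component $Σ_{1,r}$, attached to a rule with head $o$ and premises $K₀ \xto{e₁ \mathrel{α₁} k₁} K₁ → … → K_{n-1} \xto{eₙ \mathrel{αₙ} kₙ} Kₙ$, the fibre product~\eqref{eq:Howearity} is $Σ₀ᴮ\Trans(E_r,X)$ for the finite iterated pushout
\[ E_r ≔ E_{p₁} \sqcup_{ℒ(K₁)} E_{p₂} \sqcup_{ℒ(K₂)} … \sqcup_{ℒ(K_{n-1})} E_{pₙ}, \]
obtained by gluing the premise cospans $ℒ(K_{i-1}) \xto{sᵢ} E_{pᵢ} \xot{tᵢ} ℒ(Kᵢ)$ of~\eqref{eq:cellpremise}; hence $E(⇓,o_r) = E_r$, while $Σ_{1,r}^F(X)(c) = ∅$ for $c ≠ {⇓}$.

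Next I would identify the boundary maps by reading off the structural transformations $Σ_{1,o}^F(X) → Σ₀ᴮ(X)\source × X\but$ and $Σ_{1,r}^F(X) → Σ₀ᴮ(X)\source × X\but$. In each case $o·s_L$ is an incarnation of a $Σ₀ᴮ$-operation, whose arity is computed by Proposition~\ref{prop:sarity}. For a canonical component: at $⇓⁺_{o,i}$ and $⇓⁻_{o,j}$ the source of the new rule~\eqref{rules:new} uses exactly the data of the transition, so $E(\source(L),o·s_L) → E(L,o)$ is an isomorphism; at $⇓$ the source of rule~\eqref{rules:canonical} uses only the sources of the active premise-transitions, so the boundary arity is $ℒ$ applied to $N⁺ₒ · 𝐲_{s_⇓} + ∑_j \id$, a finite coproduct of copies of $ℒ(𝐲_{s_⇓})$ and of identities, hence a relative cell complex. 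For a rigid-rule component $r$ we have $E(\source(⇓), o_r·s_⇓) = ℒ(K₀)$ (the arity of the head program operation $o$), and $E(s_⇓ ↾ o_r)$ is the composite $ℒ(K₀) \xto{s₁} E_{p₁} → E_r$ of the left leg of the first premise cospan with the colimit coprojection.

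The heart of the matter — and the reason the rigid encoding succeeds where the naive one fails — is that each premise cospan leg $sᵢ ∶ ℒ(K_{i-1}) → E_{pᵢ}$ is, by the very construction~\eqref{eq:cellpremise}, a pushout of the generating cofibration $ℒ(𝐲_{s_{αᵢ}})$, since every premise transition type $αᵢ$ lies in $ℂ₁$. Building $E_r$ one premise at a time, the coprojection $E_{p₁} \sqcup_{ℒ(K₁)} … \sqcup_{ℒ(K_{i-1})} E_{pᵢ} → E_{p₁} \sqcup_{ℒ(K₁)} … \sqcup_{ℒ(Kᵢ)} E_{p_{i+1}}$ is a pushout of $s_{i+1}$, hence again a pushout of $ℒ(𝐲_{s_{α_{i+1}}})$ by pasting of pushouts; as there are finitely many premises (Remark~\ref{rk:discrepance-howe}), $ℒ(K₀) \xto{s₁} E_{p₁} → E_r$ is a finite composite of basic relative cell complexes, thus a relative cell complex. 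By Proposition~\ref{prop:cellscof} every boundary arity is therefore a cofibration, and Corollary~\ref{cor:cellularenough} yields cellularity of $(Σ₁^{B,D})^\source$. The main obstacle is genuinely only the first two steps: correctly recognising the fibre products~\eqref{eq:Howearity} as homs out of the iterated pushout of the premise cospans, and tracking the action of $s_L ↾ o$ on arities; once this bookkeeping is done, the cell-complex structure is dictated by~\eqref{eq:cellpremise}.
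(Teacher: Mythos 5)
Your proposal is correct and follows the same route as the paper's proof: familiality by construction, reduction via Corollary~\ref{cor:cellularenough} and Proposition~\ref{prop:cellscof} to checking that each boundary arity is a relative cell complex, with the canonical rules~\eqref{rules:canonical} giving coproducts of generating cofibrations (and identities), the rules~\eqref{rules:new} giving identities, and each rigid rule giving the left leg of the composite of the premise cospans, a composite of pushouts of generating cofibrations. The only difference is that you spell out the bookkeeping (the identification of~\eqref{eq:Howearity} with homs out of the iterated pushout, and the pasting-of-pushouts argument) that the paper leaves implicit.
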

    \begin{proof}
      The functor is familial by construction.  By
      Corollary~\ref{cor:cellularenough}\ref{item:explicit} and
      Proposition~\ref{prop:cellscof}, it suffices to show that for
      any rule $r$ whose conclusion is a transition of type $α$, the
      boundary arity \[E(s_α ↾ r)∶ E(\source(α),r·s_α) → E(α,r)\] is a
      relative cell complex.  The arity of each canonical
      rule~\eqref{rules:canonical} is clearly a coproduct of
      generating cofibrations.  The arities of canonical
      rules~\eqref{rules:new} 
      are identities, hence trivially
      relative cell complexes. For any other rule $r$,
      by~\eqref{eq:Howearity} the arity is the left-hand leg of cospan
      obtained by composing all $K_{i-1} → Eᵢ ← Kᵢ$, which is thus by
      construction a composite of pushouts of generating cofibrations,
      hence a relative cell complex.
    \end{proof}
    We at last obtain:
    \begin{thm}
      For any Howe binding signature $B$ and dynamic signature $D$,
      the open extension of applicative bisimilarity on the generated
      transition system is a congruence.
    \end{thm}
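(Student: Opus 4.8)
The plan is to obtain the statement as an essentially immediate consequence of Theorem~\ref{thm:main}, once all of its standing hypotheses have been checked for the operational semantics signature $(Σ₀ᴮ, Σ₁^{B,D})$ on the Howe context induced by $(B,D)$. First I would record that this pair is indeed an operational semantics signature on a Howe context: the monoidal structure on $\psh[ℂ₀] = [𝟚×𝔽^𝟚,𝐒𝐞𝐭]$ arises from composition of finitary endofunctors (Remark~\ref{rk:howe-format-mon-prod}), whose tensor preserves all colimits on the left and filtered colimits on the right, so $(\source,\but)$ together with this structure forms a Howe context; $Σ₀ᴮ$ is familial by Proposition~\ref{prop:sarity}, hence preserves pullbacks, and being built from coproducts of representables it is finitary and preserves epis, so by the criterion recalled in Remark~\ref{rk:sifted} it preserves sifted colimits; and $Σ₁^{B,D}$ is a dynamic signature over $Σ₀ᴮ$ by construction, with $(Σ₁^{B,D})^F$ finitary since it is assembled from finite limits of hom-functors into representables.

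Next I would invoke the adequacy results already established. By Proposition~\ref{prop:furtheradequacy}, the initial vertical $Σ₁^{B,D}$-algebra $𝐙$ is isomorphic to the rigid transition system $R(B,D)$. By Lemma~\ref{lem:howecellular}, the induced functor $(Σ₁^{B,D})^\source$ is cellular, so by Corollary~\ref{cor:cellularenough} the dynamic signature $Σ₁^{B,D}$ preserves functional bisimulations --- exactly the hypothesis of Theorem~\ref{thm:main}. Applying Theorem~\ref{thm:main} then yields that substitution-closed bisimilarity $∼^⊗$ on $𝐙 ≅ R(B,D)$ is a \transitionmonoidof{Σ₀ᴮ} and that $∼^⊗ → 𝐙²$ is a \transitionmonoidof{Σ₀ᴮ} morphism; concretely this provides the span morphism $Σ₀ᴮ((∼^⊗)₀) → (∼^⊗)₀$ witnessing context-closedness under all program and value operations, together with the $𝐙₀$-module action witnessing substitution-closedness. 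Combined with the standard fact that a (substitution-closed) bisimilarity relation is an equivalence relation, $∼^⊗$ is therefore a congruence on $R(B,D)$. Finally, Proposition~\ref{prop:adequacy} identifies $∼^⊗$ on $R(B,D)$ with $∼°$, the open extension of applicative bisimilarity in the sense of $(B,D)$, whence $∼°$ is a congruence, as claimed.

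The main obstacle here is not this final assembly, which is routine once the dictionary is in place, but resides in the steps that precede it: the construction of the rigid signature $R(B,D)$ and the verifications of Propositions~\ref{prop:adequacy} and~\ref{prop:furtheradequacy}, and especially of Lemma~\ref{lem:howecellular} --- i.e.\ arranging that each rigid premise contributes a pushout of a generating cofibration, so that the boundary arity of every rule is a relative cell complex. The one delicate point that still deserves care at the level of this theorem is confirming the side conditions that make $(Σ₀ᴮ,Σ₁^{B,D})$ an operational semantics signature on a Howe context --- finitarity of $(Σ₁^{B,D})^F$ and preservation of sifted colimits by $Σ₀ᴮ$ --- since Theorem~\ref{thm:main} is only available under those hypotheses, and in particular this is where the discrepancy flagged in Remark~\ref{rk:discrepance-howe} (finitely many premises per rule) is genuinely used.
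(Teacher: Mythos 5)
Your proposal is correct and follows essentially the same route as the paper's proof: combine Proposition~\ref{prop:adequacy} and Proposition~\ref{prop:furtheradequacy} to identify $∼°$ with substitution-closed bisimilarity on the initial vertical $Σ₁^{B,D}$-algebra, use Lemma~\ref{lem:howecellular} (via the cellularity-to-preservation equivalence) to discharge the hypothesis of Theorem~\ref{thm:main}, and conclude. Your additional verification of the standing hypotheses (sifted-colimit preservation for $Σ₀ᴮ$, finitarity of $(Σ₁^{B,D})^F$) is a welcome level of care that the paper's proof leaves implicit.
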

    \begin{proof}
      The open extension of applicative bisimilarity for $(B,D)$
      coincides with substitution-closed bisimilarity for $R(B,D)$ by
      Proposition~\ref{prop:adequacy}, which further coincides with
      substitution-closed for the transition system generated by
      $Σ₁^{B,D}$ by Proposition~\ref{prop:furtheradequacy}. Finally,
      the latter is a congruence by Lemmas~\ref{lem:characfib}
      and~\ref{lem:howecellular}, and Theorem~\ref{thm:main}.
    \end{proof}

\begin{full}
\section{Congruence of substitution-closed bisimilarity}\label{s:congruence}
In this section, we elaborate on the proof sketch of
Theorem~\ref{thm:main} given in~§\ref{s:substclosedbisim}.  The
overall structure remains the same, and the final part of the proof
sketch is complete, so we mainly elaborate on
items~\ref{item:wowo}--\ref{item:sym}.

\subsection{Preliminaries on spans}
In this section, we fix a locally finitely presentable category $𝒞$,
recall some known tools about spans, and develop a few new ones,
including categorified notions of reflexivity, transitivity, symmetry,
and transitive closure.  As announced in~§\ref{ss:notation}, we freely
switch from spans $X ← S → Y$ to their pairings $S → X×Y$ in
$𝒞/X×Y$. Furthermore,
\begin{defi}
  Given any spans $R \xto{⟨p₁,p₂⟩} X×Y$ and $S \xto{⟨q₁,q₂⟩} Y×Z$,
  their \emph{span}, or \emph{sequential}, composition, denoted by
  ${R;S} → X×Z$, is the (or, rather, some global choice of) following
  span obtained by pullback.
\begin{center}
  \Diag{%
    \pbk{m-2-2}{m-1-3}{m-2-4} %
  }{%
    \& \& {R;S} \\
    \& R \& \& S \& \&  \\
    X \& \& Y \& \& Z
  }{%
    (m-1-3) edge[labelal={}] (m-2-2) %
    edge[labelar={}] (m-2-4) %
    (m-2-2) edge[labelal={p₁}] (m-3-1) %
    edge[labelar={p₂}] (m-3-3) %
    (m-2-4) edge[labelal={q₁}] (m-3-3) %
    edge[labelar={q₂}] (m-3-5) %
  }
\end{center}

\end{defi}
\begin{defi}
  A span $p∶ S → X²$ (a.k.a.\ a graph $S ⇉ X$) is \emph{reflexive} if
  there is a morphism
  \begin{center}
    \diag{%
      X \& \& S \\
      \& X² %
    }{%
      (m-1-1) edge[dashed,labela={}] (m-1-3) %
      edge[labelbl={⟨\id_X,\id_X⟩}] (m-2-2) %
      (m-1-3) edge[labelbr={p}] (m-2-2) %
    }
  \end{center}
  from the diagonal to $S$ in $𝒞/X²$. It is \emph{transitive} if there
  is a morphism $S;S → S$ in $𝒞/X²$.  Finally, it is \emph{symmetric}
  if there is a morphism $S^† → S$, where $(-)^†$ denotes the functor
  swapping projections, i.e., mapping any $R \xto{⟨p₁,p₂⟩} X²$ to
  $R \xto{⟨p₂,p₁⟩} X²$.
\end{defi}

For potentially non-reflexive spans, we will use the following
reflexive transitive closure.
\begin{defi}
  The \emph{reflexive transitive closure} $S^*$ of any span $S → X²$
  is the coproduct $∑_{n ∈ ℕ} S^{{;}n}$, or for short $∑_{n ∈ ℕ} Sⁿ$
  when the context is clear, where $S^{{;}n}$ denotes iterated span
  composition of $S$ with itself, inductively defined by
  $S^{{;}0} = X$ and $S^{{;}n+1} = S^{{;}n};S$.
\end{defi}

At some point, we will also use a more ``relational'' notion of
transitive closure, which we now introduce.
\begin{defi}
  Given a span $S → X²$, we denote by $\overline{S}$ the relation on
  $X$ induced by the image factorisation of $S → X × X$.
\end{defi}
    \begin{defi}\label{def:trans}
      The \emph{relational transitive closure} $S^{\overline{+}}$ of a span
      $S$ on $X$ is the union $⋃_{n>0}\overline{S^{{;}n}}$. 
    \end{defi}
    \begin{rem}
      Unions of relations exist by Proposition~\ref{prop:union}, as we
      have assumed the ambient category $𝒞$ to be locally finitely
      presentable.
    \end{rem}

    We immediately observe the following.
    \begin{lem}
      For all $n ∈ ℕ$, spans $S → X²$ and $R → X²$, and families
      $(Sᵢ → X²)_{i ∈ I}$ of relations spans, we have
      \begin{enumerate}
      \item \label{item:symrel} $\overline{S^†} ≅ \overline{S}^†$, 
      \item \label{item:symseq} $(R;S)^† ≅ S^†;R^†$, hence
        $(S^†)^{{;}n} ≅ (S^{{;}n})^†$, and
      \item \label{item:symU} $⋃_{i ∈ I} Sᵢ^† ≅ (⋃_{i ∈ I} Sᵢ)^†$.
      \end{enumerate}
    \end{lem}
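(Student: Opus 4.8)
The plan is to reduce all three items to one structural observation: under the isomorphism between the category of spans from $X$ to $X$ and the slice $𝒞/X²$, the operation $(-)^†$ is precisely postcomposition with the symmetry isomorphism $σ∶ X² → X²$ of the binary product. Being postcomposition with an isomorphism, $(-)^†∶ 𝒞/X² → 𝒞/X²$ is itself an isomorphism of categories; in particular it preserves and reflects monomorphisms and strong epimorphisms, and it preserves all colimits, hence coproducts and unions of subobjects.

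Granting this, item~\ref{item:symrel} is immediate. Starting from the (strong epi, mono)-factorisation $S → \overline{S} → X²$ defining $\overline{S}$, postcomposing the mono part with $σ$ yields a factorisation of $S^† = σ ∘ S$ through $\overline{S}$ whose second leg is again monic (since $σ$ is invertible) and whose first leg is unchanged, hence still a strong epi. As the image of a morphism is unique up to isomorphism, this computes $\overline{S^†}$, and its mono part is by definition $\overline{S}^†$. For item~\ref{item:symU} I would invoke Proposition~\ref{prop:union}: $⋃_{i ∈ I} Sᵢ$ is the mono part of the (strong epi, mono)-factorisation of the cotupling $∑_{i ∈ I} Sᵢ → X²$; since $(-)^†$ commutes with cotupling (postcomposition with $σ$ distributes over the coproduct injections) and, by the previous argument, with (strong epi, mono)-factorisations, applying $(-)^†$ throughout the construction of $⋃_{i ∈ I} Sᵢ$ produces the construction of $⋃_{i ∈ I} Sᵢ^†$, whence the claimed isomorphism.

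For item~\ref{item:symseq}, the identity $(R;S)^† ≅ S^†;R^†$ is the standard fact that $(-)^†$ is the $1$-cell-reversing part of an involution on the bicategory $\Span(𝒞)$: the span $R;S$ is built as a pullback over the shared object of the inner legs of $R$ and $S$, and reading that same pullback square with its two sides interchanged is exactly the pullback computing $S^†;R^†$, the interchange being the application of $σ$ to the outer legs. The consequence $(S^†)^{{;}n} ≅ (S^{{;}n})^†$ then follows by induction on $n$: for $n = 0$ both sides are the diagonal span $X → X²$, which is fixed by $σ$; for the inductive step $(S^{{;}n+1})^† = (S^{{;}n};S)^† ≅ S^†;(S^{{;}n})^† ≅ S^†;(S^†)^{{;}n}$, which is isomorphic to $(S^†)^{{;}n};S^† = (S^†)^{{;}n+1}$ because iterated span composition is associative up to isomorphism, so composing from the left or from the right gives isomorphic spans.

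I expect no genuine obstacle: the only point needing care is the bookkeeping of the coherence isomorphisms of $\Span(𝒞)$ — unitors, associators, and the global choices of pullbacks and images — in item~\ref{item:symseq}. Since every claim is an isomorphism rather than an equality of spans, this bookkeeping is routine, and the remaining content is just the fact that $(-)^†$ is postcomposition with an iso.
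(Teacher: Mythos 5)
Your proof is correct and follows essentially the same route as the paper's: both identify $(-)^†$ with postcomposition by the symmetry $σ∶ X² → X²$, derive~\ref{item:symrel} from uniqueness of (strong epi, mono)-factorisations, derive~\ref{item:symU} from commutation of $σ∘(-)$ with cotupling together with~\ref{item:symrel}, and prove the second half of~\ref{item:symseq} by the same induction (including the silently-needed reassociation $S^{{;}n};S ≅ S;S^{{;}n}$, which you in fact justify more explicitly than the paper does).
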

    \begin{proof}
      Let $σ ≔ ⟨π₂,π₁⟩∶ X² → X²$.
      \begin{itemize}
      \item For~\eqref{item:symrel}, by definition each side of the
        isomorphism corresponds to one side of the exterior of the
        following commuting pentagon.
        \begin{center}
          \diag{%
            S \& \& \overline{S} \\
            \overline{S^{†}} \& \& X² \\
            \& X² %
          }{%
            (m-1-1) edge[onto,labela={}] (m-1-3) %
            edge[onto,shorten >=.5em,labell={}] (m-2-1) %
            edge[labelon={⟨π₁,π₂⟩}] (m-2-3) %
            edge[bend left=10,labelon={⟨π₂,π₁⟩}] (m-3-2) %
            (m-1-3) edge[into,labelr={}] (m-2-3) %
            (m-2-1) edge[into,labelbl={}] (m-3-2) %
            (m-2-3) edge[labelbr={σ}] (m-3-2) %
          }
        \end{center}
        But, $σ$ being an isomorphism, both sides are strong epi-mono
        factorisations of the morphism $⟨π₂,π₁⟩∶ S → X²$, hence are isomorphic.
      \item The first point of~\eqref{item:symseq} is clear.  For the
        second one, we proceed by induction.  The base case is
        trivial, and assuming $(S^†)^{{;}n} ≅ (S^{{;}n})^†$, we have
        \[(S^†)^{{;}n+1} ≅ (S^†)^{{;}n};S^† ≅ (S^{{;}n})^†;S^† ≅
        (S;S^{{;}n})^† ≅ (S^{{;}n+1})^†\rlap{,}\]
        hence the result.
      \item For~\eqref{item:symU} we have
        $⋃_{i ∈ I} Sᵢ^† = \overline{∑_{i ∈ I} Sᵢ^†}$ and
        $(⋃_{i ∈ I} Sᵢ)^† = \overline{(∑_{i ∈ I} Sᵢ)}^† ≅
        \overline{(∑_{i ∈ I} Sᵢ)^†}$ by~\eqref{item:symrel}, so it
        suffices to show $∑_{i ∈ I} Sᵢ^† ≅ (∑_{i ∈ I} Sᵢ)^†$.  But,
        letting $bᵢ∶ Sᵢ → X²$ denote each projection, the former is
        (shorthand for) the morphism $[σ∘bᵢ]ᵢ∶ ∑ᵢ Sᵢ → X²$, and the
        latter is $σ∘[bᵢ]ᵢ$, which are in fact equal.
        \qedhere
      \end{itemize}      
    \end{proof}
    \begin{cor}
      \label{cor:commute-sym-trans}
      For all spans $S → X²$, we have 
      $S^{†\overline{+}} ≅S^{\overline{+}†}$. 
    \end{cor}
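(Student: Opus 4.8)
The plan is to obtain the isomorphism by chaining together the three parts of the lemma immediately preceding Corollary~\ref{cor:commute-sym-trans}. First I would unfold both sides using Definition~\ref{def:trans}: the left-hand side $S^{†\overline{+}}$ is the union $⋃_{n>0}\overline{(S^†)^{{;}n}}$, while the right-hand side $S^{\overline{+}†}$ is $\bigl(⋃_{n>0}\overline{S^{{;}n}}\bigr)^†$.

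Next I would rewrite the left-hand side in three successive steps, each justified by one part of the lemma. By~\eqref{item:symseq} we have $(S^†)^{{;}n} ≅ (S^{{;}n})^†$, and since the image relation $\overline{(-)}$ sends isomorphic spans to isomorphic relations, the left-hand side becomes $⋃_{n>0}\overline{(S^{{;}n})^†}$. By~\eqref{item:symrel}, applied with $R = S^{{;}n}$, this equals $⋃_{n>0}\overline{S^{{;}n}}{}^†$. Finally, by~\eqref{item:symU}, applied to the family $(\overline{S^{{;}n}})_{n>0}$, this equals $\bigl(⋃_{n>0}\overline{S^{{;}n}}\bigr)^†$, which is precisely the right-hand side. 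Concretely, the whole computation reads
\[S^{†\overline{+}} = ⋃_{n>0}\overline{(S^†)^{{;}n}} ≅ ⋃_{n>0}\overline{(S^{{;}n})^†} ≅ ⋃_{n>0}\overline{S^{{;}n}}{}^† ≅ \Bigl(⋃_{n>0}\overline{S^{{;}n}}\Bigr)^† = S^{\overline{+}†}\rlap{.}\]

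There is no genuine obstacle here; the argument is just routine bookkeeping of isomorphisms. The only point deserving a moment's attention is the legitimacy of each rewriting step, i.e.\ that the operations involved --- image relation, iterated span composition, converse $(-)^†$, and arbitrary union --- all preserve isomorphisms of spans. This is immediate, since each is the object part of a functor on the relevant slice categories $𝒞/X²$ (image factorisations and pullbacks are functorial, $(-)^†$ is postcomposition with the swap isomorphism of $X²$, and the union is the image of the coproduct of the chosen representatives), so the displayed chain of isomorphisms composes without any coherence issue.
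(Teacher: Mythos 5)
Your proof is correct and follows exactly the same route as the paper, which compresses the whole argument into the one-line chain $S^{†\overline{+}} = ⋃_{n>0}\overline{(S^†)^{{;}n}} ≅ (⋃_{n>0}\overline{S^{{;}n}})^† ≅ S^{\overline{+}†}$ justified "by the lemma". You have merely made explicit which of the three parts of the preceding lemma justifies each step, which is a faithful unpacking rather than a different argument.
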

    \begin{proof}
      We have $S^{†\overline{+}} = ⋃_{n>0} \overline{(S^†)^{{;}n}}
      ≅ (⋃_{n>0} \overline{S^{{;}n}})^†
      ≅ S^{\overline{+}†}$, by the lemma.
    \end{proof}
    The following lemma will later be used to exploit preservation of
    sifted colimits by $Σ₀$.
    \begin{lem}
      \label{lem:rel-trans-filtered-colim}
     If $S$ is a reflexive span on $X$, then $S^{\overline{+}}$ is the (filtered) colimit of the chain
     \[
       X → \overline{S} ≅ \overline{S;X} → \overline{S;S} ≅ \overline{S;S;X} → \overline{S;S;S} → {…}
     \]
    \end{lem}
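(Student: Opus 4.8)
The plan is to identify the colimit of the displayed chain with the union $\bigcup_{n>0}\overline{S^{{;}n}}$ that defines $S^{\overline{+}}$ (Definition~\ref{def:trans}), exploiting that the ambient category $\mathcal{C}$ is locally finitely presentable. First I would make the transition maps of the chain explicit. Since $S^{{;}0}=X$ we have canonical isomorphisms $S^{{;}n}\cong S^{{;}n};X$ in $\mathcal{C}/X^2$, and the reflexivity datum $\iota\colon X\to S$ in $\mathcal{C}/X^2$ induces $S^{{;}n};\iota\colon S^{{;}n};X\to S^{{;}n};S = S^{{;}n+1}$. Applying the image functor $\overline{(-)}$ — functorial into the poset of subobjects of $X^2$, by the universal property of (strong epi, mono)-factorisations — to these maps yields exactly the chain $X\to\overline{S}\cong\overline{S;X}\to\overline{S;S}\to\cdots$ of the statement. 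Each resulting map $\overline{S^{{;}n}}\to\overline{S^{{;}n+1}}$ is a morphism over $X^2$, hence a monomorphism, so the chain is an increasing $\omega$-chain of subobjects of $X^2$, starting at $\overline{S^{{;}0}}=\overline{X}\cong X$ (the diagonal). Reflexivity also gives $\overline{X}\subseteq\overline{S}=\overline{S^{{;}1}}$, so deleting the first term changes neither the union nor the colimit, and $\bigcup_{n\ge 0}\overline{S^{{;}n}}=S^{\overline{+}}$.

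Next I would compute the colimit $C$ of this chain. As a sequential colimit, $C$ is a coequaliser of the two evident maps $\coprod_n\overline{S^{{;}n}}\to\coprod_n\overline{S^{{;}n}}$, so the induced map $\coprod_n\overline{S^{{;}n}}\to C$ is a (regular, hence strong) epimorphism. On the other hand, the compatible inclusions $\overline{S^{{;}n}}\hookrightarrow X^2$ induce a morphism $C\to X^2$; since $\mathcal{C}$ is locally finitely presentable, filtered colimits commute with finite limits there, so this colimit of monomorphisms is itself a monomorphism. Therefore $\coprod_n\overline{S^{{;}n}}\to C\hookrightarrow X^2$ is a (strong epi, mono)-factorisation of the cotupling $\coprod_n\overline{S^{{;}n}}\to X^2$, and by Proposition~\ref{prop:union} its mono part $C\hookrightarrow X^2$ is precisely the union $\bigcup_n\overline{S^{{;}n}}=S^{\overline{+}}$. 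This establishes the claimed identification of $C$ with $S^{\overline{+}}$.

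Everything else is bookkeeping with image factorisations and span composition; the one ingredient that is not purely formal is the fact that filtered colimits of monomorphisms are again monomorphisms in a locally finitely presentable category (equivalently, that filtered colimits there commute with finite limits). I expect that to be the only point where a reader wants a pointer, so I would state it explicitly with a citation to \cite{Adamek} rather than leave it implicit. (Note that reflexivity of $S$ is used essentially: without a map $X\to S$ there is no canonical morphism $S^{{;}n}\to S^{{;}n+1}$, so the very chain would not be defined.)
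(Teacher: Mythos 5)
Your proof is correct and follows essentially the same route as the paper's: both arguments reduce to the fact that the mediating morphism from the colimit of this chain of monomorphisms to $X^2$ is again a monomorphism (which the paper extracts from \cite[Proposition~1.62]{Adamek} and you derive from commutation of filtered colimits with finite limits in a locally finitely presentable category), and then identify the resulting subobject of $X^2$ with the union defining $S^{\overline{+}}$. The only divergence is in that final identification: the paper lifts the colimiting cocone to the preorder of relations over $X$, where directed colimits coincide with unions, whereas you recognise $\coprod_n \overline{S^{{;}n}} \to C \hookrightarrow X^2$ as the (strong epi, mono)-factorisation of the cotupling and invoke Proposition~\ref{prop:union} --- both conclusions are sound.
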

    \begin{proof}
      Consider a colimiting cone for the given diagram, say to
      $C ∈ 𝒞$. Because the forgetful functor $𝒞/X² → 𝒞$ creates
      colimits, it is in fact a colimiting cocone in $𝒞/X²$.  Now, the
      diagram consists of monic morphisms, hence
      by~\cite[Proposition~1.62(i)]{Adamek}, so does the colimiting
      cocone. Furthermore, by~\cite[Proposition~1.62(ii)]{Adamek}, the
      mediating morphism $C → X²$ is again monic.  The cocone thus
      lifts to the category $𝐑𝐞𝐥(X)$ of (binary) relations over
      $X$. Because the forgetful functor $𝐑𝐞𝐥(X) → 𝒞/X²$ is fully
      faithful, the cocone remains colimiting in $𝐑𝐞𝐥(X)$.  Finally,
      $𝐑𝐞𝐥(X)$ is a preorder category, so the colimit of the
      considered, directed diagram is equally a colimit of the
      underlying discrete diagram, which is $S^{\overline{+}}$ by
      definition.
    \end{proof}

    The next result will be useful to show that the relational
    transitive closure of the Howe closure of substitution-closed
    bisimilarity is symmetric on states.

    \begin{lem}\label{lem:spansymtrans}
      For any reflexive span $R → X²$, $R^{\overline{+}}$ is symmetric
      if there exists a span morphism $R → R^{\overline{+}†}$.
    \end{lem}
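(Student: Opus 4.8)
The plan is to extract from the hypothesis an inclusion of relations $R^{\overline{+}} ⊆ R^{\overline{+}†}$ as subobjects of $X²$, and then to apply the involutive autoequivalence $(-)^†$ of $𝒞/X²$, which turns this inclusion into a span morphism $(R^{\overline{+}})^† → R^{\overline{+}}$, i.e.\ exactly a witness that $R^{\overline{+}}$ is symmetric.

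First I would record two things. (1) $R^{\overline{+}†}$ is again a relation --- it is the image of the relation $R^{\overline{+}}$ under $(-)^†$ --- and it is transitive: $R^{\overline{+}}$ is the relational transitive closure of the reflexive span $R$, hence transitive (by the span calculus of this section, using reflexivity of $R$ through the directed-colimit presentation of Lemma~\ref{lem:rel-trans-filtered-colim}), and $(-)^†$ preserves transitivity since it sends a span composition $S;S$ to $S^†;S^†$. (2) The given span morphism $R → R^{\overline{+}†}$ has monic codomain, so by initiality of image factorisations it factors as $R ↠ \overline{R} → R^{\overline{+}†}$; that is, $\overline{R} ⊆ R^{\overline{+}†}$ in $𝒞/X²$.

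Next I would prove by induction on $n ≥ 1$ that $\overline{R^{{;}n}} ⊆ R^{\overline{+}†}$. The base case $n = 1$ is observation (2). For the inductive step, composing --- by functoriality of span composition in each variable --- the span morphisms $R^{{;}n} ↠ \overline{R^{{;}n}} ↪ R^{\overline{+}†}$ and $R ↠ \overline{R} ↪ R^{\overline{+}†}$ produces a morphism $R^{{;}n+1} = R^{{;}n};R → R^{\overline{+}†};R^{\overline{+}†}$ over $X²$; postcomposing with the transitivity morphism of (1) gives $R^{{;}n+1} → R^{\overline{+}†}$ over $X²$, and, the codomain being a relation, this factors through the image $\overline{R^{{;}n+1}}$, so $\overline{R^{{;}n+1}} ⊆ R^{\overline{+}†}$. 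Since $R^{\overline{+}} = ⋃_{n>0}\overline{R^{{;}n}}$ (Definition~\ref{def:trans}) and each term of this union is below $R^{\overline{+}†}$, we obtain $R^{\overline{+}} ⊆ R^{\overline{+}†}$. Applying $(-)^†$ and using $(R^{\overline{+}†})^† = R^{\overline{+}}$ then yields the desired span morphism $(R^{\overline{+}})^† → R^{\overline{+}}$, completing the proof.

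The main obstacle is observation (1), i.e.\ that the relational transitive closure of a reflexive span is itself transitive; this is where the reflexivity hypothesis on $R$ is genuinely used (via Lemma~\ref{lem:rel-trans-filtered-colim}), and all the remaining steps are routine manipulations with images, functoriality of span composition, and the strict involutivity of $(-)^†$.
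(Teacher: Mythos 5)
Your proof is correct and essentially reproduces the paper's own argument: both turn the given morphism $j∶ R → R^{\overline{+}†}$ into a morphism $R^{\overline{+}} → R^{\overline{+}†}$ and conclude by applying $(-)^†$, the only difference being that the paper builds this map in one go from $⋃_{n>0}\overline{R^{;n}} → ⋃_{n>0}\overline{((R^{\overline{+}})^{;n})^†}$ together with the maps $(R^{\overline{+}})^{;n} → R^{\overline{+}}$, where you use binary transitivity of $R^{\overline{+}†}$ plus an induction on $n$ --- the same underlying fact. The one imprecision is your attribution of the transitivity of $R^{\overline{+}}$ to reflexivity via Lemma~\ref{lem:rel-trans-filtered-colim}: what it actually rests on is the inclusion $\overline{R^{;m}};\overline{R^{;n}} ⊆ \overline{R^{;(m+n)}}$ (compatibility of image factorisations with span composition, using stability of the epi part under the relevant pullbacks), which the paper also uses silently in asserting the maps $(R^{\overline{+}})^{;n} → R^{\overline{+}}$ and which does not need $R$ to be reflexive.
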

    \begin{proof}
    Given a morphism $j∶ R → R^{\overline{+}†}$, we consider the
    composite
      \[
        R^{\overline{+}} = ⋃_{n>0}\overline{R^{;n}} →
        ⋃_{n>0}\overline{(R^{\overline{+}†})^{;n}}
        ≅
        ⋃_{n>0}\overline{((R^{\overline{+}})^{;n})^{†}}
        →
        ⋃_{n>0}\overline{R^{\overline{+}†}} ≅
        ⋃_{n>0}{R^{\overline{+}†}} ≅
        R^{\overline{+}†},
      \]
      where the first morphism is obtained from $j$, and the second one
      is obtained from morphisms $(R^{\overline{+}})^{;n} → R^{\overline{+}}$.
    \end{proof}

\subsection{Howe closure on states}
We fix an operational semantics signature $(Σ₀,Σ₁)$ on a Howe context
$\source,\but∶ ℂ₁ → ℂ₀$, and recall
\begin{itemize}
\item from Notation~\ref{not:Z0} that $𝐙₀$
denotes the initial $Σ₀$-monoid, 
\item from Theorem~\ref{thm:Z} that $𝐙$ denotes the initial vertical
  $\check{Σ}₁$-algebra (hence in particular that $𝒟(𝐙) = 𝐙₀$), and
\item from Notation~\ref{not:simtens} that $∼^⊗$ denotes
  substitution-closed bisimilarity on $𝐙$ (hence in particular that
  $𝒟(∼^⊗) = {∼^⊗₀}$ is its state part).
\end{itemize}
\begin{defi}
  Let $Σ₀ᴴ∶ \psh[ℂ₀]/𝐙₀² → \psh[ℂ₀]/𝐙₀²$ map any span $X → 𝐙₀²$ to
  \[Σ₀(X) + (X;{∼₀^{⊗}}) → Σ₀(𝐙₀)² + 𝐙₀² → 𝐙₀²\rlap{.}\]
\end{defi}

This functor $Σ₀ᴴ$ is clearly inspired from the standard Howe closure.
We now want to prove that it is pointed strong, which requires us to
equip $\psh[ℂ₀]/𝐙₀²$ with monoidal structure. But $𝐙₀²$ is a monoid, and it
is well-known~\cite[§2]{Weber:genmorph} that any slice of a monoidal
category over any monoid $M$ is again monoidal.  The tensor of $X → M$
and $Y → M$ is simply $X⊗Y → M⊗M → M$, and the unit is $I → M$.  We
may thus state the following result.
\begin{prop}\label{prop:SoHptstr}
The functor $Σ₀ᴴ$ is pointed strong.  
\end{prop}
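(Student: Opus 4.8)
The plan is to exhibit $Σ₀ᴴ$ as a coproduct $F_Σ + F_∼$ of two pointed strong endofunctors on $\psh[ℂ₀]/𝐙₀²$ and to appeal to the fact that a coproduct of pointed strong endofunctors is again pointed strong. Here $F_Σ$ sends $X → 𝐙₀²$ to the composite $Σ₀(X) → Σ₀(𝐙₀)² → 𝐙₀²$ appearing as the first summand in the definition of $Σ₀ᴴ$, and $F_∼$ sends $X → 𝐙₀²$ to the span composite $X;{∼₀^⊗} → 𝐙₀²$. For the coproduct step one uses that the slice monoidal structure on $\psh[ℂ₀]/𝐙₀²$ over the monoid $𝐙₀²$ (recalled just before the statement) preserves binary coproducts in its left argument — inherited from the Howe context $\psh[ℂ₀]$, where the tensor preserves all colimits on the left — so that $(F_Σ(X) + F_∼(X)) ⊗ Y ≅ F_Σ(X) ⊗ Y + F_∼(X) ⊗ Y$ and the two strengths may be cotupled, the coherence axioms holding summandwise.

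For $F_Σ$, I would first observe that $𝐙₀²$ carries a canonical $Σ₀$-monoid structure, namely the product of $𝐙₀$ with itself computed in $Σ₀\mon$: its underlying presheaf is $𝐙₀ × 𝐙₀$ since the forgetful functor $Σ₀\mon → \psh[ℂ₀]$ is monadic (Proposition~\ref{prop:ptstrgeneral}), hence creates limits, and its $Σ₀$-algebra structure is forced to be $Σ₀(𝐙₀²) \xto{⟨Σ₀π₁,Σ₀π₂⟩} Σ₀(𝐙₀)² \xto{ν_{𝐙₀}²} 𝐙₀²$ because both projections must be $Σ₀$-algebra morphisms. Thus $F_Σ$ is exactly the slice-lift of $Σ₀$ over the $Σ₀$-monoid $M = 𝐙₀²$, sending $(X \xto{f} M)$ to $Σ₀(X) \xto{Σ₀ f} Σ₀(M) \xto{ν_M} M$. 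For any $Σ₀$-monoid $M$, this slice-lift is pointed strong with the same underlying strength $st_{X,Y} : Σ₀(X)⊗Y → Σ₀(X⊗Y)$ as on $\psh[ℂ₀]$; the only point to check is that $st_{X,Y}$ is a morphism over $M$ when $X → M$ and $Y → M$ are sliced, and unwinding the definition of the slice tensor this reduces precisely to the $Σ₀$-monoid compatibility square~\eqref{eq:monoidalg} for $M$, together with naturality of $st$ and the fact that $Y → M$ is a morphism of pointed objects. The strength coherence axioms are then inherited verbatim.

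For $F_∼$, where substitution-closedness of $∼^⊗$ enters essentially, recall that $∼₀^⊗$ comes equipped with a right $𝐙₀$-module structure $ρ : ∼₀^⊗ ⊗ 𝐙₀ → ∼₀^⊗$ making $⟨s₁,s₂⟩ : ∼₀^⊗ → 𝐙₀²$ a module morphism onto $𝐙₀²$ with the module structure inherited by closure of modules under products (Proposition~\ref{prop:modulespreserve}); in particular each leg $sᵢ$ satisfies $sᵢ ∘ ρ = m_{𝐙₀} ∘ (sᵢ ⊗ 𝐙₀)$. Given $X → 𝐙₀²$ with legs $x₁,x₂$ and a pointed $Y → 𝐙₀²$ with legs $y₁,y₂$, write $X;{∼₀^⊗}$ as the pullback $X ×_{𝐙₀} ∼₀^⊗$ of $x₂$ against $s₁$, with projections $p_X$ and $p_∼$. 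I would then define the strength $st_{X,Y} : (X;{∼₀^⊗}) ⊗ Y → (X ⊗ Y);{∼₀^⊗}$ into the target pullback $(X⊗Y) ×_{𝐙₀} ∼₀^⊗$ by the universal property, using the pair consisting of $p_X ⊗ Y : (X;{∼₀^⊗}) ⊗ Y → X ⊗ Y$ and the composite $(X;{∼₀^⊗}) ⊗ Y \xto{p_∼ ⊗ Y} ∼₀^⊗ ⊗ Y \xto{∼₀^⊗ ⊗ y₂} ∼₀^⊗ ⊗ 𝐙₀ \xto{ρ} ∼₀^⊗$; these agree on the leg to $𝐙₀$ — so that the pair factors through the pullback — by the square $x₂ ∘ p_X = s₁ ∘ p_∼$, the identity $s₁ ∘ ρ = m_{𝐙₀} ∘ (s₁ ⊗ 𝐙₀)$, and bifunctoriality of $⊗$. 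Naturality in $X$ is immediate, and naturality in $Y$ over all morphisms — a fortiori over morphisms of pointed objects — likewise; for the unit coherence, when $Y = I$ the leg $y₂$ is the monoid unit $e_{𝐙₀} : I → 𝐙₀$, so $ρ ∘ (∼₀^⊗ ⊗ e_{𝐙₀})$ is the right unitor by the module unit law and $st_{X,I}$ collapses, after the unitors, to the canonical isomorphism $(X;{∼₀^⊗}) ⊗ I ≅ X;{∼₀^⊗}$, while the associativity coherence reduces to associativity of $m_{𝐙₀}$, associativity of $ρ$, and the monoidal coherence of $\psh[ℂ₀]$. The main obstacle I anticipate is exactly this $F_∼$ part — checking that $st_{X,Y}$ genuinely factors through the target pullback and that the coherence diagrams commute — since it is there that substitution-closedness of $∼^⊗$ is used; the $F_Σ$ part is routine once identified with the slice-lift of $Σ₀$ and combined with~\eqref{eq:monoidalg}.
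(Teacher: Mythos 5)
Your proof is correct and follows essentially the same route as the paper's: decompose $Σ₀ᴴ$ as a coproduct, use that pointed strong endofunctors are closed under coproducts because the tensor preserves colimits on the left, let the first summand inherit the pointed strength of $Σ₀$ via the $Σ₀$-monoid structure on $𝐙₀²$, and obtain the strength of ${-};{∼₀^{⊗}}$ from the module structure witnessing substitution-closedness of ${∼₀^{⊗}}$. The only cosmetic difference is that the paper packages the second strength as a general distributivity map $(U;V)⊗W → (U⊗W);(V⊗𝐙₀)$ (its Lemma on monoids with finite limits) followed by the module action, whereas you build the same composite directly by the universal property of the target pullback.
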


For proving this, we first need the following.
\begin{lem}\label{lem:distrib}
  For any monoid $X$ in any monoidal category $𝒞$ with finite limits, there is a natural transformation with components
  $δ_{U,V,W}∶ (U;V)⊗W → U⊗W ; V⊗X$ in $𝒞/X²$.
\end{lem}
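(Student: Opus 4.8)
The plan is to build $δ_{U,V,W}$ from the universal property of a pullback, so that no colimit‑preservation property of the tensor is needed. First I would unwind the relevant structures, using that $𝒞/X²$ is monoidal (a slice of a monoidal category over the monoid $X²$, as recalled just above). For spans $⟨u₁,u₂⟩∶ U → X²$ and $⟨w₁,w₂⟩∶ W → X²$, the span $U⊗W$ has underlying object $U⊗W$ and legs $m_X ∘ (uᵢ⊗wᵢ)∶ U⊗W → X$ for $i∈\{1,2\}$; writing $X$ for the span $⟨\id_X,\id_X⟩∶ X → X²$ (equivalently, using the canonical $X$‑module structure on $X²$), $V⊗X$ has legs $m_X ∘ (vᵢ⊗\id_X)$; and $U;V$ is the pullback of $u₂∶ U → X$ against $v₁∶ V → X$, with projections $\mathrm{pr}_U∶ U;V → U$ and $\mathrm{pr}_V∶ U;V → V$ (so that $u₂ ∘ \mathrm{pr}_U = v₁ ∘ \mathrm{pr}_V$) and legs $u₁ ∘ \mathrm{pr}_U$ and $v₂ ∘ \mathrm{pr}_V$. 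Consequently $(U⊗W);(V⊗X)$ is the pullback of the right leg $m_X ∘ (u₂⊗w₂)$ of $U⊗W$ against the left leg $m_X ∘ (v₁⊗\id_X)$ of $V⊗X$.

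Next I would define $δ_{U,V,W}∶ (U;V)⊗W → (U⊗W);(V⊗X)$ via this last pullback, applied to the cone consisting of $\mathrm{pr}_U⊗\id_W∶ (U;V)⊗W → U⊗W$ and $\mathrm{pr}_V⊗w₂∶ (U;V)⊗W → V⊗X$. These become equal after composing into $X$: by bifunctoriality of $⊗$, the former composed with the right leg of $U⊗W$ is $m_X ∘ ((u₂∘\mathrm{pr}_U)⊗w₂)$, while the latter composed with the left leg of $V⊗X$ is $m_X ∘ ((v₁∘\mathrm{pr}_V)⊗w₂)$, and these agree since $u₂∘\mathrm{pr}_U = v₁∘\mathrm{pr}_V$. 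This produces a morphism $δ_{U,V,W}$ in $𝒞$, characterised by $\mathrm{pr}_{U⊗W}∘δ_{U,V,W} = \mathrm{pr}_U⊗\id_W$ and $\mathrm{pr}_{V⊗X}∘δ_{U,V,W} = \mathrm{pr}_V⊗w₂$.

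It then remains to check that $δ_{U,V,W}$ is a morphism in $𝒞/X²$, and that it is natural. For the former I would simply compare legs: the left leg of $(U;V)⊗W$ is $m_X ∘ ((u₁∘\mathrm{pr}_U)⊗w₁)$, which by the first characterising equation equals the left leg of $(U⊗W);(V⊗X)$ precomposed with $δ_{U,V,W}$; dually, the right leg $m_X ∘ ((v₂∘\mathrm{pr}_V)⊗w₂)$ of $(U;V)⊗W$ matches the right leg of $(U⊗W);(V⊗X)$ precomposed with $δ_{U,V,W}$, using the second equation. Naturality in $U$, $V$, and $W$ follows from bifunctoriality of $⊗$, functoriality of span composition, and the universal property defining $δ$. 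I expect the main, but routine, obstacle to be bookkeeping: tracking which leg of each span plays which role, and confirming that $V⊗X$ is indeed a span over $X²$ — in the intended instance $𝒞 = \psh[ℂ₀]$, $X = 𝐙₀$, where $X²$ is a monoid. The crucial conceptual point is that, because $δ$ arises from the universal property of the \emph{target} pullback, one never needs $⊗W$ to preserve the pullback defining $U;V$.
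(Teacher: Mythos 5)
Your proposal is correct and is essentially the paper's own argument: the paper also tensors the defining square of $U;V$ with $W$ to get a commuting (not necessarily pullback) square, maps it into the pullback defining $(U⊗W);(V⊗X)$ via $\mathrm{pr}_U⊗W$ and $(V⊗π₂)∘(\mathrm{pr}_V⊗W)$ (which is your $\mathrm{pr}_V⊗w₂$), and obtains $δ_{U,V,W}$ and its naturality from the universal property of that \emph{target} pullback. Your closing observation that no pullback-preservation by $-⊗W$ is needed is exactly the point of the paper's construction.
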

\begin{proof}
  Let $m∶ X⊗X → X$ denote multiplication.
  By tensoring the defining pullback of $U;V$ with $W$ we obtain the
  back face below.
  \begin{center}
  \diagramme[diagorigins={1,2}]{}{%
    \pbk{m-4-2}{m-2-2}{m-2-4} %
    \path[draw,->] (m-1-1) edge[fore,dashed,labelar={δ_{U,V,W}}] (m-2-2) ; %
    }{%
      (U;V)⊗W \& \& V⊗W \\
      \& (U⊗W);(V⊗X) \& \& V ⊗ X \\
      U⊗W \& \& X⊗W \\
      \& U⊗W \& \& X
    }{%
      (m-1-1) edge[labela={}] (m-1-3) %
      (m-1-3) edge[labelar={V⊗π₂}] (m-2-4) %
      (m-3-1) edge[labelbat={π₂⊗W}{.85}] (m-3-3) %
      edge[identity,labell={}] (m-4-2) %
      (m-4-2) edge[labelb={m∘(π₂⊗π₂)}] (m-4-4) %
      (m-3-3) edge[labelon={m∘(X⊗π₂)}] (m-4-4) %
      (m-1-1) edge[labell={}] (m-3-1) %
      (m-1-3) edge[labelrat={π₁⊗W}{.85}] (m-3-3) %
      (m-2-2) edge[fore,labell={}] (m-4-2) %
      edge[fore,labelb={}] (m-2-4) %
      (m-2-4) edge[labelr={m∘(π₁⊗X)}] (m-4-4) %
    }%
  \end{center}
    By universal property of pullback, we then get the dashed arrow
    making all faces commute, which gives our candidate $δ_{U,V,W}$.
    Naturality follows by universal property of pullback.
\end{proof}

\begin{proof}[Proof of Proposition~\ref{prop:SoHptstr}]
  Because the tensor preserves all colimits on the left, pointed
  strong endofunctors are closed under coproducts, so it suffices to
  show that both terms of the sum are pointed strong.  The first one
  inherits the pointed strength of $Σ₀$, while the pointed strength of
  ${-};{∼₀^{⊗}}$ follows from Lemma~\ref{lem:distrib} and
  substitution-closedness of ${∼₀^{⊗}}$, like so:
  $(U;{∼₀^{⊗}})⊗V → (U⊗V);({∼₀^{⊗}}⊗𝐙₀) → (U⊗V);{∼₀^{⊗}}\rlap{.}$
\end{proof}

Presheaf categories being well-known to be closed under the slice
construction, we have the following.
\begin{lem}\label{lem:CZpresheaf}
  The category $\psh[ℂ₀]/𝐙₀²$ is a presheaf category.
\end{lem}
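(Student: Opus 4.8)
The plan is to apply the well-known fact, flagged just above the statement, that a slice of a presheaf category is again a presheaf category. Precisely, for any small category $ℂ$ and presheaf $X ∈ \psh[ℂ]$ there is an equivalence $\psh[ℂ]/X ≃ \psh[\el(X)]$, where $\el(X)$ is the category of elements of $X$ (with objects the pairs $(c,x)$, $x ∈ X(c)$, and a morphism $f ↾ x'∶ (c,x) → (c',x')$ for each $f∶ c → c'$ with $X(f)(x') = x$, as recalled in §\ref{ss:fam}); this is classical, see e.g.\ \cite{MM}. Concretely the equivalence sends a morphism $p∶ P → X$ to the presheaf on $\el(X)$ whose value at $(c,x)$ is the fibre $\{y ∈ P(c) \mid p_c(y) = x\}$, with the evident functorial action, and conversely a presheaf $Q$ on $\el(X)$ to the map $\int Q → X$ given by its own category-of-elements projection.

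First I would observe that $𝐙₀$, the carrier of the initial $Σ₀$-monoid (Notation~\ref{not:Z0}), is an object of $\psh[ℂ₀]$, and that $ℂ₀$ is small by definition of a Howe context. Since limits — in particular finite products — are computed pointwise in $\psh[ℂ₀]$, the object $𝐙₀²$ again lies in $\psh[ℂ₀]$, so its category of elements $\el(𝐙₀²)$ is a small category. Applying the general equivalence with $ℂ = ℂ₀$ and $X = 𝐙₀²$ then yields $\psh[ℂ₀]/𝐙₀² ≃ \psh[\el(𝐙₀²)]$, exhibiting $\psh[ℂ₀]/𝐙₀²$ as a presheaf category.

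No real obstacle is expected here; the only point meriting a word of care is the variance convention, $\psh[ℂ₀]$ denoting contravariant presheaves, so that the relevant category of elements is exactly the one recalled in §\ref{ss:fam}, for which the equivalence takes the stated form. Should an isomorphism rather than a mere equivalence be wanted (as in Proposition~\ref{prop:isom}), one may pass to the standard strictification of this equivalence; but the equivalence already suffices for everything that follows, all subsequent constructions (images, unions, colimits, further slices) being invariant under it.
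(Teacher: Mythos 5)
Your proof is correct and is precisely the argument the paper has in mind: the lemma is stated immediately after the remark that presheaf categories are closed under slicing, i.e.\ the equivalence $\psh[ℂ₀]/𝐙₀² ≃ \psh[\el(𝐙₀²)]$ you spell out. Your additional remarks on smallness of $\el(𝐙₀²)$ and on equivalence versus isomorphism are fine but not needed, since being a presheaf category is only required up to equivalence here.
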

This allows us to deduce the following.
\begin{prop}\label{prop:SoHfinitary}
  The endofunctor $Σ₀ᴴ$ is finitary.
\end{prop}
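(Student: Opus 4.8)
The statement to prove is \textbf{Proposition~\ref{prop:SoHfinitary}}: the endofunctor $Σ₀ᴴ$ on $\psh[ℂ₀]/𝐙₀²$ is finitary.

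The plan is to reduce finitarity of $Σ₀ᴴ$ to finitarity of its two summands, and then handle each summand in turn. First I would invoke Lemma~\ref{lem:CZpresheaf}, which tells us $\psh[ℂ₀]/𝐙₀²$ is a presheaf category, hence in particular locally finitely presentable and cocomplete, so filtered colimits exist and behave well there. Recall from the proof of Proposition~\ref{prop:SoHptstr} that colimits in $\psh[ℂ₀]/𝐙₀²$ are computed by the forgetful functor to $\psh[ℂ₀]$ (the slice projection creates them), and that a coproduct of finitary functors is finitary since colimits commute with colimits. So it suffices to show that $X ↦ Σ₀(X)$ (viewed as a functor on the slice, sending $X → 𝐙₀²$ to $Σ₀(X) → Σ₀(𝐙₀)² → 𝐙₀²$) and $X ↦ (X;{∼₀^{⊗}})$ are both finitary.

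For the first summand, finitarity is inherited directly from finitarity of $Σ₀$ on $\psh[ℂ₀]$: the slice projection $\psh[ℂ₀]/𝐙₀² → \psh[ℂ₀]$ creates filtered colimits, $Σ₀$ preserves them by hypothesis (indeed $Σ₀$ even preserves sifted colimits, being part of an operational semantics signature), and the target structure maps are just postcomposed, so the induced functor on the slice preserves filtered colimits. For the second summand, $X ↦ X;{∼₀^{⊗}}$, I would argue that sequential span composition with a \emph{fixed} span is finitary in each variable: $X;{∼₀^{⊗}}$ is computed as a pullback $X ×_{𝐙₀} {∼₀^{⊗}}$ over the middle object, and in a presheaf category pullbacks are computed pointwise, where pullback along a fixed cospan commutes with filtered colimits of sets (filtered colimits commute with finite limits in $𝐒𝐞𝐭$). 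Hence $- ; {∼₀^{⊗}}$ preserves filtered colimits, and again the slice structure maps are simply postcomposed.

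The main obstacle — really the only point requiring care — is making precise that $- ; {∼₀^{⊗}}$ is finitary, because span composition is defined only up to a global choice of pullbacks and one must check the chosen pullback functor genuinely preserves filtered colimits rather than merely preserving them up to isomorphism. This is handled cleanly by the pointwise-computation remark: in $\psh[ℂ₀]$ a colimit of presheaves is a colimit of sets objectwise, and for each object $c ∈ ℂ₀$ the set-level pullback along a cospan into $𝐙₀²(c)$ (all components of whose legs are fixed) commutes with filtered colimits since these commute with finite limits in $𝐒𝐞𝐭$. Everything else (coproducts of finitary functors being finitary, the slice projection creating filtered colimits) is standard, so the proof is short.

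\begin{proof}
  By Lemma~\ref{lem:CZpresheaf}, $\psh[ℂ₀]/𝐙₀²$ is a presheaf
  category, and the slice projection to $\psh[ℂ₀]$ creates all
  colimits, in particular filtered ones. Since coproducts of finitary
  functors are finitary, it suffices to check that both summands of
  $Σ₀ᴴ$, namely $X ↦ Σ₀(X)$ and $X ↦ X;{∼₀^{⊗}}$, preserve filtered
  colimits (the structure maps to $𝐙₀²$ being obtained by
  postcomposition, hence not affecting this).

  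The first summand is finitary because $Σ₀$ is (indeed it even
  preserves sifted colimits) and filtered colimits in the slice are
  created by the projection to $\psh[ℂ₀]$.

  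For the second summand, recall that $X;{∼₀^{⊗}}$ is, by definition,
  a chosen pullback of the cospan
  $X → 𝐙₀ ← {∼₀^{⊗}}$ formed by the second projection of $X$ and the
  first projection of ${∼₀^{⊗}}$. Colimits in $\psh[ℂ₀]$ are computed
  pointwise, so for each $c ∈ ℂ₀$ the value of $X;{∼₀^{⊗}}$ at $c$ is
  the pullback in $𝐒𝐞𝐭$ of $X(c) → 𝐙₀(c) ← {∼₀^{⊗}}(c)$, where the two
  legs of the cospan, apart from $X(c)$, are fixed. Since filtered
  colimits commute with finite limits in $𝐒𝐞𝐭$, this pullback
  construction preserves filtered colimits in the variable $X$. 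Hence
  $X ↦ X;{∼₀^{⊗}}$ preserves filtered colimits, and $Σ₀ᴴ$ is
  finitary.
\end{proof}
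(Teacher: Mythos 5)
Your proof is correct and follows the same route as the paper, whose entire proof is the one-liner "By commutation of filtered colimits with finite limits in presheaf categories" — exactly the key fact you use for the summand $X ↦ X;{∼₀^{⊗}}$, with the $Σ₀$ summand handled by hypothesis. Your version simply spells out the decomposition into summands and the pointwise reduction to $𝐒𝐞𝐭$ that the paper leaves implicit.
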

\begin{proof}
  By commutation of filtered colimits with finite limits in presheaf
  categories.
\end{proof}

By Proposition~\ref{prop:SoHptstr}, Lemma~\ref{lem:CZpresheaf}, and
Proposition~\ref{prop:SoHfinitary}, the following is legitimate.
\begin{defi}
  Let ${H₀} = 𝐙_{Σ₀ᴴ}$ denote the initial $Σ₀ᴴ$-monoid.  We denote by
  $π₁,π₂∶ H₀ → 𝐙₀$ the left and right projections.
\end{defi}
By Proposition~\ref{prop:ptstrgeneral}, we also get the following for free.
\begin{prop}
\label{prop:howe-clos-state-ini-algebra}
  The object ${H₀} → 𝐙₀²$ is an initial algebra for the
  endofunctor $\psh[ℂ₀]/𝐙₀² → \psh[ℂ₀]/𝐙₀²$ mapping any $X → 𝐙₀²$ to
  $I + Σ₀ᴴ(X) → 𝐙₀²$.
\end{prop}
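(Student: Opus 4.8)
The statement to prove is Proposition~\ref{prop:howe-clos-state-ini-algebra}: that $H_0 \to \mathbf Z_0^2$, which was \emph{defined} as the initial $\Sigma_0^H$-monoid $\mathbf Z_{\Sigma_0^H}$, is also an initial algebra for the endofunctor $F$ on $\psh[ℂ_0]/\mathbf Z_0^2$ sending $X \to \mathbf Z_0^2$ to $I + \Sigma_0^H(X) \to \mathbf Z_0^2$.

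The plan is to simply invoke the general theory of monoid algebras recalled earlier in the paper, applied to the monoidal category $\psh[ℂ_0]/\mathbf Z_0^2$. First I would recall that by the remark immediately preceding Notation~\ref{not:Z0} (the paragraph on ``Initial algebras''), for any pointed strong endofunctor $G$ on a suitable monoidal category, the initial $G$-monoid $\mathbf Z_G$ coincides with the initial $(I+G)$-algebra. So I need the hypotheses of Proposition~\ref{prop:ptstrgeneral} to hold for $\Sigma_0^H$ on $\psh[ℂ_0]/\mathbf Z_0^2$, namely: the category is monoidal and cocomplete with tensor preserving all colimits on the left and filtered colimits on the right, and $\Sigma_0^H$ is finitary and pointed strong. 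Each of these is already established in the excerpt. Indeed, $\psh[ℂ_0]/\mathbf Z_0^2$ is a presheaf category by Lemma~\ref{lem:CZpresheaf}, hence cocomplete; its monoidal structure is the slice monoidal structure over the monoid $\mathbf Z_0^2$ described in the paragraph before Proposition~\ref{prop:SoHptstr}, whose tensor is computed from that of $\psh[ℂ_0]$ and therefore inherits the required colimit-preservation properties from the Howe context (tensor preserves all colimits on the left and filtered colimits on the right). Finally, $\Sigma_0^H$ is pointed strong by Proposition~\ref{prop:SoHptstr} and finitary by Proposition~\ref{prop:SoHfinitary}.

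Putting these together: Proposition~\ref{prop:ptstrgeneral}, applied with $\mathcal C_0 = \psh[ℂ_0]/\mathbf Z_0^2$ and $\Sigma_0 = \Sigma_0^H$, tells us that the free $\Sigma_0^H$-algebra over the monoidal unit $I$ — equivalently, the initial $(I + \Sigma_0^H)$-algebra — is an initial $\Sigma_0^H$-monoid. Since $H_0 = \mathbf Z_{\Sigma_0^H}$ is \emph{by definition} an initial $\Sigma_0^H$-monoid, and initial objects are unique up to isomorphism, $H_0 \to \mathbf Z_0^2$ is (isomorphic to, hence is) an initial $(I + \Sigma_0^H)$-algebra, which is exactly the claimed endofunctor $F$. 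I would phrase the proof in one or two sentences as ``This is an instance of Proposition~\ref{prop:ptstrgeneral}, whose hypotheses are verified by Lemma~\ref{lem:CZpresheaf}, Proposition~\ref{prop:SoHptstr}, and Proposition~\ref{prop:SoHfinitary}, together with the fact that the initial $G$-monoid is the initial $(I+G)$-algebra for any pointed strong $G$.''

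I do not expect any real obstacle here: the work has all been front-loaded into the three cited lemmas/propositions and the abstract Proposition~\ref{prop:ptstrgeneral}. If anything, the only point requiring a word of care is the standard identification of the initial $G$-monoid with the initial $(I+G)$-algebra, which the paper has already flagged in the ``Initial algebras'' paragraph as part of the convention for the notation $\mathbf Z_G$; one might add a pointer to that discussion (or to the proof of Proposition~\ref{prop:ptstrgeneral} itself) for completeness. So the proof is essentially a one-liner citing prior results.
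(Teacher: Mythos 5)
Your proposal is correct and is exactly the paper's argument: the paper derives this proposition directly from Proposition~\ref{prop:ptstrgeneral} ("we also get the following for free"), with the hypotheses on $\psh[ℂ₀]/𝐙₀²$ and $Σ₀ᴴ$ supplied by Lemma~\ref{lem:CZpresheaf}, Proposition~\ref{prop:SoHptstr}, and Proposition~\ref{prop:SoHfinitary}. Your slightly more explicit unpacking of the identification of the initial $Σ₀ᴴ$-monoid with the initial $(I+Σ₀ᴴ)$-algebra is fine and adds nothing problematic.
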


\begin{prop}\label{prop:howomonoid}
  The underlying object ${H₀}$ is a $Σ₀$-monoid.
\end{prop}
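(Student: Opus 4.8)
The plan is to exploit the monoidal slice structure on $\psh[ℂ₀]/𝐙₀²$ and the fact that $H₀$ is defined as the initial $Σ₀ᴴ$-monoid there. The first observation is that $H₀$, being a $Σ₀ᴴ$-monoid, comes equipped with monoid structure $m∶ H₀⊗H₀ → H₀$ and $e∶ I → H₀$ in the slice category $\psh[ℂ₀]/𝐙₀²$, together with a $Σ₀ᴴ$-algebra structure compatible with it. What we must produce is a $Σ₀$-monoid structure on the underlying object $H₀ ∈ \psh[ℂ₀]$; that is, a $Σ₀$-algebra structure $Σ₀(H₀) → H₀$ and monoid structure $H₀⊗H₀ → H₀$, $I → H₀$ making the pentagon~\eqref{eq:monoidalg} commute.

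First I would extract the $Σ₀$-algebra structure. By definition $Σ₀ᴴ(X) = Σ₀(X) + (X;{∼₀^{⊗}})$, so the $Σ₀ᴴ$-algebra map $Σ₀ᴴ(H₀) → H₀$ restricts along the coproduct injection $Σ₀(H₀) ↪ Σ₀ᴴ(H₀)$ to a morphism $ν∶ Σ₀(H₀) → H₀$ in $\psh[ℂ₀]/𝐙₀²$, hence in particular in $\psh[ℂ₀]$. The monoid structure on $H₀$ in $\psh[ℂ₀]/𝐙₀²$ forgets to a monoid structure on $H₀$ in $\psh[ℂ₀]$, since the forgetful functor $\psh[ℂ₀]/𝐙₀² → \psh[ℂ₀]$ is strong monoidal (the tensor in the slice is computed using the tensor in $\psh[ℂ₀]$, and the unit is the underlying $I$). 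So we have all the required data; it remains to check compatibility, i.e.\ that~\eqref{eq:monoidalg} commutes with these components, where the pointed strength used is that of $Σ₀$.

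The cleanest way to get compatibility is to observe that it is automatic from the definition of $Σ₀ᴴ$-monoid, provided the pointed strength of $Σ₀ᴴ$ restricts correctly. Recall from the proof of Proposition~\ref{prop:SoHptstr} that the pointed strength of $Σ₀ᴴ$ is built termwise on the coproduct $Σ₀(X) + (X;{∼₀^{⊗}})$, and on the first summand it is \emph{precisely} the pointed strength $st$ of $Σ₀$ (transported along the injection). The coherence diagram~\eqref{eq:monoidalg} for $H₀$ as a $Σ₀ᴴ$-monoid, restricted along the injection $Σ₀(H₀)⊗H₀ ↪ Σ₀ᴴ(H₀)⊗H₀$ (using that $-⊗H₀$ preserves the coproduct, since the tensor preserves colimits on the left), becomes exactly diagram~\eqref{eq:monoidalg} for $H₀$ as a $Σ₀$-monoid with the structure maps extracted above. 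Hence it commutes, and $H₀$ is a $Σ₀$-monoid. The main point requiring care is the bookkeeping identifying the restriction of the $Σ₀ᴴ$-coherence square with the $Σ₀$-coherence square — in particular checking that the injection $Σ₀(H₀)⊗H₀ → Σ₀ᴴ(H₀)⊗H₀$ is natural with respect to the strengths, and that the forgetful functor to $\psh[ℂ₀]$ genuinely preserves all the relevant tensors, coproducts, and algebra/monoid data. This is routine but is where one must be attentive; once it is set up, the conclusion is immediate.
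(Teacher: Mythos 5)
Your proof is correct and takes exactly the paper's approach: the paper's own proof is the one-liner ``directly follows from the $Σ₀ᴴ$-monoid structure,'' and your argument is precisely the careful unpacking of that claim --- restricting the algebra map along the coproduct injection, forgetting the monoid structure along the (strict) monoidal forgetful functor $\psh[ℂ₀]/𝐙₀² → \psh[ℂ₀]$, and noting that the strength of $Σ₀ᴴ$ restricts to that of $Σ₀$ on the first summand so that~\eqref{eq:monoidalg} for $Σ₀ᴴ$ restricts to~\eqref{eq:monoidalg} for $Σ₀$. No gaps.
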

\begin{proof}
  Directly follows from the $Σ₀ᴴ$-monoid structure.
\end{proof}

Next, we exhibit an alternative characterisation of ${H₀}$, which relies
on the following result.
\begin{lem}[Packing lemma]\label{lem:muFplusG}
  Consider finitary endofunctors $F$ and $G$ on a cocomplete category
  $𝐂$, and let $G^⋆(A) ≅ μS.(A + G(S))$ denote the `free $G$-algebra'
  monad~\cite[Theorem~2.1]{Reiterman}.  Then we have
  $μS.(F(S)+G(S)) ≅ μS.G^⋆(F(S))$.
\end{lem}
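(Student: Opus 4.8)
The plan is to compute the right-hand side via its initial chain and then to equip the resulting colimit with an initial $(F+G)$-algebra structure. Since $G$ is finitary on the cocomplete category $𝐂$, the free $G$-algebra monad $G^⋆$ exists by~\cite[Theorem~2.1]{Reiterman} and is itself finitary, so $G^⋆∘F$ is finitary and $μS.G^⋆(F(S)) = 𝐙_{G^⋆F}$ exists; again by~\cite[Theorem~2.1]{Reiterman} it is the colimit $A$ of the initial $ω$-chain
\[ A₀ = ∅ \xto{a₀} A₁ \xto{a₁} A₂ \xto{a₂} …, \qquad A_{n+1} = G^⋆(F(A_n)),\quad a_{n+1} = G^⋆(F(a_n)). \]
It then suffices to show that $A$ carries an $(F+G)$-algebra structure which is initial: since initial algebras are unique up to isomorphism and $μS.(F(S)+G(S)) = 𝐙_{F+G}$, this yields the claimed isomorphism.

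To build a structure map $[φ,γ]∶ F(A)+G(A) → A$ I proceed componentwise. For $γ$: each $A_{n+1} = G^⋆(F(A_n))$ is a free $G$-algebra, hence comes with a $G$-algebra structure, and for $n ≥ 1$ the connecting map $a_n$, being $G^⋆$ applied to a morphism, is a $G$-algebra morphism; hence $A₁ \xto{a₁} A₂ \xto{a₂} …$ is an $ω$-chain in $G\alg$, and since $G$ is finitary the forgetful functor $G\alg → 𝐂$ creates filtered colimits, so this chain has a colimit in $G\alg$ whose underlying object is $A$ (removing the initial term $A₀$ does not change the colimit). For $φ$: the unit $η$ of $G^⋆$ gives morphisms $η_{F(A_n)}∶ F(A_n) → G^⋆(F(A_n)) = A_{n+1}$, and naturality of $η$ makes these compatible with the connecting maps on both sides, so they induce $φ∶ F(A) = \colim_n F(A_n) → A$.

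For initiality, let $(X, f∶ F(X) → X, g∶ G(X) → X)$ be any $(F+G)$-algebra, and regard $(X,g)$ as a $G$-algebra. Define $h_n∶ A_n → X$ by induction: $h₀$ is the unique morphism out of $∅$, and $h_{n+1}∶ A_{n+1} = G^⋆(F(A_n)) → X$ is the unique $G$-algebra morphism with $h_{n+1} ∘ η_{F(A_n)} = f ∘ F(h_n)$, obtained from the universal property of the free $G$-algebra. Using naturality of $η$ and uniqueness of these free extensions one checks that $h_{n+1} ∘ a_n = h_n$, so the $h_n$ assemble into a morphism $h∶ A → X$; a short diagram chase then shows that $h$ commutes with $φ$ and $γ$, i.e.\ is an $(F+G)$-algebra morphism, and any $(F+G)$-algebra morphism $A → X$ restricts along the colimit cocone to morphisms $A_n → X$ satisfying the inductive characterisation of the $h_n$, hence equals $h$.

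The routine but delicate part of the argument — which I expect to be the main obstacle, and which the Coq development accompanying~\cite{BHL,BHLcode} presumably discharges — is the bundle of compatibility checks: that the connecting maps $a_n$ ($n ≥ 1$) are $G$-algebra morphisms, that $φ$ and $h$ are well defined on the colimit, and that $h$ is a morphism of $(F+G)$-algebras. Each reduces to a diagram chase using naturality of the unit of $G^⋆$ together with the universal property of free $G$-algebras. Conceptually, the whole argument is the observation that $(F+G)\alg$ is isomorphic over $𝐂$ to the category of algebras for the lifted endofunctor $\mathrm{Free}_G ∘ F ∘ U$ on $G\alg$, where $U∶ G\alg → 𝐂$ is the forgetful functor, and that the initial algebra of that endofunctor has underlying object exactly the colimit $A$ computed above; one could alternatively organise the proof around this isomorphism.
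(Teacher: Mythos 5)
Your proof is correct, but it takes a genuinely different route from the paper's. The paper's proof is a two-line fixed-point calculation: it rewrites $G^⋆(F(S))$ as $μU.(F(S)+G(U))$ and then invokes the Diagonal rule $μS.μU.H(S,U) ≅ μS.H(S,S)$ of Backhouse et al.\ to collapse the nested fixed points. You instead re-derive this identity by hand in the special case $H(S,U)=F(S)+G(U)$: you unfold $μS.G^⋆(F(S))$ as the colimit $A$ of the initial $ω$-chain of $G^⋆∘F$, assemble an $(F+G)$-algebra structure on $A$ from the free $G$-algebra structures on the stages $A_{n+1}=G^⋆(F(A_n))$ and the unit of $G^⋆$, and verify initiality by the usual inductive construction of the mediating morphism via free extensions. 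All the steps check out (in particular the finitarity of $G^⋆$, the fact that the forgetful functor $G\alg → 𝐂$ creates the filtered colimit, and the uniqueness argument via joint epicness of the colimit injections), so the argument is complete and self-contained. What the paper's approach buys is brevity and reusability of a known algebraic law of fixed points; what yours buys is independence from the external reference and an explicit description of where the $(F+G)$-algebra structure on $μS.G^⋆(F(S))$ comes from — your closing observation that $(F+G)\alg$ is isomorphic over $𝐂$ to the category of algebras for the lifted endofunctor on $G\alg$ is essentially the conceptual content behind the Diagonal rule, and would indeed give a slicker organisation of the same proof. (The aside attributing the compatibility checks to the Coq development is misplaced — that formalisation concerns initial $Σ₀$-monoids, not this lemma — but nothing in your argument depends on it.)
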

\begin{proof}
  Indeed, we have
  \begin{center}
    $
    \begin{array}[b]{rcl}
      μS.G^⋆(F(S)) & ≅ &  μS.μU.(F(S) + G(U)) \\
                   & ≅ &  μS.(F(S) + G(S))\rlap{,}
    \end{array}$
  \end{center}
  by the Diagonal rule~\cite[Theorem~16]{DBLP:conf/ctcs/BackhouseBGW95}.
\end{proof}

\begin{prop}\label{prop:initalt}
  The object ${H₀} → 𝐙₀²$ is an initial algebra for the
  endofunctor \[ Σ₀ᴴ{}'∶ \psh[ℂ₀]/𝐙₀² → \psh[ℂ₀]/𝐙₀² \] mapping any $X → 𝐙₀²$ to
  $I;{∼₀^{⊗*}} + Σ₀(X);{∼₀^{⊗*}} → 𝐙₀²$.
\end{prop}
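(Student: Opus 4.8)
The statement to prove (Proposition~\ref{prop:initalt}) identifies the Howe closure on states $H_0$ with the initial algebra of the endofunctor $Σ_0^H{}'$ sending $X → 𝐙_0^2$ to $I;{∼_0^{⊗*}} + Σ_0(X);{∼_0^{⊗*}} → 𝐙_0^2$.

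The plan is to combine the alternative description of $H_0$ from Proposition~\ref{prop:howe-clos-state-ini-algebra} with the Packing lemma (Lemma~\ref{lem:muFplusG}). By Proposition~\ref{prop:howe-clos-state-ini-algebra}, $H_0$ is the initial algebra for $X ↦ I + Σ_0^H(X) = I + Σ_0(X) + (X;{∼_0^{⊗}})$. I would like to peel off the summand $(-);{∼_0^{⊗}}$ as the ``$G$'' of the Packing lemma and leave $X ↦ I + Σ_0(X)$ as the ``$F$''. Concretely, set $F(X) = I + Σ_0(X)$ and $G(X) = X;{∼_0^{⊗}}$, both viewed as finitary endofunctors on the presheaf category $\psh[ℂ_0]/𝐙_0^2$ (finitarity of $Σ_0$ is part of the operational semantics signature hypotheses, finitarity of $I + (-)$ is trivial, and finitarity of $(-);{∼_0^{⊗}}$ follows as in the proof of Proposition~\ref{prop:SoHfinitary}, since span composition is a finite limit and filtered colimits commute with finite limits in presheaf categories). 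Then $H_0 ≅ μS.(F(S) + G(S))$, and the Packing lemma gives $H_0 ≅ μS.G^\star(F(S))$.

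The second step is to identify the free $G$-algebra monad $G^\star$ explicitly. Since $G(X) = X;{∼_0^{⊗}}$, we have $G^\star(A) ≅ μU.(A + U;{∼_0^{⊗}})$; unfolding this fixpoint and using that span composition distributes over coproducts on the left (colimits are computed pointwise, and $(-);{∼_0^{⊗}}$, being a left adjoint-like pullback operation along a fixed leg, preserves coproducts — or directly, pullbacks commute with coproducts in a presheaf category), one gets $G^\star(A) ≅ ∑_{n ∈ ℕ} A;({∼_0^{⊗}})^{;n} = A;∑_{n ∈ ℕ}({∼_0^{⊗}})^{;n} = A;{∼_0^{⊗*}}$, recalling the definition of reflexive transitive closure $S^* = ∑_{n∈ℕ} S^{;n}$ with $S^{;0} = 𝐙_0$ (so that $A;{∼_0^{⊗}}{}^{;0} = A;𝐙_0 ≅ A$, which handles the base case). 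Substituting into $H_0 ≅ μS.G^\star(F(S))$ then yields $H_0 ≅ μS.\bigl((I + Σ_0(S));{∼_0^{⊗*}}\bigr) ≅ μS.\bigl(I;{∼_0^{⊗*}} + Σ_0(S);{∼_0^{⊗*}}\bigr)$, again using left-distributivity of span composition over the coproduct $I + Σ_0(S)$. This is exactly the initial algebra of $Σ_0^H{}'$, and one must check that the resulting map to $𝐙_0^2$ agrees with the one in the definition of $Σ_0^H{}'$, which it does by construction of all the maps involved (they all factor through $Σ_0(𝐙_0)^2 + 𝐙_0^2 → 𝐙_0^2$ and the multiplication of $𝐙_0^2$).

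The main obstacle I anticipate is bookkeeping around span composition and the slice category $\psh[ℂ_0]/𝐙_0^2$: I need that span composition $(-);{∼_0^{⊗}}$ genuinely preserves the colimits I am distributing over (coproducts), and that all the coherence isomorphisms — associativity of span composition, $A;𝐙_0 ≅ A$, and the identification of $μU.(A + U;{∼_0^{⊗}})$ with $∑_n A;({∼_0^{⊗}})^{;n}$ — are compatible with the projections to $𝐙_0^2$. None of this is deep, but the isomorphism $H_0 ≅ 𝐙_{Σ_0^H{}'}$ is an isomorphism in $\psh[ℂ_0]/𝐙_0^2$, so every step must be carried out over $𝐙_0^2$; the cleanest route is to do all reasoning inside $\psh[ℂ_0]/𝐙_0^2$ from the outset, where $Σ_0^H{}'$ and the two endofunctors $F, G$ above all live, and invoke the Packing lemma in that cocomplete category directly.
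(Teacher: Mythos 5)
Your proposal is correct and follows essentially the same route as the paper: the same choice of $F(S)=I+Σ₀(S)$ and $G(S)=S;{∼₀^{⊗}}$, the Packing lemma to get $μS.G^⋆(F(S))$, the identification $G^⋆(U)≅U;{∼₀^{⊗*}}$ via preservation of coproducts by $G$ (pullback along the projection ${∼₀^{⊗}}→𝐙₀$ being a left adjoint), and left-distributivity of span composition over the coproduct $I+Σ₀(S)$. The additional bookkeeping you flag (working throughout in $\psh[ℂ₀]/𝐙₀²$) is exactly how the paper sets things up.
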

\begin{proof}
  Let $F(S) = I + Σ₀(S)$ and $G(S) = S;{∼₀^⊗}$.  We observe that $G$
  preserves coproducts (because pullback along the first projection
  ${∼₀^⊗} → 𝐙₀$, as a left adjoint, preserves colimits), so that
  $G^⋆(U) ≅ ∑ₙ Gⁿ(U)$.  By commutation of coproducts with $U;{-}$, we
  thus have
  \[G^⋆(U) ≅ ∑ₙ Gⁿ(U) ≅ ∑ₙ U;(∼₀^⊗)ⁿ ≅ U; ∑ₙ(∼₀^⊗)ⁿ = U; {∼₀^{⊗*}}\rlap{.}\]
  Thus, by Lemma~\ref{lem:muFplusG}, we get
  $G^⋆(F(S)) ≅ (I + Σ₀(S));{∼₀^{⊗*}} ≅ (I;{∼₀^{⊗*}}) +
  (Σ₀(S);{∼₀^{⊗*}})$, as desired.
\end{proof}

\subsection{Double categorical notation}\label{ss:doublecat}
Our next goal is to define the Howe closure on transitions.  For this,
we appeal to Morton's \emph{double bicategories}~\cite{Morton}.  They
are a refinement of double categories, in which both the horizontal
and vertical categories are bicategories. We rely in particular on his
Theorem~4.1.3, which (when dualised) states that for any category $𝒞$
with pullbacks, there is a double bicategory $\Bispan(𝒞)$:
\begin{itemize}
\item objects are objects of $𝒞$,
\item both the vertical and horizontal bicategories are $\Span(𝒞)$,
\item cells, called \emph{double spans}, are precisely commuting
  diagrams of the following form.
\begin{equation}
  \diag(.5,.5){%
    A \& B \& C \\
    A' \& B' \& C' \\
    A'' \& B'' \& C'' %
  }{%
    (m-1-1) edge[<-,labela={}] (m-1-2) %
    edge[<-,labell={}] (m-2-1) %
    (m-2-1) edge[<-,labelb={}] (m-2-2) %
    (m-1-2) edge[<-,labelr={}] (m-2-2) %
    (m-1-2) edge[labela={}] (m-1-3) %
    (m-2-2) edge[labelb={}] (m-2-3) %
    (m-1-3) edge[<-,labelr={}] (m-2-3) %
    (m-2-1) edge[labell={}] (m-3-1) %
    (m-3-1) edge[<-,labelb={}] (m-3-2) %
    (m-2-2) edge[labelr={}] (m-3-2) %
    (m-3-2) edge[labelb={}] (m-3-3) %
    (m-2-3) edge[labelr={}] (m-3-3) %
  }
  \label{eq:cell0}
\end{equation}
\end{itemize}
We will not need the rest of the structure. All we need to know is
that cells compose horizontally and vertically just as in a weak
double category.  We will use the double bicategories $\Bispan(\psh[ℂ₀])$ and
$\Bispan(\psh[ℂ₁])$.

\begin{nota}\label{not:pasting}
  We use the following notational conventions.
  \begin{itemize}
  \item We denote cells $\Bispan(\psh[ℂ₀])$ such as~\eqref{eq:cell0} above by
\begin{center}
  \diag{%
    A \& C \\
    A'' \& C''\rlap{.}  }{%
    (m-1-1) edge[glob,labela={B}] (m-1-2) %
    edge[glob,twol={A'}] (m-2-1) %
    (m-2-1) edge[glob,labelb={B''}] (m-2-2) %
    (m-1-2) edge[glob,twor={C'}] (m-2-2) %
    (l) edge[draw=none,labelon={B'}] (r) }%
  \end{center}

\item Furthermore, cells in $\psh[ℂ₁]$ of the form below left will
  be denoted as below right.
  \begin{equation}
    \diag(.5,.5){%
      X₀\source \& S₀\source \& Y₀\source  \\
      X₁ \& S₁ \& Y₁ \\
      X'₀\but \& S'₀\but \& Y'₀\but %
    }{%
      (m-1-1) edge[<-,labela={}] (m-1-2) %
      edge[<-,labell={}] (m-2-1) %
      (m-2-1) edge[<-,labelb={}] (m-2-2) %
      (m-1-2) edge[<-,labelr={}] (m-2-2) %
      (m-1-2) edge[labela={}] (m-1-3) %
      (m-2-2) edge[labelb={}] (m-2-3) %
      (m-1-3) edge[<-,labelr={}] (m-2-3) %
      (m-2-1) edge[labell={}] (m-3-1) %
      (m-3-1) edge[<-,labelb={}] (m-3-2) %
      (m-2-2) edge[labelr={}] (m-3-2) %
      (m-3-2) edge[labelb={}] (m-3-3) %
      (m-2-3) edge[labelr={}] (m-3-3) %
    }
    \qquad \qquad  \diag{%
      X₀ \& Y₀  \\
      X'₀ \& Y'₀
    }{%
      (m-1-1) edge[glob,labela={S₀}] (m-1-2) %
      edge[pro,twol={X₁}] (m-2-1) %
      (m-2-1) edge[glob,labelb={S'₀}] (m-2-2) %
      (m-1-2) edge[pro,twor={Y₁}] (m-2-2) %
      (l) edge[draw=none,labelon={S₁}] (r)
    }%
    \label{eq:doublecell}
  \end{equation}
  Explicitly, spans of the form $X₀\source ← X₁ → X'₀\but$ are denoted by
  $X₀ \xarrow[pro,->]{X₁} X'₀$, while spans of the form $X₀ ← S₀ → Y₀$
  are still denoted by $X₀ \xarrow[glob,->]{S₀} Y₀$, but silently
  coerced by $Δ_{\source}$ or $Δ_{\but}$ depending on context.

\item For both types of cells, we collapse identity borders, as usual.
  
\item When a span is trivial on one side, we use standard arrows for
  its borders, and a double arrow for its middle arrow, all in the
  relevant direction.  E.g., the diagram below left may be depicted as
  below right.
\begin{center}
  \diag|baseline=(m-1-1.base)|(.5,.5){%
    A \& B \& C \\
    A' \& B' \& C' \\
  }{%
    (m-1-1) edge[<-,labela={}] (m-1-2) %
    edge[<-,labell={a}] (m-2-1) %
    (m-2-1) edge[<-,labelb={}] (m-2-2) %
    (m-1-2) edge[<-,labelr={b}] (m-2-2) %
    (m-1-2) edge[labela={}] (m-1-3) %
    (m-2-2) edge[labelb={}] (m-2-3) %
    (m-1-3) edge[<-,labelr={c}] (m-2-3) %
  }
  \qquad
  \diag|baseline=(m-1-1.base)|(.5,.5){%
    A \& C \\
    A' \& C' %
  }{%
    (m-1-1) edge[glob,twoa={B}] (m-1-2) %
    edge[<-,twol={a}] (m-2-1) %
    (m-2-1) edge[glob,twob={B'}] (m-2-2) %
    (m-1-2) edge[<-,twor={c}] (m-2-2) %
    (b) edge[cell=.1,labell={b}] (a) %
  }%
\end{center}
\end{itemize}
\end{nota}

Cells of the form~\eqref{eq:doublecell} live in $\Bispan(\psh[ℂ₁])$, hence
may be composed horizontally.  Relevant examples of vertical
composition will be obtained by embedding cells of the
form~\eqref{eq:cell0} along $Δ_{\source}$ (resp.\ $Δ_{\but}$), and
vertically composing with cells of the form~\eqref{eq:doublecell} in
$\Bispan(\psh[ℂ₁])$. This yields a top (resp.\ bottom) action of
$\Bispan(\psh[ℂ₀])$, which we both denote by mere pasting.

  \begin{lem}
    Given a composable pasting diagram made of cells of both types,
    any two parsings agree up to isomorphism.
  \end{lem}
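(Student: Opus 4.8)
The statement to prove is a coherence lemma: in a composable pasting diagram built from cells of the two types (ordinary double-span cells of $\Bispan(\psh[ℂ₀])$ embedded along $Δ_{\source}$ or $Δ_{\but}$, and "transition" cells of $\Bispan(\psh[ℂ₁])$ of the form~\eqref{eq:doublecell}), any two parsings agree up to isomorphism.

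The plan is to reduce the statement to the coherence theorems already available for (double) bicategories, rather than re-proving coherence from scratch. First I would observe that the two kinds of cells do not live in a single double bicategory, but that the "top action" and "bottom action" of $\Bispan(\psh[ℂ₀])$ on $\Bispan(\psh[ℂ₁])$ described just before the lemma organise the whole situation into a single algebraic structure: concretely, the functors $Δ_{\source}, Δ_{\but} ∶ \psh[ℂ₀] → \psh[ℂ₁]$ are right adjoints (being precomposition functors between presheaf categories), hence preserve pullbacks and finite limits, so they send the span bicategory $\Span(\psh[ℂ₀])$ to $\Span(\psh[ℂ₁])$ as (pseudo)functors of bicategories, and likewise send double spans to double spans. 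Thus embedding a $\Bispan(\psh[ℂ₀])$-cell along $Δ_{\source}$ (resp. $Δ_{\but}$) and then vertically composing in $\Bispan(\psh[ℂ₁])$ is a well-defined operation, and the coherence isomorphisms all come from pullback universal properties.

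The key steps, in order, would be: (1) record that $Δ_{\source}$ and $Δ_{\but}$, as right adjoints between presheaf categories, preserve all finite limits, in particular pullbacks, so they induce homomorphisms of double bicategories $\Bispan(\psh[ℂ₀]) → \Bispan(\psh[ℂ₁])$ that are pseudo (preserve composition up to canonical iso coming from the universal property of pullback); (2) note that every cell appearing in a pasting diagram is, up to these embeddings, a cell of the single double bicategory $\Bispan(\psh[ℂ₁])$ — the $ℂ₀$-typed borders being silently coerced by $Δ_{\source}$ or $Δ_{\but}$ exactly as prescribed in Notation~\ref{not:pasting}; (3) invoke the coherence theorem for composition of cells in a double bicategory (Morton's framework, used for Theorem~4.1.3), which says that any two ways of composing a fixed composable pasting diagram of cells are related by a canonical comparison isomorphism built from the associativity and unit constraints, hence agree up to isomorphism; (4) finally check that the comparison isomorphisms introduced when moving borders across $Δ_{\source}$ or $Δ_{\but}$ are exactly the canonical pullback-preservation isomorphisms of these functors, so they are coherent among themselves (this is the standard coherence of a pseudofunctor) and compatible with the double-bicategory constraints. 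Chaining (3) and (4) gives that any two parsings of the original mixed pasting diagram agree up to a canonical isomorphism.

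The main obstacle I expect is step (4): making precise the interaction between the "top/bottom actions" of $\Bispan(\psh[ℂ₀])$ and the internal composition of $\Bispan(\psh[ℂ₁])$, i.e. checking that vertically pasting a $Δ_{\source}$-embedded $ℂ₀$-cell on top and a $Δ_{\but}$-embedded one on the bottom of a $ℂ₁$-cell is strictly associative-and-unital enough (or, failing strictness, that the mediating isomorphisms satisfy the requisite pentagon/triangle identities so that all parsings collapse). Since $Δ_{\source}$ and $Δ_{\but}$ preserve the relevant limits strictly enough in a presheaf setting — indeed one may choose the global pullbacks so that $Δ$ preserves them on the nose — I would handle this by fixing a coherent global choice of pullbacks in $\psh[ℂ₁]$ extending those in $\psh[ℂ₀]$ along $Δ$, which makes all the comparison isomorphisms identities and reduces the claim to plain associativity of pasting in a single (strictified) double bicategory; the "up to isomorphism" in the statement then absorbs the residual non-strictness of the span bicategory itself. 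The remainder is a routine bookkeeping argument that I would not spell out in full.
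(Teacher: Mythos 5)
Your plan reaches the right conclusion, but it takes a much heavier route than the paper, and its load-bearing step is not actually available off the shelf. The paper's entire proof is ``by interchange of limits'': every operation involved --- horizontal composition of spans (a pullback), vertical composition of cells (a limit), and the top/bottom actions obtained by embedding $\Bispan(\psh[ℂ₀])$-cells along $Δ_{\source}$ and $Δ_{\but}$ (which, being precomposition functors, are right adjoints and preserve all limits) --- is computed as a limit. Hence any parsing of a fixed composable pasting diagram computes an iterated limit of one and the same underlying diagram in $\psh[ℂ₁]$, and commutation of limits with limits gives the canonical comparison isomorphism between any two parsings directly. You correctly identify that all the comparison isomorphisms come from universal properties of pullbacks, and your strictification idea in step (4) (choosing global pullbacks compatibly with $Δ$) is sensible; but your step (3) invokes ``the coherence theorem for composition of cells in a double bicategory (Morton's framework)'', and Morton's Theorem~4.1.3 only \emph{constructs} the double bicategory $\Bispan(𝒞)$ --- it does not supply a general pasting/coherence theorem for double bicategories. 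The genuinely delicate point in a mixed pasting is the interchange between the horizontal and vertical composites, which is precisely the content you would have to prove if you took the abstract-coherence route; deferring it to ``routine bookkeeping'' hides the actual work. The direct argument via commutation of limits dissolves that difficulty, because it never needs to compare composition operations against one another: it only compares two ways of computing one limit.
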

  \begin{proof}
    By interchange of limits.
  \end{proof}

  Let us end this subsection by generalising simulations to
  $\Bispan(\psh[ℂ₁])$.  By Proposition~\ref{prop:sim:wpbk}, the span
  denoted by a cell~(\ref{eq:doublecell}, right) in $\Bispan(\psh[ℂ₁])$ is
  a simulation iff the top left square in the corresponding
  diagram~(\ref{eq:doublecell}, left) is a pointwise weak pullback.
  Abstracting over this:
  \begin{defi}
    A cell in any double bicategory of the form $\Bispan(\psh)$ is a
    \emph{simulation} iff its top left square is a pointwise weak
    pullback.
  \end{defi}
  \begin{prop}\label{prop:simpbk}
    Simulations are closed under horizontal and vertical composition
    in any double bicategory of the form $\Bispan(\psh[ℂ])$.
  \end{prop}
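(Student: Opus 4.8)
The plan is to reduce the statement to the single fact that \emph{pointwise weak pullbacks in $\psh[ℂ]$ paste}. First I would record the standard characterisation: in $\mathbf{Set}$ a commuting square is a weak pullback exactly when the comparison map to the strict pullback is surjective, so in $\psh[ℂ]$ a square is a pointwise weak pullback exactly when this comparison is a pointwise epimorphism. Since pointwise epimorphisms in a presheaf category are stable under pullback and under composition, it follows that whenever, in a diagram $A \xot{} B \xot{} C$ sitting over $A' \xot{} B' \xot{} C'$ (with the evident vertical maps), both squares are pointwise weak pullbacks, so is the outer rectangle — and symmetrically in the vertical direction. Because the horizontal and vertical bicategories of $\Bispan(\psh[ℂ])$ are both $\Span(\psh[ℂ])$, transposing a double span across its main diagonal interchanges horizontal and vertical composition while fixing the top left square (it merely swaps the two maps of which that square is a weak pullback); so it suffices to treat horizontal composition.

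Next I would unwind the horizontal composite of two simulation cells. Writing the two cells in the shape of~\eqref{eq:cell0}, with centres $B'$ and $D'$ and sharing the vertical $1$-cell $C \xot{} C' \xto{} C''$, the horizontal composite has top span $A \xot{} B\times_C D \xto{} E$, unchanged left span $A \xot{} A' \xto{} A''$, and centre $B'\times_{C'}D'$; hence its top left square is $A \xot{} B\times_C D$ over $A' \xot{} B'\times_{C'}D'$. I would then factor this square as the horizontal paste of the \emph{first} cell's top left square $A \xot{} B$ over $A' \xot{} B'$ — a pointwise weak pullback by hypothesis — with the square $B \xot{} B\times_C D$ over $B' \xot{} B'\times_{C'}D'$ whose horizontal legs are the first projections and whose vertical legs are $B'\to B$ and $B'\times_{C'}D'\to B\times_C D$.

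I would then need to check that this second, auxiliary square is itself a pointwise weak pullback, i.e.\ that the comparison $B'\times_{C'}D' \to B'\times_B(B\times_C D)\cong B'\times_C D$ is a pointwise epimorphism; this is the only point where the \emph{second} cell's simulation hypothesis is used. Indeed, that cell's top left square being a pointwise weak pullback says exactly that $D' \to C'\times_C D$ is a pointwise epimorphism, and pulling this back along $B' \to C'$ yields precisely the required pointwise epimorphism. The pasting principle from the first paragraph then shows that the outer rectangle — which is the top left square of the composite cell — is a pointwise weak pullback, so the composite is a simulation; the vertical case follows by transposing and using the second cell's top left square to fill the lower half instead. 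I expect the main obstacle to be bookkeeping rather than mathematics: reading off correctly from the description of composition in $\Bispan$ that the composite's top left square decomposes as stated (in particular that the composite centre is the pullback $B'\times_{C'}D'$ over the shared $1$-cell), after which the diagram chase for the auxiliary square is routine.
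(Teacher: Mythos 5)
Your proof is correct and follows essentially the same route as the paper, which simply invokes its weak-pullback pasting lemma (Lemma~\ref{lem:wpullback}) and declares the rest straightforward; you supply the details the paper omits, namely the decomposition of the composite cell's top-left square and the verification that the auxiliary square $B \xot{} B\times_C D$ over $B' \xot{} B'\times_{C'}D'$ is a pointwise weak pullback via pullback-stability of pointwise epis. Your epi-comparison reformulation of "weak pullback" is a clean equivalent of the direct element chase the paper's lemma relies on, and the transposition argument correctly reduces the vertical case to the horizontal one.
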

  In order to prove this, we need the following weak analogues of the
  pullback lemma.
  \begin{nota}
    We denote
      weak pullbacks (in any category) by dashed corners.
  \end{nota}
  \begin{lem}\label{lem:wpullback}
    In any category (resp.\ presheaf category),
    \begin{enumerati}
    \item
      for any commuting diagram
      \begin{center}
        \Diag{%
          \wpbk{m-2-1}{m-1-1}{m-1-2} %
          \wpbk{m-2-2}{m-1-2}{m-1-3} %
        }{%
          A \& B \& C \\
          D \& E \& F\rlap{,} %
        }{%
          (m-1-1) edge[labela={}] (m-1-2) %
          edge[labell={}] (m-2-1) %
          (m-2-1) edge[labelb={}] (m-2-2) %
          (m-1-2) edge[labelr={}] (m-2-2) %
          (m-1-2) edge[labela={}] (m-1-3) %
          (m-2-2) edge[labelb={}] (m-2-3) %
          (m-1-3) edge[labelr={}] (m-2-3) %
        }
      \end{center}
      if both squares are weak pullbacks (resp.\ pointwise weak
      pullbacks), then so is the outer rectangle; and
    \item for any commuting diagram
      \begin{center}
        \Diag{%
          \pullback{m-2-1}{m-1-1}{m-1-2}{draw,dashed,shorten >=-.8em} %
          \pbk{m-2-2}{m-1-2}{m-1-3} %
        }{%
          A \& B \& C \\
          D \& E \& F\rlap{,} %
        }{%
          (m-1-1) edge[labela={}] (m-1-2) %
          edge[labell={}] (m-2-1) %
          edge[bend left=30,labela={}] (m-1-3) %
          (m-2-1) edge[labelb={}] (m-2-2) %
          (m-1-2) edge[labelr={}] (m-2-2) %
          (m-1-2) edge[labela={}] (m-1-3) %
          (m-2-2) edge[labelb={}] (m-2-3) %
          (m-1-3) edge[labelr={}] (m-2-3) %
        }
      \end{center}
      if the right-hand square is a pullback and the outer rectangle
      is a weak pullback (resp.\ pointwise weak pullback), then the
      left-hand square is a weak pullback (resp.\ pointwise weak
      pullback).
    \end{enumerati}
  \end{lem}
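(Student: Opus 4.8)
The plan is to prove both parts by elementary diagram chases, treating the ordinary-category and presheaf-category versions simultaneously. First I would fix notation, writing the first diagram with top objects $A,B,C$ and bottom objects $D,E,F$, horizontal maps $f\colon A\to B$, $g\colon B\to C$, $h\colon D\to E$, $k\colon E\to F$, and vertical maps $a\colon A\to D$, $b\colon B\to E$, $c\colon C\to F$. Since limits, colimits and composition in a presheaf category are computed pointwise, a square there is a pointwise weak pullback (resp.\ an ordinary pullback) exactly when it is so at each object of the indexing category; hence once the set-level arguments below are given, the ``resp.'' clauses follow by running them objectwise. So I would record this reduction once and then argue in an arbitrary category.

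For \textbf{part (i)}, assuming both squares are weak pullbacks, I would take an arbitrary test cone over the outer rectangle: an object $T$ with maps $p\colon T\to D$ and $q\colon T\to C$ such that $c\circ q = k\circ h\circ p$. I lift it in two stages. The pair $(h\circ p,\ q)$ is a cone over the right square, since $c\circ q = k\circ(h\circ p)$, so its weak universal property yields $v\colon T\to B$ with $b\circ v = h\circ p$ and $g\circ v = q$. The pair $(p,\ v)$ is then a cone over the left square, since $h\circ p = b\circ v$, so its weak universal property yields $u\colon T\to A$ with $a\circ u = p$ and $f\circ u = v$. Finally $g\circ f\circ u = g\circ v = q$, so $u$ witnesses the weak universal property of the outer rectangle.

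For \textbf{part (ii)}, assuming the right square is a strict pullback and the outer rectangle a weak pullback, I would take a test cone $(p\colon T\to D,\ v\colon T\to B)$ over the left square, i.e.\ with $h\circ p = b\circ v$. Pushing it forward, $(p,\ g\circ v)$ is a cone over the outer rectangle, the compatibility $c\circ(g\circ v) = k\circ b\circ v = k\circ h\circ p$ following from commutativity of the right square and then from $b\circ v = h\circ p$; hence there is $u\colon T\to A$ with $a\circ u = p$ and $g\circ f\circ u = g\circ v$. The one point that needs care, and the only place strictness is used, is checking the second triangle $f\circ u = v$: from $b\circ f\circ u = h\circ a\circ u = h\circ p = b\circ v$ and $g\circ f\circ u = g\circ v$ one sees that $f\circ u$ and $v$ induce the same cone over the right square, so uniqueness for the strict pullback forces $f\circ u = v$.

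I do not expect any real obstacle; the points to be careful about are that part~(ii) genuinely requires the right-hand square to be a \emph{strict} (not merely weak) pullback, since weakness would not supply the uniqueness needed to conclude $f\circ u = v$, and that in the presheaf case ``pullback'' must be read pointwise so that the objectwise reduction applies verbatim.
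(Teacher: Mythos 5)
Your proof is correct and is precisely what the paper intends: its proof of this lemma is just the one-line remark ``Similar to the proof of the standard pullback lemma,'' and your diagram chase (including the observation that only part (ii) needs the right-hand square to be a strict pullback, for the uniqueness giving $f\circ u = v$) is exactly that argument spelled out, with the pointwise reduction handling the presheaf case.
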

  \begin{proof}
    Similar to the proof of the standard pullback lemma.
  \end{proof}
  
  \begin{proof}[Proof of Proposition~\ref{prop:simpbk}]
    Straightforward, using Lemma~\ref{lem:wpullback}.
  \end{proof}
  
  For vertical composition with cells from $\Bispan(\psh[ℂ₀])$, as in
  Notation~\ref{not:pasting}, we will also need the following.  
  \begin{prop}\label{prop:embedcells}
    Precomposition with $\source$ and $\but$ yields maps
    $\Bispan(\psh[ℂ₀]) → \Bispan(\psh[ℂ₁])$ between cell sets, which preserve
    borders and simulations.
  \end{prop}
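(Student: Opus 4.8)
The plan is to note that both maps are obtained by applying $Δ_\source$, resp.\ $Δ_\but$, levelwise to the commuting $3×3$ diagrams that constitute cells. First I would recall that $Δ_\source, Δ_\but ∶ \psh[ℂ₀] → \psh[ℂ₁]$ are the precomposition functors, acting on objects by $Δ_\source(X)(L) = X(\source(L))$ and $Δ_\but(X)(L) = X(\but(L))$. Since $\psh[ℂ₁]$ has pullbacks, $\Bispan(\psh[ℂ₁])$ is available, and a cell of $\Bispan(\psh[ℂ₀])$ is by definition merely a commuting diagram of shape~\eqref{eq:cell0} in $\psh[ℂ₀]$. Applying the functor $Δ_\source$ (resp.\ $Δ_\but$) to each node and arrow of such a diagram yields a commuting diagram of the same shape in $\psh[ℂ₁]$, i.e.\ a cell of $\Bispan(\psh[ℂ₁])$; this defines the required map on cell sets. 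It preserves borders by construction: the borders of a cell are its four boundary spans, which $Δ_\source$ (resp.\ $Δ_\but$) sends to their images, i.e.\ to the borders of the image cell.

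It then remains to check preservation of simulations. Here I would invoke the definition of simulation for cells of a double bicategory of the form $\Bispan(\psh)$, namely that the top-left square of the cell be a pointwise weak pullback (the abstraction of Proposition~\ref{prop:sim:wpbk}). The task thus reduces to showing that $Δ_\source$ and $Δ_\but$ preserve pointwise weak pullback squares, which is immediate from pointwiseness: if a square in $\psh[ℂ₀]$ is a pointwise weak pullback --- its evaluation at every $c ∈ ℂ₀$ being a weak pullback of sets --- then for any $L ∈ ℂ₁$ the evaluation at $L$ of the square's image under $Δ_\source$ coincides with the evaluation of the original square at $\source(L) ∈ ℂ₀$, hence is a weak pullback of sets; so the image square is again a pointwise weak pullback, and symmetrically for $Δ_\but$. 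Applying this to the top-left square of a simulation cell gives that its image is a simulation cell.

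I do not expect a genuine obstacle: every ingredient --- cells, their borders, and the pointwise-weak-pullback condition defining simulations --- is pointwise, so the whole statement reduces, along the reindexings $L ↦ \source(L)$ and $L ↦ \but(L)$, to the tautology that a weak pullback of sets is a weak pullback of sets. The only point to phrase carefully is that ``pointwise weak pullback'' is a \emph{pointwise} notion (as opposed to an honest weak pullback in the presheaf category), which is exactly what makes the reduction work.
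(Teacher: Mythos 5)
Your proposal is correct and matches the paper's proof, which simply observes that both precomposition functors straightforwardly preserve pointwise weak pullbacks; your write-up just spells out the reindexing $L ↦ \source(L)$ (resp.\ $L ↦ \but(L)$) that makes this immediate. No issues.
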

  \begin{proof}
    Both precomposition functors straightforwardly preserve pointwise
    weak pullbacks. 
  \end{proof}

\subsection{Howe closure on transitions}
Let us now define the Howe closure on transitions.  First, we
delineate an ambient category $𝐂ᴴ_𝐙$.  The idea is that objects of
this category should be transition systems $S → 𝐙²$ over $𝐙²$ whose
image under the projection $𝐂/𝐙² → \psh[ℂ₀]/𝐙₀²$ is precisely $H₀ → 𝐙₀²$.
Thus, an object of $𝐂ᴴ_𝐙$ consists of an object $S₁ ∈ \psh[ℂ₁]$, equipped with
a dashed cone to the outer part of the diagram below.
\begin{equation}
  \diag(.5,.5){%
    𝐙₀\source \& H₀\source \& 𝐙₀\source  \\
    𝐙₁ \& S₁ \& 𝐙₁ \\
    𝐙₀\but \& H₀\but \& 𝐙₀\but %
  }{%
    (m-1-1) edge[<-,labela={}] (m-1-2) %
    edge[<-,labell={}] (m-2-1) %
    (m-2-1) edge[<-,dashed,labelb={}] (m-2-2) %
    (m-1-2) edge[<-,dashed,labelr={}] (m-2-2) %
    (m-1-2) edge[labela={}] (m-1-3) %
    (m-2-2) edge[dashed,labelb={}] (m-2-3) %
    (m-1-3) edge[<-,labelr={}] (m-2-3) %
    (m-2-1) edge[labell={}] (m-3-1) %
    (m-3-1) edge[<-,labelb={}] (m-3-2) %
    (m-2-2) edge[dashed,labelr={}] (m-3-2) %
    (m-3-2) edge[labelb={}] (m-3-3) %
    (m-2-3) edge[labelr={}] (m-3-3) %
  }
\label{eq:doublebicat}
\end{equation}
Equivalently, they are morphisms over the limit, so that we may define
$𝐂ᴴ_𝐙$ as a slice category by merely stating the
following.
\begin{defi}
  Let $R^{∂H}$ denote  the limit of the outer part of~\eqref{eq:doublebicat}.
\end{defi}
\begin{defi}
  $𝐂ᴴ_𝐙$ is the category of cones over the
  outer part of~\eqref{eq:doublebicat}, or equivalently,
  it is the slice category $\psh[ℂ₁]/R^{∂H}$.

  Furthermore, we denote by $𝒰ᴴ_𝐙∶ 𝐂ᴴ_𝐙 → 𝐂/𝐙²$ the forgetful
  functor.
\end{defi}
\begin{prop}\label{prop:initialwowoZ}
  The initial object in $𝐂ᴴ_𝐙$ is the span $𝐙 ← {H₀} → 𝐙$, i.e., the
  one with $S₁ = 0$.
\end{prop}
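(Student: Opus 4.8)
The plan is to exploit the fact that $𝐂ᴴ_𝐙$ has been \emph{defined} as the slice category $\psh[ℂ₁]/R^{∂H}$, so that the statement reduces to two bookkeeping facts: that this slice has an initial object whose underlying presheaf is the initial (constantly empty) presheaf $0$ on $ℂ₁$, and that this initial object, unwound through the definitions, is precisely the span $𝐙 ← {H₀} → 𝐙$ with empty transition part.

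First I would recall the elementary fact that for any category with an initial object $\bot$ and any object $X$ therein, the unique morphism $\bot → X$ is initial in the slice over $X$: for any $f∶ Y → X$ the unique morphism $\bot → Y$ automatically lies over $X$, and is unique there since it is already unique in the ambient category. Since $\psh[ℂ₁]$ is a presheaf category, the constantly empty presheaf $0$ is its initial object, and the finite limit $R^{∂H}$ of the outer part of~\eqref{eq:doublebicat} exists, so the unique morphism $0 → R^{∂H}$ is initial in $𝐂ᴴ_𝐙 = \psh[ℂ₁]/R^{∂H}$.

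Next I would unwind what the object $0 → R^{∂H}$ is in transition-system terms. Since $R^{∂H}$ is the limit of the outer part of~\eqref{eq:doublebicat}, a morphism with domain $S₁$ into $R^{∂H}$ is exactly a cone over that diagram with apex $S₁$; spelling this out, such a cone is a transition presheaf $S₁$, source and target maps $H₀\source ← S₁ → H₀\but$, and two legs $S₁ → 𝐙₁$, with the commutation conditions saying precisely that these legs are compatible with source and target against the projections $π₁$ and $π₂∶ {H₀} → 𝐙₀$ respectively. In other words, an object of $𝐂ᴴ_𝐙$ is a transition system $S$ with state object $S₀ = {H₀}$ together with a morphism $S → 𝐙²$ in $𝐂$ whose state part is $({H₀} → 𝐙₀²)$. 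Taking $S₁ = 0$, the source, target and both legs are the unique maps out of the initial presheaf, so the only surviving data is the pair $(π₁,π₂)$, and the resulting object is the span $𝐙 \xot{π₁} {H₀} \xto{π₂} 𝐙$ equipped with the empty set of transitions, as claimed.

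I do not expect a genuine obstacle here: the entire content lies in the definition of $𝐂ᴴ_𝐙$ as a slice over the limit $R^{∂H}$, together with the standard description of initial objects in slice categories. The only care needed is to make the correspondence ``cone over~\eqref{eq:doublebicat} $=$ transition system over $𝐙²$ with state part ${H₀}$'' fully explicit and to check that for $S₁ = 0$ it reproduces the span $𝐙 ← {H₀} → 𝐙$; both are routine once the limit is spelled out.
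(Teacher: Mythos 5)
Your argument is correct and matches the paper's intent exactly: the paper states this proposition without proof, treating it as immediate from the definition of $𝐂ᴴ_𝐙$ as the slice $\psh[ℂ₁]/R^{∂H}$, and your proposal simply spells out the routine verification (initial object of a slice is $0 → R^{∂H}$, and unwinding the cone for $S₁ = 0$ gives the span $𝐙 ← H₀ → 𝐙$). No gaps.
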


\begin{defi}
  Let $Σ₁ᴴ∶ 𝐂ᴴ_𝐙 → 𝐂ᴴ_𝐙$ map any object $𝐙 ← S → 𝐙$ to the coproduct of
  the following two pastings.
      \begin{equation}
\Diag(.6,1){%
  \draw[->,pro,labell={𝐙₁},rounded corners]
  (m-1-1) -- ($(m-1-1.west) + (180:5ex)$) --  node(ll) {} ($(m-3-1.west) + (180:5ex)$) -- (m-3-1) %
  ; %
  \path[->,draw] %
    (m-2-1) edge[celllr={0}{0}] (ll) %
    ; %
  \draw[->,pro,labelr={𝐙₁},rounded corners]
  (m-1-2) -- ($(m-1-2.east) + (0:5ex)$) --  node(rr) {} ($(m-3-2.east) + (0:5ex)$) -- (m-3-2) %
  ; %
  \path[->,draw] %
    (m-2-2) edge[celllr={0}{0}] (rr) %
    ; %
}{%
  𝐙₀ \& 𝐙₀ \\
  Σ₀(𝐙₀) \&  Σ₀(𝐙₀) \\
  𝐙₀ \& 𝐙₀  %
}{%
  (m-1-1) edge[glob,twoa={H₀}] (m-1-2) %
  edge[<-] node (l) {} (m-2-1) %
  (m-2-1) edge[glob,twob={},labela={Σ₀(H₀)}] (m-2-2) %
  (m-1-2) edge[<-,labelr={}]  node (r) {} (m-2-2) %
  (m-2-1) edge[pro,labell={Σ₁^F(𝐙)}] node (dl) {} (m-3-1) %
  (m-3-1) edge[glob,labelb={H₀}] (m-3-2) %
  (m-2-2) edge[pro,labelr={Σ₁^F(𝐙)}] node (dr) {} (m-3-2) %
  (dl) edge[draw=none,labelon={Σ₁^F(S)}] (dr) %
  (b) edge[celllr={1.5}{0}] (a)
}%
      \qquad
      \Diag(.6,1){%
        \draw[->,glob,labela={H₀},rounded corners]
        (m-1-1) -- ($(m-1-1.north) + (90:5ex)$) --  node(a) {} ($(m-1-3.north) + (90:5ex)$) -- (m-1-3) %
        ; %
        \path[->,draw] %
        (m-1-2) edge[celllr={1}{1}] (a) %
        ; %
        \draw[->,glob,labelb={H₀},rounded corners]
        (m-2-1) -- ($(m-2-1.south) + (-90:5ex)$) --  node(b) {} ($(m-2-3.south) + (-90:5ex)$) -- (m-2-3) %
        ; %
        \path[->,draw] %
        (m-2-2) edge[celllr={1}{1}] (b) %
    ; %
      }{%
        𝐙₀ \& 𝐙₀ \& 𝐙₀ \\
        𝐙₀ \& 𝐙₀ \& 𝐙₀ %
      }{%
        (m-1-1) edge[glob,labela={H₀}] (m-1-2) %
        edge[pro,labell={𝐙₁}] node (ll) {} (m-2-1) %
        (m-2-1) edge[glob,labelb={H₀}] (m-2-2) %
        (m-1-2) edge[pro,labelr={𝐙₁}] node (mm) {} (m-2-2) %
        (m-1-2) edge[glob,labela={{∼₀^{⊗}}}] (m-1-3) %
        (m-2-2) edge[glob,labelb={{∼₀^{⊗}}}] (m-2-3) %
        (m-1-3) edge[pro,labelr={𝐙₁}] node (rr) {} (m-2-3) %
        (ll) edge[draw=none,labelon={S₁}] (mm)
        (mm) edge[draw=none,labelon={{∼₁^{⊗}}}] (rr)
      }%
      \label{eq:SigmabulletRinit}
    \end{equation}
\end{defi}

\begin{prop}\label{prop:Sigbulletfinitary}
  The functor $Σ₁ᴴ∶ 𝐂ᴴ_𝐙 → 𝐂ᴴ_𝐙$ is finitary.
\end{prop}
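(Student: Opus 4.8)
The plan is to show that $Σ₁ᴴ$ preserves filtered colimits. We work throughout in the presheaf category $𝐂ᴴ_𝐙 = \psh[ℂ₁]/R^{∂H}$, which is itself a presheaf category since slices of presheaf categories are (as in Lemma~\ref{lem:CZpresheaf}). In particular, filtered colimits in $𝐂ᴴ_𝐙$ are computed on underlying $\psh[ℂ₁]$-objects, are preserved and reflected by the forgetful functor to $\psh[ℂ₁]$, and commute with finite limits; so it suffices to check that the underlying $\psh[ℂ₁]$-object of $Σ₁ᴴ(S)$ depends finitarily on $S$. By the definition of $Σ₁ᴴ$, this underlying object is the coproduct of the underlying objects of the two pastings in~\eqref{eq:SigmabulletRinit}; since binary coproduct commutes with filtered colimits, it is enough to treat each summand separately. (The morphisms into $R^{∂H}$, i.e.\ the source/target data assembled around each summand from the fixed globular borders $H₀$, $∼₀^⊗$ and fixed proarrows $𝐙₁$, $Σ₁^F(𝐙)$, are irrelevant to this: colimits in the slice $\psh[ℂ₁]/R^{∂H}$ are created on underlying objects.)

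For the first summand, the underlying $\psh[ℂ₁]$-object is $Σ₁^F(S)$, where $S \in 𝐂ᴴ_𝐙$ is first viewed as a \transitionmonoidof{Σ₀}: indeed the image $𝒰ᴴ_𝐙(S)$ is a span over $𝐙²$ whose state object is $H₀$, which carries canonical $Σ₀$-monoid structure by Proposition~\ref{prop:howomonoid}, and this fixed structure upgrades $𝒰(𝒰ᴴ_𝐙(S))$ to an object of $Σ₀\Trans$. I claim the resulting functor $𝐂ᴴ_𝐙 → Σ₀\Trans$ preserves filtered colimits. A filtered colimit in $𝐂ᴴ_𝐙$ is a filtered colimit of transition presheaves with constant state object $H₀$; its image in $Σ₀\Trans$ is a colimit whose underlying $𝐂$-object is the colimit of underlying $𝐂$-objects (as $Σ₀\Trans$ is cocomplete and the forgetful functor $𝒰∶ Σ₀\Trans → 𝐂$ is finitary by Proposition~\ref{prop:SigmaMon-monadic}, while $𝐂 → \psh[ℂ₀]$ preserves colimits by Proposition~\ref{prop:forgetfulpreserves}), and since filtered categories are connected, the colimit of the constant diagram on $H₀$ is again $H₀$ with its original $Σ₀$-monoid structure. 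Postcomposing with $Σ₁^F$, which is finitary by the definition of dynamic signature, shows that the first summand is finitary.

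For the second summand, the underlying $\psh[ℂ₁]$-object is the horizontal composite $S ; {∼^⊗}$ in $\Bispan(\psh[ℂ₁])$ with ${∼^⊗}$ a fixed cell; composition of double spans is computed by pullback over the shared boundary, so this is a finite-limit construction in the variable $S$, hence preserves filtered colimits by their commutation with finite limits in $\psh[ℂ₁]$ (just as in the proof of Proposition~\ref{prop:SoHfinitary}). Combining the two cases gives finitarity of $Σ₁ᴴ$. The only delicate point, and the one requiring care, is the bookkeeping: one must verify that the two pastings of~\eqref{eq:SigmabulletRinit} are genuinely assembled only from $Σ₁^F$ and finite limits over fixed data, and that the passage $𝐂ᴴ_𝐙 → Σ₀\Trans$ used to even apply $Σ₁^F$ does not destroy filtered colimits — which is precisely the constant-diagram argument above.
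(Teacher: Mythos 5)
Your proof is correct and follows essentially the same route as the paper's: reduce along the colimit-creating forgetful functor $𝐂ᴴ_𝐙 ≅ \psh[ℂ₁]/R^{∂H} → \psh[ℂ₁]$, split the value $Σ₁^F(S) + (S₁;{∼₁^⊗})$ into its two summands, and conclude by finitarity of $Σ₁^F$ for the first and preservation of filtered colimits by ${-};{∼₁^⊗}$ for the second. The only differences are cosmetic: you handle the second summand via commutation of filtered colimits with pullbacks where the paper notes that ${-};{∼₁^⊗}$ is outright cocontinuous (pullback along a fixed map being a left adjoint), and you spell out why the coercion $𝐂ᴴ_𝐙 → Σ₀\Trans$ needed to even apply $Σ₁^F$ preserves filtered colimits, a point the paper leaves implicit.
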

    \begin{proof}
      The forgetful functor $𝐂ᴴ_𝐙≅ \psh[ℂ₁]/R^{∂H} → \psh[ℂ₁]$ creates colimits,
      so it suffices to show that the composite
      $𝐂ᴴ_𝐙 \xto{Σ₁ᴴ} 𝐂ᴴ_𝐙 → \psh[ℂ₁]$ is finitary.  This functor maps any
      $S$ to $Σ₁^F(S) + {S₁;{∼₁}^{⊗}}$, hence is finitary because
      $Σ₁^F$ is and ${-};∼₁^{⊗}$ is cocontinuous.
      \end{proof}

The last result legitimates the following definition.
\begin{defi}
  Let $H_𝐙$ denote the initial $Σ₁ᴴ$-algebra.  We call
  $H ≔ 𝒰ᴴ_𝐙(H_𝐙) ∈ 𝐂/𝐙²$ the \emph{Howe closure} of
  substitution-closed bisimilarity.
\end{defi}

We readily can prove the following.  \simintowow* 
\begin{proof}
  By construction, the underlying object of $H$ is in particular a
  $\check{Σ}₁$-algebra, so by initiality we obtain a unique span
  morphism $𝐙 → {H}$ --- in other words $H$ is reflexive.
  Furthermore, again by construction, ${H}$ is an algebra for the
  endofunctor ${-};{∼^{⊗}}$ on $𝐂/𝐙²$.  We thus may form the composite
  ${∼^{⊗}} ≅ 𝐙;{∼^{⊗}} → {H};{∼^{⊗}} → {H}$.
\end{proof}

\subsection{Alternative characterisations of the Howe closure}
In this section, we exhibit a few alternative characterisations of the
Howe closure on transitions.  The definition in the previous section
is convenient for proving that the transitive closure is symmetric,
while our final alternative characterisation will enable a conceptual
proof of the simulation property.

First of all, we have:
\begin{lem}
  \label{lem:pack-Sigma1}
  The object ${H_𝐙} ∈ 𝐂ᴴ_𝐙$ is (isomorphic to) the initial algebra of
  the endofunctor $Σ_{1,\pack}ᴴ∶ 𝐂ᴴ_𝐙 → 𝐂ᴴ_𝐙$ mapping any $𝐙 ← S → 𝐙$
  to the following pasting.
      \begin{equation}
      \Diag{%
        \sqpath{m-1-1}{180}{7}{l}{m-3-1}{pro,labell={𝐙₁}}
        \sqpath{m-3-1}{-90}{5}{b}{m-3-4}{glob,labelb={H₀}} %
        \sqpath{m-1-1}{90}{5}{aa}{m-1-4}{glob,labela={H₀}} %
        \path (m-3-1) -- node (bm) {} (m-3-4) ; %
        \path[draw] %
        (m-2-1) edge[celllr={0}{0}] (l) %
        (m-2-2) edge[celllr={1}{1},labela={}] (ddrrr) %
        ($(aa)+(-90:4ex)$) edge[celllr={0}{0}] (aa) %
        ($(a)+(-90:3ex)$) edge[celllr={0}{0}] (a) %
        ($(b)+(90:4ex)$) edge[celllr={0}{0}] ($(b)+(90:1ex)$) %
        ; %
      }{%
        𝐙₀ \& 𝐙₀ \& 𝐙₀ \& 𝐙₀ \\
        Σ₀(𝐙₀) \& Σ₀(𝐙₀) \&  \&  \\
        𝐙₀ \& 𝐙₀ \& 𝐙₀ \& 𝐙₀ %
      }{%
        (m-1-1) edge[glob,twoa={H₀}] (m-1-2) %
        (m-1-2) edge[identity] (m-1-3) %
        (m-1-3) edge[glob,labela={{∼₀^{⊗*}}}] (m-1-4)
        edge[pro,labelr={𝐙₁}] node (ddrrr) {} (m-3-3) %
        (m-2-1) edge[labell={}] (m-1-1) %
        (m-2-1) edge[glob,labela={Σ₀(H₀)}] (m-2-2) %
        edge[pro,labell={Σ₁^F(𝐙)}] node (ddr) {} (m-3-1) %
        (m-3-1) edge[glob,labelb={H₀}] (m-3-2) %
        (m-2-2) edge[pro,labellat={\scriptscriptstyle Σ₁^F(𝐙)}{.1}] node (ddrr) {} (m-3-2) %
        edge[labelr={}] (m-1-2) %
        (m-3-2) edge[identity,labelb={}] (m-3-3) %
        (m-3-3) edge[glob,labelb={{∼₀^{⊗*}}}] (m-3-4) %
        (m-1-4) edge[pro,labelr={𝐙₁}] node (ddrrrr) {} (m-3-4) %
        (ddr) edge[draw=none,labelon={Σ₁^F(S)}] (ddrr) %
        (ddrrr) edge[draw=none,labelon={{∼₁^{⊗*}}}] (ddrrrr) %
      }
      \label{eq:SigmabulletRpack}
    \end{equation}  
\end{lem}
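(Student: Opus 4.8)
The plan is to reuse the Packing lemma (Lemma~\ref{lem:muFplusG}), exactly as in the proof of Proposition~\ref{prop:initalt}, but over the category $𝐂ᴴ_𝐙 ≅ \psh[ℂ₁]/R^{∂H}$ rather than over $\psh[ℂ₀]/𝐙₀²$. First I would split the endofunctor $Σ₁ᴴ$ from~\eqref{eq:SigmabulletRinit} as $Σ₁ᴴ = F + G$, where $F$ and $G$ send a span $𝐙 ← S → 𝐙$ respectively to the first and the second pasting of~\eqref{eq:SigmabulletRinit}, so that $G(S)$ is the sequential composite $S;{∼^{⊗}}$. Both $F$ and $G$ are endofunctors of $𝐂ᴴ_𝐙$ — precomposing the structure map $F(S)+G(S) → R^{∂H}$ with the two coproduct injections already exhibits each of them as landing in the slice $\psh[ℂ₁]/R^{∂H}$ — and both are finitary: $F$ by finitarity of $Σ₁^F$, and $G$ by cocontinuity of $-;{∼^{⊗}}$, as in the proof of Proposition~\ref{prop:Sigbulletfinitary}. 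Since $𝐂ᴴ_𝐙$ is a presheaf category, hence cocomplete, the Packing lemma is applicable.

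Next, as in Proposition~\ref{prop:initalt}, $G$ preserves coproducts, since $-;{∼^{⊗}}$ is, up to reindexing, a composite of post-composition and pullback along a projection ${∼^{⊗}} → 𝐙$, both left adjoints (presheaf categories and their slices being locally cartesian closed), hence cocontinuous. Consequently the free $G$-algebra monad satisfies $G^⋆(U) ≅ ∑_{n ∈ ℕ} Gⁿ(U) ≅ ∑_n \bigl(U;(∼^{⊗})ⁿ\bigr) ≅ U;∑_n(∼^{⊗})ⁿ = U;{∼^{⊗*}}$, and the Packing lemma then gives
\[H_𝐙 = μS.(F(S)+G(S)) ≅ μS.\,G^⋆(F(S)) ≅ μS.\,\bigl(F(S);{∼^{⊗*}}\bigr)\rlap{.}\]
It then remains to recognise $F(S);{∼^{⊗*}}$ as $Σ_{1,\pack}ᴴ(S)$: by the interchange coherence for $\Bispan$-pastings from §\ref{ss:doublecat}, horizontally pasting the first diagram of~\eqref{eq:SigmabulletRinit} on the right with the globular cell $-;{∼^{⊗*}}$ produces precisely the pasting~\eqref{eq:SigmabulletRpack}, whence $H_𝐙$ is an initial $Σ_{1,\pack}ᴴ$-algebra.

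The step I expect to be the main obstacle is not the Packing-lemma application, which is routine given Proposition~\ref{prop:initalt}, but the bookkeeping surrounding $G$: checking carefully that $-;{∼^{⊗}}$ really is a well-defined finitary, coproduct-preserving endofunctor of $𝐂ᴴ_𝐙$ (as opposed to a plain slice), that $G^⋆$ has the stated closed form, and that the several parsings of the double-bicategorical pastings agree up to the coherence isomorphisms recorded in §\ref{ss:doublecat} — conceptually a routine interchange argument, but notationally heavy.
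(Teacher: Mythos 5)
Your proof is correct in outline and shares its skeleton with the paper's: both begin by splitting $Σ₁ᴴ = F + G$ with $G$ the ``$;{∼^{⊗}}$'' pasting and invoking the Packing Lemma (Lemma~\ref{lem:muFplusG}) to reduce the claim to identifying the free $G$-algebra monad $G^⋆$. Where you diverge is in how that identification is carried out. You compute $G^⋆(U) ≅ ∑_n Gⁿ(U) ≅ U;{∼^{⊗*}}$ directly, by cocontinuity of $-;{∼^{⊗}}$, exactly mirroring Proposition~\ref{prop:initalt}; the paper instead organises $𝐂ᴴ_𝐙$ as a comma category $J/K$ and applies Corollary~\ref{cor:commute-free-mon-adj}, a general statement about free algebras for endofunctors induced on comma categories. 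The reason the paper takes the longer road is the point you correctly flag as the residual obstacle: an object of $𝐂ᴴ_𝐙$ is not just a presheaf $S₁$ but a cone over~\eqref{eq:doublebicat} whose borders are pinned to $H₀$, and the pasting $G$ keeps the borders at $H₀$ only by silently invoking the $Σ₀ᴴ$-algebra map $H₀;{∼₀^{⊗}} → H₀$. So the isomorphism $∑_n Gⁿ(U) ≅ U;{∼^{⊗*}}$ must be checked not merely on underlying presheaves (where it is the routine span-composition/coproduct interchange you describe) but as an isomorphism over $R^{∂H}$, i.e.\ one must verify that the $n$-fold iterates of $H₀;{∼₀^{⊗}} → H₀$ cotuple to the single map $H₀;{∼₀^{⊗*}} → H₀$ used in~\eqref{eq:SigmabulletRpack}. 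Corollary~\ref{cor:commute-free-mon-adj} is precisely the device the paper uses to discharge this coherence (reducing it to exhibiting an $M$-algebra structure and a unit map), and it is reused later for Lemma~\ref{lem:Hlax}, which is what it buys over your more elementary computation. Your route is viable and arguably lighter for this one lemma, but you should carry out the deferred border-coherence check explicitly rather than leave it as bookkeeping, since it is the only genuinely non-formal content of the statement.
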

For the proof, we need an intermediate result,
Corollary~\ref{cor:commute-free-mon-adj} below, which relies on the
following lemma.
\begin{lem}\label{lem:commute-free-monads}
  Consider any diagram
  \begin{center}
    \diag{%
      𝒜 \& 𝒜 \\
      ℬ \& ℬ \\
    }{%
      (m-1-1) edge[labela={F}] (m-1-2) %
      (m-2-1) edge[twol={K}] (m-1-1) %
      (m-2-1) edge[labelb={G}] (m-2-2) %
      (m-2-2) edge[twor={K}] (m-1-2) %
      (l) edge[draw=none,labelon={⇓α}] (r)
    }
  \end{center}
  of functors and natural transformations, such that $F$ and $G$ are
  finitary, and $𝒜$ and $ℬ$ are cocomplete.  Let $M$ be the
  induced endofunctor on the comma category $𝒜/K$, mapping
  $a \xrightarrow{f} Kb$ to
  $Fa \xrightarrow{Ff} FKb \xrightarrow{α_b} KGb$.  Then, given an
  object $f:a → K(b)$ of $𝒜/K$, there is a unique morphism
  $f^*:F^⋆ a → KG^⋆ b$ such that the following diagram commutes,
  \begin{center}
    \diag{%
      F (F^*(a)) \& \& F^*(a) \& a \\
      F (K (G^*(b)) \& K (G (G^* (b))) \& K (G^* (b)) \& K(b) %
    }{%
      (m-1-1) edge[labela={ν^Fₐ}] (m-1-3) %
      edge[labell={F(f^*)}] (m-2-1) %
      (m-2-1) edge[labelb={α_{G^*(b)}}] (m-2-2) %
      (m-2-2) edge[labelb={K (νᴳ_b)}] (m-2-3) %
      (m-1-3) edge[labelr={f^*}] (m-2-3) %
      (m-1-4) edge[labela={η^Fₐ}] (m-1-3) %
      edge[labelr={f}] (m-2-4) %
      (m-2-4) edge[labelb={K(ηᴳ_b)}] (m-2-3) %
    }
  \end{center}
  where $νₐ∶ F(F^*(a)) → F^*(a)$ denotes the canonical $F$-algebra
  structure on $F^*(a)$, and similarly for $G$.
  Furthermore, the right-hand square above, viewed as a morphism
  $f → f^*$ exhibits $f^*$ as a free $M$-algebra on $f$.
  \end{lem}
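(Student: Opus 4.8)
The plan is to reduce the whole statement to the universal property of free algebras, presenting $F^⋆(a)$, which is $μS.(a + F(S))$, as the initial algebra of the finitary endofunctor $S ↦ a + F(S)$ on $𝒜$ --- it exists by \cite[Theorem~2.1]{Reiterman} --- with structure map $[η^F_a, ν^F_a]∶ a + F(F^⋆(a)) → F^⋆(a)$, and likewise for $G$. The first step I would carry out is to record a convenient description of $M$-algebras: unfolding the comma category $𝒜/K$, an $M$-algebra structure on an object $g∶ c → K(d)$ of $𝒜/K$ is exactly a pair of an $F$-algebra structure $γ₀∶ F(c) → c$ and a $G$-algebra structure $γ₁∶ G(d) → d$ subject to the single equation $g ∘ γ₀ = K(γ₁) ∘ α_d ∘ F(g)$. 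This is routine naturality bookkeeping, but it is what makes the remaining steps transparent.

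Next I would construct $f^*$ directly: equip $K(G^⋆(b))$ with the $(a + F({-}))$-algebra structure $[\,K(η^G_b) ∘ f,\; K(ν^G_b) ∘ α_{G^⋆(b)}\,]$ and let $f^*∶ F^⋆(a) → K(G^⋆(b))$ be the induced unique morphism out of the initial algebra. Spelling out that $f^*$ is a morphism of $(a + F({-}))$-algebras yields precisely the two commuting regions of the displayed diagram, and uniqueness of an $f^*$ with this property is exactly initiality of $F^⋆(a)$. Then I would observe that $(ν^F_a, ν^G_b)$ is an $M$-algebra structure on $f^*$: the compatibility equation demanded by the description above is the left-hand region, which was just proved; and the right-hand region says precisely that $(η^F_a, η^G_b)$ is a morphism $f → f^*$ in $𝒜/K$.

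Finally, for the freeness claim, given an $M$-algebra $(g∶ c → K(d), γ₀, γ₁)$ and a morphism $(u, v)∶ f → g$ in $𝒜/K$ (so $K(v) ∘ f = g ∘ u$), I would let $\bar u∶ F^⋆(a) → c$ be the unique $(a + F({-}))$-algebra morphism to $(c, [u, γ₀])$ and $\bar v∶ G^⋆(b) → d$ the unique $(b + G({-}))$-algebra morphism to $(d, [v, γ₁])$. By construction these are $F$- resp.\ $G$-algebra morphisms restricting to $u$ and $v$ along the units, so $(\bar u, \bar v)$ is the only candidate lifting and is automatically compatible with the $M$-algebra structures. The main obstacle --- indeed the only step needing care --- is checking that $(\bar u, \bar v)$ is a morphism in $𝒜/K$, i.e.\ that $K(\bar v) ∘ f^* = g ∘ \bar u$. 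I would prove this by verifying that both composites are $(a + F({-}))$-algebra morphisms from $F^⋆(a)$ into $K(d)$ equipped with the structure $[\,g ∘ u,\; K(γ₁) ∘ α_d\,]$, so that they coincide by initiality: for $g ∘ \bar u$ one uses that $\bar u$ is an $F$-algebra morphism together with the $M$-algebra compatibility of $g$; for $K(\bar v) ∘ f^*$ one uses the two regions of the displayed diagram, that $\bar v$ is a $G$-algebra morphism, and naturality of $α$ at $\bar v∶ G^⋆(b) → d$. Uniqueness of the lifting is immediate, since any $M$-algebra morphism over $(u,v)$ restricts on domain and codomain to the unique algebra morphisms out of the respective initial algebras that send units to $u$ and $v$.
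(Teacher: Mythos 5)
Your proposal is correct and follows essentially the same route as the paper's proof: you equip $K(G^⋆(b))$ with the $F$-algebra structure $K(ν^G_b)∘α_{G^⋆(b)}$ (the paper packages this as a lifting $\overline{K}∶ G\alg → F\alg$), define $f^*$ by the universal property of $F^⋆(a)$ as an initial $(a+F({-}))$-algebra, identify $M$-algebras with compatible pairs of $F$- and $G$-algebras, and establish freeness by producing $\bar u,\bar v$ from the respective universal properties and checking the square $K(\bar v)∘f^* = g∘\bar u$ by uniqueness out of the initial algebra. Your explicit verification that both composites are algebra morphisms into $\bigl(K(d),[g∘u,\,K(γ₁)∘α_d]\bigr)$ spells out the step the paper compresses into ``by functoriality of $\overline{K}$ and uniqueness in the universal property of $F^*(a)$''.
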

  \begin{proof}
    Let us first observe that $K$ lifts to a functor
    $\overline{K}:G\alg → F\alg$, which maps $Gx → x$ to
    $FKx \xrightarrow{αₓ}KGx → Kx$.  This in particular equips
    $K(G^*(b))$ with $F$-algebra structure.  Let us thus define $f^*$,
    by universal property of $F^*(a)$, to be the unique $F$-algebra
    morphism $F^*(a) → K (G^* (b))$ whose restriction to $a$ is
    $K(ηᴳ_b)∘f$.  This ensures in particular that the required diagram
    commutes.

    Let us finally show that $f^*$ is an initial $M$-algebra.
    For this, we observe that $M$-algebra structure on $f∶ a → K(b)$
    means morphisms $u$ and $v$ making the following diagram commute.
    \begin{center}
      \diag{%
        F(a) \& \& a \\
        F (K (b)) \& K (G (b)) \& K(b) %
      }{%
        (m-1-1) edge[labela={u}] (m-1-3) %
        edge[labell={F(f)}] (m-2-1) %
        (m-2-1) edge[labelb={α_b}] (m-2-2) %
        (m-2-2) edge[labelb={K(v)}] (m-2-3) %
        (m-1-3) edge[labelr={f}] (m-2-3) %
      }
    \end{center}
    Thus, $M$-algebra structure $(u,v)$ on $f$ is exactly the same as
    $F$-algebra structure $u$ on $a$, $G$-algebra structure $v$ on
    $b$, and an $F$-algebra morphism $a → \overline{K}(b)$.  This
    shows that $M\alg$ is isomorphic to the comma category
    $F\alg/\overline{K}$.  But for any morphism, say $(h,k)$ from $f$
    to any $f'∶ a' → \overline{K}(b')$ in $F\alg/\overline{K}$, 
    by universal property of $F^*(a)$ and $G^*(b)$, we get
    maps $\tilde{h}$ and $\tilde{k}$, respectively in $F\alg$ and
    $G\alg$, making both triangles commute in the following diagram.
    \begin{center}
      \diag{%
        F^*(a) \& \& a \\
        \& a' \\
        K(G^*(b)) \& \& K(b) \\
        \& K(b') %
      }{%
        (m-1-3) edge[labela={η^Fₐ}] (m-1-1) %
        edge[labelr={f}] (m-3-3) %
        edge[labelbr={h}] (m-2-2) %
        (m-3-3) edge[labelaat={K(η^G_b)}{.2}] (m-3-1) %
        edge[labelbr={k}] (m-4-2) %
        (m-2-2) edge[fore,labelrat={f'}{.2}] (m-4-2) %
        (m-1-1) edge[labell={f^*}] (m-3-1) %
        edge[dashed,labelbl={\tilde{h}}] (m-2-2) %
        (m-3-1) edge[dashed,labelbl={K(\tilde{k})}] (m-4-2) %
      }
    \end{center}
    By functoriality of $\overline{K}$ and uniqueness in the universal
    property of $F^*(a)$, the left-hand square also commutes, so
    $(\tilde{h},\tilde{k})$ is a morphism in $F\alg/\overline{K}$, as
    desired. Finally, uniqueness follows again by universal property
    of $F^*(a)$ and $G^*(b)$.
  \end{proof}
  \begin{cor}
    \label{cor:commute-free-mon-adj}
  Consider any diagram
  \begin{center}
    \diag{%
      𝒜 \& 𝒜 \\
      ℬ \& ℬ \\
      𝒞 \& 𝒞 \\
    }{%
      (m-1-1) edge[labela={U}] (m-1-2) %
      edge[twol={J}] (m-2-1) %
      (m-2-1) edge[labelb={V}] (m-2-2) %
      (m-1-2) edge[twor={J}] (m-2-2) %
      (l) edge[draw=none,labelon={⇓α}] (r)
      (m-3-1) edge[labelb={W}] (m-3-2) %
      (m-3-1) edge[twol={K}] (m-2-1) %
      (m-3-2) edge[twor={K}] (m-2-2) %
      (l) edge[draw=none,labelon={⇓β}] (r)
    }
  \end{center}
  of functors and natural transformations such that $J$ has a right
  adjoint, $𝒜$ and $𝒞$ are cocomplete, and $U$ and $W$ are finitary.
  
  Furthermore, let $M$ be the induced endofunctor on the comma
  category $J/K$, mapping $Ja \xrightarrow{f} Kc$ to
  $(JUa \xrightarrow{αₐ} VJa \xrightarrow{Vf} VKc \xrightarrow{β_c}
  KWc)$.

  Then, given an object $f:Ja → Kb$ of $J/K$, there is a unique
  morphism $f^*:JU^⋆ a → KW^⋆ b$ making the following diagram commute.
  \begin{center}
    \diag{%
      JUU^*a \& JU^*a \& Ja \\
      VJU^*a \& \\
      VKW^*b \\
      KWW^*b \& KW^*b \& Kb %
    }{%
      (m-1-1) edge[labela={Jνᵁₐ}] (m-1-2) %
      edge[labell={α_{U^*a}}] (m-2-1) %
      (m-2-1) edge[labell={Vf^*}] (m-3-1) %
      (m-3-1) edge[labell={β_{W^*b}}] (m-4-1) %
      (m-4-1) edge[labelb={K (νᵂ_b)}] (m-4-2) %
      (m-1-3) edge[labela={J (ηᵁₐ)}] (m-1-2) %
      edge[labelr={f}] (m-4-3) %
      (m-4-3) edge[labelb={K (ηᵂ_b)}] (m-4-2) %
      (m-1-2) edge[labelon={f^*}] (m-4-2) %
    }
  \end{center}
  Furthermore, the right-hand square above exhibits $f^*$ as
  a free $M$-algebra on $f$.
  \end{cor}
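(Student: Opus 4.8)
The plan is to deduce Corollary~\ref{cor:commute-free-mon-adj} from Lemma~\ref{lem:commute-free-monads} by absorbing the functor $J$ into the comma category by means of its right adjoint. Write $J^R$ for a right adjoint of $J$, with unit $η∶ \id_𝒜 ⇒ J^R J$ and counit $ε∶ J J^R ⇒ \id_ℬ$, and set $K' ≔ J^R K ∶ 𝒞 → 𝒜$. Transposition along $J ⊣ J^R$ sends $f∶ Ja → Kb$ to $\tilde f ≔ J^R f ∘ η_a ∶ a → K' b$; I would first check that this extends to an isomorphism of categories $Φ∶ J/K \xrightarrow{\sim} 𝒜/K'$, where $𝒜/K'$ is the comma category of the shape used in Lemma~\ref{lem:commute-free-monads} (a square defining a morphism of $J/K$ transposes to one defining a morphism of $𝒜/K'$, using naturality of $η$ and bijectivity of transposition).

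Next I would introduce the $2$-cell that plays the role of ``$α$'' in the lemma: let $\tilde α ∶ U J^R ⇒ J^R V$ be the mate of $α∶ JU ⇒ VJ$ under $J ⊣ J^R$, i.e.\ the composite $(J^R V ε) ∘ (J^R α J^R) ∘ (η U J^R)$, and put $α'' ≔ (J^R β) ∘ (\tilde α K) ∶ U K' ⇒ K' W$. The key claim is that $Φ$ intertwines $M$ with the endofunctor $M'$ on $𝒜/K'$ induced, in the sense of Lemma~\ref{lem:commute-free-monads}, by $F = U$, $G = W$ and the $2$-cell $α''$; concretely, $\widetilde{M(f)} = α''_b ∘ U\tilde f$ for every object $f∶ Ja → Kb$ of $J/K$. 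This is a direct calculation: expand $M(f) = β_b ∘ Vf ∘ α_a$ and $\tilde α$, then push the units and counits through using naturality of $η$, of $α$ at $J^R f ∘ η_a$, of $ε$ at $f$ and of $β$ at $f$, together with the triangle identity $ε_{Ja} ∘ J η_a = \id_{Ja}$. On morphisms $Φ$ commutes with $M$ and $M'$ trivially, since both act on a morphism $(u,v)$ simply as $(Uu, Wv)$.

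With this in hand, since $U$ and $W$ are finitary and $𝒜$, $𝒞$ are cocomplete, Lemma~\ref{lem:commute-free-monads} applies to $(U, W, α'')$ and produces, for the object $\tilde f$ of $𝒜/K'$, a unique morphism $g^*∶ U^⋆ a → K' W^⋆ b$ making the lemma's square commute and exhibiting $g^*$ as a free $M'$-algebra on $\tilde f$. I would then set $f^* ≔ ε_{K W^⋆ b} ∘ J(g^*)∶ J U^⋆ a → K W^⋆ b$, the transpose of $g^*$, and transport the lemma's conclusion along $Φ^{-1}$: transposition commutes with precomposition by $J$-images, so the lemma's commuting square, once $α'' = (J^R β) ∘ (\tilde α K)$ is unfolded and the triangle identities are used one more time, translates into exactly the diagram displayed in the corollary (the relevant algebra-structure maps being the canonical $ν^U_a$ and $ν^W_b$). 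Hence $f^*$ makes that diagram commute, it is unique with this property by uniqueness of $g^*$ and bijectivity of transposition, and the right-hand square exhibits it as a free $M$-algebra on $f$.

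The only genuinely non-formal point — and the step I would carry out most carefully — is the mate identity $\widetilde{M(f)} = α''_b ∘ U\tilde f$ together with its transpose, which unfolds the lemma's square into the displayed one; everything else is either the cited Lemma~\ref{lem:commute-free-monads} or standard $2$-categorical bookkeeping with the adjunction $J ⊣ J^R$.
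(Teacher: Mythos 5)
Your proposal is correct and follows essentially the same route as the paper's proof: reduce to Lemma~\ref{lem:commute-free-monads} by setting $K' = J^R K$, taking the mate $UJ^R \Rightarrow J^R V$ of $α$ (your $\tilde α$ is exactly the paper's $α'$), composing with $J^R β$, and transposing back along $J \dashv J^R$. You spell out the intermediate bookkeeping (the isomorphism $J/K \cong 𝒜/K'$, the intertwining of $M$ with $M'$, and the transposition of the final square) that the paper leaves implicit, but the argument is the same.
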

  \begin{proof}
    This follows from 
    Lemma~\ref{lem:commute-free-monads} with $F = U$, $G = W$, $K$ as $RK$ where
    $R$ is the right adjoint of $J$,
    by considering the mate
    $α' : UR → RV$ of $α:JU → VJ$, 
    defined as
    $UR \xrightarrow{ηUR} RJUR \xrightarrow{RαR}  RVJR \xrightarrow{RVε} RV$,
      (where $η$ and $ε$ denote the unit and the counit of the adjunction $J ⊣ R$)
    and composing it with $β$ to get a natural
    transformation $URK → RKW$.
  \end{proof}

\begin{proof}[Proof of Lemma~\ref{lem:pack-Sigma1}]
  Let us denote by $M$ the endofunctor on $𝐂ᴴ_𝐙$ mapping an object
  $𝐙 ← S → 𝐙$ to the the right pasting of Diagram~\ref{eq:SigmabulletRinit}:
  \[
    M
    \left(
      \Diag(.6,1){%
      }{%
        𝐙₀ \& 𝐙₀ \\
        𝐙₀ \& 𝐙₀ %
      }{%
        (m-1-1) edge[glob,labela={H₀}] (m-1-2) %
        edge[pro,labell={𝐙₁}] node (ll) {} (m-2-1) %
        (m-2-1) edge[glob,labelb={H₀}] (m-2-2) %
        (m-1-2) edge[pro,labelr={𝐙₁}] node (mm) {} (m-2-2) %
        (ll) edge[draw=none,labelon={S₁}] (mm)
      }%
    \right)
    =
          \Diag(.6,1){%
        \draw[->,glob,labela={H₀},rounded corners]
        (m-1-1) -- ($(m-1-1.north) + (90:5ex)$) --  node(a) {} ($(m-1-3.north) + (90:5ex)$) -- (m-1-3) %
        ; %
        \path[->,draw] %
        (m-1-2) edge[celllr={1}{1}] (a) %
        ; %
        \draw[->,glob,labelb={H₀},rounded corners]
        (m-2-1) -- ($(m-2-1.south) + (-90:5ex)$) --  node(b) {} ($(m-2-3.south) + (-90:5ex)$) -- (m-2-3) %
        ; %
        \path[->,draw] %
        (m-2-2) edge[celllr={1}{1}] (b) %
    ; %
      }{%
        𝐙₀ \& 𝐙₀ \& 𝐙₀ \\
        𝐙₀ \& 𝐙₀ \& 𝐙₀ %
      }{%
        (m-1-1) edge[glob,labela={H₀}] (m-1-2) %
        edge[pro,labell={𝐙₁}] node (ll) {} (m-2-1) %
        (m-2-1) edge[glob,labelb={H₀}] (m-2-2) %
        (m-1-2) edge[pro,labelr={𝐙₁}] node (mm) {} (m-2-2) %
        (m-1-2) edge[glob,labela={{∼₀^{⊗}}}] (m-1-3) %
        (m-2-2) edge[glob,labelb={{∼₀^{⊗}}}] (m-2-3) %
        (m-1-3) edge[pro,labelr={𝐙₁}] node (rr) {} (m-2-3) %
        (ll) edge[draw=none,labelon={S₁}] (mm)
        (mm) edge[draw=none,labelon={{∼₁^{⊗}}}] (rr)
      }%
  \]
  Now, by the packing lemma, it is enough to show that 
  \[
    M^⋆
    \left(
      \Diag(.6,1){%
      }{%
        𝐙₀ \& 𝐙₀ \\
        𝐙₀ \& 𝐙₀ %
      }{%
        (m-1-1) edge[glob,labela={H₀}] (m-1-2) %
        edge[pro,labell={𝐙₁}] node (ll) {} (m-2-1) %
        (m-2-1) edge[glob,labelb={H₀}] (m-2-2) %
        (m-1-2) edge[pro,labelr={𝐙₁}] node (mm) {} (m-2-2) %
        (ll) edge[draw=none,labelon={S₁}] (mm)
      }%
    \right)
    =
          \Diag(.6,1){%
        \draw[->,glob,labela={H₀},rounded corners]
        (m-1-1) -- ($(m-1-1.north) + (90:5ex)$) --  node(a) {} ($(m-1-3.north) + (90:5ex)$) -- (m-1-3) %
        ; %
        \path[->,draw] %
        (m-1-2) edge[celllr={1}{1}] (a) %
        ; %
        \draw[->,glob,labelb={H₀},rounded corners]
        (m-2-1) -- ($(m-2-1.south) + (-90:5ex)$) --  node(b) {} ($(m-2-3.south) + (-90:5ex)$) -- (m-2-3) %
        ; %
        \path[->,draw] %
        (m-2-2) edge[celllr={1}{1}] (b) %
    ; %
      }{%
        𝐙₀ \& 𝐙₀ \& 𝐙₀ \\
        𝐙₀ \& 𝐙₀ \& 𝐙₀ %
      }{%
        (m-1-1) edge[glob,labela={H₀}] (m-1-2) %
        edge[pro,labell={𝐙₁}] node (ll) {} (m-2-1) %
        (m-2-1) edge[glob,labelb={H₀}] (m-2-2) %
        (m-1-2) edge[pro,labelr={𝐙₁}] node (mm) {} (m-2-2) %
        (m-1-2) edge[glob,labela={{∼₀^{⊗*}}}] (m-1-3) %
        (m-2-2) edge[glob,labelb={{∼₀^{⊗*}}}] (m-2-3) %
        (m-1-3) edge[pro,labelr={𝐙₁}] node (rr) {} (m-2-3) %
        (ll) edge[draw=none,labelon={S₁}] (mm)
        (mm) edge[draw=none,labelon={{∼₁^{⊗*}}}] (rr)
      }%
  \]
  To this end, we are going to organise $𝐂_𝐙ᴴ$ as a comma category on which $M$
  acts, so as to apply Corollary~\ref{cor:commute-free-mon-adj}.
  Let $𝒜$ denote the category of objects $S₁$ equipped with a span $𝐙₁ ← S₁ → 𝐙₁
  $ and $ℬ$ denote the product category $ℬ_\source×ℬ_\but$,
  where $ℬ_σ$ is the category of
  objects $S₁$ equipped with a span $𝐙₀σ ← S₁ → 𝐙₀σ$.

  Let $J : 𝒜 → ℬ$ denote the functor mapping $𝐙₁ ← S₁ → 𝐙₁$ to
  $(𝐙₀\source ← S₁ → 𝐙₀\source, 𝐙₀\but ← S₁ → 𝐙₀\but)$, by postcomposing with the relevant morphisms.
  As the forgetful functor from a category of spans creates colimits, $J$ is
  cocontinuous and thus has a right adjoint by~\cite[Theorem~1.66]{Adamek}, since its domain is a
  locally presentable category.
  Let $K : 1 → ℬ$ be the functor selecting the pair $(𝐙₀\source ← H₀\source → 𝐙₀\source, 𝐙₀\but ← H₀\but → 𝐙₀\but)$.
  Now, it is straightforward to check that $𝐂_𝐙ᴴ$ is isomorphic to the comma category $J/K$.

  Next, we reconstruct $M$ as acting on $J/K$ through this isomorphism in order to fit the setting
  of Corollary~\ref{cor:commute-free-mon-adj}.

  Let
  \begin{itemize}
  \item 
    $U:𝒜 → 𝒜$ denote the functor mapping $𝐙₁ ← S₁ → 𝐙₁$ to $𝐙₁ ← S₁ ← S₁ ; ∼^⊗₁ → ∼^⊗₁ → 𝐙₁$;
  \item $V:ℬ → ℬ$ denote the functor $V_\source× V_\but$, where $V_σ:ℬ_σ → ℬ_σ$
    maps $𝐙₀σ ← S₁ → 𝐙₀σ$
    to $𝐙₀σ ← S₁;∼^{⊗}₀σ → 𝐙₀σ$;
    \item $W:1 → 1$ denote the identity endofunctor.
  \end{itemize}
  Now we apply Corollary~\ref{cor:commute-free-mon-adj} with suitable
  $α:JU → VJ$ and $β: WK → KW$ so that $M$ corresponds to our endofunctor
  through the isomorphism $𝐂_𝐙ᴴ≅J/K$.
  Since $U^⋆(𝐙₁ ← S₁ → 𝐙₁) = (𝐙₁ ← {S₁;{∼^{⊗*}₁}} → 𝐙₁)$, the only thing to check
  is that the proposed definition for $M^⋆(S₁)$ indeed defines a $M$-algebra,
  and that $S₁ → U^⋆(S₁)$ induces a morphism $S₁ → M^⋆(S₁)$, which is straightforward.
\end{proof}

Let us now turn to our final characterisation of $H$, which relies on
the following category, which is a relaxation of $𝐂ᴴ_𝐙$, in which the
left-hand object in~\eqref{eq:doublebicat} is only forced to coincide
with $𝐙$ on $\psh[ℂ₀]$.
\begin{defi}
  Let $𝐂ᴴ_{\lax}$ denote the category whose objects consist of objects $X₁$
  and $S₁$ in $\psh[ℂ₁]$, equipped with dashed arrows making the
  following diagram commute.
    \begin{equation}
      \diag(.5,.5){%
        𝐙₀\source \& H₀\source \& 𝐙₀\source  \\
        X₁ \& S₁ \& 𝐙₁ \\
        𝐙₀\but \& H₀\but \& 𝐙₀\but %
      }{%
        (m-1-1) edge[<-,labela={}] (m-1-2) %
        edge[<-,labell={},dashed] (m-2-1) %
        (m-2-1) edge[<-,dashed,labelb={}] (m-2-2) %
        (m-1-2) edge[<-,dashed,labelr={}] (m-2-2) %
        (m-1-2) edge[labela={}] (m-1-3) %
        (m-2-2) edge[dashed,labelb={}] (m-2-3) %
        (m-1-3) edge[<-,labelr={}] (m-2-3) %
        (m-2-1) edge[labell={},dashed] (m-3-1) %
        (m-3-1) edge[<-,labelb={}] (m-3-2) %
        (m-2-2) edge[dashed,labelr={}] (m-3-2) %
        (m-3-2) edge[labelb={}] (m-3-3) %
        (m-2-3) edge[labelr={}] (m-3-3) %
      }
      \label{eq:objectCH}
    \end{equation}
\end{defi}
\begin{rem}
 Using the notation of~§\ref{ss:doublecat}, an object of $𝐂ᴴ_{\lax}$ is a cell of the form
\begin{center}
  \diag{%
      𝐙₀ \& 𝐙₀  \\
      𝐙₀ \& 𝐙₀\rlap{.}
    }{%
      (m-1-1) edge[glob,labela={H₀}] (m-1-2) %
      edge[pro,twol={X₁}] (m-2-1) %
      (m-2-1) edge[glob,labelb={H₀}] (m-2-2) %
      (m-1-2) edge[pro,twor={𝐙₁}] (m-2-2) %
      (l) edge[draw=none,labelon={S₁}] (r)
    }%
\end{center}
Such an object may also be viewed as a span of the form $X ← S → 𝐙$ in
$𝐂$, which projects down to $𝐙₀ ← H₀ → 𝐙₀$ in $\psh[ℂ₀]$.
\end{rem}

Let us briefly relate $𝐂ᴴ_{\lax}$ to other useful categories.
  \begin{defi}
    Let $𝟐$ denote the free category on the graph $0 → 1$, and $𝐂^𝟐/𝐙$
    denote the comma category
    \begin{center}
      \Diag{%
        \laxpbk{m-2-1}{m-1-1}{m-1-2} %
      }{%
        𝐂^𝟐/𝐙  \& 1 \\
        𝐂^𝟐 \& 𝐂\rlap{,} %
      }{%
        (m-1-1) edge[labela={}] (m-1-2) %
        edge[labell={}] (m-2-1) %
        (m-2-1) edge[labelb={\dom}] (m-2-2) %
        (m-1-2) edge[labelr={𝐙}] (m-2-2) %
      }%
    \end{center}
    whose objects are spans of the form $X ← S → 𝐙$.
  \end{defi}

  \begin{defi}\label{def:functors}
  We  define a commuting diagram
  \begin{equation}
    \diag{%
      𝐂ᴴ_𝐙 \& 𝐂ᴴ_{\lax} \\
      𝐂/𝐙² \& 𝐂^𝟐/𝐙  \\
      \& \& 𝐂^𝟐 %
    }{%
      (m-1-1) edge[into,labela={ℐᴴ}] (m-1-2) %
      edge[labell={𝒰ᴴ_𝐙}] (m-2-1) %
      (m-2-1) edge[into,labelb={𝒥ᴴ}] (m-2-2) %
      edge[bend right=30,labelbl={𝒲}] (m-3-3) %
      (m-1-2) edge[labelr={𝒰ᴴ_{\lax}}] (m-2-2) %
      edge[bend left=30,labelar={𝒱}] (m-3-3) %
      (m-2-2) edge[labelon={\dom}] (m-3-3) %
    }
    \label{eq:functors}
  \end{equation}
  of functors:
  \begin{itemize}
  \item $𝒰ᴴ_𝐙$ and $𝒰ᴴ_{\lax}$ are the obvious forgetful functors,
  \item $ℐᴴ$ and $𝒥ᴴ$ are the obvious embeddings, and 
  \item $𝒱$ and $𝒲$ are defined by composition with the
    domain functor $\dom$.
  \end{itemize}
\end{defi}

\begin{prop}\label{prop:initialwowo}
  The initial object in $𝐂ᴴ_{\lax}$ is the span $𝐙₀ ← {H₀} → 𝐙$, i.e., the
  one with $X₁ = S₁ = 0$.
\end{prop}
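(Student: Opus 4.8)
The plan is to peel away the double-bicategorical packaging of $𝐂ᴴ_{\lax}$ and reduce initiality to the initiality of the empty presheaf. Spelled out, an object of $𝐂ᴴ_{\lax}$ amounts to a pair of presheaves $X₁, S₁ ∈ \psh[ℂ₁]$ equipped with structure morphisms $s_X∶ X₁ → 𝐙₀\source$, $t_X∶ X₁ → 𝐙₀\but$, $s_S∶ S₁ → H₀\source$, $t_S∶ S₁ → H₀\but$, a span leg $p∶ S₁ → X₁$, and $q∶ S₁ → 𝐙₁$, subject to the four commutativities read off~\eqref{eq:objectCH} (namely $s_X ∘ p = π₁\source ∘ s_S$ and its three siblings); a morphism is a compatible pair $f∶ X₁ → X₁'$, $g∶ S₁ → S₁'$. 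Crucially, every object into which these structure maps land — $𝐙₀\source$, $𝐙₀\but$, $H₀\source$, $H₀\but$, $𝐙₁$, together with $π₁,π₂,s_𝐙,t_𝐙$ — is fixed; only $X₁$ and $S₁$ vary. The candidate initial object is obtained by taking $X₁ = S₁ = 0$, the initial presheaf on $ℂ₁$, with all six structure morphisms the (unique) maps out of $0$.

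First I would check that this is indeed an object of $𝐂ᴴ_{\lax}$: the four commutativity conditions are vacuous, since in each the two sides are parallel morphisms whose source is $0$. Then, translating back through the isomorphism $\psh[ℂ₁]/Δ ≅ 𝐂$ and Proposition~\ref{prop:proj-adjoint}, I would identify this object with the span $𝐙₀ ← {H₀} → 𝐙$: emptiness of the transition presheaves says the left endpoint is the discrete transition system on $𝐙₀$ and the apex the discrete transition system on $H₀$; the left leg is $π₁$ sent through the left adjoint of Proposition~\ref{prop:proj-adjoint}, and the right leg is the transpose of $π₂∶ H₀ → 𝒟(𝐙) = 𝐙₀$ along that same adjunction. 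Finally I would prove initiality directly: for any object $(X₁, S₁, s_X, t_X, s_S, t_S, p, q)$, a morphism out of the candidate is a pair of maps $0 → X₁$ and $0 → S₁$ in $\psh[ℂ₁]$, which exist and are unique because $0$ is initial there; and every compatibility equation demanded of such a pair is an identity between parallel morphisms out of $0$, hence automatic. This yields a unique morphism from the candidate to every object, as required.

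I do not expect a genuine obstacle here. The one contrast with Proposition~\ref{prop:initialwowoZ} — where $𝐂ᴴ_𝐙$ is exhibited as an honest slice $\psh[ℂ₁]/R^{∂H}$, so that its initial object is literally $0 → R^{∂H}$ — is that in $𝐂ᴴ_{\lax}$ the left transition system $X$ genuinely varies, so $𝐂ᴴ_{\lax}$ is not a slice category and one must argue componentwise on the two transition presheaves rather than quote slice-category initiality. The only step requiring a little care is the bookkeeping that matches the ``$X₁ = S₁ = 0$'' description to the span presentation $𝐙₀ ← {H₀} → 𝐙$, where one should invoke the discrete--states adjunction explicitly rather than gesture at it.
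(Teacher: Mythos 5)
Your proof is correct. The paper states this proposition without proof, and your direct componentwise verification — the candidate with $X₁ = S₁ = 0$ is a well-defined object because all commutativity constraints are between parallel arrows out of the initial presheaf, and it is initial because the required maps out of $0$ exist uniquely and every compatibility square is again automatic — together with the identification of this object with the span $𝐙₀ ← H₀ → 𝐙$ of discrete transition systems via Proposition~\ref{prop:proj-adjoint}, is exactly the routine argument being left implicit.
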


Let us now introduce the endofunctor of which our characterisation of
$H$ will be an initia algebra.
\begin{defi}
  Let $Σ_{1,\lax}ᴴ∶ 𝐂ᴴ_{\lax} → 𝐂ᴴ_{\lax}$ map any object $X ← S → 𝐙$ to the following
  pasting.
      \begin{equation}
      \Diag{%
        \sqpath{m-1-1}{180}{7}{l}{m-3-1}{pro,labell={\check{Σ}₁(X)₁}}
        \sqpath{m-3-1}{-90}{5}{b}{m-3-4}{glob,labelb={H₀}} %
        \sqpath{m-1-1}{90}{5}{aa}{m-1-4}{glob,labela={H₀}} %
        \path (m-3-1) -- node (bm) {} (m-3-4) ; %
        \path[draw] %
        (m-2-1) edge[identity] (l) %
        (m-2-2) edge[celllr={1}{1},labela={}] (ddrrr) %
        ($(aa)+(-90:4ex)$) edge[celllr={0}{0}] (aa) %
        ($(b)+(90:4ex)$) edge[celllr={0}{0}] ($(b)+(90:1ex)$) %
        ($(a)+(-90:3ex)$) edge[celllr={0}{0}] (a) %
        ; %
      }{%
        𝐙₀ \& 𝐙₀ \& 𝐙₀ \& 𝐙₀ \\
        Σ₀(𝐙₀) \& Σ₀(𝐙₀) \&  \&  \\
        𝐙₀ \& 𝐙₀ \& 𝐙₀ \& 𝐙₀ %
      }{%
        (m-1-1) edge[glob,twoa={H₀}] (m-1-2) %
        (m-1-2) edge[identity] (m-1-3) %
        (m-1-3) edge[glob,labela={{∼₀^{⊗*}}}] (m-1-4)
        edge[pro,labelr={𝐙₁}] node (ddrrr) {} (m-3-3) %
        (m-2-1) edge[labell={}] (m-1-1) %
        (m-2-1) edge[glob,labela={Σ₀(H₀)}] (m-2-2) %
        edge[pro,labell={Σ₁^F(X)}] node (ddr) {} (m-3-1) %
        (m-3-1) edge[glob,labelb={H₀}] (m-3-2) %
        (m-2-2) edge[pro,labellat={\scriptscriptstyle Σ₁^F(𝐙)}{.1}] node (ddrr) {} (m-3-2) %
        edge[labelr={}] (m-1-2) %
        (m-3-2) edge[identity,labelb={}] (m-3-3) %
        (m-3-3) edge[glob,labelb={{∼₀^{⊗*}}}] (m-3-4) %
        (m-1-4) edge[pro,labelr={𝐙₁}] node (ddrrrr) {} (m-3-4) %
        (ddr) edge[draw=none,labelon={Σ₁^F(S)}] (ddrr) %
        (ddrrr) edge[draw=none,labelon={{∼₁^{⊗*}}}] (ddrrrr) %
      }
      \label{eq:SigmabulletRlax}
    \end{equation}
\end{defi}
\begin{rem}
  The difference with $Σ_{1,\pack}ᴴ$ is that, $X$ being different from
  $𝐙$ in general, we cannot use any $\check{Σ}₁$-algebra structure on
  the left.
\end{rem}

\begin{prop}
  The functor $Σ_{1,\lax}ᴴ∶ 𝐂ᴴ_{\lax} → 𝐂ᴴ_{\lax}$ is finitary.
\end{prop}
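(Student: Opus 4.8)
The plan is to copy the proof of Proposition~\ref{prop:Sigbulletfinitary}, adapting it to the fact that in $𝐂ᴴ_{\lax}$ both components $X₁$ and $S₁$ of an object vary (whereas in $𝐂ᴴ_𝐙$ the $X$-part was rigidly $𝐙$). First I would introduce the forgetful functor $P∶ 𝐂ᴴ_{\lax} → \psh[ℂ₁] × \psh[ℂ₁]$ sending an object as in~\eqref{eq:objectCH} to the pair $(X₁,S₁)$, and show that it creates filtered colimits. The point is that $𝐂ᴴ_{\lax}$ is an iterated comma/slice construction over fixed data: $X₁$ ranges over $\psh[ℂ₁]/Δ(𝐙₀)$, and, for each such $X$, the presheaf $S₁$ ranges over the slice of $\psh[ℂ₁]$ over a presheaf $P_X$ obtained as a finite limit built from $X₁$, $H₀\source$, $H₀\but$, $𝐙₁$, and the (fixed) structure maps of $𝐙$ and $H₀$. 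Since $X ↦ P_X$ is a finite-limit construction it preserves filtered colimits, and forgetful functors out of slices create all colimits; chaining these facts shows that filtered colimits in $𝐂ᴴ_{\lax}$ are computed by taking filtered colimits of $X₁$ and of $S₁$ separately in $\psh[ℂ₁]$, which is exactly what it means for $P$ to create them.

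Granting this, $Σ_{1,\lax}ᴴ$ is finitary as soon as the composite $P ∘ Σ_{1,\lax}ᴴ$ is. Reading off the pasting~\eqref{eq:SigmabulletRlax}, this composite sends $X ← S → 𝐙$ to the pair whose first component is $\check{Σ}₁(X)₁$ and whose second component is the horizontal composite of the cell carrying $Σ₁^F(S)$ with the fixed cell carrying ${∼₁^{⊗*}}$ — concretely a sequential span composition of the shape $Σ₁^F(S);{∼₁^{⊗*}}$, formed over the intermediate object supplied by $Σ₁^F(𝐙)$. On the first component, $X ↦ \check{Σ}₁(X)₁$ is finitary because $\check{Σ}₁$ is finitary (established earlier) and evaluation at a transition type preserves colimits in presheaf categories. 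On the second component, $S ↦ Σ₁^F(S)$ is finitary by the finitarity requirement on $Σ₁^F$, while sequential composition with the fixed span ${∼₁^{⊗*}}$ is cocontinuous — being, up to the relevant pullbacks, a base-change functor along a morphism of presheaves, hence a left adjoint, exactly as used in the proofs of Propositions~\ref{prop:Sigbulletfinitary} and~\ref{prop:initalt}. A finitary functor followed by a cocontinuous one is finitary, so both components, and hence the composite, are finitary, which gives the claim.

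I expect the first paragraph to be the main obstacle: the bookkeeping with filtered colimits is easy in itself, but one has to organise $𝐂ᴴ_{\lax}$ carefully so that the presheaf $P_X$ into which $S₁$ maps depends on $X₁$ only through finite limits, so that passing to a filtered colimit of the $X₁$'s does not disturb the slice in which $S₁$ lives. Once that is settled — it is essentially the same kind of argument that identifies $𝐂ᴴ_𝐙$ with a presheaf category via the object $R^{∂H}$ — the remainder is the routine ``sum/composite of finitary and cocontinuous functors'' pattern, with no further subtleties.
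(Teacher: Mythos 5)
Your argument is correct and is essentially the paper's own: the printed proof is just ``Just as Proposition~\ref{prop:Sigbulletfinitary}'', i.e.\ reduce along a colimit-creating projection out of the comma-style category and then observe that each component of the resulting functor is a composite of finitary functors ($Σ₁^F$, $\check{Σ}₁$) with cocontinuous ones (span composition with the fixed span $∼₁^{⊗*}$, a pullback/base-change hence a left adjoint). Your first paragraph simply spells out the adaptation needed because the left border now varies, which the paper leaves implicit (and which also follows directly from Proposition~\ref{prop:projcreates} with $F=\id$, so one even gets creation of all colimits, not just filtered ones).
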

\begin{proof}
  Just as Proposition~\ref{prop:Sigbulletfinitary}.
\end{proof}

Let us now work towards our final characterisation of $H$.
\begin{defi}
  Let $δ$ denote the natural transformation
  \begin{center}
    \Diag{%
    }{%
      𝐂ᴴ_𝐙 \& 𝐂ᴴ_{\lax} \\
      𝐂ᴴ_𝐙 \& 𝐂ᴴ_{\lax} %
    }{%
      (m-1-1) edge[twoa={ℐᴴ}] (m-1-2) %
      edge[labell={Σ_{1,\pack}ᴴ}] (m-2-1) %
      (m-2-1) edge[twob={ℐᴴ}] (m-2-2) %
      (m-1-2) edge[labelr={Σ_{1,\lax}ᴴ}] (m-2-2) %
      (a) edge[labelr={δ},cell=.1] (b) %
    }
  \end{center}
  whose component
  at any  $𝐙 ← S → 𝐙$ in $𝐂ᴴ_𝐙$ is the following morphism
  in $𝐂ᴴ_{\lax}$.  

  \begin{center}
    \diag{%
      \check{Σ}₁(𝐙) \& \check{Σ}₁(S);{∼^{⊗*}} \& 𝐙 \\
      𝐙 \& \check{Σ}₁(S);{∼^{⊗*}} \& 𝐙 %
    }{%
      (m-1-1) edge[<-,labela={}] (m-1-2) %
      edge[labell={}] (m-2-1) %
      (m-1-2) edge[labela={}] (m-1-3) %
      edge[identity,labell={}] (m-2-2) %
      (m-1-3) edge[identity,labelr={}] (m-2-3) %
      (m-2-1) edge[<-,labela={}] (m-2-2) %
      (m-2-2) edge[labela={}] (m-2-3) %
    }
  \end{center}

  Furthermore, let $𝒦ᴴ∶ Σ_{1,\pack}ᴴ\alg → Σ_{1,\lax}ᴴ\alg$ denote the
  induced lifting of $ℐᴴ$, as in
  \begin{center}
    \diag{%
      Σ_{1,\pack}ᴴ\alg \& Σ_{1,\lax}ᴴ\alg \\
      𝐂ᴴ_𝐙 \& 𝐂ᴴ_{\lax}\rlap{,} %
    }{%
      (m-1-1) edge[labela={𝒦ᴴ}] (m-1-2) %
      edge[labell={}] (m-2-1) %
      (m-2-1) edge[labelb={ℐᴴ}] (m-2-2) %
      (m-1-2) edge[labelr={}] (m-2-2) %
    }
  \end{center}
  where both vertical arrows denote forgetful functors.
\end{defi}

\begin{lem}\label{lem:Hlax}
  The $Σ_{1,\lax}ᴴ$-algebra $𝒦ᴴ(H_𝐙)$ is initial.  In other words,
  letting $H_{\lax}$ denote any initial $Σ_{1,\lax}ᴴ$-algebra, we have
  $𝒦ᴴ(H_𝐙) ≅ H_{\lax}$.
\end{lem}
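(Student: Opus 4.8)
The plan is to determine the left border of the initial $Σ_{1,\lax}ᴴ$-algebra by a short initial-chain computation, and then to transport initiality along $δ$. Throughout I use that, by Lemma~\ref{lem:pack-Sigma1}, $H_𝐙$ is also the initial $Σ_{1,\pack}ᴴ$-algebra; I write $ν_𝐙∶ \check{Σ}₁(𝐙) → 𝐙$ for the structure map of the initial vertical algebra $𝐙$ (Theorem~\ref{thm:Z}).

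The first ingredient is the observation, read off the defining diagrams, that $δ$ is as inert as a non-invertible transformation can be: for every $S ∈ 𝐂ᴴ_𝐙$, the component $δ_S∶ Σ_{1,\lax}ᴴ(ℐᴴ(S)) → ℐᴴ(Σ_{1,\pack}ᴴ(S))$ is the identity on the middle (transition) part of the cell and is $ν_𝐙$ on the left border --- this is exactly the content of the remark following the definition of $Σ_{1,\lax}ᴴ$, that $Σ_{1,\pack}ᴴ$ differs from $Σ_{1,\lax}ᴴ$ only by post-composing the left border with $ν_𝐙$. I would then introduce the functor $P∶ Σ_{1,\lax}ᴴ\alg → Σ₁\algv$ taking an algebra to its left border equipped with the left-border component of its structure map; this is well-defined and lands in $Σ₁\algv$ because the objects and morphisms of $𝐂ᴴ_{\lax}$ are identities on states (the state $Σ₀$-monoid being fixed to $𝐙₀$) and $\check{Σ}₁$ acts as the identity on $Σ₀\mon$, and because $Σ_{1,\lax}ᴴ$ applies $\check{Σ}₁$ to the left border. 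Using that $δ_S$ is the identity on transition parts, any $Σ_{1,\lax}ᴴ$-algebra structure $a$ on $ℐᴴ(S)$ with $P(ℐᴴ(S),a) = 𝐙$, i.e.\ whose left-border component is $ν_𝐙$, factors uniquely as $a = ℐᴴ(b) ∘ δ_S$ with $b$ a $Σ_{1,\pack}ᴴ$-algebra structure on $S$, and conversely --- a routine diagram-chase --- so that $𝒦ᴴ$ corestricts to an isomorphism of categories from $Σ_{1,\pack}ᴴ\alg$ onto the full subcategory $P^{-1}(𝐙) ⊆ Σ_{1,\lax}ᴴ\alg$ of algebras whose left border is $𝐙$ with structure $ν_𝐙$.

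It then remains to check that $H_{\lax} ∈ P^{-1}(𝐙)$. As $𝐂ᴴ_{\lax}$ is locally presentable and $Σ_{1,\lax}ᴴ$ is finitary, $H_{\lax}$ is the colimit of the initial chain $0_{\lax} → Σ_{1,\lax}ᴴ(0_{\lax}) → \cdots$, with $0_{\lax}$ the initial object of Proposition~\ref{prop:initialwowo}, i.e.\ the span $ℳ(𝐙₀) ← H₀ → 𝐙$ whose left border is $ℳ(𝐙₀)$, initial in $Σ₀\Trans$. Since $Σ_{1,\lax}ᴴ$ applies $\check{Σ}₁$ on the left border, the sequence of left borders of this chain is precisely the initial chain $ℳ(𝐙₀) → \check{Σ}₁(ℳ(𝐙₀)) → \cdots$ used to construct $𝐙$ in the proof of Theorem~\ref{thm:Z}; and because colimits in $𝐂ᴴ_{\lax}$, hence in $Σ_{1,\lax}ᴴ\alg$, are created by the projection $(X ← S → 𝐙) ↦ (X₁,S₁)$ to $\psh[ℂ₁]×\psh[ℂ₁]$ (together with the fact that $Σ₀\Trans$ is monadic over the presheaf category $𝐂$ with a finitary monad, so that the transition part of the colimit of the $\check{Σ}₁$-chain agrees with the colimit of transition parts), the left border of $H_{\lax}$ is $𝐙$, with its canonical structure $ν_𝐙$. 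Thus $P(H_{\lax}) = 𝐙$, so $H_{\lax}$ lies in $P^{-1}(𝐙)$; being initial in all of $Σ_{1,\lax}ᴴ\alg$ and $P^{-1}(𝐙)$ being full, $H_{\lax}$ is initial in $P^{-1}(𝐙) ≅ Σ_{1,\pack}ᴴ\alg$, hence isomorphic to the image under $𝒦ᴴ$ of the initial $Σ_{1,\pack}ᴴ$-algebra, that is, to $𝒦ᴴ(H_𝐙)$.

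I expect the main obstacle to be the verification making the two preceding paragraphs rigorous: reading off the explicit pasting~\eqref{eq:SigmabulletRlax} that $Σ_{1,\lax}ᴴ$ acts as $\check{Σ}₁$ on left borders and as \emph{exactly} the same functor as $Σ_{1,\pack}ᴴ$ on transition parts --- so that $δ$ is genuinely the identity there --- and the bookkeeping about how colimits in $𝐂ᴴ_{\lax}$ and $Σ_{1,\lax}ᴴ\alg$ are computed. Once those are settled, everything else is a direct application of universal properties.
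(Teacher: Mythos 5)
Your proof is correct, and it takes a genuinely different route from the paper's. The paper deduces the lemma from Corollary~\ref{cor:commute-free-mon-adj}, applied twice to the presentation of $𝐂ᴴ_{\lax}$ as a comma category $J/𝐂ᴴₗ$: one instance exhibits the free $Σ_{1,\lax}ᴴ$-algebra on the initial object as a morphism $f^*$, the other exhibits the free $Σ_{1,\pack}ᴴ$-algebra as $g^*$, and the two morphisms are observed to coincide, with the algebra structure decomposing as required. You instead (i) identify, via $𝒦ᴴ$, the category $Σ_{1,\pack}ᴴ\alg$ with the fibre of $Σ_{1,\lax}ᴴ\alg$ over the vertical algebra $(𝐙,ν_𝐙)$ --- the two key points being that $δ$ is the identity on transition parts (so algebra structures over $ν_𝐙$ correspond bijectively), and that initiality of $𝐙$ in $Σ₁\algv$ forces the left-border component of any morphism in that fibre to be the identity, giving fullness and faithfulness --- and (ii) show by an initial-chain computation that $H_{\lax}$ lands in that fibre, reusing the construction of $𝐙$ from Theorem~\ref{thm:Z} together with preservation of filtered colimits by the relevant projections (facts the paper needs anyway for Lemma~\ref{lem:Vcocont} and the proof of Lemma~\ref{lem:howisim}). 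Your route is more elementary and makes the reason for the lemma transparent: $Σ_{1,\lax}ᴴ$ simply recomputes $𝐙$ on its left border while performing the Howe closure in the middle. What the paper's route buys is uniformity --- Corollary~\ref{cor:commute-free-mon-adj} is the same tool already used for Lemma~\ref{lem:pack-Sigma1} --- and it packages the bookkeeping that your argument leaves informal, namely the strict-versus-isomorphic identification of the left border of the chain colimit with $(𝐙,ν_𝐙)$ (harmless here, since one may choose $H_{\lax}$ as that colimit, or replace ``isomorphism of categories'' by ``equivalence onto the full subcategory of algebras with left border isomorphic to $(𝐙,ν_𝐙)$'', which still transports initial objects). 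None of the steps you flag as routine hides a genuine difficulty.
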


\begin{proof}
In order to apply Corollary~\ref{cor:commute-free-mon-adj},
we organise $𝐂ᴴ_{\lax}$ as a comma category
  $J/𝐂ᴴₗ$, decomposing its left and middle/right parts, with $J∶ 𝐂ᴴᵣ → 𝐂ᴴₗ$ defined as follows:
  \begin{itemize}
  \item $𝐂ᴴₗ=\psh[ℂ₁]/Δ(𝐙₀)$;
  \item $𝐂ᴴᵣ$ is the comma category
    \begin{center}
      \Diag{%
        \laxpbk{m-2-1}{m-1-1}{m-1-2} %
      }{%
        𝐂ᴴᵣ \& 1 \\
        \psh[ℂ₁]/Δ(H₀) \& \psh[ℂ₁]/Δ(𝐙₀)\rlap{;}
      }{%
        (m-1-1) edge[labela={}] (m-1-2) %
        edge[labell={prᵣ}] (m-2-1) %
        (m-2-1) edge[labelb={\psh[ℂ₁]/Δ(π₂)}] (m-2-2) %
        (m-1-2) edge[labelr={𝐙}] (m-2-2) %
      }%
      \end{center}
    concretely, objects consist of a presheaf $S₁ ∈ \psh[ℂ₁]$,
    together with dashed maps making the following diagram commute;
    \begin{equation}
      \diag(.5,1){%
        H₀\source \& 𝐙₀\source  \\
        S₁ \& 𝐙₁ \\
        H₀\but \& 𝐙₀\but %
      }{%
        (m-1-1) edge[<-,dashed,labelr={}] (m-2-1) %
        (m-1-1) edge[labela={π₂\source}] (m-1-2) %
        (m-2-1) edge[dashed,labelb={}] (m-2-2) %
        (m-1-2) edge[<-,labelr={}] (m-2-2) %
        (m-2-1) edge[dashed,labelr={}] (m-3-1) %
        (m-3-1) edge[labelb={π₂\but}] (m-3-2) %
        (m-2-2) edge[labelr={}] (m-3-2) %
      }
      \label{eq:CHl}
    \end{equation}
  \item $J$ is the composite
    \[𝐂ᴴᵣ \xto{prᵣ} \psh[ℂ₁]/Δ(H₀) \xto{\psh[ℂ₁]Δ(π₁)}
    \psh[ℂ₁]/Δ(𝐙₀)\rlap{;}\] concretely, $J$ maps any
    object~\eqref{eq:CHl} to the following diagram.
    \begin{center}
      \diag(.5,1){%
        𝐙₀\source \& H₀\source  \\
         \&  S₁ \\
        𝐙₀\but \& H₀\but %
      }{%
        (m-1-1) edge[<-, labela={π₁\source}] (m-1-2) %
        (m-1-2) edge[<-,dashed,labelr={}] (m-2-2) %
        (m-3-1) edge[<-,labelb={π₁\but}] (m-3-2) %
        (m-2-2) edge[labelr={},dashed] (m-3-2) %
      }
    \end{center}
  \end{itemize}
  We note that $J$ is cocontinuous since colimits are computed
  pointwise in $𝐂ᴴₗ$ and $𝐂ᴴᵣ$, and thus has a right adjoint $R$,
  since its domain is locally presentable,
  by~\cite[Theorem~1.66]{Adamek}.

  Furthermore, $𝐂ᴴ_𝐙$ is isomorphic
  to the comma category $J/𝐙$\footnote{Note that this decomposition
    as a comma category differs from the one in the proof of
    Lemma~\ref{lem:pack-Sigma1}.}.

  We then define functors $Σᴴ_{1,r}$ and $\restr{(\check{Σ}₁)}{𝐙₀}$
  and a natural transformation $h$ as in
    \begin{center}
    \diag{%
      𝐂ᴴᵣ \& 𝐂ᴴᵣ \\
      𝐂ᴴₗ \& 𝐂ᴴₗ\rlap{,}
    }{%
      (m-1-1) edge[labela={Σᴴ_{1,r}}] (m-1-2) %
      edge[twol={J}] (m-2-1) %
      (m-2-1) edge[labelb={\restr{(\check{Σ}₁)}{𝐙₀}}] (m-2-2) %
      (m-1-2) edge[twor={J}] (m-2-2) %
      (l) edge[draw=none,labelon={⇓h}] (r)
    }
  \end{center}
  as follows:
  \begin{itemize}
  \item $\restr{(\check{Σ}₁)}{𝐙₀}$ is $\check{Σ}₁$ restricted to the
    fibre of $\psh[ℂ₁]/Δ → \psh[ℂ₀]$ over $𝐙₀$, as in the left part
    of~\eqref{eq:SigmabulletRlax};
  \item $Σᴴ_{1,r}$ acts as the right part
    of~\eqref{eq:SigmabulletRlax};
  \item $h$ connects both parts using the left projection.
  \end{itemize}
  Now, we apply Corollary~\ref{cor:commute-free-mon-adj} twice with
  $U = Σᴴ_{1,r}$, $V = \restr{(\check{Σ}₁)}{𝐙₀}$ and $α = h$:
  \begin{enumerate}
  \item for $W=V$, $K$ the identity endofunctor, $β$ the identity
    natural transformation, and $f = \id_∅∶ J(∅) → ∅$, the induced
    endofunctor $M$ is precisely $Σ_{1,\lax}ᴴ$, and
    Corollary~\ref{cor:commute-free-mon-adj} yields a morphism, say
    $f^*$, making the diagram below commute;
  \begin{center}
    \Diag{%
      \justify{m-1-1}{A}{m-3-2} %
    }{%
      J (Σᴴ_{1,r} ((Σᴴ_{1,r})^* (∅))) \& J ((Σᴴ_{1,r})^* (∅)) \& J (∅) \\
      \restr{(\check{Σ}₁)}{𝐙₀} (J ((Σᴴ_{1,r})^* (∅))) \\
      \restr{(\check{Σ}₁)}{𝐙₀} (\restr{(\check{Σ}₁)}{𝐙₀}^* (∅)) \& \restr{(\check{Σ}₁)}{𝐙₀}^* (∅) \& ∅ %
    }{%
      (m-1-1) edge[labela={J (ν^{Σᴴ_{1,r}}_∅)}] (m-1-2) %
      edge[labell={h_{(Σᴴ_{1,r})^*(∅)}}] (m-2-1) %
      (m-2-1) edge[shorten >=.5em,labell={\restr{(\check{Σ}₁)}{𝐙₀}(f^*)}] (m-3-1) %
      (m-3-1) edge[labelb={ν^{\restr{(\check{Σ}₁)}{𝐙₀}}_∅}] (m-3-2) %
      (m-1-2) edge[shorten >=.5em,labelr={f^*}] (m-3-2) %
      (m-1-2) edge[<-,labela={J(!)}] (m-1-3) %
      (m-3-2) edge[<-,labelb={!}] (m-3-3) %
      (m-1-3) edge[labelr={!}] (m-3-3) %
    }
  \end{center}
\item for $K:1 → 𝐂ᴴₗ$ selecting $𝐙₀\source ← 𝐙₁ → 𝐙₀\but$, $β$ induced
  by the $V$-algebra structure on it, and $g = {!}_𝐙∶ J(∅) = ∅ → 𝐙$,
  the induced endofunctor $N$ is precisely $Σ_{1,\pack}ᴴ$, and
  Corollary~\ref{cor:commute-free-mon-adj} yields a unique morphism,
  say $g^*$, making the diagram below commute.
  \begin{center}
    \Diag{%
      \justify{m-1-1}{B}{m-4-2} %
      }{%
      J (Σᴴ_{1,r} ((Σᴴ_{1,r})^* (∅))) \& J ((Σᴴ_{1,r})^* (∅)) \& J (∅) \\
      \restr{(\check{Σ}₁)}{𝐙₀} (J ((Σᴴ_{1,r})^* (∅))) \\
      \restr{(\check{Σ}₁)}{𝐙₀} (𝐙) \& \\
      𝐙 \& 𝐙 \& 𝐙  \\ %
    }{%
      (m-1-1) edge[labela={J (ν^{Σᴴ_{1,r}}_∅)}] (m-1-2) %
      edge[labell={h_{(Σᴴ_{1,r})^*(∅)}}] (m-2-1) %
      (m-2-1) edge[labell={\restr{(\check{Σ}₁)}{𝐙₀}(g^*)}] (m-3-1) %
      (m-3-1) edge[labell={ν^{\restr{(\check{Σ}₁)}{𝐙₀}}_∅}] (m-4-1) %
      (m-1-2) edge[labelr={g^*}] (m-4-2) %
      (m-1-2) edge[<-,labela={J(!)}] (m-1-3) %
      (m-4-1) edge[identity,labelb={}] (m-4-2) %
      (m-4-2) edge[identity,labelb={}] (m-4-3) %
      (m-1-3) edge[labelr={!}] (m-4-3) %
    }
  \end{center}
  \end{enumerate}
  Both right-hand squares are trivial, and both left-hand hand squares
  are in fact the same, so that $f^* = g^*$.  Finally, in terms of
  $J/𝐂ᴴₗ$, the statement claims that the $M$-algebra structure
  of $f^*$ decomposes as
  \begin{center}
    \diag{%
      \restr{(\check{Σ}₁)}{𝐙₀} (\restr{(\check{Σ}₁)}{𝐙₀}^* (∅))\& \restr{(\check{Σ}₁)}{𝐙₀} (J ((Σᴴ_{1,r})^* (∅)))
      \& J (Σᴴ_{1,r} ((Σᴴ_{1,r})^* (∅))) \\
      \restr{(\check{Σ}₁)}{𝐙₀}^* (∅) \& \& J (Σᴴ_{1,r} ((Σᴴ_{1,r})^* (∅))) \\
      \restr{(\check{Σ}₁)}{𝐙₀}^* (∅) \& \& J ((Σᴴ_{1,r})^* (∅))\rlap{,} %
    }{%
      (m-1-1) edge[<-,labela={\restr{(\check{Σ}₁)}{𝐙₀}(f^*)}] (m-1-2) %
      (m-1-2) edge[<-,labela={h_{(Σᴴ_{1,r})^*(∅)}}] (m-1-3) %
      (m-2-1) edge[<-,labela={\restr{(\check{Σ}₁)}{𝐙₀}(f^*)}] (m-2-3) %
      (m-3-1) edge[<-,labela={f^*}] (m-3-3) %
      (m-1-1) edge[labell={ν^{\restr{(\check{Σ}₁)}{𝐙₀}}_∅}] (m-2-1) %
      (m-1-3) edge[identity] (m-2-3) %
      (m-2-1) edge[identity] (m-3-1) %
      (m-2-3) edge[labelr={J (ν^{Σᴴ_{1,r}}_∅)}] (m-3-3) %
    }
  \end{center}
  which in terms of spans yields exactly the desired form.
  \end{proof}

  \begin{cor}\label{cor:HHlax}
    We have $𝒥ᴴ(H) ≅ 𝒰ᴴ_{\lax}(H_{\lax})$ in $𝐂^𝟐/𝐙$.
\end{cor}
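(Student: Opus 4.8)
The plan is a routine diagram chase through the commuting diagram~\eqref{eq:functors} of Definition~\ref{def:functors}, supplied with Lemma~\ref{lem:Hlax}. First I would unfold the definition $H = 𝒰ᴴ_𝐙(H_𝐙)$, keeping in mind the usual conflation of the algebra $H_𝐙$ with its carrier in $𝐂ᴴ_𝐙$, and recalling from Lemma~\ref{lem:pack-Sigma1} that this algebra is (isomorphic to) the initial $Σ_{1,\pack}ᴴ$-algebra, so that $𝒦ᴴ(H_𝐙)$ makes sense as an object of $Σ_{1,\lax}ᴴ\alg$.

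Next I would use commutativity of the lower square of~\eqref{eq:functors}, i.e.\ $𝒥ᴴ ∘ 𝒰ᴴ_𝐙 = 𝒰ᴴ_{\lax} ∘ ℐᴴ$, to obtain $𝒥ᴴ(H) = 𝒰ᴴ_{\lax}(ℐᴴ(H_𝐙))$. The task then reduces to identifying the carrier $ℐᴴ(H_𝐙)$ in $𝐂ᴴ_{\lax}$ with that of $H_{\lax}$. For this I would recall that $𝒦ᴴ$ is by construction a lifting of $ℐᴴ$ along the two forgetful functors to $𝐂ᴴ_𝐙$ and $𝐂ᴴ_{\lax}$, whence the carrier of $𝒦ᴴ(H_𝐙)$ is exactly $ℐᴴ(H_𝐙)$; then Lemma~\ref{lem:Hlax} gives $𝒦ᴴ(H_𝐙) ≅ H_{\lax}$ in $Σ_{1,\lax}ᴴ\alg$, and pushing this isomorphism through the forgetful functor produces $ℐᴴ(H_𝐙) ≅ H_{\lax}$ in $𝐂ᴴ_{\lax}$ (with $H_{\lax}$ on the right read as a carrier). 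Applying $𝒰ᴴ_{\lax}$ and pasting with the previous equality yields the desired $𝒥ᴴ(H) ≅ 𝒰ᴴ_{\lax}(H_{\lax})$.

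The main — and really only — subtlety I would be careful about is that the isomorphism must live in the comma category $𝐂^𝟐/𝐙$ rather than merely in $𝐂^𝟐$; this is automatic, since every functor appearing in the chase commutes with the projections to $𝐂^𝟐$ recorded by $𝒱$, $𝒲$, and $\dom$ in~\eqref{eq:functors}, so the isomorphism is by construction a morphism over $𝐙$. Apart from this and the bookkeeping of the several conflated notations and forgetful functors, there is no genuine obstacle.
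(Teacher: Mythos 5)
Your proposal is correct and follows exactly the paper's own argument: unfold $H = 𝒰ᴴ_𝐙(H_𝐙)$, use commutativity of the square $𝒥ᴴ ∘ 𝒰ᴴ_𝐙 = 𝒰ᴴ_{\lax} ∘ ℐᴴ$ from Definition~\ref{def:functors}, and conclude via Lemma~\ref{lem:Hlax} that $ℐᴴ(H_𝐙) ≅ H_{\lax}$ as carriers. The paper compresses this into a one-line chain of equalities and an isomorphism; your extra care about the lifting $𝒦ᴴ$ and about the isomorphism living over $𝐙$ is sound but not a different route.
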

\begin{proof}
  We have
  $𝒥ᴴ(H) = 𝒥ᴴ(𝒰ᴴ_𝐙(H_𝐙)) = 𝒰ᴴ_{\lax}(ℐᴴ(H_𝐙)) ≅ 𝒰ᴴ_{\lax}(H_{\lax})$
  in $𝐂^𝟐/𝐙$.
\end{proof}
  
    \subsection{Simulation property}
    Our next goal is to prove the following.

\howisim*

    The rest of this subsection is devoted to the proof.
    
    For substitution-closedness, ${∼₀^{⊗}}$ is reflexive and by
    Lemma~\ref{lem:simintowow} we have a span morphism
    ${∼₀^{⊗}} → {H₀}$, so we may form the composite
    \[{H₀}⊗𝐙 → {H₀}⊗{∼₀^{⊗}} → {H₀}⊗{H₀} → {H₀}\rlap{,}\]
    where the last morphism is the monoid multiplication of
    ${H₀}$, as established in Proposition~\ref{prop:howomonoid}.
    
    For the simulation property, we will use the characterisation of
    ${H}$ as $𝐙_{Σ_{1,\lax}ᴴ}$.  For this, we need to lift the notion
    of simulation from $𝐂$ to $𝐂ᴴ_{\lax}$.  Recalling the functor
    $𝒱∶ 𝐂ᴴ_{\lax} → 𝐂^𝟐$ from Definition~\ref{def:functors}, which
    maps each span $X ← S → 𝐙$ in $𝐂ᴴ_{\lax}$ to its left-hand leg
    $X ← S$, we have:
    \begin{defi}\label{def:CHsim}
      An object $X ← S → 𝐙$ of $𝐂ᴴ_{\lax}$ is a \emph{simulation} iff its
      image by $𝒱$, i.e., $X ← S$, is a functional bisimulation.
    \end{defi}
    Next, we want to show that the computation of $H$ as an initial
    chain in $𝐂ᴴ_{\lax}$ is preserved by $𝒱$. We intend to use this to apply
    Lemma~\ref{lem:colimitsim}, which will reduce our goal to proving
    that each object of the chain is a functional bisimulation.
    \begin{lem}\label{lem:Vcocont}
      All of the following functors are cocontinuous:
      \begin{enumerati}
      \item \label{item:cst} any functor from the terminal category;
      \item \label{item:K} both functors
        $\psh[ℂ₁]/Δ(πᵢ)∶ \psh[ℂ₁]/Δ(H₀) → \psh[ℂ₁]/Δ(𝐙₀)$, for
        $i=1,2$, which map any span $H₀\source ← S₁ → H₀\but$ to
        \[𝐙₀\source \xot{πᵢ} H₀\source ← S₁ → H₀\but \xto{πᵢ} 𝐙₀t;\]
      \item \label{item:CHl:CH} the projection functor
        $prᵣ∶ 𝐂ᴴᵣ → \psh[ℂ₁]/Δ(H₀)$, mapping any object~\eqref{eq:CHl}
        to (the pairing of) its left-hand border
        $H₀\source ← S₁ → H₀\but$;
      \item \label{item:CZCD} the embedding
        $\psh[ℂ₁]/Δ(𝐙₀) ↪ \psh[ℂ₁]/Δ = 𝐂$;
      \item \label{item:CHCD} the embedding
        $\psh[ℂ₁]/Δ(H₀) ↪ \psh[ℂ₁]/Δ = 𝐂$; and
      \item \label{item:V} $𝒱∶ 𝐂ᴴ_{\lax} → 𝐂^𝟐$.
    \end{enumerati}
    \end{lem}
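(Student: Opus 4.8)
The plan is to verify items~\ref{item:cst}--\ref{item:CHCD} one at a time with the calculus of slice and comma categories recalled in~§\ref{ss:notation} (chiefly Proposition~\ref{prop:projcreates}), and then to assemble item~\ref{item:V} from them. Item~\ref{item:cst} is a direct inspection. For item~\ref{item:K}, each functor $\psh[ℂ₁]/Δ(πᵢ)$ is post-composition along the morphism $Δ(πᵢ)∶ Δ(H₀) → Δ(𝐙₀)$ of $\psh[ℂ₁]$; as $\psh[ℂ₁]$ has pullbacks, such a post-composition functor is left adjoint to the associated pullback functor, hence preserves all colimits. For item~\ref{item:CHl:CH}, I would use the description of $𝐂ᴴᵣ$ from the proof of Lemma~\ref{lem:Hlax} as the comma category of $\psh[ℂ₁]/Δ(π₂)$ over the functor $1 → \psh[ℂ₁]/Δ(𝐙₀)$ picking $𝐙$: since $\psh[ℂ₁]/Δ(π₂)$ preserves colimits by item~\ref{item:K} and both $\psh[ℂ₁]/Δ(H₀)$ and the terminal category are cocomplete, Proposition~\ref{prop:projcreates} gives that the projection $𝐂ᴴᵣ → \psh[ℂ₁]/Δ(H₀) × 1$ creates colimits, so $prᵣ$ --- this followed by the first projection --- is cocontinuous.

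Items~\ref{item:CZCD} and~\ref{item:CHCD} are the inclusions of a fibre of $(-)₀∶ 𝐂 → \psh[ℂ₀]$ into $𝐂$. Colimits in a slice $\psh[ℂ₁]/Δ(M)$ are created by the forgetful functor to $\psh[ℂ₁]$, while colimits in $𝐂 = \psh[ℂ₁]/Δ$ are created by the projection $𝐂 → \psh[ℂ₁] × \psh[ℂ₀]$ (Proposition~\ref{prop:projcreates} with $F = \id_{\psh[ℂ₁]}$); the two descriptions agree on the $\psh[ℂ₁]$-component, and the $\psh[ℂ₀]$-component stays constant along the connected --- indeed filtered --- colimit diagrams that matter here, so these inclusions preserve those colimits.

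For item~\ref{item:V}, note that $𝒱 = \dom ∘ 𝒰ᴴ_{\lax}$ by Definition~\ref{def:functors}, and that $\dom∶ 𝐂^𝟐/𝐙 → 𝐂^𝟐$ is cocontinuous since it is a comma-category projection over $\dom∶ 𝐂^𝟐 → 𝐂$ (Proposition~\ref{prop:projcreates}; colimits in $𝐂^𝟐$ are pointwise and $\dom$ preserves them). It then suffices to treat $𝒰ᴴ_{\lax}∶ 𝐂ᴴ_{\lax} → 𝐂^𝟐/𝐙$, for which I would use the presentation $𝐂ᴴ_{\lax} ≅ J/\id$ established in the proof of Lemma~\ref{lem:Hlax}, with $J∶ 𝐂ᴴᵣ → \psh[ℂ₁]/Δ(𝐙₀)$ the composite of $prᵣ$ with the post-composition functor $\psh[ℂ₁]/Δ(π₁)$. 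By items~\ref{item:K} and~\ref{item:CHl:CH} the functor $J$ preserves colimits, so Proposition~\ref{prop:projcreates} makes colimits in $𝐂ᴴ_{\lax}$ be computed componentwise in $𝐂ᴴᵣ × \psh[ℂ₁]/Δ(𝐙₀)$. Unwinding $𝒰ᴴ_{\lax}$ in these terms --- its $S$-part is $prᵣ$ followed by the inclusion $\psh[ℂ₁]/Δ(H₀) ↪ 𝐂$ (items~\ref{item:CHl:CH} and~\ref{item:CHCD}), its $X$-part is the inclusion $\psh[ℂ₁]/Δ(𝐙₀) ↪ 𝐂$ (item~\ref{item:CZCD}), and the arrow $S → X$ is recovered from the comma-structure map together with $π₁∶ H₀ → 𝐙₀$ --- shows it factors through cocontinuous functors and the pointwise (hence cocontinuous) arrow-category construction, so it preserves the relevant colimits.

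The hard part is item~\ref{item:V}: one must thread the nested layers carefully --- $𝐂ᴴ_{\lax}$ as a comma category over $J$, $J$ itself a composite through the further comma category $𝐂ᴴᵣ$, and the two fibre inclusions into $𝐂$ --- and check that the arrow $S → X$ is reconstructed functorially, so that $𝒱$ genuinely factors through cocontinuous functors. A secondary caveat is that the fibre inclusions of items~\ref{item:CZCD}--\ref{item:CHCD} (and the functor of item~\ref{item:cst}) do not preserve disconnected colimits; this is harmless, as the lemma is invoked only to transport the filtered colimit computing $H$ through $𝒱$ (so as to apply Lemma~\ref{lem:colimitsim}), and that colimit is connected.
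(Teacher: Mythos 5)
Your proposal is correct and, for items~\ref{item:cst}--\ref{item:CHl:CH} and the overall strategy for item~\ref{item:V}, follows essentially the same route as the paper: post-composition functors along $Δ(πᵢ)$ are left adjoints, the comma-category projection out of $𝐂ᴴᵣ$ creates colimits by Proposition~\ref{prop:projcreates}, and item~\ref{item:V} is assembled by computing colimits componentwise in a comma/lax-limit presentation of $𝐂ᴴ_{\lax}$ (the paper routes this through an intermediate lax limit $𝐂ᴴₘ$ rather than through $𝐂^𝟐/𝐙$ and the arrow category, but the content is the same).

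The genuine divergence is at items~\ref{item:CZCD}--\ref{item:CHCD}, and there what you call a ``secondary caveat'' is in fact a correction of the paper. The paper's proof of these items rests on the claim --- stated as item~\ref{item:cst} --- that any functor from the terminal category is cocontinuous. That claim is false for disconnected colimits: a functor $1 → 𝒞$ picking $C$ sends the colimit of a diagram of shape $𝔻$ to a colimiting cocone only when $𝔻$ is connected and non-empty, since the colimit of the constant diagram at $C$ is the copower of $C$ by the set of connected components of $𝔻$ (in particular such a functor preserves the initial object only if $C$ is initial). Correspondingly, the fibre inclusions of items~\ref{item:CZCD}--\ref{item:CHCD} do not preserve binary coproducts --- the coproduct in the slice keeps state object $𝐙₀$, whereas the coproduct in $𝐂$ has state object $𝐙₀+𝐙₀$ --- and the same defect propagates to $𝒱$ in item~\ref{item:V}. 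So the lemma should be read as asserting preservation of connected (or just filtered) colimits for items~\ref{item:cst}, \ref{item:CZCD}, \ref{item:CHCD} and \ref{item:V}; as you observe, this weakening is harmless, since the lemma is used only to push the $ω$-chain colimit computing $H_{\lax}$ through $𝒱$ in the proof of Lemma~\ref{lem:howisim}. Your restriction is the right fix, and your argument for each item is sound.
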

    \begin{proof}\hfill
      \begin{enumerati}
      \item Trivial.
      \item As post-composition functors, these have right adjoints.
      \item $𝐂ᴴᵣ$ is by definition the comma category
        \begin{center}
          \Diag{%
            \laxpbk{m-2-1}{m-1-1}{m-1-2} %
          }{%
            𝐂ᴴᵣ \& 1 \\
            \psh[ℂ₁]/Δ(H₀) \& \psh[ℂ₁]/Δ(𝐙₀)\rlap{,} %
          }{%
            (m-1-1) edge[labela={}] (m-1-2) %
            edge[labell={prᵣ}] (m-2-1) %
            (m-2-1) edge[labelb={\psh[ℂ₁]/Δ(π₂)}] (m-2-2) %
            (m-1-2) edge[labelr={𝐙}] (m-2-2) %
          }%
        \end{center}
        so by~\ref{item:cst}, \ref{item:K}, and
        Proposition~\ref{prop:projcreates}, the projection functor
        \[⟨prᵣ,{!}⟩∶ 𝐂ᴴₗ → \psh[ℂ₁]/Δ(H₀)×1 ≅ \psh[ℂ₁]/Δ(H₀)\]
        to the product preserves colimits.
      \item In the commuting square
        \begin{center}
          \diag{%
            \psh[ℂ₁]/Δ(𝐙₀) \& \psh[ℂ₁]/Δ \\
            \psh[ℂ₁] × 1 \& \psh[ℂ₁] × \psh[ℂ₀]\rlap{,} %
          }{%
            (m-1-1) edge[into,labela={}] (m-1-2) %
            edge[labell={⟨pr₁,{!}⟩}] (m-2-1) %
            (m-2-1) edge[labelb={\psh[ℂ₁]× Δ(𝐙₀)}] (m-2-2) %
            (m-1-2) edge[labelr={⟨pr₁,pr₀⟩}] (m-2-2) %
          }
        \end{center}
        both vertical functors (which are the canonical projection
        functors) create colimits. Furthermore,
        the bottom functor preserves them, because
        \begin{itemize}
        \item colimits are pointwise in a product of cocomplete
          categories, and
        \item any functor from the terminal category is cocontinuous.
        \end{itemize}
        Thus, the top functor is cocontinuous, as desired.
      \item Same, with the square 
        \begin{center}
          \diag{%
            \psh[ℂ₁]/Δ(H₀) \& \psh[ℂ₁]/Δ \\
            \psh[ℂ₁] × 1 \& \psh[ℂ₁] × \psh[ℂ₀]\rlap{.} %
          }{%
            (m-1-1) edge[into,labela={}] (m-1-2) %
            edge[labell={}] (m-2-1) %
            (m-2-1) edge[labelb={\psh[ℂ₁]× Δ(H₀)}] (m-2-2) %
            (m-1-2) edge[labelr={}] (m-2-2) %
          }
        \end{center}
      \item The functor $𝒱∶ 𝐂ᴴ_{\lax} → 𝐂^𝟐$ is induced by universal properties
        of lax limits as in\begin{center}
          \Diag{%
            \laxpbk[1.7em]{m-1-2}{m-2-1}{m-3-2} %
            \laxpbk[1.7em]{m-1-4}{m-2-3}{m-3-4} %
            \laxpbk[1.7em]{m-1-6}{m-2-5}{m-3-6} %
            \path[draw,->]
            (m-2-1) edge[fore,labela={}] (m-2-3) %
            (m-2-3) edge[fore,labela={}] (m-2-5) %
            ; %
            \twocell[.3]{m-1-4}{m-1-5}{m-1-4}{m-3-4}{}{%
              cell=0.1,bend left,labelbr={α}}
          }{%
            \& 𝐂ᴴᵣ \& \& \psh[ℂ₁]/Δ(H₀) \& \& \psh[ℂ₁]/Δ \\
            𝐂ᴴ_{\lax} \& \& 𝐂ᴴₘ \& \& 𝐂^𝟐 \\
            \& 𝐂ᴴₗ \& \& \psh[ℂ₁]/Δ(𝐙₀) \& \& \psh[ℂ₁]/Δ\rlap{,}
          }{
            (m-1-2) edge[labellat={J}{.7}] (m-3-2) %
            (m-1-4) edge[labelrat={\psh[ℂ₁]/Δ(π₁)}{.8}] (m-3-4) %
            (m-1-6) edge[identity] (m-3-6) %
            (m-1-2) edge[labela={prᵣ}] (m-1-4) %
            (m-3-2) edge[identity,labela={}] (m-3-4) %
            (m-1-4) edge[into,labela={}] (m-1-6) %
            (m-3-4) edge[into,labela={}] (m-3-6) %
            (m-2-1) edge[labela={}] (m-1-2) %
            edge[labelb={}] (m-3-2) %
            (m-2-3) edge[labela={}] (m-1-4) %
            edge[shorten >=1em,labelb={}] (m-3-4) %
            (m-2-5) edge[labela={}] (m-1-6) %
            edge[labelb={}] (m-3-6) %
          }%
          \end{center}
          where $𝐂ᴴₘ$ is defined as the lax limit of $\psh[ℂ₁]/Δ(π₁)$,
          and $α$, which is induced by universal property
          of $\psh[ℂ₁]/Δ$,
          has as component at any $p∶ X → Δ(H₀)$ the morphism
          \begin{center}
            \diag{%
              X \& X \\
              Δ(H₀) \& Δ(𝐙₀) %
            }{%
              (m-1-1) edge[identity,labela={}] (m-1-2) %
              edge[labell={p}] (m-2-1) %
              (m-2-1) edge[labelb={Δ(π₁)}] (m-2-2) %
              (m-1-2) edge[labelr={Δ(π₁)∘p}] (m-2-2) %
            }
          \end{center}
          in $\psh[ℂ₁]/Δ$.  For each lax limit, the given functor is
          cocontinuous, so the projection functor to the product
          creates colimits.  We thus get a diagram
          \begin{center}
            \diag(1,2){%
              𝐂ᴴ_{\lax} \& 𝐂ᴴₘ \& 𝐂^𝟐 \\
              𝐂ᴴᵣ×𝐂ᴴₗ \& \psh[ℂ₁]/Δ(H₀)×\psh[ℂ₁]/Δ(𝐙₀) \& (\psh[ℂ₁]/Δ)² %
            }{%
              (m-1-1) edge[labela={}] (m-1-2) %
              edge[labell={}] (m-2-1) %
              edge[bend left=10,labela={𝒱}] (m-1-3) %
              (m-2-1) edge[labelb={prᵣ×\id}] (m-2-2) %
              (m-1-2) edge[labelr={}] (m-2-2) %
              (m-1-2) edge[labela={}] (m-1-3) %
              (m-2-2) edge[into,labelb={}] (m-2-3) %
              (m-1-3) edge[labelr={}] (m-2-3) %
            }
          \end{center}
          in which all vertical functors create colimits, and both
          bottom functors are cocontinuous by the previous points.
          Thus, the top functor $𝒱$ is cocontinuous as desired. \qedhere
    \end{enumerati}
    \end{proof}

    Our next step, in order to apply Lemma~\ref{lem:colimitsim}, is to
    prove that each object of the initial chain for $H_𝐙$ is a
    simulation in $𝐂ᴴ_{\lax}$. This will follow from the next result.
\begin{lem}\label{lem:Sigmawowsim}
  If $Σ₁$ preserves functional bisimulations and $S ∈ 𝐂ᴴ_{\lax}$ is a
  simulation, then so is $Σ_{1,\lax}ᴴ(S)$.
\end{lem}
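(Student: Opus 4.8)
By Definition~\ref{def:CHsim}, $Σ_{1,\lax}ᴴ(S)$ is a simulation iff the left-hand leg $𝒱(Σ_{1,\lax}ᴴ(S))$ of its underlying span is a functional bisimulation in $𝐂$; by Proposition~\ref{prop:sim:wpbk} and the identification $𝐂 ≅ \psh[ℂ_{\source,\but}]$ of §\ref{ss:transpresh}, this is precisely the assertion that the cell denoted by the pasting~\eqref{eq:SigmabulletRlax} defining $Σ_{1,\lax}ᴴ(S)$ is a \emph{simulation} in the double-bicategorical sense of §\ref{ss:doublecat} --- its top-left square being exactly the source-square of that leg. The plan is to prove this by decomposing~\eqref{eq:SigmabulletRlax} into elementary cells, checking that each is a simulation, and reassembling.

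Reassembly is immediate: simulation cells are stable under horizontal and vertical composition in $\Bispan(\psh[ℂ₁])$ (Proposition~\ref{prop:simpbk}) and under the two actions of $\Bispan(\psh[ℂ₀])$ obtained by precomposing with $\source$ and $\but$, which preserve simulations (Proposition~\ref{prop:embedcells}). So it suffices to treat the three kinds of cell occurring in~\eqref{eq:SigmabulletRlax}. First, the cells coming from $\Bispan(\psh[ℂ₀])$ --- those carrying $H₀$, $Σ₀(H₀)$, $∼₀^{⊗*}$, the algebra maps $ν_{𝐙₀}$, $ν_{H₀}$, and the various identities --- have top-left squares which are identity squares or the defining pullbacks of the span composites involved, hence (pointwise) pullbacks and a fortiori simulations, and this survives coercion along $Δ_{\source}$, $Δ_{\but}$ by Proposition~\ref{prop:embedcells}. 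Secondly, the rightmost cell over $∼^{⊗*}$: since $∼^⊗ → 𝐙²$ is a substitution-closed bisimulation (Notation~\ref{not:simtens}, Proposition~\ref{prop:scbisim}), both its legs are fibrations, so it carries a simulation cell; each span power $(∼^⊗)^{{;}n}$ is then a simulation by Proposition~\ref{prop:simpbk}, and since pointwise weak pullbacks are stable under coproducts in $\psh[ℂ₁]$ (coproducts there being disjoint and universal), the reflexive transitive closure $∼^{⊗*} = ∑_n(∼^⊗)^{{;}n}$ carries a simulation cell as well.

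The one substantive cell is the one carrying $Σ₁^F(S)$. Write $p$ for the left-hand leg of $S$ in $𝐂$. As $S$ is a simulation, $p$ is a functional bisimulation in $𝐂$; and since its state component is the projection $π₁∶ H₀ → 𝐙₀$ --- a $Σ₀$-monoid morphism, $H₀$ being a $Σ₀$-monoid by Proposition~\ref{prop:howomonoid} and $π₁$ a morphism by construction (item~\ref{item:wowo}) --- it is a functional bisimulation in $Σ₀\Trans$ by Definition~\ref{def:bisimSigmaMon}. The hypothesis then gives, via Proposition~\ref{prop:ling-mischief} and the description of $Σ₁^\source$ in Proposition~\ref{prop:Sigmacomma}, that the square~\eqref{eq:presfib} for $p$ is a pointwise weak pullback; and this square is exactly the top-left square of the cell $Σ₁^F(S)$ in~\eqref{eq:SigmabulletRlax}, whose side borders are $Σ₁^F(X)$ and $Σ₁^F(𝐙)$ (the latter identified with $𝐙₁$ through the $\check{Σ}₁$-algebra structure of $𝐙$) and whose top and bottom borders are $Σ₀(H₀)$ and $H₀$ coerced along $Δ_{\source}$, $Δ_{\but}$. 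Hence that cell too is a simulation, and pasting concludes. The step I expect to be the main obstacle is precisely this bookkeeping: ensuring that the algebra maps $ν_{𝐙₀}$, $ν_{H₀}$ enter only as globular border data and are never stacked \emph{on top of}~\eqref{eq:presfib} --- the naturality square of the $Σ₀$-algebra structure $ν$ at $π₁$ is not in general a weak pullback --- which is exactly why preservation of functional bisimulations is demanded of $Σ₁^\source$ (landing in $\psh[ℂ₁]/Δ_{\source}$, whose ``state'' part is $Σ₀((-)₀)$) rather than of $\check{Σ}₁$: the condition is phrased at the level where it slots into~\eqref{eq:SigmabulletRlax}. A secondary care point is the horizontal composition with the $∼^{⊗*}$-cell, which relies on the fact just noted that it is a simulation.
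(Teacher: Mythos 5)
Your overall strategy --- decompose the pasting~\eqref{eq:SigmabulletRlax} into cells, show each is a simulation, and reassemble via Propositions~\ref{prop:simpbk} and~\ref{prop:embedcells} --- is exactly the paper's, and your treatment of the $Σ₁^F(S)$ cell and of the $∼^{⊗*}$ cell is essentially right. But there is a genuine gap at the cell carrying the algebra maps.

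You classify the cells ``carrying $H₀$, $Σ₀(H₀)$, $∼₀^{⊗*}$, the algebra maps $ν_{𝐙₀}$, $ν_{H₀}$'' as having top-left squares that are identity squares or defining pullbacks of span composites, and you later assert that the algebra maps enter only as globular border data and are never stacked on top of~\eqref{eq:presfib}. Neither holds. In~\eqref{eq:SigmabulletRlax} the maps $ν_{𝐙₀}∶ Σ₀(𝐙₀) → 𝐙₀$ are the \emph{vertical} (pro-direction) borders of the top cell, whose middle map $Σ₀(H₀);∼₀^{⊗*} → H₀$ is the algebra structure of $H₀$. The source map of the domain of the left leg of $Σ_{1,\lax}ᴴ(S)$ factors through $Σ₀(H₀)\source$ and then the algebra map to $H₀\source$, so by Lemma~\ref{lem:wpullback} the square witnessing that this leg is a functional bisimulation is the vertical stacking of~\eqref{eq:presfib} with (the image under $Δ_\source$ of)
\[
\begin{array}{ccc}
Σ₀(H₀);∼₀^{⊗*} & \longrightarrow & Σ₀(𝐙₀) \\
\downarrow & & \downarrow \\
H₀ & \xrightarrow{\;π₁\;} & 𝐙₀\rlap{.}
\end{array}
\]
This square is neither an identity nor a span-composition pullback, and --- as you yourself observe --- naturality of $ν$ does not make it a weak pullback. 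It encodes an inversion principle for the Howe closure: any element of $H₀$ lying over an operation $ν(x)$ must itself decompose as that operation applied to $H₀$-related arguments, postcomposed with $∼₀^{⊗*}$. The paper establishes this by rearranging the pasting into~\eqref{eq:SigmabulletRprimedeux} and invoking Proposition~\ref{prop:initalt}: $H₀$ is an initial algebra for $X ↦ I;∼₀^{⊗*} + Σ₀(X);∼₀^{⊗*}$, so by Lambek's lemma $I;∼₀^{⊗*} → H₀ ← Σ₀(H₀);∼₀^{⊗*}$ is a coproduct diagram, compatibly with the coproduct $I → 𝐙₀ ← Σ₀(𝐙₀)$; extensivity of $\psh[ℂ₀]$ then makes the displayed square an honest pullback, preserved by $Δ_\source$ since the latter is a right adjoint. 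Without this step --- which is where the packing lemma and the $∼₀^{⊗*}$-saturation built into the Howe closure earn their keep --- the proof does not go through.
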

\begin{proof}
  The pasting~\eqref{eq:SigmabulletRlax} is isomorphic to the following.
  \begin{equation}
      \Diag{%
        \sqpath{m-1-1}{180}{7}{l}{m-3-1}{pro,labell={\check{Σ}₁(X)₁}}
        \sqpath{m-3-1}{-90}{5}{b}{m-3-4}{glob,labelb={H₀}} %
        \path (m-2-1) -- node (am) {} (m-2-4) ; %
        \path (m-3-1) -- node (bm) {} (m-3-4) ; %
        \path[draw] %
        (m-2-1) edge[identity] (l) %
        ($(a)+(-90:4ex)$) edge[celllr={0}{0}] (a) %
        ($(b)+(90:4ex)$) edge[celllr={0}{0}] ($(b)+(90:1ex)$) %
        ; %
      }{%
        𝐙₀ \& \& \& 𝐙₀ \\
        Σ₀(𝐙₀) \& Σ₀(𝐙₀) \& 𝐙₀ \& 𝐙₀ \\
        𝐙₀ \& 𝐙₀ \& 𝐙₀ \& 𝐙₀ %
      }{%
        (m-1-1) edge[glob,twoa={H₀}] (m-1-4) %
        (m-2-1) edge[labell={}] (m-1-1) %
        (m-1-4) edge[identity,labelr={}] (m-2-4) %
        (m-2-1) edge[glob,labela={Σ₀(H₀)}] (m-2-2) %
        edge[pro,labell={Σ₁^F(X)}] node (ddr) {} (m-3-1) %
        (m-3-1) edge[glob,labelb={H₀}] (m-3-2) %
        (m-2-2) edge[pro,labellat={\scriptscriptstyle Σ₁^F(𝐙)}{.1}] node (ddrr) {} (m-3-2) %
        (m-2-2) edge[labela={}] (m-2-3) %
        (m-3-2) edge[identity,labelb={}] (m-3-3) %
        (m-2-3) edge[pro,labelr={𝐙₁}] node (ddrrr) {} (m-3-3) %
        (m-2-3) edge[glob,labela={∼₀^{⊗*}}] (m-2-4) %
        (m-3-3) edge[glob,labelb={∼₀^{⊗*}}] (m-3-4) %
        (m-2-4) edge[pro,labelr={𝐙₁}] node (ddrrrr) {} (m-3-4) %
        (ddr) edge[draw=none,labelon={Σ₁^F(S)}] (ddrr) %
        (ddrr) edge[celllr={4}{4}] (ddrrr) %
        (ddrrr) edge[draw=none,labelon={{∼₁^{⊗*}}}] (ddrrrr) %
      }
      \label{eq:SigmabulletRprimedeux}
    \end{equation}
    By Propositions~\ref{prop:simpbk} and~\ref{prop:embedcells}, it
    suffices to show that all non-identity cells
    in~\eqref{eq:SigmabulletRprimedeux} are simulations if $S$ is.
    Let us run through them, left-to-right, top-to-bottom:
      \begin{itemize}
      \item The top cell is a simulation because
        \begin{itemize}
        \item $𝐙₀$ is the initial $(I+Σ₀)$-algebra, so
          $I → 𝐙₀ ← Σ₀(𝐙₀)$ is a coproduct diagram; 
        \item similarly, by Proposition~\ref{prop:initalt},
          ${H₀}$ is the initial algebra for the endofunctor
          $X ↦ (I;{∼₀^{⊗*}} + Σ₀(X);{∼₀^{⊗*}})$, so
          $I;{∼₀^{⊗*}} → {H₀} ← Σ₀({H₀});{∼₀^{⊗*}}$ is a
          coproduct diagram;
        \item so, by extensivity of $\psh[ℂ₀]$, both squares in
          \begin{center}
            \Diag{%
              \stdpbk
              \pbk{m-2-3}{m-1-3}{m-1-2} %
            }{%
              I;{∼₀^{⊗*}} \& {H₀} \& Σ₀({H₀});{∼₀^{⊗*}} \\
              I \& 𝐙₀ \& Σ₀(𝐙₀) %
            }{%
              (m-1-1) edge[labela={}] (m-1-2) %
              edge[labell={}] (m-2-1) %
              (m-2-1) edge[labelb={}] (m-2-2) %
              (m-1-2) edge[labelr={}] (m-2-2) %
              (m-1-2) edge[<-,labela={}] (m-1-3) %
              (m-2-2) edge[<-,labelb={}] (m-2-3) %
              (m-1-3) edge[labelr={}] (m-2-3) %
            }
          \end{center}
          are pullbacks. The right-hand one is mapped by $Δ_\source$
          (which, as a right adjoint, preserves pullbacks) precisely to
          the top left square of the top cell.
      \end{itemize}
        \item The first little cell $Σ₁^F(S)$ is a simulation by hypothesis.
        \item The middle little cell is trivially a simulation.
        \item The third little cell $∼₁^{⊗*}$ is a simulation because
          simulations are closed under transitive closure.
        \item Finally, the bottom cell is trivially a simulation. \qedhere
      \end{itemize}
\end{proof}

Finally:
\begin{proof}[Proof of Lemma~\ref{lem:howisim}]
  Let us start by observing that, because the domains and codomains of
  all $s_L$ are representable, hence finitely presentable, functional
  bisimulations are closed under filtered colimits in $𝐂^𝟐$, by
  Lemma~\ref{lem:colimitsim}.

  Now, the first projection $𝐙₀ ← H₀$ is trivially a functional
  bisimulation, since there are no transitions in $𝐙₀$, so by
  Proposition~\ref{prop:initialwowo} the initial object
  $∅_{𝐂ᴴ_{\lax}}$ of $𝐂ᴴ_{\lax}$ is a simulation in the sense of
  Definition~\ref{def:CHsim}.  Hence, by induction, using
  Lemma~\ref{lem:Sigmawowsim}, so are all objects of the initial chain
  of $Σᴴ_{1,\lax}$.  Thus, by Definition~\ref{def:CHsim}, $𝒱$ maps
  this initial chain to a chain of functional bisimulations in $𝐂^𝟐$
  and so
  \[
  \begin{array}{rcll}
    𝒱(H_{\lax}) & ≔ & 𝒱(\colimₙ (Σ_{1,\lax}ᴴ)ⁿ(∅_{𝐂ᴴ_{\lax}})) \\
    & ≅ & \colimₙ 𝒱((Σ_{1,\lax}ᴴ)ⁿ(∅_{𝐂ᴴ_{\lax}})) & \mbox{(by Lemma~\ref{lem:Vcocont})}
  \end{array}\] is a functional bisimulation by closedness of functional bisimulations
  under filtered colimits.

  Finally, this entails that the left-hand leg $𝒲(H)$ of the Howe
  closure is a functional bisimulation, because we have
  \begin{center}
    \hfill
  $\begin{array}[b]{rcll}
    𝒲(H) & = & \dom (𝒥ᴴ(H)) & \mbox{(by definition of $𝒲$)}\\
         & = & \dom (𝒰ᴴ_{\lax} (H_{\lax})) &
                   \mbox{(by Corollary~\ref{cor:HHlax})} \\
         & = & 𝒱(H_{\lax})  & \mbox{(by definition of $𝒱$).} %
  \end{array}$ \qedhere
\end{center}
\end{proof}

\subsection{Symmetry of transitive closure}
\label{ss:sym-tran-clos}
In this section, we prove the remaining Lemmas~\ref{lem:howtranssym}
and~\ref{lem:projsimsym}.
Let us first recall the former:
\howtranssym*

By Lemma~\ref{lem:spansymtrans}, Lemma~\ref{lem:howtranssym} will
follow if we construct a span morphism ${H₀} → {H₀^{\overline{+}†}}$.
As $H₀$ is an initial algebra for $I + Σ₀ᴴ$
(Proposition~\ref{prop:howe-clos-state-ini-algebra}),
it suffices to prove the following lemma.
\begin{lem}\label{lem:wowomonoid}
  The span ${H₀^{\overline{+}†}}$ has an algebra structure for $(I+Σ₀ᴴ)$.
\end{lem}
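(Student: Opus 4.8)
The plan is to equip $W ≔ H₀^{\overline{+}†}$ with an $(I + Σ₀ᴴ)$-algebra structure, after which initiality of $H₀$ (Proposition~\ref{prop:howe-clos-state-ini-algebra}) yields the span morphism $H₀ → W$ demanded by Lemma~\ref{lem:spansymtrans}. Concretely I would produce morphisms $I → W$, $Σ₀(W) → W$ and $W;{∼₀^⊗} → W$ in $\psh[ℂ₀]/𝐙₀²$ and cotuple them. The structural remark used everywhere is that $W$ is a \emph{relation}: by Corollary~\ref{cor:commute-sym-trans} it equals $H₀^{†\overline{+}} = ⋃_{n>0}\overline{(H₀^†)^{{;}n}}$, a union of relations, hence monic over $𝐙₀²$ by Proposition~\ref{prop:union}. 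Thus a morphism into $W$ over $𝐙₀²$ is merely the \emph{property} that the relevant structure map to $𝐙₀²$ factors through $W$, and such morphisms are automatically compatible with colimit cocones. I also use two preliminaries: $H₀$ is reflexive (the $(I+Σ₀)$-algebra structure on $H₀$ induced by its monoid unit and its $Σ₀$-algebra structure receives the unique $(I+Σ₀)$-algebra map from $𝐙₀$, which composes with each projection to the identity), hence so is $H₀^†$; and, since $H₀$ is a $Σ₀$-monoid whose projections are $Σ₀$-algebra morphisms (item~\ref{item:wowo}), $H₀^†$ carries a $Σ₀$-algebra structure $ν_{H₀^†}∶ Σ₀(H₀^†) → H₀^†$ lying over $ν_{𝐙₀}$.

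\emph{The $I$- and $∼₀^⊗$-summands.} For $I$: reflexivity of $H₀^†$ makes $Δ_{𝐙₀}$, hence $⟨e,e⟩∶ I → 𝐙₀²$, factor through $\overline{H₀^†} ↪ W$. For $∼₀^⊗$: writing $W ≅ \colimₙ \overline{(H₀^†)^{{;}n}}$ (Corollary~\ref{cor:commute-sym-trans}, Lemma~\ref{lem:rel-trans-filtered-colim}) and using that $-;{∼₀^⊗}$ is cocontinuous (as already exploited in the proof of Proposition~\ref{prop:initalt}), it suffices to factor each $\overline{(H₀^†)^{{;}n}};{∼₀^⊗}$ through $W$; the strong epimorphism $(H₀^†)^{{;}n} → \overline{(H₀^†)^{{;}n}}$ remains a strong epi after $-;{∼₀^⊗}$, so lifting against the mono $W ↪ 𝐙₀²$ reduces this to factoring $(H₀^†)^{{;}n};{∼₀^⊗}$ through $W$. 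Symmetry of $∼₀^⊗$, together with $(S^†)^{{;}n} ≅ (S^{{;}n})^†$ and $(A;B)^† ≅ B^†;A^†$ (item~\ref{item:symseq}), rewrites this span as $({∼₀^⊗};H₀^{{;}n})^†$, so by one more dagger it suffices to factor ${∼₀^⊗};H₀^{{;}n}$ through $H₀^{\overline{+}}$; and this holds since ${∼₀^⊗} ↪ H₀$ — compose $Δ → H₀$ with the right action $H₀;{∼₀^⊗} → H₀$ read off the $Σ₀ᴴ$-structure — whence ${∼₀^⊗};H₀^{{;}n} → H₀^{{;}(n+1)} → \overline{H₀^{{;}(n+1)}} ↪ H₀^{\overline{+}}$.

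\emph{The $Σ₀$-summand} is the crux and the only place the sifted-colimit hypothesis is really needed. Again $W ≅ \colimₙ \overline{(H₀^†)^{{;}n}}$, and now finitarity of $Σ₀$ gives $Σ₀(W) ≅ \colimₙ Σ₀(\overline{(H₀^†)^{{;}n}})$, so it suffices to factor each $Σ₀(\overline{(H₀^†)^{{;}n}})$ through $W$. Preservation of sifted colimits makes $Σ₀$ preserve regular (= strong) epis — viewing a regular epi as a reflexive coequaliser; this is exactly the use anticipated in Remark~\ref{rk:sifted} — so $Σ₀(qₙ)$ is a strong epi for $qₙ∶(H₀^†)^{{;}n} → \overline{(H₀^†)^{{;}n}}$, and lifting against $W ↪ 𝐙₀²$ reduces the task to factoring $Σ₀((H₀^†)^{{;}n})$ through $W$. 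For that I would compose the canonical comparison $Σ₀((H₀^†)^{{;}n}) → (Σ₀(H₀^†))^{{;}n}$ — available since $(H₀^†)^{{;}n}$ is a finite zig-zag limit and $Σ₀$, though not preserving this limit, always maps into it — with the span morphism $(Σ₀(H₀^†))^{{;}n} → (H₀^†)^{{;}n}$ induced levelwise by $ν_{H₀^†}$ over $ν_{𝐙₀}$, and then with $qₙ$ and the inclusion $\overline{(H₀^†)^{{;}n}} ↪ W$; a short diagram chase (all maps lie over $𝐙₀²$) shows this composite witnesses the required factorisation.

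Cotupling the three summands gives the $(I + Σ₀ᴴ)$-algebra structure on $W$. The only delicate point is the $Σ₀$-summand: since $Σ₀$ does \emph{not} preserve the pullbacks defining span composition, one cannot expect it to commute with $(-)^{{;}n}$, so the construction must be routed through $(Σ₀(H₀^†))^{{;}n}$ via the canonical comparison; and to transport the resulting factorisation past the strong epi $Σ₀(qₙ)$ and past the colimit one needs precisely finitarity together with preservation of regular epis, i.e.\ preservation of sifted colimits.
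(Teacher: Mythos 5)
Your proof is correct and follows essentially the same route as the paper: the same three summands, the same presentation of $H₀^{\overline{+}†}$ as a filtered colimit of relations (so that compatibility of the structure maps is automatic by monicness over $𝐙₀²$), and the same use of preservation of sifted colimits to transfer $Σ₀$-algebra structure from $(H₀^†)^{{;}n}$ to its image. The only difference is packaging: the paper routes the image step through an abstract lemma (the forgetful functor from $Σ₀$-algebras creates image factorisations when $Σ₀$ preserves reflexive coequalisers), whereas you inline the same argument as an explicit strong-epi/mono lifting, and you handle the $-;{∼₀^⊗}$ summand stagewise rather than directly on $H₀^{\overline{+}}$.
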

This relies on the following lemmas, used in particular with $F= Σ₀$.
The first one is well known:
\begin{lem}
  \label{lem:F-alg-creation}
  Given an endofunctor $F$ on some category $𝒞$, the forgetful functor
  $F\alg → 𝒞$ creates limits, and all colimits that $F$ preserves.
\end{lem}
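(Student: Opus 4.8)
The plan is to prove the two halves by one dualisable argument: I would spell out the limit case in full and then observe that the colimit case is obtained by reversing every arrow, with one extra ingredient, namely that $F$ be applied to the colimit at hand. Throughout, write $U\colon F\alg → 𝒞$ for the forgetful functor, and recall that a morphism $(X,a) → (Y,b)$ in $F\alg$ is a $𝒞$-morphism $f$ with $f ∘ a = b ∘ F(f)$.

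For limits, fix a diagram $D\colon 𝒥 → F\alg$, write $(X_j,a_j)$ for $D(j)$ and $X_u\colon X_j → X_{j'}$ for the image of $u\colon j → j'$, and suppose $U ∘ D$ has a limit, with limiting cone $(\pi_j\colon L → X_j)_j$ in $𝒞$. First I would produce the algebra structure on $L$: the composites $a_j ∘ F(\pi_j)\colon F(L) → X_j$ form a cone over $U ∘ D$, since each $X_u$ is an algebra morphism, so that $X_u ∘ a_j ∘ F(\pi_j) = a_{j'} ∘ F(X_u ∘ \pi_j) = a_{j'} ∘ F(\pi_{j'})$. By the universal property there is a \emph{unique} $a\colon F(L) → L$ with $\pi_j ∘ a = a_j ∘ F(\pi_j)$ for all $j$ --- which is exactly the statement that every $\pi_j$ is a morphism $(L,a) → (X_j,a_j)$ in $F\alg$. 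The uniqueness of $a$ is precisely the creation clause: the given limit cone lifts to $F\alg$ in a unique way. It then remains to check $(L,a)$ is a limit in $F\alg$: given any cone $(g_j\colon (M,m) → (X_j,a_j))_j$, its underlying $𝒞$-cone factors through $L$ by a unique $g\colon M → L$ in $𝒞$, and the identities $\pi_j ∘ g ∘ m = g_j ∘ m = a_j ∘ F(g_j) = a_j ∘ F(\pi_j) ∘ F(g) = \pi_j ∘ a ∘ F(g)$, holding for all $j$, force $g ∘ m = a ∘ F(g)$ by uniqueness in the limit property, so $g$ is a morphism of $F\alg$.

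For colimits, suppose $U ∘ D$ has a colimit $(\iota_j\colon X_j → C)_j$ and, crucially, that $F$ \emph{preserves} it, i.e. $(F(\iota_j)\colon F(X_j) → F(C))_j$ is a colimit of $F ∘ U ∘ D$. Then $(\iota_j ∘ a_j\colon F(X_j) → C)_j$ is a cocone over $F ∘ U ∘ D$ (again because each $X_u$ is an algebra morphism), so there is a unique $a\colon F(C) → C$ with $a ∘ F(\iota_j) = \iota_j ∘ a_j$, making each $\iota_j$ an algebra morphism; the dual diagram chase, using that the maps $F(\iota_j)$ are jointly epic, shows $(C,a)$ is a colimit in $F\alg$. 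The only step worth flagging --- and the sole place where anything could go wrong --- is this use of the preservation hypothesis: the natural cocone lives over $F ∘ U ∘ D$ and must be tested against $F(C)$ rather than $C$, so $F$-preservation is exactly what makes the construction available; no analogous hypothesis is needed for limits, where $F(L)$ only ever occurs as a source.
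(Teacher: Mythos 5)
Your proof is correct and is the standard argument; the paper itself offers no proof of this lemma, simply recording it as well known. Both halves check out, including the key points you flag: the algebra structure on the (co)limit is forced by the universal property, the creation clause is exactly the uniqueness of that structure, and the colimit case needs $F$-preservation so that the cocone $(\iota_j ∘ a_j)$ can be tested against $F(C)$ and the $F(\iota_j)$ are jointly epic.
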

\begin{lem}
  \label{lem:F-alg-image-facto}
  Given an endofunctor $F$ on a regular category $𝒞$, if $F$ preserves
  reflexive coequalisers, then the forgetful functor from the category
  of $F$-algebras creates image factorisations.
\end{lem}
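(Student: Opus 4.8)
The plan is to verify the two halves of creation directly. Fix an $F$-algebra morphism $f\colon X → Z$ and an image factorisation $X \xto{e} Y \xto{m} Z$ of its underlying morphism in $\mathcal C$; since $\mathcal C$ is regular this exists with $e$ a regular (hence strong) epi and $m$ a mono. I want to produce a unique $F$-algebra structure on $Y$ making $e$ and $m$ into $F$-algebra morphisms, and then check that the resulting factorisation is an image in $F\alg$.

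First I would build the structure on $Y$ via a reflexive coequaliser. Because $\mathcal C$ is regular, $e$ is the coequaliser of its kernel pair $p_1,p_2\colon X×_Y X ⇉ X$, and this pair is reflexive, its common section being the diagonal $X → X×_Y X$. By Lemma~\ref{lem:F-alg-creation} the forgetful functor $F\alg → \mathcal C$ creates limits, so the kernel pair lifts uniquely to $F\alg$, where it is still reflexive (the diagonal section, being induced by $(\id_X,\id_X)$, is again an $F$-algebra morphism). Since $F$ preserves reflexive coequalisers, Lemma~\ref{lem:F-alg-creation} also tells us that the forgetful functor creates the coequaliser of this pair; hence $Y$ acquires a unique $F$-algebra structure for which $e\colon X → Y$ is an $F$-algebra morphism, and moreover $e$ is a coequaliser, in particular a strong epi, in $F\alg$.

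Next I would lift $m$. From $e∘p_1 = e∘p_2$ we get $f∘p_1 = m∘e∘p_1 = m∘e∘p_2 = f∘p_2$, so $f$ coequalises $p_1,p_2$ in $F\alg$ as well; by the universal property of $e$ in $F\alg$ it factors as $f = \bar m ∘ e$ for a unique $F$-algebra morphism $\bar m$, which is underlain by $m$ by faithfulness of the forgetful functor. As faithful functors reflect monomorphisms, $\bar m$ is monic in $F\alg$. Then, $e$ being a strong epi and $\bar m$ a mono in $F\alg$, the factorisation $X \xto{e} Y \xto{\bar m} Z$ is an image in $F\alg$, since a strong epi followed by a mono is always an image (as recalled in §\ref{ss:notation}). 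Uniqueness of the whole datum follows from faithfulness: the algebra structure on $Y$ is forced, because $e$ — and, $F$ preserving the reflexive coequaliser $e$, also $Fe$ — is epi, so at most one map $FY → Y$ can make $e$ an $F$-algebra morphism; the lifts of $e$ and $m$ are then just their underlying maps.

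The step I expect to be the main point is the opening observation that the regular-epi part of an image factorisation in a regular category is a coequaliser of a reflexive pair; this is precisely what converts the hypothesis "$F$ preserves reflexive coequalisers" into an application of Lemma~\ref{lem:F-alg-creation}. Everything after that is a routine diagram chase relying only on creation of limits and faithfulness of the forgetful functor.
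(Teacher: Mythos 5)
Your proof is correct and follows essentially the same route as the paper's (much terser) argument: realise the image factorisation as the reflexive coequaliser of the kernel pair, lift the kernel pair along the limit-creating forgetful functor, and use preservation of reflexive coequalisers to create the coequaliser in $F\alg$. Your additional details (lifting $m$ via the universal property, monicity by reflection, uniqueness via $Fe$ being epi) are all sound and simply make explicit what the paper leaves implicit.
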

\begin{proof}
  Suppose given an algebra morphism $A \xto{f} B$.  The image
  factorisation is obtained as the (reflexive) coequaliser of the
  kernel pair $A×_f A \rightrightarrows A$. The diagram of this
  reflexive coequaliser lifts to $F\alg$, hence so does the
  coequaliser, by the previous lemma.
\end{proof}
\begin{proof}[Proof of Lemma~\ref{lem:wowomonoid}]
  We need to find algebra structures on $H₀^{\overline{+}†}$ for $I$, $Σ₀$,
  and $- ; ∼₀^{⊗}$.  For $I$, we have the morphism $I → 𝐙₀ →
  H₀^† → H₀^{\overline{+}†}$.

  For $Σ₀$, note that by
  Lemmas~\ref{lem:rel-trans-filtered-colim} and Corollary~\ref{cor:commute-sym-trans},
  $H₀^{\overline{+}†} $ is the colimit of the chain
  \[
    𝐙₀ → \overline{H₀^†} ≅ \overline{H₀^†;𝐙₀} → \overline{H₀^†;H₀^†} ≅ \overline{H₀^†;H₀^†;𝐙₀} → \overline{H₀^†;H₀^†;H₀^†} → {…}
  \]
  As it is filtered and thus sifted, and $Σ₀$ preserves sifted
  colimits by hypothesis, by Lemma~\ref{lem:F-alg-creation}, it is
  enough to show that each $\overline{H₀^†;…;H₀^†}$ has a structure of
  $Σ₀$-algebra (morphisms in the above chain are then automatically
  algebra morphisms because the involved spans are relations).  But,
  $Σ₀$ also preserves reflexive coequalisers (which are sifted
  colimits), so, by Lemma~\ref{lem:F-alg-image-facto}, the forgetful
  functor from $Σ₀$-algebras creates image factorisations. It is thus
  enough to equip $H₀^†;…;H₀^†$ with $Σ₀$-algebra structure, which is
  straightforward because $H₀$ is already an algebra and algebras are
  stable under pullbacks (Lemma~\ref{lem:F-alg-creation}).

  It remains to find a suitable morphism
  ${H₀^{\overline{+}†}};{∼₀^⊗} → {H₀^{\overline{+}†}}$, or equivalently, by applying
  ${-}^{†}$, a morphism ${∼₀^{⊗†}};H₀^{\overline{+}} → H₀^{\overline{+}}$.  But by symmetry of
  $∼₀^⊗$,
  we have the composite
  \begin{center}\hfill
    ${∼₀^{⊗†}};H₀^{\overline{+}} → {∼₀^⊗};H₀^{\overline{+}} →  
    {H₀};H₀^{\overline{+}} →  {H₀^{\overline{+}}}\rlap{.}$ \qedhere
  \end{center}
\end{proof}

Finally, it remains to prove:
\projsimsym*
\begin{proof}
  First, consider the relation $\overline{R}$ induced by $R$ by the image
  factorisation $R ↠ \overline{R} ↪ 𝐙×𝐙$.
  $\overline{R}$ is still a substitution-closed simulation and 
  $\overline{R}₀$ is symmetric.
  Now, we define $R'$ as follows:
  \begin{itemize}
  \item $R'₀ = \overline{R}₀$
  \item $R'₁ $ is the limit of the following diagram:
    \[
        \diag(.5,.5){%
          𝐙₀\source \& R'₀\source \& 𝐙₀\source  \\
          𝐙₁ \& R'₁ \& 𝐙₁ \\
          𝐙₀\but \& R'₀\but \& 𝐙₀\but        \rlap{.} %
        }{%
          (m-1-1) edge[<-,labela={}] (m-1-2) %
          edge[<-,labell={}] (m-2-1) %
          (m-2-1) edge[<-,dashed,labelb={}] (m-2-2) %
          (m-1-2) edge[<-,dashed,labelr={}] (m-2-2) %
          (m-1-2) edge[labela={}] (m-1-3) %
          (m-2-2) edge[dashed,labelb={}] (m-2-3) %
          (m-1-3) edge[<-,labelr={}] (m-2-3) %
          (m-2-1) edge[labell={}] (m-3-1) %
          (m-3-1) edge[<-,labelb={}] (m-3-2) %
          (m-2-2) edge[dashed,labelr={}] (m-3-2) %
          (m-3-2) edge[labelb={}] (m-3-3) %
          (m-2-3) edge[labelr={}] (m-3-3) %
        }
    \]
  \end{itemize}
  More concretely, an element of $R'₁(c₁)$ is a pair of transitions
   at $c₁$ with related sources and targets.
   The morphism $R → R'$ is obtained by the composite $R → \overline{R} → R'$,
   where the last morphism exploits the definition of $R'₁$ as a limit. 
    It is straightforward to check that $R'$ is a substitution-closed
    simulation. Moreover, it is symmetric (even at the level of transitions), so
    it is a bisimulation.
\end{proof}
\end{full}

\section{Conclusion}\label{s:conclu}
We have introduced the notion of Howe context, in which we have defined
\transitionmonoids, an abstract notion of labelled transition system
whose states feature some sort of substitution.  For them, we have
introduced an abstract variant of applicative bisimilarity called
substitution-closed bisimilarity.

Furthermore, we have introduced operational semantics signatures as a
device for specifying syntax with variable binding and operational
semantics.  We have finally shown that if the dynamic part of an
operational semantics signature preserves functional bisimulations,
then subs\-ti\-tu\-tion-closed bisimilarity on the generated
\transitionmonoid is a congruence.

This all follows the pattern of our previous work~\cite{BHL}, but
simplifying the framework and relaxing some hypotheses, as explained
in the introduction.

We hope these simplifications pave the way for more abstract results
in the same vein.  To start with, we would like to generalise our
approach. Indeed, it fails to directly account for some important
applications of Howe's method, notably to
PCF~\cite{DBLP:journals/tcs/Gordon99}, algebraic
effects~\cite{DBLP:conf/lics/LagoGL17}, and higher-order process
calculi~\cite{DBLP:conf/concur/LengletS15}.  Furthermore, methods
similar to Howe's have been used for purposes other than congruence of
applicative
bisimilarity~\cite{Pitts:howe,DBLP:conf/esop/LagoG19,DBLP:journals/mscs/Goubault-LarrecqLN08}:
it might be useful to design abstract versions of such results using
our methods.

\bibliographystyle{alphaurl}
\bibliography{bib}

\end{document}
